\documentclass[reqno,11pt]{amsart}

\usepackage[backend=biber,style=alphabetic,maxbibnames=50,doi=false,url=false,isbn=false,giveninits=true]{biblatex}
\DeclareNameAlias{author}{family-given}
\DeclareNameAlias{editor}{family-given}
\DeclareNameAlias{translator}{family-given}
\renewbibmacro{in:}{}

\usepackage{graphicx}
\usepackage{amssymb}
\usepackage{amsmath}
\usepackage{amsfonts}
\usepackage{amsthm}
\usepackage[svgnames]{xcolor}
\usepackage{hyperref}
\usepackage[mathscr]{eucal}
\usepackage{enumerate}
\usepackage{times}
\usepackage{authblk}
\usepackage[foot]{amsaddr}
\usepackage{comment}
\usepackage{mathrsfs}

\usepackage{upgreek}
\usepackage{mathrsfs}
\usepackage{graphicx}

\usepackage{physics}
\usepackage{cleveref}

\usepackage[normalem]{ulem}

\usepackage{cancel}
\usepackage{tikz}

\numberwithin{equation}{section}
\allowdisplaybreaks[1]

\theoremstyle{plain}
\newtheorem{Def}{Definition}[section]
\newtheorem{Thm}[Def]{Theorem}
\newtheorem{Prop}[Def]{Proposition}
\newtheorem{Lemma}[Def]{Lemma}

\theoremstyle{definition}
\newtheorem{Remark}[Def]{Remark}

\newtheorem{Corollary}[Def]{Corollary}

\newcommand{\beq}{\begin{equation}}
\newcommand{\eeq}{\end{equation}}
\newcommand{\Proof}{\begin{proof}}
\newcommand{\QED}{\end{proof} \noindent}

\newcommand{\mm}{\hspace{-.08cm}\cdot \hspace{-.08cm}}

\newcommand{\M}{\mathcal{M}}

\newcommand{\R}{\mathbb{R}}
\newcommand{\NN}{\mathbb{N}}

\newcommand{\Gammati}{\tilde{\Gamma}}
\newcommand{\Riem}{{\rm Riem}}
\newcommand{\Ric}{{\rm Ric}}
\newcommand{\A}{\mathcal{A}}

\newcommand{\T}{\mathcal{T}}

\newcommand{\diam}{{\rm diam}}

\newcommand{\eps}{\varepsilon}

\newcommand{\loc}{\ensuremath{\text{loc}}}

\DeclareMathOperator*{\esssup}{ess\,sup}

\newcommand{\vol}{\mathrm{vol}}
\newcommand{\D}{\ensuremath{{\mathscr{D}}}}

\newcommand{\Wloc}[2][p]{W^{#2,#1}_{\text{loc}}}
\newcommand{\Wo}{\Wloc{1}}

\newcommand{\Calph}{C^{\alpha}_{\text{loc}}}
\newcommand{\U}{\mathbf{U}}

\newcommand{\cH}{w}

\title[Hawking Theorem]{The Hawking Singularity Theorem for H\"older Continuous Metrics with $L^p$-Bounded Curvature}
\author{Michael Kunzinger}
\thanks{M.\ Kunzinger, R.\ Steinbauer, and I.\ Vega-Gonz\'alez:
  Faculty of Mathematics, University of Vienna,
  Oskar-Morgenstern-Platz 1, 1090 Wien, Austria.}
\email{michael.kunzinger@univie.ac.at}

\author{Moritz Reintjes}
\thanks{M.\ Reintjes: Department of Mathematics,
  City University of Hong Kong, SAR Hong Kong.}
\email{moritzreintjes@gmail.com}

\author{Roland Steinbauer}
\email{roland.steinbauer@univie.ac.at}

\author{In\'es Vega-Gonz\'alez}
\email{ines.vega.gonzalez@univie.ac.at}

\begin{document}

\begin{abstract}
We prove a low-regularity version of Hawking's singularity theorem for Lorentzian metrics
$g\in W^{1,p}_{\mathrm{loc}}$ with $\Riem[g]\in L^p_{\mathrm{loc}}$, where $p>2n$ and
$n=\dim \mathcal M$. This extends previous results beyond the Lipschitz regime. Under
suitable lower Ricci bounds and upper mean curvature assumptions, expressed in terms of
temporal functions, we establish both the globally hyperbolic version of Hawking's theorem,
in the form of an upper bound on the time separation from a spacelike Cauchy hypersurface,
and the version with a compact achronal spacelike hypersurface, yielding timelike
RT-geodesic incompleteness.   

The proof combines regularisations, based on the elliptic RT-equations, to raise the regularity of the metric by one derivative, with a refinement of the previously used manifold convolution. We introduce a new smeared-out notion of mean curvature adapted to the low metric regularity before, and the $W^{2,p}$-hypersurfaces arising after regularisation. As further consequences, we show that $W^{1,p}$-Lorentzian metrics with $L^p$-bounded curvature are causally plain, and we prove a corresponding low-regularity version of Myers's theorem in the Riemannian setting.  
\end{abstract}

\maketitle

\tableofcontents

\section{Introduction} 

The celebrated singularity theorems of Penrose and Hawking constituted the breakthrough that established singularity formation as a generic feature of General Relativity. While Penrose's theorem \cite{Pen:65} addresses stellar gravitational collapse,  Hawking treated the cosmological situation \cite{Haw:67}. Under generic conditions the theorems predict singularities in the sense of timelike or null geodesic incompleteness. However, the original proofs rely on the gravitational metric tensor being at least $C^2$-regular, a fact which undermines the physical significance of the results: incompleteness might be avoided by a (physically harmless) drop in regularity, and the theorems might merely predict a rough, but otherwise non-singular geometry.  

The most basic structures of spacetime geometry only require a continuous metric, and one can make sense of solutions of the Einstein equations in a weak sense even for metrics in the so-called Geroch-Traschen class \cite{GT:87}, (i.e., locally uniformly nondegenerate $L^\infty$-metrics\footnote{Naturally all $L^p$ and Sobolev spaces here are local but we refrain from overburdening the notation in the introduction and from making this explicit at all occurrences.} with $L^2$-Christoffel symbols \cite{LM:07,SV:09}), the lowest regularity for which curvature can be stably defined in a distributional sense. So one may ask the question, left open by the classical theory, as to whether the conditions of the singularity theorems (or suitable generalisations thereof) still suffice to imply incompleteness when metrics are of low regularity, that is below $C^2$ and down to the Geroch-Traschen class.

This question already motivated Hawking and Ellis in their classic \cite[Sec.\ 8.4]{HawkingEllis} to discuss low regularity extensions of the results, see also Senovilla's major review article \cite[Sec. 6.2]{Sen:98}. While these sources already suggested that regularisation of the metric via convolution might help to resolve the regularity issues, significant progress only became possible with the causality-preserving regularisation put forward by Chru\'sciel-Grant \cite{CG:12}.
Indeed the first step was to lower the regularity to the $C^{1,1}=C^{2-}$-class, i.e., to $C^1$-metrics with locally Lipschitz Christoffel symbols for the Hawking \cite{KSSV:15}, the Penrose \cite{KSV:15}, and the most refined of the classical results, the Hawking-Penrose theorem \cite{GGKS:18}. In this regularity the exponential map can still be used \cite{Min:15,KSSV:14}, and in a parallel development Lorentzian causality theory \cite{Min:19a} has been extended to continuous metrics \cite{CG:12,Sae:16} and even to cone structures \cite{FS:12,bernard_suhr}, in particular by Minguzzi in the seminal \cite{Minguzzi-cone-structures}. However, it was firmly established that some of the main features of the theory such as the push-up principle cease to hold below the Lipschitz threshold $C^{0,1}$ \cite{GrantKuSaSt}, due to the bubbling phenomenon discovered already in \cite{CG:12}.

Building on these developments and providing a refined analysis of the ODE-theory for geodesic equations, Graf in \cite{Graf} succeeded in lowering the metric regularity down to $C^1$, for both the Hawking and the Penrose theorem.
The Gannon-Lee theorem, which connects the topological structure of spacetime to its singularity structure and the Hawking-Penrose theorem in $C^1$-regularity followed suit, \cite{benedikt,KOSS:22}. 

In the recent paper \cite{CGHKS25} the regularity required for the Hawking theorem was further lowered to $C^{0,1}$, which is the maximal class originally envisaged in \cite[Sec.\ 8.4]{HawkingEllis}, and which meanwhile has also turned out to be the `causality threshold' regularity. This work combined refined estimates for the Ricci curvature of regularising smooth metrics based upon a quite general Friedrichs-type lemma with the recent worldvolume estimates of Graf et al.\ \cite{GKOS:22} that, in the absence of a suitable ODE-theory, replaced the usual focusing techniques for timelike geodesics.
\medskip

In parallel with these analytical developments that all use distributional energy conditions resp.\ Ricci curvature bounds, in recent years a significant development in Lorentzian geometry has occurred,  based on synthetic methods as introduced in \cite{KS:18}. Such synthetic Lorentzian spaces (cf.\ also alternative approaches by Bykov, Minguzzi, Suhr \cite{minguzzi-suhr24,bykov-minguzzi-suhr}, as well as Braun and McCann, \cite{braun_mccann_causal}) are analogues of metric length spaces and enable a definition of Ricci curvature bounds based on optimal transport and displacement convexity of entropy functionals, extending the celebrated theory of Sturm and Lott-Villani \cite{S:06a,Sturm2006II,LV:09} to the Lorentzian setting \cite{MCC:20,MS:23,CM:24}, see \cite{CM:22} for a first overview. A Hawking singularity theorem in this framework has been provided by Cavaletti-Mondino \cite{CM:24}. The Penrose case, which is much more involved since it has to be based on (a synthetic version of) the null energy condition was pioneered in this framework independently by McCann\cite{McC:24} and Ketterer \cite{K:24}, where the latter reference introduced a version of null optimal transport leading to a proof of a Penrose singularity theorem for smooth spacetimes using purely synthetic conditions. Recently this framework has also been employed by Braun-Rotolo in \cite{BR:26} to obtain a synthetic Gannon-Lee theorem for smooth spacetimes. Finally, Cavaletti-Manini-Mondino put forward an alternative version of null optimal transport in \cite{CMM:25a} which allowed them to prove a synthetic Penrose theorem for continuous spacetimes in \cite{CMM:25b}. 

Naturally there arises the question of compatibility between the analytic and the synthetic approach to singularity theorems in the areas where they overlap. This leads to the somewhat subtle question of compatibility of distributional and synthetic energy conditions, which has been considered in \cite{KOV:22,BC:24}, with definitive answers even in the Riemannian case having been obtained only very recently \cite{MR:25}. In addition, one also needs to compare the various bounds on the mean curvature used in the corresponding works. For a detailed comparison and a hint to open questions specifically in the Hawking case see \cite[Rem.\ 4.5]{CGHKS25}. Also for a more general review of the different approaches with an emphasis on the analytical one see \cite{Ste:23}.
\medskip

In the present paper we take the analytical approach to low regularity singularity theorems a significant step forward by extending the Hawking theorem from the $C^{0,1}=W^{1,\infty}$-setting of \cite{CGHKS25}, down to metrics in the Sobolev space $W^{1,p}$ with Riemann curvature bounded in $L^p$, for any $p>2n$, where $n\geq 2$ denotes the dimension of spacetime. By Morrey's inequality, these metrics are only H\"older continuous of order $C^{0,\alpha}$ where the H\"older exponent satisfies $\alpha=1- n/p > 1/2$. Hence this can be thought of as being halfway down to the case of continuous metrics, and close to the $L^2$-threshold on metric derivatives set by the Geroch-Traschen class. In particular, we view our assumption of the Riemann curvature to be bounded in some $L^p$-space as only a minor concession, because any spacetime with curvature that cannot be bounded in any $L^p$-space, (hence exhibiting infinite tidal forces), would by itself be so singular that the question of geodesic completeness would only play a subordinate role.

This new extension of the Hawking theorem is primarily based on the theory of the `Regularity Transformation equations' (\textit{RT-equations}), a system of elliptic partial differential equations whose solutions provide coordinate transformations for regularising spacetime connections, recently developed by Reintjes-Temple in a series of papers \cite{ReintjesTemple_ell1, ReintjesTemple_ell2, ReintjesTemple_ell3, ReintjesTemple_ell4, ReintjesTemple_ell5, ReintjesTemple_ell6}. Based on the existence theory for the RT-equations, 
Reintjes-Temple prove in \cite{ReintjesTemple_essreg} that any affine connection in $L^p$ can be regularized by one derivative to $W^{1,p}$ if and only if its Riemann curvature lies in $L^p$, see also the summary in appendix \ref{Sec_RT_appendix}. 

In this paper, based on our standing $L^p$-curvature assumption, we employ the RT-equations to regularise the H\"older continuous $W^{1,p}$-metric to $W^{2,p}$-regularity\footnote{Recall that a metric is always precisely one derivative more regular than its connection.}. The resulting $W^{2,p}$-metric regularity, using Morrey's inequality again,  brings us into the $C^1$-setting, i.e., to the metric regularity required for Graf's version of the Hawking theorem in \cite{Graf}. However, the regularisation of the metric by the RT-equations comes at a price: In fact the higher metric regularity is achieved in a new atlas for the spacetime manifold, which is merely $W^{2,p}$-compatible with the original one. As a result one can no longer maintain the assumption of smooth (Cauchy) hypersurfaces and smooth orthogonal frames thereon necessary for \cite{Graf}, and in fact common to all prior works on analytic extensions of the singularity theorems to low regularity. This forecloses a direct application of the results in \cite{Graf} and forces us to develop an appropriate notion of mean curvature for $W^{2,p}$-surfaces, a `smeared-out' version of mean curvature significantly more subtle than the one used in \cite[Sec.\ 4]{CGHKS25}. There the objective was to thicken the smooth (partial) Cauchy surface by a smooth flow out to obtain a `smeared out' mean curvature of $L^\infty$-regularity for the Lipschitz metric on an open neighbourhood of the hypersurface in spacetime. Here, however, smoothness of the hypersurface breaks down and we devise a new approach to mean curvature based on time functions and adapted to $W^{1,p}$-H\"older continuous frames, see Section \ref{Sec_mean_curv_timefunction}. To turn the resulting mean curvature bounds into a form applicable in the analytic machinery of focussing estimates for timelike geodesics we need, on top of the RT-regularisation, another regularisation of the metric now via convolution, i.e., we apply a smoothing convolution to the $W^{2,p}$-metric obtained by the RT-regularisation. This procedure then necessitates a new and refined analysis of manifold convolutions and the establishment of a new Friedrichs-type lemma for $L^\infty$-convergence which we consider interesting in their own right, see Section \ref{Sec_refined_conv_lemmas}. 
This finally brings us into a situation to which the classical smooth Hawking theorem applies to imply geodesic incompleteness. To make sense of this very notion at the level of H\"older continuous metrics---a setting \emph{below} the $L^\infty$-connection regularity to which the Filippov method \cite{Fil:88} of differential inclusions applies, cf.\ \cite{Ste:14, LLS:21}---we make use of the notion of weak geodesics based on the RT-regularisation developed in \cite{ReintjesTemple_geod} and which we refer to as \emph{RT-geodesics}.

We remark that the RT-regularisation is required both conceptually and in several steps of our proofs. Indeed, as remarked above, H\"older metrics without $L^p$-curvature bounds (and hence without the RT-regularisation they entail) generally fail to have a well-posed causal structure and
it is not clear at all how to introduce the very notion of a geodesic curve and consequently of geodesic incompleteness. On the other hand, our proof (like other low regularity proofs of the Hawking theorem) requires approximating the rough metric $g$ with both mollified metrics $g_\epsilon$ and with smooth metrics $\check{g}_\eps$ with controlled lightcones. For our Friedrichs-type $L^\infty$-estimate on the mean curvature (see Theorem \ref{Prop:Mean-curvature-convergence}), to compensate blow-up, it is crucial for the difference $\check{g}_\eps - g_\eps$ to converge to $0$ in $C^1$, linearly in $\eps$, accomplished in Lemma~\ref{lem:m+}. This linear-in-$\eps$ convergence requires $g\in C^1$ already at the level of $C^0$ convergence, marking a crucial place were the RT-regularisation enters our method.
Finally, our standing assumption of $L^p$-curvature directly allows for $L^p$-Ricci lower bounds, which are readily seen to be preserved under RT-regularisation (Lemma \ref{lem:prb}) and almost preserved under convolution regularisation (Proposition \ref{Prop:Graf3.13-modified})---compare this to the daunting challenge to extract $L^p$-lower Ricci bounds for convolution  regularisations already of Lipschitz metrics satisfying only a distibutional Ricci bound, cf.\ \cite[Sec.\ 3]{CGHKS25}.

\medskip

This work is organised in the following way. In Section \ref{sec:preliminaries} we introduce our  notation and conventions and assemble some preliminaries including a refined estimate on the convergence of $\check g_\eps-g_\eps$ for convolution regularisations of $C^1$-metrics (Lemma \ref{lem:m+}). In Section \ref{Sec_RT} we state the fundamental result on RT-regularisations (Theorem \ref{Thm_smoothing_preliminaries}) and introduce the notion of RT-geodesics and RT-completeness. Some details on the RT-equations are collected in Appendix \ref{Sec_RT_appendix}. In Section \ref{Sec_causality}, we employ the RT-regularisation to establish that the class of spacetimes with $W^{1,p}$-metrics and $L^p$-bounded curvature possess a well-defined causal structure. In particular, we prove that in such spacetimes the bubbling phenomenon is absent (Proposition \ref{prop:cp}), clarify the relation between causal RT-geodesic and causal maximisers and establish a non-branching result (Propositions \ref{Prop_maximisers->geodesics} and \ref{prop:nb}). In Section \ref{Sec_mean_curv_timefunction} we define the notion of `smeared out' mean curvature we work with for the rest of the paper and state the main result on its approximation (Theorem \ref{Prop:Mean-curvature-convergence}). Its proof relies on refined estimates on the manifold convolution which we state and prove in Section \ref{Sec_refined_conv_lemmas}. In Section \ref{sec:hawking} we proceed to our main results, namely both versions of the Hawking theorem for $W^{1,p}$-metrics with $L^p$-curvature. First, in the globally hyperbolic situation we establish the usual bound on the time separation function from the Cauchy surface $\Sigma$ (Theorem \ref{Th:Hawking-I}). Actually we give a `quantified version' of the theorem balancing the Ricci bounds against the mean curvature bounds of $\Sigma$. For convenience we also state the corresponding smooth result (Theorem \ref{Th:Hawking-smooth}) and since we were unable to find a unified proof of all cases in the literature we provide one in Appendix \ref{Sec_Hawking_proof}. The second version of the Hawking theorem in which we merely assume the existence of a compact achronal spacelike hypersurface asserts geodesic incompleteness in the RT-sense (Theorem \ref{Th:Hawking-II}). Finally, in Section \ref{sec:myers} we provide the Riemannian counterpart of our main result, a Myers's theorem again for metrics in $W^{1,p}$ with the Riemann tensor in $L^p$.

\section{Prerequisites}\label{sec:preliminaries}

In this section we collect some prerequisites to be used throughout the paper thereby also fixing our notations. 
Our standard references are \cite{ON83} for smooth Lorentzian geometry, \cite{Minguzzi-cone-structures} for causality theory, \cite{Sae:16} for the low-regularity (continuous) Lorentzian setting, and \cite[Sec.\ 2]{KOSS:22} for distribution theory on manifolds.

\subsection{Lorentzian geometry}
We will generally be concerned with manifolds $\mathcal{M}$ which we assume to be Hausdorff, second countable, connected and of dimension $n\geq 2$. We will assume that the atlas ${\mathcal A}$ of $\mathcal{M}$ is of regularity at least $W^{2,p}_\text{loc}$, where we use $W^{s,p}_\text{loc}$ to denote local Sobolev spaces of positive integer order $s\geq 1$ modelled on $L^p$-spaces with $1\leq p\leq\infty$. We will generally assume $p>n$ so that by Morrey's inequality local $W^{s,p}$-regularity implies local H\"older continuity of order $s-1$, i.e., $W^{s,p}_\text{loc}\subseteq C^{s-1,\alpha}_\text{loc}$ with $\alpha=1-n/p$. Hence all our manifolds will be at least $C^1$, so there always exists a unique maximal $C^1$-compatible smooth atlas and we will work in this atlas unless explicitly stated otherwise.

We will denote Lorentzian metrics on $\mathcal{M}$ by $g$ which we will assume to be at least continuous and any added regularity will be explicitly indicated. A spacetime will be a pair $(\mathcal{M},g)$ with a time-orientation fixed by a smooth timelike vector field. We will say a pair $(\mathcal{M},g)$ is a $C^{k}$-spacetime if $g$ has the corresponding regularity $k=0,1,\dots$ and similarly for spaces $W^{s,p}_\text{loc}$, in particular, for Lipschitz regularities $W^{1,\infty}_\text{loc}=C^{0,1}_\text{loc}$ and $W^{2,\infty}_\text{loc}=C^{1,1}_\text{loc}$. Sometimes it will be necessary to explicitly include the atlas $\A$ in the notation, in which case we will write $(\M,\A,g)$ for a spacetime and denote $g$'s regularity with respect to $\A$, e.g.\ by writing $g\in W^{s,p}_{\A,\loc}$.

We will especially be interested in $g\in W^{1,p}_\text{loc}$ and observe that, by Morrey's inequality $g\in C^{0,\alpha}_\text{loc}$ ($\alpha=1-n/p$) but not locally Lipschitz, i.e., $g\not\in C^{0,1}_\text{loc}$. It follows that $g$'s inverse $g^{-1}$ is also in $W^{1,p}_\text{loc}$ and $C^{0,\alpha}_\text{loc}$ and as a result, such metrics are in the Geroch-Traschen class, ($g\in W^{1,2}_\text{loc}\cap L^\infty_\text{loc}$ and uniformly nondegenerate, hence $g^{-1} \in W^{1,2}_\text{loc}\cap L^\infty_\text{loc}$ as well) which allows for the stable definition of distributional curvature \cite{GT:87,LM:07,SV:09}.

\medskip

For the Riemann curvature tensor (on smooth enough spacetimes, i.e., with metric $C^2$ or higher), we follow the convention of \cite{ON83}, $R_{XY}Z=\nabla_{[X,Y]}Z-[\nabla_X,\nabla_Y]Z$ for smooth vector fields $X,Y,Z\in\frak{X}(\mathcal{M})$. Note that $\mathrm{Ric}$ is independent of this convention. Explicitly (cf.\ \cite[Lem.\ 3.52]{ON83}), for a given orthonormal frame $E_i$ we have    \begin{equation}\label{eq:Ricci}
 \mathrm{Ric}(X,Y) = \sum \varepsilon_i g(R(X,E_i)Y,E_i ) = - \sum \varepsilon_i g( R(E_i,X)Y,E_i ).
\end{equation}
where $\eps_i:= g(E_i, E_i)=\pm 1$.

Our convention for the mean curvature of a smooth (at least $C^2$) spacelike hypersurface $\Sigma$ (compatible with \cite{Graf}, but different from \cite{ON83}) is as follows: Given the future pointing unit normal vector field $N$ on $\Sigma$, define the \emph{shape operator} for $\Sigma$ as $S_p(v) := \nabla_v N$ $(p\in \Sigma,\ v\in T_p\Sigma)$. The \emph{mean curvature} $H$ of $\Sigma$ then is 
\begin{equation}\label{eq:mean-curvature-def}
\begin{split}
H&:= \mathrm{tr}_\Sigma(S) = \sum_{i=1}^{n-1} g(\nabla_{e_i} N, e_i) =
-\sum_{i=1}^{n-1}g(N,\mathrm{II}(e_i,e_i)),
\end{split}
\end{equation}
where $(e_1,\dots,e_{n-1})$ is an orthonormal basis of $T_p\Sigma$ and $\mathrm{II}$ denotes the second fundamental form of $\Sigma$. Denoting by 
\begin{equation}\label{eq:mean-curvature-vector-def}
   \vec{H}:= \frac{1}{n-1}\sum_{i=1}^{n-1}\mathrm{II}(e_i,e_i) 
\end{equation}
the mean curvature vector field of $\Sigma$, we have $H=-(n-1)g(N,\vec{H})$.

If $\{X_1,\dots,X_{n-1}\}$ is a local frame in $\mathfrak{X}(\Sigma)$ that is not necessarily orthonormal, then denoting by $G^{ij}$ the entries
of the inverse of the matrix $(g(X_k,X_l))_{k,l=1}^{n-1}$, we have
\begin{equation}\label{eq:mean-curvature-non-orth}
H=  \sum_{i,j=1}^{n-1} G^{ij} g(\nabla_{X_i}N,X_j) = -\sum_{i,j=1}^{n-1} G^{ij} g(N,\nabla_{X_i}X_j).
\end{equation}

\subsection{Causality theory}
We base causality on locally Lipschitz curves $\gamma: I\to {\mathcal M}$, calling them \emph{timelike, null, causal, future} or \emph{past directed} if $\dot\gamma$ has the respective properties almost everywhere. Then $p\ll q$ and $p\leq q$ means there exists a future directed timelike resp.\ causal curve from $p$ to $q$ and we set as usual $I^+(A):=\{q\in {\mathcal M}:\, p\ll q\ \mathrm{for\,some}\,p\in A\}$ and $J^+(A):=\{q\in {\mathcal M}:\, p\leq q\ \mathrm{for\,some}\,p\in A\}$. 

We call $({\mathcal M},g)$ \emph{globally hyperbolic} if it is non-totally imprisoning and all causal diamonds $J(p, q):= J^+ (p)\cap J^-(q)$ are compact, which implies strong causality \cite[Prop.\ 5.6]{Sae:16}.
A \emph{Cauchy hypersurface} $\Sigma\subseteq {\mathcal M}$ is met exactly once by every inextendible causal curve; it is always a closed acausal topological hypersurface. Moreover, $({\mathcal M},g)$ is globally hyperbolic if and only if it possesses a Cauchy hypersurface $\Sigma$, in which case ${\mathcal M}\cong\R\times\Sigma$ \cite[Sec.\ 5]{Sae:16}. In accordance with \cite{BS06} we assume all spacelike Cauchy hypersurfaces to be smooth, but to avoid confusion we will always explicitly speak of smooth spacelike Cauchy surfaces.

The \emph{Cauchy development} of an achronal set $S\subseteq\mathcal{M}$ is $D(S):=D^+(S)\cup D^-(S)$, where $D^\pm(S)$ are the sets of points $p\in {\mathcal M}$ such that every past/future inextendible causal curve through $p$ meets $S$. The interior $D(S)^\circ$ of any acausal set $S$ is globally hyperbolic \cite[Cor.\ 5.8]{Sae:16}. The Avez-Seifert theorem extends to continuous globally hyperbolic spacetimes: there exist globally maximising causal curves between causally related points \cite[Prop.\ 6.4]{Sae:16}, though the relation between maximisers and geodesics becomes more subtle, cf.\ e.g.\ \cite[Remark 2.1]{CGHKS25} and we will return to this discussion in Section \ref{Sec_causality}.

For $p,q\in {\mathcal M}$ the future \emph{time separation} is 
\begin{equation}
    \tau(p,q):=\sup\left(\left\{ L(\gamma):\gamma\;\text{future directed causal from }p\text{ to }q\right\} \cup\{0\}\right),\label{eq:point time sep}
\end{equation}
where $L(\gamma):=\int_I\sqrt{|g(\dot{\gamma}(t),\dot{\gamma}(t))|}dt$ is the Lorentzian arc-length of $\gamma: I\to {\mathcal M}$. The time separation to a subset $S$ is
\begin{equation}
    \tau_{S}(p):=\sup_{q\in S}\tau(q,p).\label{eq:subset time sep}
\end{equation}
While basic causality features (push up principle, openness of $I^+$) fail for continuous metrics \cite{CG:12,GrantKuSaSt}, they hold for \emph{causally plain} metrics, which include all $g\in C^{0,1}$ \cite[Thm.\ 1.20]{CG:12} and, as we will show in section \ref{Sec_causality}, all $W^{1,p}_\text{loc}$-metrics with $L^p_\text{loc}$-Riemann tensor.

Though we work in the low regularity \emph{spacetime} setting, we will freely use results from more general frameworks, particularly the closed cone structures of Minguzzi \cite{Min:19a}, and we briefly recall how continuous metrics fit into it.
\begin{Remark}[Cone structures]\label{rem:cs}
     A \emph{cone structure} $({\mathcal M},C)$ is a multivalued map ${\mathcal M}\ni p\mapsto C_p$ where $C_p\subseteq T_p{\mathcal M}\setminus 0$ is a closed sharp convex non-empty cone \cite[Def.\ 2.2]{Min:19a}. $({\mathcal M},C)$ is \emph{closed} if it is a closed subbundle of the slit tangent bundle and it is called \emph{proper} if in addition $\mathrm{ Int}(C)_p\not=\emptyset$ for each $p$ \cite[Defs.\ 2.3, 2.4]{Min:19a}. For $g\in C^{0}$, one defines the associated canonical cone structure
     \begin{equation}
      C_p:=\{v\in T_p{\mathcal M}\,\backslash\,\{0\}:g(v,v)\leq0, v\text{ future directed}\}    
     \end{equation} 
     \cite[Ex.\ 2.1]{Min:19a}. Then $p\mapsto C_p$ is continuous by \cite[Prop.\ 2.4]{Min:19a} and by \cite[Props.\ 2.3, 2.5]{Min:19a} $({\mathcal M},C)$ it is a proper cone structure.
\end{Remark}

\subsection{Distributions and regularisations}

We will generally equip $\mathcal{M}$ with a smooth complete background Riemannian metric $h$ and denote its induced distance function by $d^h$. Estimates on (at least continuous) vector and tensor fields $X$ and $T$ will always be done w.r.t.\ the norms induced by $h$, which we denote by $|X|_h$ and $|T|_h$, respectively or sometimes just by $|X|$ and $|T|$. Note that this is just a matter of convenience since we will only do estimates on compact sets where any such norm is equivalent to the Euclidean norm of the chart components and hence the estimates do not depend on the choice of $h$, i.e., on any compact $K\subseteq \M$ e.g.\ $|T|_h\leq C|T|_{h'}$ and $|T|_{h'}\leq C|T|_h$ for any choice of background Riemannian metrics $h$ and $h'$. Note that here and in the rest of the paper we adhere to the convention that $C$ denotes a generic constant, that may change its value from line to line.

Spaces of test functions will be denoted by $\mathscr{D}\equiv C^\infty_c$ and distributions by $\mathscr{D}'$, and in particular we shall write $\mathscr{D}'{\mathcal T}^r_s(\mathcal{M})$ for the space of distributional tensor fields. 
We shall repeatedly have to regularise distributional tensor fields, following constructions introduced in \cite{gkos2001geometricgeneralized,KSSV:14,Graf}. To this end we will write $K\Subset \mathcal{M}$ if $K$ is a compact subset of $\mathcal{M}$, and the regularisation parameter $\eps$ will generally be taken from $(0,1]$.
Let  $(U_\alpha,\psi_\alpha)_{\alpha\in A}$ be a countable and locally finite family of relatively compact chart neighbourhoods covering $\mathcal{M}$ and let $(\xi_\alpha)_{\alpha\in A}$ be a subordinate partition of unity  such that $\mathrm{supp}(\xi_\alpha)\subseteq U_\alpha$ for all $\alpha$. Pick a family of cut-off functions $\chi_\alpha\in\mathscr{D}(U_\alpha)$ with $\chi_\alpha\equiv 1$ on a
neighbourhood of $\mathrm{supp}(\xi_\alpha)$. Let $\rho\in \D(B_1(0))$ be a non-negative,
spherically symmetric test function with unit integral and set $\rho_{\eps}(x):=\eps^{-n}\rho\left (\frac{x}{\eps}\right)$. Then with $f_*$ (resp.\ $f^*$) denoting the push-forward (resp.\ pull-back) of distributions under a diffeomorphism $f$, for any tensor distribution $T \in \mathscr{D}'\mathcal{T}^r_s(\mathcal{M})$, let      
\begin{equation}\label{eq:M-convolution}
T\star_\mathcal{M} \rho_\eps(p):= \sum\limits_{\alpha\in A}\chi_\alpha(p)\,(\psi_\alpha)^*\Big[\big((\psi_{\alpha})_* (\xi_\alpha\cdot \mathcal{T})\big)*\rho_\eps\Big](p).
\end{equation}
Here, $(\psi_{\alpha})_* (\xi_\alpha\cdot \mathcal{T})$ is a compactly supported distributional tensor field on $\R^n$, and convolution with $\rho_\eps$ is understood component-wise,  yielding a smooth field on $\R^n$. The cut-off functions $\chi_\alpha$ ensure that $(\eps,p) \mapsto T\star_\mathcal{M} \rho_\eps(p)$ is a smooth map on $(0,1] \times \mathcal{M}$. 
Since $\rho\ge 0$, it follows that for any nonnegative scalar distribution $u\in \D'(\mathcal{M})$ we also have $u\star_\mathcal{M} \rho_\eps \ge 0$ for any $\eps\in (0,1]$.

When regularising a low regularity Lorentzian metric $g$, it is essential to also control the relative positioning of the light cones of the approximating metrics in relation to those of $g$.  Given two Lorentzian metrics $g_1,g_2$ on $\mathcal{M}$ we write $g_1\prec g_2$ and say that $g_1$ has \emph{strictly narrower light cones} than $g_2$, if for all non-vanishing vectors $X$
\begin{align}\label{Eq:prec}
g_1(X,X)\leq0\quad \text{implies}\quad g_2(X,X)<0.
\end{align}
It was shown by Chrusciel and Grant (\cite{CG:12}, cf.\ also \cite{KSSV:14,Graf}) that any continuous Lorentzian metric $g$ can be approximated by families of smooth Lorentzian metrics $\check g_\eps \prec g \prec \hat g_\eps$, see Figure \ref{fig:nested_lightcones}. The precise statement we will use is a mild variant of \cite[Lem.\ 2.4] {CGHKS25}.

\begin{Lemma}[Convergence of approximating metrics]\label{Le:approximating metrics}
Let $g\in C^{0}(\mathcal{M})$ be a Lo\-rentzian metric and 
set $g_\eps=g*_{\mathcal{M}}\rho_\eps$. Then there are smooth Lorentzian metrics $\hat g_\eps$ and $\check g_\eps$ on $\mathcal{M}$ with the following properties:
\begin{itemize}
    \item[(i)] $\check{g}_\varepsilon\prec g \prec\hat{g}_\varepsilon$.
    \item[(ii)] $\check g_\eps$, $\hat g_\eps\to g$, and $(\check g_\eps)^{-1}$, $(\hat g_\eps)^{-1}\to g^{-1}$ locally uniformly.
    \item [(iii)] If  $g\in W^{s,p}_{\mathrm{loc}}(\M)$  ($s\in\mathbb{N}_0$, $1\leq p<\infty$) then the convergence in (ii) is in $W^{s,p}_{\mathrm{loc}}(\M)$ as well.
    \item[(iv)] $\check g_\eps-g_\eps \to 0$, $\hat g_\eps-g_\eps\to 0$, and
    $(\check g_\eps)^{-1}-(g_\eps)^{-1}\to 0$,  $(\hat g_\eps)^{-1}-(g_\eps)^{-1}\to 0$, all in $C^\infty(\mathcal{M})$.
    In particular, $\Ric[\check g_\eps] - \Ric[g_\eps] \to 0$ in $C^\infty(\M)$.
    \item[(v)] For any compact subset $K\Subset M$ there exists a sequence $\varepsilon_j\searrow0$ such that $\check g_{\eps_j}\prec\check g_{\eps_{j+1}}$ and $\hat g_{\eps_{j+1}}\prec \hat g_{\eps_j}$ for all $j\in\mathbb{N}$.
\end{itemize}
\end{Lemma}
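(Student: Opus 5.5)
The plan follows the Chru\'sciel--Grant construction as refined in \cite[Lem.\ 2.4]{CGHKS25} and \cite{KSSV:14,Graf}. Fix a smooth timelike vector field $T$ (the time orientation) and a smooth complete Riemannian metric $h$. For a positive smooth function $\lambda$, set
\[
\hat g_{\eps} := g_\eps - \lambda(\eps,\cdot)\, \theta\otimes\theta, \qquad \check g_{\eps} := g_\eps + \lambda(\eps,\cdot)\,\theta\otimes\theta ,
\]
where $\theta$ is a fixed smooth one-form, e.g.\ $\theta = h(T,\cdot)$ suitably normalised. Subtracting $\lambda\,\theta\otimes\theta$ opens the cones and adding it closes them.

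First, I would choose $\lambda$. On each compact set $K$, the sup norm $\|g_\eps-g\|_{C^0(K)}$ tends to $0$ as $\eps\to0$, since $g$ is continuous and $\star_\M$ converges locally uniformly. A standard cone-comparison estimate shows that if $\lambda$ dominates a constant multiple of $\|g_\eps - g\|_{C^0}$ locally, then $\check g_\eps\prec g\prec \hat g_\eps$. The constant depends only on the nondegeneracy bounds for $g$ and on $|\theta(X)|$ for causal $X$, which is bounded below relative to $|X|_h$ on compacts because $T$ is timelike. To get a globally smooth function I would take $\lambda(\eps,p):=\sum_\alpha \xi_\alpha(p)\,\mu_\alpha(\eps)$, with $\mu_\alpha(\eps)$ chosen larger than this threshold on $U_\alpha$ and tending to $0$. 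The delicate point is (iv), convergence of $\check g_\eps - g_\eps = \lambda\theta\otimes\theta$ in $C^\infty$. This requires $\lambda(\eps,\cdot)\to0$ in $C^\infty$, which holds because $p\mapsto\lambda(\eps,p)$ is a locally finite sum of fixed smooth functions $\xi_\alpha$ with coefficients $\mu_\alpha(\eps)\to0$. Any smooth metric expression, in particular the inverses and $\Ric$, then differs by terms tending to $0$ in $C^\infty$. For the inverses I would use the uniform nondegeneracy of $g_\eps$ on compacts and expand with a Neumann series.

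For (ii), note that $g_\eps\to g$ locally uniformly and $\lambda\to0$, and inversion is continuous on nondegenerate matrices. For (iii), standard mollifier theory chart by chart gives $g_\eps\to g$ in $W^{s,p}_{\loc}$, and the correction term tends to $0$ in $C^\infty\subseteq W^{s,p}_\loc$. The inverse converges in $W^{s,p}_\loc$ by the chain rule. This uses either $s=0$ or the Banach-algebra/Sobolev multiplication properties of $W^{s,p}\cap L^\infty$ under $C^0$ convergence; for $p$ large this is fine, and in general one can argue via Moser-type estimates with bounded $C^0$ norms.

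For (v), the main obstacle is making the monotone nesting $\check g_{\eps_j}\prec\check g_{\eps_{j+1}}$ on $K$. I would exploit the gap: on $K$, $\check g_{\eps_j}$ is strictly narrower than $g$ with a uniform margin measured by $\lambda(\eps_j)$ (compactness of the unit $h$-sphere bundle over $K$ makes this positive). I would then choose $\eps_{j+1}$ so small that $\check g_{\eps_{j+1}}$ lies within that margin of $g$ in $C^0(K)$; this is possible because $\check g_\eps\to g$ uniformly on $K$. A continuity argument on the compact set of $\check g_{\eps_j}$-causal unit vectors then gives $\check g_{\eps_{j+1}}(X,X)<0$. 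The hat case is symmetric, and proceeding inductively yields the sequence.
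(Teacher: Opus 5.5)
Your overall route is the Chru\'sciel--Grant construction the paper relies on: correct $g_\eps$ by a small semidefinite term with an $\eps$-dependent coefficient, patched with the partition of unity. The paper's sketch in the proof of Lemma~\ref{lem:m+} uses a spatial term $\pm\eta_\alpha(\eps)\sum_i\nu_i\otimes\nu_i$ where you use a timelike $\pm\lambda\,\theta\otimes\theta$. Your arguments for (ii), (v), and for $\check g_\eps-g_\eps,\ \hat g_\eps-g_\eps\to0$ in $C^\infty$ are fine on compact sets.

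\textbf{The main gap.} The problem is the sentence ``any smooth metric expression, in particular the inverses and $\Ric$, then differs by terms tending to $0$ in $C^\infty$''. Nonlinear expressions do not inherit $C^\infty$-smallness here, because $g_\eps$ is not bounded in $C^k$. For continuous $g$ one only has $\|\partial^k g_\eps\|_{C^0(K)}=O(\omega_g(\eps)\eps^{-k})$. Explicitly,
$\check g_\eps^{-1}-g_\eps^{-1}=-\lambda\,(1+\lambda\, g_\eps^{-1}(\theta,\theta))^{-1}\,\theta^\sharp\otimes\theta^\sharp$ with $\theta^\sharp=g_\eps^{-1}(\theta,\cdot)$. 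So $k$ derivatives produce terms $\lambda\,\partial^k g_\eps^{-1}$. On the other hand, (i) forces $\lambda$ to be at least the amount by which the cones of $g_\eps$ exceed those of $g$, which is generically of size $\omega_g(\eps)$. A Neumann series does not change this.

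For example, take $g=-(1+|x|^\beta)^2dt^2+dx^2$ with $0<\beta<1$ and $\theta=dt$. The cone of $g_\eps$ at $x=0$ is wider by $\sim\eps^\beta$, so $\lambda\gtrsim\eps^\beta$ there. Then $\partial_x^2$ of the $tt$-component of $\check g_\eps^{-1}-g_\eps^{-1}$ at $x=0$ is of order $\lambda\,|\partial_x^2(g_\eps)_{tt}|\gtrsim\eps^{2\beta-2}\to\infty$. The Ricci difference carries two more derivatives of $g_\eps$, so for merely continuous $g$ it need not even converge locally uniformly; in the same example it is $\gtrsim\eps^{2\beta-2}$.

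Hence this half of (iv) cannot be obtained this way, and it is false for your construction. It should be weakened to what is available and actually used. Even for $g\in C^1$, Lemma~\ref{lem:m+} controls $\nabla_h^k(\tilde g_\eps^{-1}-g_\eps^{-1})$ only by $O(\eps)$ for $k\le1$ and by $O((\omega_{\nabla_h g}(\eps)+\eps)\eps^{2-k})$ for $k\ge2$. The Ricci difference enters only through the locally uniform, $g\in C^1$ statement of Proposition~\ref{Prop:Graf3.13-modified} (following \cite{Graf}).

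\textbf{Two further points need fixing.} First, for fixed $\eps$, $g_\eps=g\star_\M\rho_\eps$ is close to $g$ only on compact sets, and only for $\eps$ below a set-dependent threshold. On a non-compact $\M$ nothing guarantees that $g_\eps\pm\lambda\,\theta\otimes\theta$ is Lorentzian everywhere. For the plus sign you also need the upper bound $\lambda<1/|g_\eps^{-1}(\theta,\theta)|$, which you never impose. So (i) is not established for any fixed $\eps$. The construction the paper invokes handles this with a point-dependent parameter, $\tilde g_\eps(p):=\tilde g^{\mathrm{pre}}_{u(\eps,p)}(p)$, where $u(\eps,\cdot)\equiv\eps$ on any given compact set for small $\eps$ (cf.\ \cite[Lem.\ 2.4]{KSSV:14}). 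Since (ii)--(v) are local statements, this costs nothing.

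Second, $\theta=h(T,\cdot)$ need not have $g$-spacelike kernel. For instance, in Minkowski coordinates with $T=\partial_t$ and $h_{tt}=1$, $h_{tx}=2$, $h_{xx}=5$, the kernel of $\theta=dt+2\,dx$ contains the timelike vector $2\partial_t-\partial_x$. So the lower bound $|\theta(X)|\ge c|X|_h$ on causal $X$ does not follow from $T$ being timelike. You must take $\theta$ to be a smooth $g$-timelike one-form, e.g.\ a smoothing of $g(T,\cdot)$. The bound then holds on compacts, and also for $g_\eps$-causal vectors once $\eps$ is small.
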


\begin{figure}[htbp]
    \centering
    \begin{tikzpicture}[scale=1.55,line cap=round,line join=round]
        \coordinate (O) at (0,0);
        \fill[black] (O) circle (1.2pt);
        \node[below=2pt] at (O) {$p$};

        \path[fill=RoyalBlue!12, draw=none, opacity=.35]
            (O) -- (-1.65,2.15)
            arc[start angle=180,end angle=0,x radius=1.65,y radius=0.33]
            -- cycle;
        \path[shade, left color=RoyalBlue!8, right color=RoyalBlue!35,
              middle color=white, shading angle=18, draw=none, opacity=.45]
            (O) -- (-1.65,2.15)
            arc[start angle=180,end angle=360,x radius=1.65,y radius=0.33]
            -- cycle;
        \draw[thick, dashed, RoyalBlue!70!black, opacity=.45]
            (1.65,2.15)
            arc[start angle=0,end angle=180,x radius=1.65,y radius=0.33];
        \draw[thick, dashed, RoyalBlue!85!black] (O) -- (-1.65,2.15);
        \draw[thick, dashed, RoyalBlue!85!black] (O) -- ( 1.65,2.15);
        \draw[thick, dashed, RoyalBlue!85!black]
            (-1.65,2.15)
            arc[start angle=180,end angle=360,x radius=1.65,y radius=0.33];
        \node[RoyalBlue!85!black,font=\small\boldmath] at (1.50,1.62) {$\hat{g}_\varepsilon$};

        \path[fill=black!5, draw=none, opacity=.45]
            (O) -- (-1.18,2.15)
            arc[start angle=180,end angle=0,x radius=1.18,y radius=0.24]
            -- cycle;
        \path[shade, left color=black!4, right color=black!28,
              middle color=white, shading angle=15, draw=none, opacity=.55]
            (O) -- (-1.18,2.15)
            arc[start angle=180,end angle=360,x radius=1.18,y radius=0.24]
            -- cycle;
        \draw[thick, black!60, opacity=.4]
            (1.18,2.15)
            arc[start angle=0,end angle=180,x radius=1.18,y radius=0.24];
        \draw[thick, black!85] (O) -- (-1.18,2.15);
        \draw[thick, black!85] (O) -- ( 1.18,2.15);
        \draw[thick, black!85]
            (-1.18,2.15)
            arc[start angle=180,end angle=360,x radius=1.18,y radius=0.24];
        \node[black!85,font=\small\boldmath] at (1.02,1.58) {$g$};

        \path[fill=FireBrick!12, draw=none, opacity=.40]
            (O) -- (-0.72,2.15)
            arc[start angle=180,end angle=0,x radius=0.72,y radius=0.15]
            -- cycle;
        \path[shade, left color=FireBrick!8, right color=FireBrick!35,
              middle color=white, shading angle=12, draw=none, opacity=.55]
            (O) -- (-0.72,2.15)
            arc[start angle=180,end angle=360,x radius=0.72,y radius=0.15]
            -- cycle;
        \draw[thick, densely dotted, FireBrick!65!black, opacity=.4]
            (0.72,2.15)
            arc[start angle=0,end angle=180,x radius=0.72,y radius=0.15];
        \draw[thick, densely dotted, FireBrick!85!black] (O) -- (-0.72,2.15);
        \draw[thick, densely dotted, FireBrick!85!black] (O) -- ( 0.72,2.15);
        \draw[thick, densely dotted, FireBrick!85!black]
            (-0.72,2.15)
            arc[start angle=180,end angle=360,x radius=0.72,y radius=0.15];
        \node[FireBrick!85!black,font=\small\boldmath] at (0.30,1.58) {$\check{g}_\varepsilon$};

        \draw[thick, dashed, RoyalBlue!85!black]
            (-1.65,2.15)
            arc[start angle=180,end angle=360,x radius=1.65,y radius=0.33];
        \draw[thick, black!85]
            (-1.18,2.15)
            arc[start angle=180,end angle=360,x radius=1.18,y radius=0.24];
    \end{tikzpicture}
    \caption{Nested future light cones corresponding to the approximations $\check{g}_\varepsilon \prec g \prec \hat{g}_\varepsilon$.}
    \label{fig:nested_lightcones}
\end{figure}
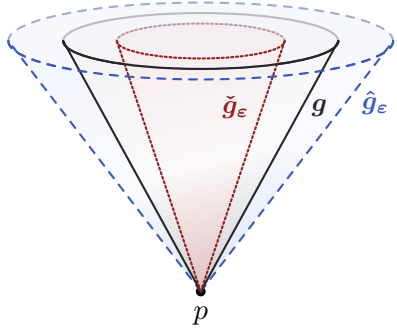

The proof of Theorem \ref{Prop:Mean-curvature-convergence} below will require a refinement of Lemma \ref{Le:approximating metrics} (iv), which amounts to an
extension of \cite[Lem.\ 4.2 and Cor.\ 4.3]{Graf}.
For its formulation, denote by $\nabla_h$ the covariant differential with respect to our fixed background Riemannian metric $h$. For any tensor field $T$ and compact set $K \Subset \mathcal{M}$, let $\|T\|_{\infty, K}$ denote the supremum norm of $T$ on $K$ with respect to $h$. Let $\omega_{\nabla_h g} : [0,\infty) \to [0,\infty)$ denote the local modulus of continuity of the fixed continuous tensor field $\nabla_h g$ on $K$, defined by $\omega_{\nabla_h g}(r) := \sup_{\substack{p,q \in K \\ d_h(p,q) \le r}} \|\nabla_h g(p) - \nabla_h g(q)\|_h$. Note that $\omega_{\nabla_h g}(r) \to 0$ as $r \to 0$.  

\begin{Lemma}[Precise rate of convergence for $\check g_\eps-g_\eps$]\label{lem:m+}
Let $g\in C^1(\M)$   be a Lorentzian metric, let $g_\eps=g\star_{\M}\rho_\eps$, let $\tilde g_\eps$ be either of the approximations $\hat{g}_\eps, \check{g}_\eps$ of Lemma \ref{Le:approximating metrics}. For any compact subset $K \Subset \mathcal{M}$ and any integer $k \ge 0$, there exists a constant $C_{K,k} > 0$ independent of $\eps$ such that for all sufficiently small $\eps$,                  
\begin{equation} \label{eq:metric_diff}
    \| \nabla_h^k (\tilde{g}_\eps - g_\eps) \|_{\infty, K} \le C_{K,k}\, \eps,
\end{equation}
and for $k \in \{0, 1\}$, 
\begin{equation} \label{eq:inv_diff_01}
    \| \nabla_h^k \big(\tilde{g}_\eps^{-1} - g_\eps^{-1}\big) \|_{\infty, K} \le C_{K,k}\,\eps.
\end{equation}
Furthermore, for $k \ge 2$ we have
\begin{equation}
    \|\nabla_h^k \big(\tilde{g}_\eps^{-1} - g_\eps^{-1}\big) \|_{\infty, K} = \mathcal{O}\big((\omega_{\nabla_h g}(\eps)+\eps) \eps^{2-k}\big).
\end{equation}
\end{Lemma}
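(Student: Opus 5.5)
We recall the construction behind Lemma~\ref{Le:approximating metrics} (following \cite{CG:12,KSSV:14,Graf}): with a smooth timelike covector field (or vector field) $T$ fixed, one sets
\[
\check g_\eps = g_\eps + \lambda(\eps)\, \theta\otimes\theta, \qquad \hat g_\eps = g_\eps - \lambda(\eps)\, \theta\otimes\theta,
\]
with $\theta$ smooth (the $h$-dual or $g$-dual of a smooth timelike field, possibly patched via a partition of unity with locally chosen functions $\lambda_\alpha(\eps)$). The only requirement on $\lambda$ is that it dominates the uniform error $\|g_\eps-g\|_{\infty,K}$ up to a constant factor, so that the narrowing/widening in (i) holds. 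The plan is therefore to show that for $g\in C^1$ one may choose $\lambda(\eps)=C_K\eps$. The key input is the elementary estimate
\[
\|g\star_\M\rho_\eps - g\|_{\infty,K}\le C\,\eps\,\|\nabla_h g\|_{\infty,K'}
\]
for a slightly larger compact $K'$. This follows chartwise from $|f*\rho_\eps(x)-f(x)|\le \int|f(x-\eps y)-f(x)|\rho(y)\,dy\le \eps\|Df\|_\infty$, applied to $(\psi_\alpha)_*(\xi_\alpha g)$. The cut-offs $\chi_\alpha$ equal $1$ near $\operatorname{supp}\xi_\alpha$, and $\sum\xi_\alpha=1$, so the sum reproduces $g$ up to the $O(\eps)$ error for small $\eps$. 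With $\lambda(\eps)=C\eps$, (i) is preserved by the same argument as in \cite[Lem.~4.2]{Graf}. Since $\theta$ is smooth and independent of $\eps$, we then get $\nabla_h^k(\tilde g_\eps-g_\eps)=\mp\lambda(\eps)\nabla_h^k(\theta\otimes\theta)=O(\eps)$ for every $k$, which is \eqref{eq:metric_diff}.

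For the inverses we write
\[
\tilde g_\eps^{-1}-g_\eps^{-1} = -\tilde g_\eps^{-1}(\tilde g_\eps-g_\eps)g_\eps^{-1}.
\]
Alternatively, for a rank-one perturbation we can use Sherman--Morrison explicitly: $\tilde g_\eps^{-1}=g_\eps^{-1}\mp \frac{\lambda\, \theta^\sharp\otimes\theta^\sharp}{1\pm\lambda g_\eps^{-1}(\theta,\theta)}$, with $\theta^\sharp=g_\eps^{-1}\theta$. Both $g_\eps^{-1}$ and $\tilde g_\eps^{-1}$ are uniformly bounded on $K$ by locally uniform convergence. Moreover, $\nabla_h g_\eps = (\nabla_h g)\star_\M\rho_\eps + O(\eps)$, and this is uniformly bounded because $g\in C^1$; hence $\nabla_h g_\eps^{-1}$ and $\nabla_h\tilde g_\eps^{-1}$ are uniformly bounded. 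Applying the Leibniz rule to the product formula then gives \eqref{eq:inv_diff_01} for $k=0,1$.

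The main obstacle is $k\ge2$, where derivatives of order $\ge2$ of $g_\eps$ appear and are not bounded uniformly. Differentiating the product formula $k$ times by Leibniz, each term is a product of $\nabla_h^{j}(\tilde g_\eps - g_\eps)=O(\eps)$ with derivatives of $g_\eps^{-1}$ and $\tilde g_\eps^{-1}$ of total order at most $k$. The key bound is
\[
\|\nabla_h^{m} g_\eps\|_{\infty,K} = O\big((\omega_{\nabla_h g}(\eps)+\eps)\,\eps^{1-m}\big) \qquad (m\ge2),
\]
obtained by writing $\partial^m(f*\rho_\eps)=(\partial f-\partial f(x))*\partial^{m-1}\rho_\eps$, using that $\int\partial^{m-1}\rho_\eps=0$ and $\|\partial^{m-1}\rho_\eps\|_{L^1}\sim\eps^{1-m}$. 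The additional $\eps$ term absorbs the chart-transition, $\chi_\alpha$ and Christoffel contributions of $\nabla_h$. Derivatives of the inverses are polynomials in derivatives of $g_\eps$ times powers of $g_\eps^{-1}$. A product of factors $\nabla^{m_i}g_\eps$ with $\sum m_i\le k$ is bounded by $O\big((\omega+\eps)\eps^{1-k}\big)$, since each factor with $m_i\ge2$ contributes $\eps^{1-m_i}$, while factors with $m_i\le 1$ are $O(1)$, and $\omega\le C$. Multiplying by the $O(\eps)$ factor coming from $\tilde g_\eps-g_\eps$ then yields $O\big((\omega_{\nabla_h g}(\eps)+\eps)\eps^{2-k}\big)$. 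The bookkeeping that must be checked carefully is that the worst term distributes all $k$ derivatives onto a single factor rather than splitting them, which is consistent with the claimed rate.
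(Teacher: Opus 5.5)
Your proposal is correct and follows essentially the same route as the paper. On $K$ the correction is $\pm\eps V$ with a fixed smooth $V$; the paper uses the Chru\'sciel--Grant spatial term $\sum_i\nu_i\otimes\nu_i$ with $\eta_\alpha(\eps)=C_\alpha\eps$ and the reparametrisation $u(\eps,p)\equiv\eps$ on $K$, where you use a rank-one $\theta\otimes\theta$, and this difference is immaterial. The cases $k=0,1$ come from $A^{-1}-B^{-1}=-A^{-1}(A-B)B^{-1}$ with uniform $C^1$ bounds, and $k\ge 2$ rests on the same estimate $\|\nabla_h^m g_\eps\|_{\infty,K}=\mathcal{O}\big((\omega_{\nabla_h g}(\eps)+\eps)\eps^{1-m}\big)$ obtained from $\int\partial^{m-1}\rho_\eps=0$. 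Your one-shot product bookkeeping, which uses $\eps^{2-k}\ge 1$ to absorb the $\mathcal{O}(1)$ terms, simply replaces the paper's separate bounds on the derivatives of the two inverses. The one thing to tidy: $\theta$ must be both smooth and $g$-timelike, e.g.\ $d\T$ for a smooth temporal function. The $g$-dual is only $C^1$ and would break the all-$k$ bound, and an arbitrary $h$-dual need not narrow the cones.
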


\begin{proof}
By \eqref{eq:M-convolution},
\begin{equation}\label{eq:M-conv-again}
    g_\eps = g \star_\mathcal{M} \rho_\eps = \sum_{\alpha \in \mathbb{N}} \chi_\alpha (\psi_\alpha)_*^{-1} \big( ((\psi_\alpha)_*(\xi_\alpha g)) * \rho_\eps \big) =: \sum_{\alpha \in \mathbb{N}} \chi_\alpha (\psi_\alpha)_*^{-1} ({}^\alpha g_\eps).
\end{equation}
To strictly widen or narrow the lightcones, a local modified metric ${}^\alpha\tilde{g}_\eps$ is constructed by adding a spatial correction term, cf.\ \cite[eq.\ (1.8)]{CG:12}
\begin{equation}
    {}^\alpha\tilde{g}_\eps := {}^\alpha g_\eps \pm \eta_\alpha(\eps)\ {}^\alpha v,
\end{equation}
where ${}^\alpha v = \sum_{i=1}^{n-1} {}^\alpha \nu_i \otimes {}^\alpha \nu_i$ is a fixed smooth $(0,2)$-tensor field defined on the chart. Here $\eta_\alpha(\eps) = C_\alpha \eps$ for some constant $C_\alpha > 0$, (cf.\ the mean value argument in \cite[Lem.\ 4.2]{Graf}, which requires $g\in C^1$).    A preliminary global approximating metric $\tilde{g}^{\mathrm{pre}}_\eps$ is then defined by patching these local metrics together
\begin{equation}
    \tilde{g}^{\mathrm{pre}}_\eps := \sum_{\alpha \in \mathbb{N}} \chi_\alpha (\psi_\alpha)_*^{-1} ({}^\alpha\tilde{g}_\eps) = g_\eps \pm \eps \sum_{\alpha \in \mathbb{N}} C_\alpha \chi_\alpha (\psi_\alpha)_*^{-1} ({}^\alpha v).
\end{equation}
To ensure the correct signature globally one then sets $\tilde{g}_\eps(p) := \tilde{g}^{\mathrm{pre}}_{u(\eps, p)}(p)$ for a certain smooth map $u$ such that on any compact subset $K \Subset \mathcal{M}$, $u(\eps, p) \equiv \eps$ for all sufficiently small $\eps$. Thus, on $K$, the metric $\tilde{g}_\eps$ coincides precisely with $\tilde{g}^{\mathrm{pre}}_\eps$. 
For such $\eps$ we therefore have on $K$
\begin{equation}
    \tilde{g}_\eps - g_\eps = \pm \eps \sum_{\alpha \in \mathbb{N}} C_\alpha \chi_\alpha (\psi_\alpha)_*^{-1} ({}^\alpha v) =: \pm \eps\, V.
\end{equation}
This definition yields $V$ as a fixed, globally defined smooth tensor field on $\mathcal{M}$, independent of $\eps$. Therefore,
\begin{equation}
    \|\nabla_h^k (\tilde{g}_\eps - g_\eps)\|_{\infty, K} = \eps \|\nabla_h^k V\|_{\infty, K} = C_{K,k} \eps
\end{equation}
for each $k\ge 0$, giving the first claim.

Next, let $A_\eps := \tilde{g}_\eps$ and $B_\eps := g_\eps$. By the above, $A_\eps - B_\eps = \pm \eps V$ on $K$. Using the matrix identity $A^{-1} - B^{-1} = -A^{-1}(A - B)B^{-1}$, we have
\begin{equation}\label{eq:W-eps}
    \tilde{g}_\eps^{-1} - g_\eps^{-1} = \mp \eps A_\eps^{-1} V B_\eps^{-1} =: \eps\, W_\eps,
\end{equation}
which leaves the task of bounding the covariant derivatives of $W_\eps$. 
By Lemma \ref{Le:approximating metrics} $B_\eps = g \star_\mathcal{M} \rho_\eps$ converges to the metric $g$ uniformly in $C^1$ on $K$. Therefore, for sufficiently small $\eps$, $B_\eps$ is uniformly non-degenerate, so $B_\eps^{-1}$ is uniformly bounded on $K$. Furthermore, since $g \in C^1$, the derivatives $\nabla_h B_\eps$ converge uniformly to $\nabla_h g$ on $K$ and are hence uniformly bounded. Differentiating the identity $B_\eps B_\eps^{-1} = I$ gives $\nabla_h(B_\eps^{-1}) = -B_\eps^{-1} (\nabla_h B_\eps) B_\eps^{-1}$, which therefore also is uniformly bounded.  

Also, $A_\eps = B_\eps \pm \eps V \to g$ uniformly, making $A_\eps$ uniformly non-degenerate and its inverse $A_\eps^{-1}$ uniformly bounded for small $\eps$. Similarly, $\nabla_h A_\eps = \nabla_h B_\eps \pm \eps \nabla_h V\to \nabla_h g$ uniformly, hence is uniformly bounded as well. Using $\nabla_h(A_\eps^{-1}) = -A_\eps^{-1} (\nabla_h A_\eps) A_\eps^{-1}$, we conclude uniform boundedness also of $\nabla_h(A_\eps^{-1})$.
Using \eqref{eq:W-eps}, we recover \cite[Cor.\ 4.3(i)]{Graf}, i.e.\ \eqref{eq:inv_diff_01} for $k=0$. For $k=1$, we note that
\begin{equation}
        \nabla_h W_\eps = \mp \big( \nabla_h(A_\eps^{-1}) V B_\eps^{-1} + A_\eps^{-1} (\nabla_h V) B_\eps^{-1} + A_\eps^{-1} V \nabla_h(B_\eps^{-1}) \big).
    \end{equation}
    By our above estimates this implies $\|\nabla_h W_\eps\|_{\infty,K} \le \tilde{C}_{K,1}$, so
    \begin{equation}
        \|\nabla_h \big(\tilde{g}_\eps^{-1} - g_\eps^{-1}\big)\|_{\infty,K} \le \tilde{C}_{K,1} \eps.
    \end{equation}
Finally let $k\ge 2$. From \eqref{eq:M-conv-again} we see that on $K$, $\nabla_h^k B_\eps$ expands into a finite linear combination of terms, each of which is a product of uniformly bounded smooth factors and Euclidean partial derivatives $\partial^m ({}^\alpha g_\eps)$ of order $0 \le m \le k$. Let $f_\alpha := (\psi_\alpha)_*(\xi_\alpha g)$. Because $g \in C^1$ and $\xi_\alpha$ is smooth with compact support in the chart, $f_\alpha$ is a compactly supported $C^1$ tensor field on $\mathbb{R}^n$, and ${}^\alpha g_\eps = f_\alpha * \rho_\eps$. For $m \le 1$, the partial derivatives $\partial^m ({}^\alpha g_\eps) = (\partial^m f_\alpha) * \rho_\eps$ are bounded uniformly in $\eps$.

For $m \ge 2$, passing one derivative to $f_\alpha$ and $m-1$ derivatives to the mollifier yields
\begin{equation}
        \partial^m ({}^\alpha g_\eps)(x) = \int_{\mathbb{R}^n} \partial f_\alpha(x-y) \partial^{m-1} \rho_\eps(y) \, dy.
    \end{equation}
    Since $\rho_\eps$ is supported in the ball $B_\eps(0)$, $\int \partial^{m-1} \rho_\eps = 0$. Thus we may write
    \begin{equation}
        \partial^m ({}^\alpha g_\eps)(x) = \int_{|y| \le \eps} \big( \partial f_\alpha(x-y) - \partial f_\alpha(x) \big) \partial^{m-1} \rho_\eps(y) \, dy
    \end{equation}
    and using $\int |\partial^{m-1} \rho_\eps| = \mathcal{O}(\eps^{1-m})$, this gives
    \begin{equation}
        \|\partial^m ({}^\alpha g_\eps)\|_{\infty} \le \sup_{|y|\le\eps} \|\partial f_\alpha(. -y) - \partial f_\alpha( . )\| \int |\partial^{m-1}\rho_\eps| = \mathcal{O}\big(\omega_{\partial f_\alpha}(\eps) \eps^{1-m}\big).
    \end{equation}
The tensor $f_\alpha$ is constructed from $g$ via multiplication by smooth cutoffs and pushforwards by smooth maps. Therefore, its derivatives $\partial f_\alpha$ depend algebraically on $g$, its first derivatives in coordinates, and smooth factors. It follows that the modulus of continuity of $\partial f_\alpha$ is bounded by a constant multiple of the
modulus of continuity of local first derivatives of $g$ plus $O(\eps)$-terms. Similarly, the modulus of these $\partial g$-terms can be estimated by that of $\nabla_h g$ plus $O(\eps)$-terms.
Altogether, we get $\omega_{\partial f_\alpha}(\eps) = \mathcal{O}(\omega_{\nabla_h g}(\eps) + \eps)$.
Consequently, for $m \ge 2$,
     \begin{equation}
        \|\partial^m ({}^\alpha g_\eps)\|_{\infty} = \mathcal{O}\big(\omega_{\nabla_h g}(\eps) \eps^{1-m} + \eps^{2-m}\big) = \mathcal{O}\big((\omega_{\nabla_h g}(\eps) + \eps) \eps^{1-m}\big).
    \end{equation}
Collecting terms and noting that the dominant $\eps$-power occurs for $m=k$, we obtain
\begin{equation}
        \|\nabla_h^k B_\eps\|_{\infty,K} = \mathcal{O}\big((\omega_{\nabla_h g}(\eps) + \eps) \eps^{1-k}\big).
\end{equation}
Since $A_\eps = B_\eps \pm \eps V$, we also conclude that $\|\nabla_h^k A_\eps\|_{\infty,K} = \mathcal{O}\big((\omega_{\nabla_h g}(\eps) + \eps) \eps^{1-k}\big)$.

Turning now to the inverses, we first note that the $k$-th derivative $\nabla_h^k (B_\eps^{-1})$ is a sum of terms of the form
\begin{equation}
        B_\eps^{-1} (\nabla_h^{m_1} B_\eps) B_\eps^{-1} \dots B_\eps^{-1} (\nabla_h^{m_j} B_\eps) B_\eps^{-1} \quad \text{with } \sum_{i=1}^j m_i = k,\ m_i \ge 1.
\end{equation}
When $j=1$, we immediately read off the bound $\mathcal{O}\big((\omega_{\nabla_h g}(\eps) + \eps) \eps^{1-k}\big)$. For terms with $j \ge 2$, the product consists of factors bounded by $\mathcal{O}(1)$ (when $m_i = 1$) and factors bounded by $\mathcal{O}\big((\omega_{\nabla_h g}(\eps) + \eps) \eps^{1-m_i}\big)$ (when $m_i \ge 2$). Let $p$ be the number of terms with $m_i \ge 2$. If $p=0$, the entire product is bounded by $\mathcal{O}(1)$. If $p > 0$, the product scales as $\mathcal{O}\big((\omega_{\nabla_h g}(\eps) + \eps)^p \eps^{p - \sum_{m_i \ge 2} m_i}\big)$. Since the $j-p$ terms of order $1$ contribute exactly $j-p$ to the sum $k$, we have $\sum_{m_i \ge 2} m_i = k - (j - p)$. The exponent of $\eps$ is thus $p - (k - j + p) = j - k$. Because $j \ge 2$ and $p \ge 1$, we have $j - k \ge 2 - k$. Since $\omega_{\nabla_h g}(\eps) + \eps \to 0$ as $\eps \to 0$, any such product is bounded by $\mathcal{O}\big((\omega_{\nabla_h g}(\eps) + \eps) \eps^{j-k}\big) \le \mathcal{O}\big((\omega_{\nabla_h g}(\eps) + \eps) \eps^{2-k}\big)$. This is subdominant to the $j=1$ term $\mathcal{O}\big((\omega_{\nabla_h g}(\eps) + \eps) \eps^{1-k}\big)$. Consequently, 
\begin{equation}
        \|\nabla_h^k (B_\eps^{-1})\|_{\infty, K} = \mathcal{O}\big((\omega_{\nabla_h g}(\eps) + \eps) \eps^{1-k}\big).
\end{equation}
Again we obtain the same asymptotics for the derivatives of $A_\eps^{-1}$:  $\|\nabla_h^k (A_\eps^{-1})\|_{\infty, K} = \mathcal{O}\big((\omega_{\nabla_h g}(\eps) + \eps) \eps^{1-k}\big)$.

Finally, the $k$-th derivative of $W_\eps = \mp A_\eps^{-1} V B_\eps^{-1}$ (i.e., the term we have to ultimately estimate) decomposes into a sum of terms of the form
    \begin{equation}
        \nabla_h^{k_1} (A_\eps^{-1}) \nabla_h^{k_2} V \nabla_h^{k_3} (B_\eps^{-1}), \quad \text{with } k_1 + k_2 + k_3 = k.
    \end{equation}
Here, $\nabla_h^{k_2} V$ contributes $\mathcal{O}(1)$. If either $k_1 = k$ or $k_3 = k$ then by the above we obtain a bound of $\mathcal{O}\big((\omega_{\nabla_h g}(\eps) + \eps) \eps^{1-k}\big)$. 

For any other combination, $k_1 < k$ and $k_3 < k$. If both $k_1, k_3 \ge 2$, the product scales as $(\omega_{\nabla_h g}(\eps) + \eps)^2 \eps^{2-k_1-k_3} \le (\omega_{\nabla_h g}(\eps) + \eps)^2 \eps^{2-k} \le \mathcal{O}\big((\omega_{\nabla_h g}(\eps) + \eps) \eps^{2-k}\big)$. If only one index is $\ge 2$ (say $k_1 \ge 2$ and $k_3 \le 1$), the product scales as $(\omega_{\nabla_h g}(\eps) + \eps) \eps^{1-k_1} \cdot \mathcal{O}(1) \le \mathcal{O}\big((\omega_{\nabla_h g}(\eps) + \eps) \eps^{2-k}\big)$ since $k_1 \le k-1$. If both $k_1, k_3 \le 1$, the product is $\mathcal{O}(1)$. All of these cases are dominated by $\mathcal{O}\big((\omega_{\nabla_h g}(\eps) + \eps) \eps^{1-k}\big)$.
Hence 
\begin{equation}
\|\nabla_h^k W_\eps\|_{\infty,K} = \mathcal{O}\big((\omega_{\nabla_h g}(\eps) + \eps) \eps^{1-k}\big),
\end{equation}
and we finally arrive at
\begin{equation}
        \|\nabla_h^k \big(\tilde{g}_\eps^{-1} - g_\eps^{-1}\big)\|_{\infty,K} = \eps\, \|\nabla_h^k W_\eps\|_{\infty,K} = \mathcal{O}\big((\omega_{\nabla_h g}(\eps) + \eps)\eps^{2-k}\big).
\end{equation}
\end{proof}

\section{RT-regularisation} \label{Sec_RT}

In this section we detail how RT-regularisation allows us to raise the regularity of the metric by one order from $W^{1,p}_\text{loc}$ to $W^{2,p}_\text{loc}$ in an $W^{2,p}_\text{loc}$-related atlas given an $L^p_\text{loc}$-bound on the curvature. Some background on the theory of RT-regularisations is deferred to Appendix \ref{Sec_RT_appendix}.

Let $\mathcal{M}$ be a manifold with smooth atlas $\mathcal{A}$, and assume $(\M,\A)$ is endowed with a Lorentzian metric $g$ of regularity $W^{1,p}_{\A,\text{loc}}$, (i.e., $g_x \in W^{1,p}_\text{loc}(\Omega_x)$ component-wise for every chart $(x,\Omega)$ contained in the atlas $\A$, where $g_x=x_*g$ denotes the metric represented in coordinates $x$, and $\Omega_x \equiv x(\Omega) \subseteq \R^n$); for brevity, we often refer to $(\M,\A,g)$ as a $W^{1,p}_{\A,\text{loc}}$-spacetime.  Moreover, we assume the Riemann curvature tensor of $g$ is controlled in $L^p_\text{loc}$, that is, $\Riem[g_x] \in L^p_\text{loc}(\Omega_x)$ in each chart $(x,\Omega)$ of the atlas $\mathcal{A}$, which we write as $\Riem[g]\in L^p_{\mathcal{A},\text{loc}}$. 

We can now state the theorem on regularisations of metrics by the RT-equations, which is the starting point of the methods in this paper.

\begin{Thm}[RT-regularisation] \label{Thm_smoothing_preliminaries}  
On a smooth manifold $(\M,\A)$ assume we are given a Lorentzian metric $g \in W^{1,p}_{\A,\text{loc}}$, where $p>{\rm max}\{4,n\}$, $p<\infty$.  Then there exists a smooth atlas $\A'$, which is $W^{2,p}_\text{loc}$-compatible with $\A$, such that $g \in W^{2,p}_{\A',\text{loc}}$, if and only if $\Riem[g]\in L^p_{\mathcal{A},\text{loc}}$.
\end{Thm}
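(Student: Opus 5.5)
The forward direction (``only if'') is elementary and I would treat it first. Suppose $\A'$ is $W^{2,p}_\loc$-compatible with $\A$ and $g\in W^{2,p}_{\A',\loc}$. In an $\A'$-chart $y$, the Christoffel symbols $\Gamma_y$ lie in $W^{1,p}_\loc\subseteq C^{0,\alpha}_\loc$. The curvature $\Riem[g_y]=\partial\Gamma_y+\Gamma_y\Gamma_y$ is therefore in $L^p_\loc$. Curvature is a tensor, so passing to an $\A$-chart $x$ only multiplies it by products of the Jacobians $\partial y/\partial x$ and $\partial x/\partial y$. These are in $W^{1,p}_\loc\subseteq C^0$ because $p>n$. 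Hence $\Riem[g_x]\in L^p_\loc$.

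One point needs checking here: the distributional curvature computed in the $x$-chart, where $\Gamma_x\in L^p$, must agree with this tensorial transform. I would justify it by the transformation law $\Gamma_x = J\Gamma_yJ^{-1}J + J^{-1}\partial J$. Since $J\in W^{1,p}\cap C^0$ and $W^{1,p}$ is an algebra for $p>n$, the Geroch--Traschen-type products are stable. The identity can then be verified by approximating $J$ smoothly and passing to the limit in $\mathscr{D}'$.

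For the converse I would invoke the Reintjes--Temple existence theory summarised in Appendix~\ref{Sec_RT_appendix}, specifically the regularity result of \cite{ReintjesTemple_essreg}. That result states the following. Let $\Gamma\in L^p(\Omega)$ with $\mathrm{Riem}(\Gamma)\in L^p(\Omega)$ on a bounded domain, with $p>\max\{4,n\}$. Then near each point there is a coordinate transformation $J=\partial y/\partial x\in W^{1,p}$, obtained from solutions of the elliptic RT-equations, such that the transformed connection $\tilde\Gamma=J^{-1}(\Gamma J+\partial J)$ (schematically) lies in $W^{1,p}$. The integrability $\mathrm{Curl}\,J=0$ is part of the RT-system, so $J$ is the Jacobian of a genuine map $y\in W^{2,p}$. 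Since $J$ is continuous and invertible at the base point, $y$ is a $C^1$ local diffeomorphism after shrinking the domain, with inverse also in $W^{2,p}$. I would apply this to the Levi-Civita connection $\Gamma_x$ of $g$ in each chart of $\A$, which is in $L^p_\loc$ because $g\in W^{1,p}$ and $g^{-1}\in W^{1,p}\cap C^0$. The resulting charts $y$ form the new atlas $\A'$.

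Two things remain to be verified. The first is that the metric itself gains a derivative: $g_y = J^{T}g_xJ\in W^{1,p}$, and $\partial g_y$ is expressed algebraically through $\tilde\Gamma\in W^{1,p}$ and $g_y$ via metric compatibility, $\partial_k g_{ij}=\Gamma_{ki}^lg_{lj}+\Gamma_{kj}^lg_{il}$. Hence $\partial g_y\in W^{1,p}$ and $g_y\in W^{2,p}$. The second is the compatibility of $\A'$. Transition maps between two new charts are compositions $y_2\circ x_2^{-1}\circ x_1\circ y_1^{-1}$ of $W^{2,p}$ maps and smooth maps, so they lie in $W^{2,p}$ (composition is stable since $p>n$). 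Transitions to $\A$ are in $W^{2,p}$ by construction.

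Calling $\A'$ a ``smooth atlas'' requires one more step. I would either take the maximal $C^1$-compatible smooth structure, as in the conventions of Section~\ref{sec:preliminaries}, or note that any two charts in which $g$ is $W^{2,p}$ have transitions whose second derivatives are determined by the connections ($\partial^2 y = \ldots\Gamma$-terms) and are thus $W^{1,p}$, giving $W^{3,p}$ and bootstrapping. This last point is where I expect the main subtlety. The heart of the proof, however, is the existence and regularity theory of the RT-equations. I take that as a black box from \cite{ReintjesTemple_essreg}; its restriction $p>4$ stems from the nonlinear elliptic estimates there.
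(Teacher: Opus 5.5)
Up to the last step, your argument is the paper's argument with the citations unpacked. The paper obtains Theorem \ref{Thm_smoothing_preliminaries} by quoting the connection-level equivalence of \cite{ReintjesTemple_essreg}, namely Theorem \ref{Thm_smoothing_global}, which is built from the local Corollary \ref{Thm_Smoothing_low-cor}. It then uses that a metric is one derivative more regular than its Christoffel symbols, which is exactly your metric-compatibility step. Your direct proof of the ``only if'' direction and your patching of the local RT-charts are sound. (With $J=\partial y/\partial x$ one has $g_y=J^{-T}g_xJ^{-1}$ rather than $J^Tg_xJ$, but this does not affect regularity.)

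The gap is in making $\A'$ smooth, and neither of your two options closes it.

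A smooth atlas that is merely $C^1$-compatible with the RT-charts loses the gained derivative. If $\psi$ is a $C^1$ transition, the metric components $(D\psi^{-1})^T(g_y\circ\psi^{-1})D\psi^{-1}$ are only continuous.

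Bootstrapping is also impossible. The transformation law $\partial_i\partial_j z^\gamma=\partial_k z^\gamma\,(\Gamma_y)^k_{ij}-(\Gamma_z)^\gamma_{\alpha\beta}\,\partial_i z^\alpha\,\partial_j z^\beta$ gives $W^{3,p}$ transitions and nothing more, because $\Gamma_y$ and $\Gamma_z$ are only $W^{1,p}$. This cannot be improved. Composing an RT-chart with any $W^{3,p}$-diffeomorphism of $\R^n$ yields another chart whose Christoffel symbols are still in $W^{1,p}$. So transitions between admissible charts are in general no better than $W^{3,p}$, and no iteration reaches $C^\infty$.

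The missing step is how the paper concludes at the end of Appendix \ref{Sec_RT_appendix}:
\begin{enumerate}
\item Keep your correct observation that $\A'$ has $W^{3,p}$ transitions.
\item Pass to the maximal $W^{3,p}$-atlas generated by $\A'$.
\item Select a $C^\infty$ subatlas of it by a Whitney-type smoothing theorem; the paper appeals to \cite{Hirsch}.
\end{enumerate}
This works because $W^{3,p}$ changes of coordinates preserve $\Gamma\in W^{1,p}$ and $g\in W^{2,p}$. Composed with the $W^{2,p}$ maps to $\A$, they remain $W^{2,p}$. So this smooth subatlas satisfies all requirements of the theorem.
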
   

Here we say that $\A'$ is $W^{2,p}_\text{loc}$-compatible with $\A$, if $\A'$ is a subatlas of the maximal $W^{2,p}_\text{loc}$-extension of $\A$.
In \cite{ReintjesTemple_essreg} the authors actually prove 
this equivalence for general affine connections.\footnote{Theorem \ref{Thm_smoothing_preliminaries} is proven in \cite{ReintjesTemple_essreg} at the level of general affine connections $\Gamma$ in $L^p_\text{loc}$. That is, $\Riem(\Gamma)\in L^p_{\mathcal{A},\text{loc}}$ is indeed a necessary and sufficient for the regularisation of $\Gamma$ to $W^{1,p}_\text{loc}$. Moreover, it is proven in \cite{ReintjesTemple_essreg}, that an implicit regularisation of the curvature takes place in each step until the connection reaches its highest possible regularity (its {\it essential regularity}) in a suitable subatlas. These results are based on the RT-equations \cite{ReintjesTemple_ell1, ReintjesTemple_ell2, ReintjesTemple_ell4}, elliptic equations whose solution furnish the coordinate transformations regularising connections to one derivative above its curvature, (see Appendix \ref{Sec_RT_appendix} for more details). Since a metric is always exactly one derivative more regular than its connection, these results directly imply Theorem \ref{Thm_smoothing_preliminaries}.}

The regularisation of Theorem \ref{Thm_smoothing_preliminaries}, which is our main tool, can be viewed as inducing a diffeomorphism (and automatically an isometry)
\begin{equation}\label{eq:phi3.1}
 \phi=\mathrm{id}:\, (\M,\A,g) \longrightarrow (\M,\A',g)
\end{equation}
of regularity $W^{2,p}_\text{loc}(\M,\M')$ between the $W^{1,p}_\text{loc}$-spacetime $(\M,\A,g)$ and the $W^{2,p}_\text{loc}$-spacetime $(\M,\A',g)$. We will often refer to $\phi$ as the \emph{RT-regularisation}  
of the $W^{1,p}_\text{loc}$-spacetime $(\M,\A,g)$. Observe that again by Morrey's inequality $\phi$ is a $C^{1,\alpha}_\text{loc}$-diffeomorphism.

At the low regularity of $W^{1,p}_\text{loc}$-metrics addressed in this paper, it is not clear how to define geodesic curves in a standard sense, since the associated $L^p$-connection regularity is too low to restrict connections to curves and hence too low for the geodesic equation to make sense, even when taking derivatives in the weak sense of distributions. Moreover, the notion of geodesic curves as solutions in the sense of Fillipov, which has been useful in Lipschitz spacetimes, see e.g.\ \cite{PSSS:14,PSSS:16,LLS:21,CGHKS25} does not appear to apply to connections less regular than $L^\infty_\text{loc}$. In this paper, we thus turn to the notion of geodesic curves introduced in \cite{ReintjesTemple_geod} based on the RT-regularisation, and to which we refer to as ``{\it RT-geodesics}''.\footnote{In \cite{ReintjesTemple_geod}, the authors introduced Definition \ref{Def_weak_geod} in a local sense and worked out existence and uniqueness results for RT-geodesics, as well as stability of RT-geodesics under mollification of the underlying connection.} 

\begin{Def}[RT-geodesics] \label{Def_weak_geod}
Let $(\M,\A,g)$ be a $W^{1,p}_{\A,\text{loc}}$-spacetime, where $p>{\rm max}\{4,n\}$, $p<\infty$. We call a curve $\gamma \in C^0_\A(I,\M)$ an \emph{RT-geodesic} in $(\M,\A,g)$, if there exists a smooth atlas $\A'$ which is $W^{2,p}_\text{loc}$-compatible to $\A$ and with $g \in W^{2,p}_{\A',\text{loc}}$, such that $\gamma$ is a classical solution of the geodesic equation in $(\M,\A',g)$. We call an RT-geodesic \emph{complete} if its affine parameter defined in $(\M,\A',g)$ can be extended to $(-\infty,\infty)$, and we call it \emph{incomplete} otherwise.
\end{Def}

Observe that the classical solutions alluded to in the definition are \emph{non-unique} since the connection regularity lies below Lipschitz, which would guarantee unique solvability of the initial value problem. 
The following lemma establishes regularity of RT-geodesics and shows that the notion is independent of the regularising atlas $\A'$.

\begin{Lemma}[Regularity for RT-geodesics]\label{Lem:reg-RT-geos}
Let $(\M,\A,g)$ be a $W^{1,p}_{\A,\text{loc}}$-spacetime with $\Riem[g]\in L^p_{\mathcal{A},\text{loc}}$, $p>{\rm max}\{4,n\}$, $p<\infty$, and let $\gamma$ be an RT-geodesic in $(\M,\A,g)$. Then the following holds: 
\begin{itemize}
    \item[(i)] The RT-geodesic $\gamma$ has regularity $C^{2,\alpha}_{\A'}(I,\M)$ for $\alpha = 1 -n/p$, and regularity $C^{1,\alpha}_{\A}(I,\M)$ with respect to the original atlas $\A$.  
    \item[(ii)] If $\A''$ is another $C^\infty$-atlas (or $W^{3,p}$-atlas), which is $W^{2,p}$-compatible to $\A$, such that  $g \in W^{2,p}_{\A''}$, then $\gamma$ is a classical solution of the geodesic equation in $(\M,\A'',g)$. Moreover, $\gamma$ is complete in $(\M,\A'',g)$ if and only if $\gamma$ is complete in $(\M,\A',g)$.
\end{itemize}
\end{Lemma}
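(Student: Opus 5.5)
For (i) I would work in a chart $(y,\Omega')$ of $\A'$, where $g_y\in W^{2,p}_\loc\subseteq C^{1,\alpha}_\loc$ by Morrey's inequality. Hence the Christoffel symbols satisfy $\Gamma_y\in W^{1,p}_\loc\subseteq C^{0,\alpha}_\loc$. A classical solution $\gamma$ of $\ddot\gamma^k=-\Gamma^k_{ij}(\gamma)\dot\gamma^i\dot\gamma^j$ is $C^2$. Its right-hand side is then a composition of the $C^{0,\alpha}$ function $\Gamma_y$ with the $C^1$ (in particular Lipschitz) curve $\gamma$, multiplied by $C^1$ factors, so it is locally $C^{0,\alpha}$ in the parameter. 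This gives $\ddot\gamma\in C^{0,\alpha}$, i.e.\ $\gamma\in C^{2,\alpha}_{\A'}$. For the original atlas, the transition map $x\circ y^{-1}$ is $W^{2,p}_\loc\subseteq C^{1,\alpha}_\loc$. Composing a $C^{1,\alpha}$ map with a $C^{2,\alpha}$ curve gives a curve whose derivative $D(x\circ y^{-1})(\gamma)\,\dot\gamma$ is $C^{0,\alpha}$, hence $\gamma\in C^{1,\alpha}_\A$. No better regularity is available, since second derivatives of the transition are only $L^p$.

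For (ii) the key step is to show that the transition map $\psi=z\circ y^{-1}$ between a chart $z$ of $\A''$ and a chart $y$ of $\A'$ is in fact $W^{3,p}_\loc$, and hence $C^{2,\alpha}$. A priori $\psi$ is only $W^{2,p}$, as the composition of $W^{2,p}$-transitions through $\A$; I would check that composition preserves $W^{2,p}$ for $p>n$. I would then use the connection transformation law
\[
\partial_i\partial_j\psi^c=\partial_k\psi^c\,(\Gamma_y)^k_{ij}-(\Gamma_z)^c_{ab}(\psi)\,\partial_i\psi^a\partial_j\psi^b .
\]
Here $\Gamma_y\in W^{1,p}$ and $\partial\psi\in W^{1,p}$. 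Further, $\Gamma_z\circ\psi\in W^{1,p}$ because $\Gamma_z\in W^{1,p}$ and $\psi$ is a $C^1$-diffeomorphism, and $W^{1,p}$ is invariant under $C^1$ changes of variables. Since $W^{1,p}$ is an algebra for $p>n$, the right-hand side lies in $W^{1,p}$, so $\psi\in W^{3,p}_\loc$. This bootstrap is the main point, and I expect the composition and invariance steps to be the obstacle. The hypothesis that $\A''$ is $C^\infty$ or $W^{3,p}$ is presumably needed so that the charts themselves do not spoil this.

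With $\psi\in C^{2}$, and this transformation law holding pointwise almost everywhere, the chain rule applied to $\tilde\gamma=\psi\circ\gamma$ gives
\[
\ddot{\tilde\gamma}^c=\partial_k\psi^c\,\ddot\gamma^k+\partial_i\partial_j\psi^c\,\dot\gamma^i\dot\gamma^j .
\]
The geodesic equation for $\gamma$ in $y$ then turns into the one for $\tilde\gamma$ in $z$. Since both sides are continuous, the identity holds everywhere. There is one subtlety: the law holds almost everywhere in space, while the curve is a null set. I would handle this by noting that both sides of the transformation law are continuous (all terms are $C^{0,\alpha}$), so it holds everywhere. Thus $\gamma$ is a classical geodesic in $(\M,\A'',g)$ with the same affine parameter, because reparametrisation does not enter the chain rule. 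Completeness, being a statement about the maximal domain of the affine parameter, is therefore identical in both atlases.
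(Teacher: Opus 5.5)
Your proposal is correct and follows the paper's proof. Part (i) is read off from the geodesic equation in $\A'$ together with the $W^{2,p}$-compatibility of $\A$ and $\A'$, and part (ii) rests on the transition maps between $\A'$ and $\A''$ being $W^{3,p}_{\mathrm{loc}}\subseteq C^{2,\alpha}_{\mathrm{loc}}$, so that the identity carries classical geodesics and their affine parameters back and forth, which gives the completeness equivalence. The only difference is that the paper cites Lemma 4.1 of \cite{ReintjesTemple_essreg} for the $W^{3,p}$ regularity, whereas you prove it directly by your bootstrap in the connection transformation law; you also make explicit the a.e.-versus-everywhere continuity point, which the paper passes over.
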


\Proof 
(i): The regularity $\gamma \in C^{2,\alpha}_{\A'}(I,\M)$ follows directly from the geodesic equation with metric $g \in W^{2,p}_{\A'}\subseteq C^{1,\alpha}$, while regularity $\gamma \in C^{1,\alpha}_{\A}(I,\M)$ follows since $\A$ and $\A'$ are $W^{2,p}$-compatible. 

(ii): Assume $g \in W^{2,p}_{\A''}$ with respect to another smooth atlas $\A''$, such that $\A''$ is $W^{2,p}$ compatible to $\A$. Thus, $\A'$ and $\A''$ are both subatlasses of the maximal $W^{2,p}$-extension $\A^\text{max}$ of $\A$. 
By Lemma 4.1 in \cite{ReintjesTemple_essreg}, the metric regularity $g \in W^{2,p}_{\A''}$ and $g \in W^{2,p}_{\A'}$ implies that all transition maps in $\A^\text{max}$ from $\A'$ to $\A''$ are $W^{3,p}_\text{loc}$-regular (hence also $C^{2,\alpha}_\text{loc}$ for $\alpha=1-n/p$); in particular, the identity is a $W^{3,p}$-diffeomorphism between $(\M,\A',g)$ and $(\M,\A'',g)$. 
This regularity is sufficient to map classical geodesics in $(\M,\A',g)$ to classical geodesics in $(\M,\A'',g)$, and vice versa, by the identity map. Thus, since by assumption $\gamma$ is a classical geodesic curve in $(\M,\A',g)$, it is also a classical geodesic curve in $(\M,\A'',g)$. 

To prove the supplement, assume $\gamma$ is a complete geodesic in $(\M,\A',g)$. That is, its affine parameter can be extended to $(-\infty,\infty)$. Thus, mapping $\gamma$ from $(\M,\A',g)$ to $(\M,\A'',g)$ with the identity map, which is a $W^{3,p}_\text{loc}$-diffeomorphism, it follows that $\gamma$ is a complete geodesic in $(\M,\A'',g)$.  
\QED

\begin{Def}[RT-completeness]\label{def:RT-completeness} Let $(\M,\A,g)$ be a $W^{1,p}_{\A,\text{loc}}$-spacetime and assume $\Riem[g]\in L^p_{\mathcal{A},\text{loc}}$, $p>{\rm max}\{4,n\}$, $p<\infty$. We call $(\M,\A,g)$ (timelike/causally/null) \emph{RT-complete} if every (timelike/causal/null) RT-geodesic $\gamma$ in $M$ is complete. Otherwise $(\M,\A,g)$ is called 
(timelike/causally/null) \emph{RT-incomplete}.
\end{Def}  

By Lemma \ref{Lem:reg-RT-geos}, these notions are independent of the chosen RT-regularisation.

\section{Causality}  \label{Sec_causality}

Recent studies in Lorentzian causality (e.g.\ \cite{CG:12,Minguzzi-cone-structures,GrantKuSaSt}) have revealed that local Lipschitz continuity of the metric is the threshold regularity which allows for a satisfactory theory. Below that, several pillars of causality break down in general, namely the push up property ($I^+\circ J^+\subseteq I^+$ and vice versa) and the openness of $I^+$. Also light cones cease to be hypersurfaces and ``bubble up'' to have non-vanishing interior.  

More specifically, \cite{CG:12} isolates the class of \emph{causally plain} metrics for which no bubbling occurs, for details see their Def.\ 1.16 or the characterisation in \cite[Cor.\ 2.16]{GrantKuSaSt}. Then also $I^+$ is open and the push up principle holds, see the diagramm on p.\ 101 in \cite{GrantKuSaSt}. Finally,  \cite[Cor.\ 1.17]{CG:12} establishes that any $g\in C^{0,1}_\text{loc}$ is causally plain. 

In this section we will use RT-regularisation as a tool in causality theory. First we prove that RT-regularisation boosts metrics $g\in W^{1,p}_\text{loc}$ with $\Riem[g]\in L^p_\text{loc}$ above the critical causality threshold by showing that they are indeed causally plain. Then we will use the same tool to show that maximisers coincide with RT-geodesics which have regularity $C^{1,\alpha}$ and do not branch.
\bigskip

We begin by considering basic notions of causality under $C^1$-diffeomorphisms, $\phi \in C^1(\M,\M')$, between two Lorentzian manifolds, $(\M,g)$ and $(\M',g')$, both endowed with $C^\infty$-atlases $\A$ and $\A'$, and $C^0$-metrics $g$ and $g'$. 

If $\phi$ is an isometry ($\phi_*g=g'$), then clearly the causal character of locally Lip\-schitz curves is preserved, as is the length of causal curves $\gamma:[a,b]\to M$, i.e., $ L_g(\gamma) = L_{\phi_*g}(\phi\circ \gamma)$. Consequently we obtain for the time separation functions that 
\begin{equation}\label{eq:tau=tau}
    \tau_\M(p,q) = \tau_{\M'}(\phi(p),\phi(q))
\end{equation} and the chronological as well as the causal futures and pasts of points are preserved, which we now make explicit.

\begin{Lemma}[Preservation of causality]\label{prop:I^+}
    Any diffeomorphim $\phi \in C^1(\M,\M')$ faithfully translates all causality notions in the sense that $\forall q\in\M$,
    \begin{align}
        \phi\left(J^{\pm}_{g}(q)\right)=J^{\pm}_{\phi_* g} \left(\phi(q)\right), \\
        \phi\left(I^{\pm}_{g}(q)\right)=I^{\pm}_{\phi_* g} \left(\phi(q)\right).
    \end{align}
\end{Lemma}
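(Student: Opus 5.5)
The plan is to reduce everything to the statement that $\phi$ maps (future/past directed) timelike resp.\ causal locally Lipschitz curves in $(\M,g)$ bijectively onto curves of the same type in $(\M',\phi_*g)$. The set equalities then follow directly from the definitions of $I^\pm$ and $J^\pm$ via such curves. Here $\phi_*g$ is the push-forward metric, so $\phi:(\M,g)\to(\M',\phi_*g)$ is an isometry by construction. Since $\phi^{-1}$ is again a $C^1$-diffeomorphism and $(\phi^{-1})_*\phi_*g=g$, it suffices to prove the inclusions ``$\subseteq$''. The reverse inclusions follow by applying the same argument to $\phi^{-1}$.

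The first step concerns the class of curves. Let $\gamma:I\to\M$ be locally Lipschitz. Since $\phi$ is $C^1$, it is locally Lipschitz with respect to $d^h$ and any background Riemannian metric $h'$ on $\M'$: on compact sets, $|T\phi|$ is bounded by continuity. Hence $\phi\circ\gamma$ is locally Lipschitz. Moreover, $\gamma$ is differentiable almost everywhere, and at every such point $t$ the chain rule gives $(\phi\circ\gamma)'(t)=T_{\gamma(t)}\phi(\dot\gamma(t))$. This holds because $\phi$ is $C^1$ (genuinely differentiable), so no a.e.-composition issue arises.

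The second step is the causal character. By definition of the push-forward, $(\phi_*g)\big(T\phi\,v,T\phi\,v\big)=g(v,v)$ for all $v\in T_p\M$. Hence $\dot\gamma(t)$ is timelike/causal if and only if $(\phi\circ\gamma)'(t)$ is, at the same set of full measure. For the time orientation, I will take the time orientation on $\M'$ to be the one induced by $\phi$, i.e.\ given by the continuous timelike field $T\phi(X)\circ\phi^{-1}$ (or a smooth timelike field in the same cone). With this choice, $v$ is future directed, meaning it lies in the same cone as $X$, if and only if $T\phi\,v$ lies in the same cone as $T\phi\,X$. So $\phi$ preserves future/past directedness pointwise. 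Consequently $\gamma$ is future directed timelike (causal) from $q$ to $p$ if and only if $\phi\circ\gamma$ is future directed timelike (causal) from $\phi(q)$ to $\phi(p)$, and this yields $\phi(I^\pm_g(q))\subseteq I^\pm_{\phi_*g}(\phi(q))$ and likewise for $J^\pm$. Including constant curves in $J^\pm$ is trivially compatible with this.

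The only step needing care is the a.e.\ chain rule together with the time orientation convention. The former is harmless since $\phi\in C^1$. The latter is a matter of declaring the induced time orientation, or of noting that $\M$ is connected so the orientation is preserved or reversed globally. In the reversed case one simply swaps $\pm$, and in the RT-application $\phi=\mathrm{id}$, so no swap occurs.
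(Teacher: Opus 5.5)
Your proposal is correct and follows the same route as the paper, which simply observes that the isometry $\phi$ preserves the causal character of locally Lipschitz curves, and hence preserves $I^\pm$, $J^\pm$ and $\tau$. You additionally spell out the pointwise chain rule, which uses that $\phi$ is genuinely $C^1$, and the choice of the induced time orientation on $\M'$; the paper leaves both implicit.
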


The above properties hold even when $\phi$ 
is a local {bi-}Lipschitz homeomorphism but we are now interested in the case where $\phi$ is the RT regularisation of \eqref{eq:phi3.1} hence $C^{1,\alpha}_\text{loc}$. We then obtain the following results.

\begin{Prop}[Causal plainness]\label{prop:cp}
Let $(\M,g)$ be a $W^{1,p}_{\text{loc}}$-spacetime and assume $\Riem[g]\in L^p_{\text{loc}}$, $p>{\rm max}\{4,n\}$, $p<\infty$.
Then $(\M,g)$ is \textit{causally plain}.
\end{Prop}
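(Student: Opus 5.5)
We plan to transport the problem, via the RT-regularisation, to a setting where causal plainness is already known. By Theorem \ref{Thm_smoothing_preliminaries} (applicable since $p>\max\{4,n\}$ and $\Riem[g]\in L^p_{\loc}$) there is a smooth atlas $\A'$, $W^{2,p}_{\loc}$-compatible with $\A$, such that $g\in W^{2,p}_{\A',\loc}$. By Morrey's inequality, $g\in C^{1,\alpha}_{\A',\loc}$, and in particular $g$ is locally Lipschitz with respect to $\A'$. Hence, by \cite[Cor.\ 1.17]{CG:12}, the spacetime $(\M,\A',g)$ is causally plain.

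The remaining task is to transfer causal plainness back along the identity map $\phi=\mathrm{id}:(\M,\A,g)\to(\M,\A',g)$ from \eqref{eq:phi3.1}. This map is a $C^{1,\alpha}_{\loc}$-diffeomorphism and, tautologically, an isometry. We will use the characterisation of causal plainness in \cite[Def.\ 1.16]{CG:12}, or equivalently \cite[Cor.\ 2.16]{GrantKuSaSt}. In that formulation, a spacetime is causally plain if every point has a cylindrical neighbourhood $U$ such that $\partial I^\pm_U(q)=\partial J^\pm_U(q)$ for all $q\in U$, that is, if no bubbling occurs.

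The key observation is that all ingredients of this definition are topological or causal in nature. The chronological and causal futures and pasts relative to an open set $U$ are defined through locally Lipschitz curves in $U$. By Lemma \ref{prop:I^+}, applied to the open submanifold $U$ (whose restricted atlases are still $C^1$-related), $\phi$ maps $I^\pm_{g,U}(q)$ to $I^\pm_{g,\phi(U)}(\phi(q))$, and similarly for $J^\pm$. Since $\phi$ is a homeomorphism, it also preserves boundaries and openness. Consequently, the absence of bubbling in $(\M,\A',g)$ yields the same property in $(\M,\A,g)$.

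The main obstacle is the notion of a cylindrical neighbourhood. This notion depends on charts, namely on the requirement that the metric be close to a Minkowski metric with a coordinate time direction, so we must check that it is transported correctly. We plan to avoid this issue either by working with the chart-independent characterisation in \cite[Cor.\ 2.16]{GrantKuSaSt}, or by noting the following. The notion of causal plainness only concerns the underlying $C^1$ manifold, namely the maximal $C^1$-compatible smooth atlas mentioned in Section~\ref{sec:preliminaries}, together with the continuous metric. Since $\A$ and $\A'$ are $C^1$-compatible, cylindrical neighbourhoods can be taken with respect to a common $C^1$ (indeed smooth) chart, after small adjustments that keep the lightcones within the required closeness. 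This reduces the proposition to the Lipschitz case.
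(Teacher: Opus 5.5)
Your proposal is correct and is essentially the paper's own argument. You RT-regularise to $g\in W^{2,p}_{\A',\text{loc}}\subseteq C^{0,1}_{\text{loc}}$, apply \cite[Cor.\ 1.17]{CG:12} in $\A'$, and carry the absence of bubbling back to $(\M,\A,g)$ via Lemma \ref{prop:I^+}, using that $\mathrm{id}$ is a $C^1$-isometry and a homeomorphism. Your extra care about cylindrical neighbourhoods addresses a point the paper passes over by simply speaking of ``no bubbling'' (cf.\ \cite[Cor.\ 2.16]{GrantKuSaSt}). Note, however, that an $\A'$-chart is only a $C^{1,\alpha}$-chart relative to $\A$, not a chart that is smooth for both atlases. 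So that point should be settled through the chart-independent characterisation or by admitting $C^1$-charts.
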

\begin{proof}
    By Theorem \ref{Thm_smoothing_preliminaries} $\phi=\mathrm{id}$ is a $ W^{2,p}_\text{loc}$-isometry from $(\mathcal{M},\mathcal{A})$ to $(\mathcal{M},\mathcal{A'})$, cf.\ \eqref{eq:phi3.1}. In particular, $\phi_* g\in W^{2,p}_\text{loc}\subseteq C^{1,\alpha}_\text{loc}\subseteq C^{0,1}_{\text{loc}}$. Then $\phi_* g$ is causally plain by \cite[Cor.\ 1.17]{CG:12}. Hence no bubbling occurs in $(\mathcal{M,A'})$ and so by Lemma \ref{prop:I^+} no bubbling occurs in $(\mathcal{M,A})$ either, rendering $g$ causally plain.
\end{proof}

\medskip

We now turn to the issue of geodesics vs.\ maximisers in low regularity. Indeed below the threshold of $g \in C^{1,1}_\text{loc}$ the usual equivalence fails. Classical examples show that solutions of the geodesic equations need not be local maximisers already for $g$ locally in the Hölder class $C^{1,\alpha}$ for any $0<\alpha<1$, see \cite{HW:51,SS:18}. Conversely, maximisers (which exist already for $g\in C^0$) are Filippov-geodesics of regularity $C^{1,1}$, once we have $g\in C^{0,1}_\text{loc}$, see \cite[Thm.\ 1.1]{LLS:21} and the discussion following it. Also, such maximisers possess a causal character, i.e, they are either timelike or null throughout \cite{GL:18,LLS:21}. We next show that RT-regularisation allows for an analogous result. 

\begin{Prop}[Maximisers are RT-geodesics] \label{Prop_maximisers->geodesics}
Let $(\M,g)$ be a $W^{1,p}$-spacetime with $\Riem[g] \in L^p$, $p>{\rm max}\{4,n\}$, $p<\infty$. Suppose that $\gamma:[a,b]\to \M$ is a causal maximiser between its endpoints. Then $\gamma$ is of regularity $C^{1,\alpha}$ with $\alpha= 1-n/p$, and $\gamma$ is an RT-geodesic. 
\end{Prop}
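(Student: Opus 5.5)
The plan is to transfer the problem via the RT-regularisation $\phi=\mathrm{id}:(\M,\A,g)\to(\M,\A',g)$ of Theorem \ref{Thm_smoothing_preliminaries}. In $\A'$ the metric is $g\in W^{2,p}_{\A',\loc}\subseteq C^{1,\alpha}_\loc\subseteq C^{0,1}_\loc$, so the known Lipschitz theory applies there. Afterwards we translate back to $\A$ using the $W^{2,p}$-compatibility of the atlases.

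First, $\phi$ is a $C^{1}$ (indeed $C^{1,\alpha}$) diffeomorphism and an isometry, so it preserves Lipschitz curves, their causal character and their Lorentzian length, and hence the time separation by \eqref{eq:tau=tau}. Thus $\gamma$ is a causal maximiser between its endpoints in $(\M,\A',g)$ as well. Since $g\in C^{0,1}_{\A',\loc}$, \cite[Thm.\ 1.1]{LLS:21} applies. It says that $\gamma$, after reparametrisation (by $g$-arclength if timelike, or suitably if null), is a Filippov solution of the geodesic equation and is $C^{1,1}$. By \cite{GL:18,LLS:21} it is moreover either timelike or null throughout.

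Second, I upgrade the Filippov solution to a classical one. In $\A'$ the Christoffel symbols $\Gamma'$ lie in $W^{1,p}_\loc\subseteq C^{0,\alpha}_\loc$, in particular they are continuous. For continuous right-hand sides, the Filippov set-valued map at a point reduces to the singleton given by the right-hand side itself. So a Filippov solution is an absolutely continuous curve with $\ddot\gamma^k=-\Gamma'^k_{ij}(\gamma)\dot\gamma^i\dot\gamma^j$ a.e. Since the right-hand side is continuous along the $C^1$ curve, $\dot\gamma$ is $C^1$, and $\gamma$ is a classical $C^2$ solution. Bootstrapping with $\Gamma'\in C^{0,\alpha}$ gives $\gamma\in C^{2,\alpha}_{\A'}$, consistent with Lemma \ref{Lem:reg-RT-geos}(i). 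Hence $\gamma$ is an RT-geodesic in the sense of Definition \ref{Def_weak_geod}, with $\A'$ as the regularising atlas. Its $C^{1,\alpha}_{\A}$-regularity then follows from Lemma \ref{Lem:reg-RT-geos}(i): the transition maps from $\A'$ to $\A$ are $W^{2,p}\subseteq C^{1,\alpha}$, and composing a $C^{2,\alpha}$ curve with a $C^{1,\alpha}$ map gives a $C^{1,\alpha}$ curve.

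The main obstacle is the bookkeeping around the hypotheses of \cite[Thm.\ 1.1]{LLS:21}. That result is formulated for Lipschitz metrics on a manifold with a smooth atlas, and $\A'$ is indeed a smooth atlas. One must also make sure that maximality is taken in the same class of curves in both atlases. Locally Lipschitz curves coincide for $C^1$-compatible atlases, so this is fine. A further point is the null case, where the parametrisation is not by arclength. There I would rely on the statement in \cite{LLS:21} that null maximisers admit a parametrisation making them Filippov (hence, as above, classical) geodesics. Should that fail for $C^{1,\alpha}$ metrics, I would instead argue locally in $\A'$ with the standard $C^{1}$-metric results of \cite{Graf}.
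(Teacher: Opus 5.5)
Your proof is correct and has the same skeleton as the paper's proof. You transfer via the RT-regularisation $\phi=\mathrm{id}:(\M,\A,g)\to(\M,\A',g)$ and note that $\phi\circ\gamma$ is still a maximiser, since lengths and $\tau$ are preserved. You then identify it as a classical geodesic in $\A'$ and pull back with the $C^{1,\alpha}$ transition maps.

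The difference lies in the key lemma used in $\A'$:
\begin{itemize}
\item The paper uses $\phi_*g\in C^1$ and cites \cite[Thm.\ 3.3]{benedikt}. That result says causal maximisers in $C^1$-spacetimes solve the geodesic equation classically. This is essentially your fallback option.
\item You first use only $g\in C^{0,1}_{\A'}$ to apply \cite[Thm.\ 1.1]{LLS:21}, which gives a $C^{1,1}$ Filippov solution. You then use continuity of $\Gamma'$, i.e.\ $g\in C^1$ in $\A'$, at exactly one point: the Filippov set-valued map collapses to a singleton, so the solution is a classical $C^2$ (indeed $C^{2,\alpha}$) geodesic. Both of these steps are correct.
\end{itemize}

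Your concern about parametrising null maximisers is not a gap. The paper itself relies on \cite{LLS:21} for null maximisers, both in the discussion preceding the proposition and in Remark \ref{rem:achronal-acausal-mean-curvature}(ii).

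As for what each route buys: the paper's argument is a one-line citation that avoids the Filippov machinery entirely. Your argument costs an extra elementary step. In return it shows that $C^1$ regularity of the metric, rather than mere Lipschitz regularity, is needed only to pass from Filippov to classical geodesics. It also gives the causal character of $\gamma$ (timelike or null throughout) and its $C^{2,\alpha}_{\A'}$-regularity as by-products.
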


\begin{proof} 
Applying the RT-regularisation procedure of Theorem \ref{Thm_smoothing_preliminaries}, cf.\ \eqref{eq:phi3.1}
and noting that the length of curves as well as the time separation $\tau$
is preserved we find that $\tilde \gamma :=\phi\circ \gamma$ is a maximiser in the $C^1$-spacetime  $(\M,\phi_*g)$. Thus by \cite[Thm.\ 3.3]{benedikt}, $\tilde\gamma$ solves the geodesic equation for $\phi_*g$, hence in particular is of regularity $C^2$. Now, recalling that
$\phi$ is a $C^{1,\alpha}_\text{loc}$-isometry we find
that $\gamma=\phi^{-1}\circ\tilde\gamma \in C^{1,\alpha}_\text{loc}$. Finally, $\gamma=\phi^{-1}\circ\tilde\gamma$ is obviously an RT-geodesic in the sense of Definition \ref{Def_weak_geod}.
\end{proof}

We close this section by showing that RT-regularisation also allows us to derive a non-branching result for geodesics and maximisers however, under additional regularity of the curvature. Here we say that a causal maximiser (or geodesic) $\gamma:[0,1] \to\M$ branches if there exists another maximizing (or geodesic) causal curve $\sigma: [0,1] \to\M$ such that $\gamma(t)=\sigma(t)$ for all $0 \leq t \leq a$ for some $0 < a < 1$ and $\gamma(t)\not=\sigma(t)$ for all $a< t\leq 1$.

\begin{Prop}[Non-branching]\label{prop:nb}
Let $(\M,g)$ be a $W^{1,p}_\text{loc}$-spacetime with $\Riem[g]\in W^{1,p}_\text{loc}$, $p>{\rm max}\{4,n\}$, $p<\infty$, then the following holds: 
\begin{enumerate}
    \item The initial value problem for RT-geodesics is uniquely solvable in $(\M,g)$.
    \item Causal RT-geodesics in $(\M,g)$ are non-branching.
    \item Causal maximisers in $(\M,g)$ are non-branching.  
\end{enumerate}     
\end{Prop}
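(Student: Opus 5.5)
The plan is to use the regularity-bootstrapping part of the RT theory: the regularisation continues until the connection reaches its essential regularity, one derivative above the curvature. With $\Riem[g]\in W^{1,p}_\text{loc}$, the results of \cite{ReintjesTemple_essreg} (see the footnote to Theorem \ref{Thm_smoothing_preliminaries}) should yield a smooth atlas $\A''$, $W^{2,p}$-compatible with $\A$, in which $g\in W^{3,p}_{\A'',\text{loc}}$. Concretely, I would first apply Theorem \ref{Thm_smoothing_preliminaries} to get $\A'$ with $g\in W^{2,p}_{\A'}$, so $\Gamma\in W^{1,p}_{\A'}$. Next I would check that $\Riem[g]\in W^{1,p}$ persists in $\A'$. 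This needs care because $\A$ and $\A'$ are only $W^{2,p}$-related, so the Jacobians are only $W^{1,p}$; however, $W^{1,p}$ is an algebra for $p>n$, so the tensorial transformation of $\Riem$ stays in $W^{1,p}$. I would then apply the higher-order RT-regularisation step to raise $\Gamma$ to $W^{2,p}$. Since $W^{2,p}\subseteq C^{1,\alpha}$, the Christoffel symbols in $\A''$ are locally Lipschitz, i.e.\ $g\in C^{1,1}_{\A'',\text{loc}}$.

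\textbf{(1) Unique solvability.} In $(\M,\A'',g)$ the geodesic equation has a locally Lipschitz right-hand side, so Picard--Lindel\"of gives unique solutions. Let $\gamma$ be an RT-geodesic, defined with respect to some regularising atlas $\A'$ with $g\in W^{2,p}_{\A'}$. By Lemma \ref{Lem:reg-RT-geos}(ii), $\gamma$ is also a classical geodesic in $\A''$, and the transition $\A'\to\A''$ is $W^{3,p}\subseteq C^{2,\alpha}$. Hence initial data $(\gamma(0),\dot\gamma(0))$ correspond bijectively across the atlases, and any two RT-geodesics with the same data (possibly defined via different atlases) are both classical geodesics in $\A''$ with the same initial data, so they coincide.

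\textbf{(2) Non-branching of RT-geodesics.} This follows from (1). If $\gamma=\sigma$ on $[0,a]$, then since both are $C^1$ (Lemma \ref{Lem:reg-RT-geos}(i)), they share position and velocity at $t=a$. Uniqueness from (1) then forces $\gamma=\sigma$ near $a$, and hence on $[0,1]$ by continuation.

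\textbf{(3) Non-branching of maximisers.} By Proposition \ref{Prop_maximisers->geodesics}, causal maximisers are RT-geodesics (after affine reparametrisation), so (2) applies. One subtlety is parametrisation: the definition of branching compares curves with a common parameter. I would handle this by parametrising both maximisers as geodesics of $\phi_*g$. Two geodesics agreeing as point sets on an initial segment then have parallel velocities at the branching point, and after an affine rescaling they have identical initial data, so (1) applies.

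The main obstacle is the first step: justifying the one-derivative gain from $\Riem\in W^{1,p}$ in the non-smooth atlas $\A'$. I would rely on the essential-regularity theorem of \cite{ReintjesTemple_essreg}, which covers connections $\Gamma\in W^{m,p}$ with $\Riem\in W^{m,p}$ for $m\geq 1$. The point to verify is that the curvature hypothesis, stated in $\A$, transfers to $\A'$, which is where the algebra property of $W^{1,p}$ for $p>n$ is used.
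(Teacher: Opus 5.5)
Your proposal is correct and follows essentially the same route as the paper. Both arguments use a second RT-regularisation (via \cite[Thm.~2.3]{ReintjesTemple_essreg}) to reach $g\in W^{3,p}_\text{loc}\subseteq C^{2,\alpha}_\text{loc}$ in a new smooth atlas, which lies above the $C^{1,1}$-threshold for unique solvability of the geodesic initial value problem; non-branching of RT-geodesics then follows from uniqueness, and non-branching of maximisers from Proposition~\ref{Prop_maximisers->geodesics}. Your extra checks fill in details the paper leaves implicit: transferring $\Riem[g]\in W^{1,p}$ to $\A'$ via the algebra property, using Lemma~\ref{Lem:reg-RT-geos}(ii) to place all RT-geodesics in one atlas, and handling affine reparametrisation for maximisers.
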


\begin{proof}  
(i): RT-geodesics for $W^{1,p}_\text{loc}$-metrics are by default non-unique, however, as proven in Theorem 2.3 in \cite{ReintjesTemple_geod}, the condition $\Riem[g]\in W^{1,p}_\text{loc}$ implies uniqueness of RT-geodesics in $(\M,\mathcal{A},g)$. This essentially follows because a second regularisation by the RT-equation establishes metric regularity $W^{3,p}_\text{loc} \subseteq C^{2,\alpha}_\text{loc}$ with respect to some new $C^\infty$-atlas $\mathcal{A}'$ which is $W^{2,p}_\text{loc}$-compatible to $\mathcal{A}$ (see \cite[Thm. 2.3]{ReintjesTemple_essreg} for the global regularisation) and this places the metric above the $C^{1,1}_\text{loc}$-threshold regularity for unique solvability of the initial value problem of the geodesic equation. 

(ii): Uniqueness of RT-geodesics (as solutions to the initial value problem of the geodesic equation in the sense of Definition \ref{Def_weak_geod}), directly implies that RT-geodesics are non-branching. More precisely, if two RT-geodesics were branching, they would be identical on some open interval $(t_{-},t_0)$, but distinct on the open interval $(t_0,t_+)$. By $C^1$-regularity of the two RT-geodesics (cf. Definition \ref{Def_weak_geod}), both give rise to the same initial data at $t_0$. Thus, by uniqueness of RT-geodesics, both geodesic are indentical on some open interval $(t_0,t_0^+) \subseteq (t_0,t_+)$, which is a contradiction. 

(iii): By Proposition \ref{Prop_maximisers->geodesics}, maximisers are $C^{1,\alpha}$-RT-geodesics, so the claim follows from (ii).
\end{proof}

\section{Mean curvature bounds from time functions}  \label{Sec_mean_curv_timefunction}

The metric regularities $W^{1,p}_\text{loc}$ addressed in this paper are too low to define the mean curvature (even) for a smooth hypersurface $\Sigma$ because of a loss of regularity under restriction to hypersurfaces entailed by the Trace Theorem (namely  $g|_\Sigma\in W^{1-1/p,p}_\text{loc}$).

Moreover, applying an RT-regularsation does not solve the problem. While then the metric regularity is improved to (above) $C^1$, which would allow for the classical definition of mean curvature for \emph{smooth} hypersurfaces, we face the problem that after RT-regularisation $\Sigma$ is only of regularity $C^{1,\alpha}_\text{loc}$, hence lies below the $C^2$-threshold which allows for the classical treatment.

In the following we develop an approach based on smearing out $\Sigma$, assumed to be a smooth spacelike Cauchy surface using a Cauchy temporal function\footnote{Observe that the approach using the flow-out of $\Sigma$ under a transversal vector field as used in \cite[Sec.\ 4]{CGHKS25} heavily depends on the smoothness of $\Sigma$, which does not hold in our situation, cf.\ also the discussion in the Introduction.}. We will first review the smooth case, and then address the case of a $W^{1,p}_\text{loc}$-metric. This will allow us to formulate a suitable mean curvature condition in the Hawking theorem  for $g\in W^{1,p}_\loc$. We will then treat the case $g\in W^{2,p}_\text{loc}$, $\Sigma \in W^{2,p}_\text{loc}$ (appearing after the RT-regularisation). Finally, we further mollify the $W^{2,p}_\text{loc}$-metric and the Cauchy temporal function in order to make the corresponding notion of mean curvature applicable in our proof of the Hawking theorem. 

\subsection{Smeared out mean curvature: the smooth case}
To motivate our concept of mean curvature bounds for low regularity metrics, we first consider the smooth setting. Thus let $(\M,g)$ be a smooth globally hyperbolic spacetime with smooth spacelike Cauchy surface $\Sigma$. By a \emph{Cauchy temporal function} we mean a $C^1$-function $\T:M\to\R$ with past-directed timelike gradient everywhere, whose level sets $\Sigma_t:=\T^{-1}(t)$ are Cauchy hypersurfaces for all $t\in \R$.  The latter condition can equivalently be replaced by requiring $\T\circ\gamma$ to be onto for all inextendible causal curves $\gamma$. By \cite[Thm.\ 1.2] {BS06}, there exists a smooth Cauchy temporal function $\T$ on $M$ such that $\Sigma = \Sigma_0 = \T^{-1}(0)$. 

We define 
\beq\label{eq:N-def}
N :=  -\,\frac{1}{\|{\rm grad}_g(\T)\|_g}\ {\rm grad}_g (\T),
\eeq 
which is a timelike future-pointing unit vector field perpendicular to each level set $\Sigma_t$. Now given some open set $U$, we introduce a ``smeared-out'' version of the mean curvature of $\Sigma\cap U$, as follows: Choose a smooth local frame $(X_i)_{i=1,...,n-1}$ (not necessarily orthonormal) in a neighbourhood $\Omega_q\subseteq U$ of any point $q\in U$ such that $X_i \perp_{g} N$ in $\Omega_q$ for all $i$. 
Then set 
\beq \label{mean-curv_smear-smooth}
\mathcal{H}_{g,\T,U} :=  - \sum_{i,j=1}^{n-1} G^{ij}\ g(\nabla_{X_i}X_j , N) \in C^\infty(\Omega_q),
\eeq 
where $G^{ij} = \Big(\big(g(X_l,X_m)\big)^{-1}\Big)^{ij}$ is the point-wise inverse of the matrix with components $g(X_i,X_j)$ (cf.\ \eqref{eq:mean-curvature-non-orth}). Then clearly we obtain a well-defined function $\mathcal{H}_{g,\T,U}\in C^\infty(U)$. The restriction of $\mathcal{H}_{g,\T,U}$ to $\Sigma$ is then the restriction to $\Sigma\cap U$ of the classical mean curvature $H_\Sigma$ of $\Sigma$ in $M$, c.f.\ \eqref{eq:mean-curvature-non-orth}, and so by continuity we immediately obtain:

\begin{Lemma}[Smeared out mean curvature in the smooth case]\label{lem:smeared-mean-curv-smooth} 
Let $(\M,g)$ be a smooth spacetime and let $\Sigma=\T^{-1}(0)$ be a smooth spacelike Cauchy surface,  with  $\T$ a smooth Cauchy temporal function. Let $b\in \R$. Then the following are equivalent:
\begin{itemize}
\item[(i)]  $H_\Sigma > b$ (resp.\ $<b$).
\item[(ii)] There exists an open neighbourhood $U$ of $\Sigma$ such that $\mathcal{H}_{g,\T,U}>b$, (resp.\ $<b$).
\end{itemize}
\end{Lemma}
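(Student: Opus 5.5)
The plan is to reduce both implications to two facts: first, that $\mathcal{H}_{g,\T,U}$ restricted to $\Sigma\cap U$ equals $H_\Sigma$; second, that $\mathcal{H}_{g,\T,U}$ is continuous. The first step is to confirm that $\mathcal{H}_{g,\T,U}$ is well defined, i.e.\ independent of the chosen local frame. At any point $x\in\Omega_q$, the vectors $X_i(x)$ span $N(x)^\perp = T_x\Sigma_{\T(x)}$, since $N$ is normal to the level sets. The expression $-\sum G^{ij} g(\nabla_{X_i}X_j,N)$ is then $C^\infty(\Omega_q)$-bilinear in the frame, because $g(\nabla_{X_i}X_j,N)$ differs from a tensorial expression only by terms involving $g(X_j,N)=0$. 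More precisely, $g(\nabla_{X_i}X_j,N) = -g(X_j,\nabla_{X_i}N)$, which is tensorial in both $X_i$ and $X_j$. Consequently the sum is the trace of the endomorphism $v\mapsto \nabla_v N$ restricted to $N^\perp$ and projected back onto $N^\perp$. This trace is frame independent, so the local definitions patch together to a smooth function on $U$.

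The second step is to identify the restriction to $\Sigma$. At points $x\in\Sigma\cap U$, the frame vectors $X_i(x)$ are tangent to $\Sigma=\Sigma_0$, and $N|_\Sigma$ is the future unit normal of $\Sigma$. Moreover, $g(\nabla_{X_i}N,X_j)$ at $x$ only depends on $N$ along $\Sigma$, since $X_i(x)\in T_x\Sigma$. Hence \eqref{eq:mean-curvature-non-orth} gives $\mathcal{H}_{g,\T,U}|_{\Sigma\cap U}=H_\Sigma|_{\Sigma\cap U}$.

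With these two facts, (ii)$\Rightarrow$(i) is immediate, because $\Sigma\subseteq U$ and $\mathcal{H}$ agrees with $H_\Sigma$ there. For (i)$\Rightarrow$(ii), take any open neighbourhood $U_0$ of $\Sigma$, for instance $U_0=\M$, and set $U:=\{x\in U_0:\mathcal{H}_{g,\T,U_0}(x)>b\}$. This set is open by continuity of $\mathcal{H}_{g,\T,U_0}$ and contains $\Sigma$ by the second step together with (i). Finally, $\mathcal{H}_{g,\T,U}=\mathcal{H}_{g,\T,U_0}|_U$, since the definition is local, so (ii) holds. The case $<b$ is identical.

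I expect the only mildly delicate point to be the frame independence and the tensoriality of the defining expression. No compactness of $\Sigma$ is needed, because the bound in (ii) is pointwise on $U$ rather than a uniform margin.
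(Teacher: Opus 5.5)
Your proposal is correct and follows the paper's argument: the paper also notes that $\mathcal{H}_{g,\T,U}$ restricted to $\Sigma\cap U$ is the classical mean curvature $H_\Sigma$, and concludes by continuity. Your write-up supplies the details the paper leaves implicit, namely frame independence via $g(\nabla_{X_i}X_j,N)=-g(X_j,\nabla_{X_i}N)$ and the explicit open set $\{\mathcal{H}_{g,\T,\M}>b\}$.
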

Note that if $\Sigma$ is compact, then a neighbourhood $U$ as above can always be chosen of the form $U=\T^{-1}(t_-,t_+)$, for some appropriate $t_- < 0 < t_+$.
Also observe that the above bounds are independent of the choice of $\T$.

\subsection{Smeared out mean curvature: the rough case}

Turning now to the low-regularity situation, suppose that $g \in W^{1,p}_\text{loc}(\M)$, and assume from here on that $p>2n$. This is slightly more regularity than used in Theorem \ref{Thm_smoothing_preliminaries} but will be needed for approximation by convolution in Lemma \ref{lem:roland_tensor} below and hence for the main result in this Section, Theorem \ref{Prop:Mean-curvature-convergence}.

Since $g$ is continuous it defines a proper cone structure by Remark \ref{rem:cs}.
Moreover, due to \cite[Prop.\ 2.23, Thm.\ 2.51]{Minguzzi-cone-structures}, $g$ induces a closed Lorentz-Finsler structure on $M$. We are thus in a position to apply \cite[Thm.\ 2.14]{Minguzzi-Cauchy-surface} to conclude that, just as in the smooth setting discussed above, for any smooth spacelike Cauchy surface $\Sigma$ in $M$ there exists a smooth Cauchy temporal function $\T: \M\to \R$ such that $\Sigma= \Sigma_0 = \T^{-1}(0)$. This result was also established independently in \cite[Prop.\ 2.4]{bernard_suhr}.

We shall base our definition of mean curvature bounds on $\Sigma$ on an analogue of \eqref{mean-curv_smear-smooth} as follows: Define the normal vector field $N$ as in \eqref{eq:N-def} and note that $N$ is of the same regularity as $g$, that is, $N\in W^{1,p}_\text{loc}$. 
Now, fixing some open set $U$ as before we can introduce on $U\cap\Sigma$ a smeared-out version of the mean curvature of $\Sigma$
, in this low-regularity situation, as follows: Choose a local frame $(X_i)_{i=1,...,n-1}$ in a neighbourhood $\Omega_q \subseteq U$ of any point $q\in U$ such that for all $i$
\beq\label{eq:W1p-frame} 
X_i \perp_{g} N \qquad \text{and} \qquad X_i \in W^{1,p}_\text{loc}(\Omega_q);
\eeq
we show in Lemma \ref{Lemma_mean-curv_smeared} below that such a local frame always exists. Analogously to \eqref{mean-curv_smear-smooth}, we now define the {\it smeared-out mean curvature} in $\Omega_q$ by
\beq \label{mean-curv_smear}
\mathcal{H}_{g,\T,U}:=  - \sum_{i,j=1}^{n-1} G^{ij}\ g(\nabla_{X_i}X_j , N) \in L^p_\text{loc}(\Omega_q).
\eeq 

\begin{Lemma}[Mean curvature for smooth $\Sigma$ and $g\in W^{1,p}_\text{loc}$] \label{Lemma_mean-curv_smeared}
The smeared-out mean curvature $\mathcal{H}_{g,\T,U}$, as defined in \eqref{mean-curv_smear}, exists for any open neighbourhood $U$ of $\Sigma$, lies in $L^p_\text{loc}(U)$, and is independent of the choice of local frame $(X_i)_{i=1,...,n-1}$.
\end{Lemma}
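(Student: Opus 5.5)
Three things need proof: a $W^{1,p}$ frame with \eqref{eq:W1p-frame} exists near every $q\in U$; the expression \eqref{mean-curv_smear} lies in $L^p_\loc$; and it does not depend on the frame. Local functions defined this way then patch to a single $\mathcal{H}_{g,\T,U}\in L^p_\loc(U)$.

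\emph{Existence of the frame.} Fix $q\in U$ and a coordinate neighbourhood $\Omega_q\subseteq U$. Since $\T$ is smooth with $d\T\neq 0$, we can choose smooth vector fields $Y_1,\dots,Y_{n-1}$ on $\Omega_q$ that span $\ker d\T$ pointwise. Because $\mathrm{grad}_g\T$ is $g$-orthogonal to $\ker d\T$, $N\perp_g Y_i$ holds automatically. The $Y_i$ are smooth, so $X_i:=Y_i$ already satisfies \eqref{eq:W1p-frame}.

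For later use, we instead allow an arbitrary frame with \eqref{eq:W1p-frame}. The condition $X_i\perp_g N$ is equivalent to $d\T(X_i)=0$, so any such frame is tangent to the level sets $\Sigma_t$. Hence any two such frames are related by $X_i'=\sum_k A_i^k X_k$ with $A=(A_i^k)\in W^{1,p}_\loc$ pointwise invertible. This follows by solving the linear system in the smooth frame $Y$, using that $W^{1,p}_\loc$ is an algebra for $p>n$ and is closed under inversion of nondegenerate matrices.

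\emph{Regularity.} In coordinates,
\[
\nabla_{X_i}X_j=X_i(X_j^a)\partial_a+\Gamma^a_{bc}X_i^bX_j^c\partial_a .
\]
Here $\Gamma\in L^p_\loc$, since it is built from $g^{-1}\in W^{1,p}\subseteq C^0$ and $\partial g\in L^p$. The term $X_i(X_j^a)$ is also in $L^p_\loc$, because $X_i\in C^0$ and $\partial X_j\in L^p$. Therefore $\nabla_{X_i}X_j\in L^p_\loc$. All remaining factors, namely $N$, $g$ and $G^{ij}$, are continuous, with $G^{ij}$ continuous since the Gram matrix is continuous and positive definite. It follows that \eqref{mean-curv_smear} lies in $L^p_\loc(\Omega_q)$.

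\emph{Frame independence.} This is the step that needs care, since Leibniz rules must hold a.e. Write $X'=AX$. Then
\[
G'^{-1}=A^{-T}G^{-1}A^{-1},
\]
and by the product rule in $W^{1,p}$,
\[
\nabla_{X'_i}X'_j=\sum_{k,l}A_i^kA_j^l\,\nabla_{X_k}X_l+\sum_{k,l}A_i^kX_k(A_j^l)\,X_l .
\]
The second sum is tangent to the level sets, so its pairing with $N$ vanishes a.e. In the first sum the $A$ factors cancel against $G'^{-1}$, giving $\mathcal{H}'=\mathcal{H}$ a.e.

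To justify these identities, approximate $X$ and $A$ by smooth fields in $W^{1,p}_\loc$ and pass to the limit in $L^p_\loc$. Alternatively, note that all the algebraic identities involved are pointwise a.e. consequences of the $W^{1,p}$ product rule, which holds because $p>n$. Since the local definitions agree on overlaps, they glue to a well-defined $\mathcal{H}_{g,\T,U}\in L^p_\loc(U)$.
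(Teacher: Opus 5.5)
Your proof is correct. It differs from the paper's mainly in how the frame is produced.

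The paper takes constant-coefficient coordinate fields $\tilde X_i$ whose values at $q$ are $g$-orthogonal to $N_q$, and projects them: $X_i=\tilde X_i+g(\tilde X_i,N)N$. These fields are only $W^{1,p}_{\mathrm{loc}}$, inheriting the regularity of $g$ and $N$. The paper then treats the $L^p$-regularity as immediate and the frame independence as ``the standard calculation, now taken in the sense of $L^p$-functions''.

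You instead observe that $N^{\perp_g}=\ker d\T$ does not depend on $g$. For smooth $\T$ the frame can therefore be chosen smooth, which confines all roughness to $g$, $N$ and the Christoffel symbols. You also write out that standard calculation explicitly: $X'=AX$ with $A\in W^{1,p}_{\mathrm{loc}}$, $G'^{-1}=A^{-T}G^{-1}A^{-1}$, the $W^{1,p}$ Leibniz rule, and the vanishing of the $X_k(A_j^l)X_l$-terms against $N$. You further handle the gluing over overlaps, which the paper leaves implicit.

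What the paper's projection buys is uniformity with the later argument. It uses nothing about $\T$ beyond $N\in W^{1,p}_{\mathrm{loc}}$. It is also exactly the operation $\Phi_X$ applied to mollified frames in the proof of Theorem~\ref{Prop:Mean-curvature-convergence}. It therefore carries over unchanged to the RT-regularised situation, where $\T$ is merely $W^{2,p}_{\mathrm{loc}}$. There your $\ker d\T$-frame would still exist, but only with $W^{1,p}$ regularity.
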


\Proof
To prove existence, we have to show that one can indeed always locally introduce a frame $(X_i)_{i=1,...,n-1}$ as in \eqref{eq:W1p-frame}. Fixing $q\in U$, let $(x^1,\dots,x^n)$ be local coordinates around $q$ and pick $v_i =\sum_{j=1}^n \alpha_i^j \partial_{x^j}|_q$ orthogonal to $N_q$ such that $v_1,\dots,v_{n-1},N_q$ forms a basis of $T_qM$. Let $\tilde X_i := \sum_{j=1}^n \alpha_i^j \partial_{x^j}$ be the constant coefficient extension of $v_i$ to the chart domain of $x$ and set
\begin{equation}\label{eq:5.5}
    X_i := \tilde X_i + g(\tilde X_i,N)N \qquad (i=1,\dots,n-1). 
\end{equation}
Then the $X_i$ are perpendicular to $N$, have the same regularity as $g$, namely $W^{1,p}_\text{loc}$, and since $X_i(q)=v_i$, there exists a neighbourhood $\Omega_q$ of $q$ in $U$ such that $X_1,\dots,$ $X_{n-1},N$ is 
a frame on $\Omega_q$.
The $L^p$-regularity of $\mathcal{H}_{g,\T,U}$ follows directly from the regularity of its constituents.
Finally, the independence of $\mathcal{H}_{g,\T,U}$ of the choice of frame follows from the standard calculation, now taken in the sense of $L^p$-functions. 
\QED

With these building blocks in place, we now define mean curvature bounds in the low regularity regime in analogy to
the smooth setting as follows (cf.\ Lemma \ref{lem:smeared-mean-curv-smooth}):

\begin{Def}[Mean curvature bounds]\label{Def:mean-curvature-bound} Let $(\M,g)$ be a $W^{1,p}_\text{loc}$-spacetime with smooth spacelike Cauchy surface $\Sigma$.  We say that the mean curvature of $\Sigma$ is bounded above (resp.\ below) by $\beta\in \R$ if 
there exists a smooth Cauchy temporal function $\T$ with $\Sigma=\T^{-1}(0)$ and a neighbourhood $U$ of $\Sigma$ in $M$ such that 
\begin{equation}\label{eq:ess-sup_b}
 \esssup \mathcal{H}_{g,\T,U}<\beta \qquad (\text{resp.} >\beta). 
\end{equation}
\end{Def}

Observe that, contrary to the smooth case (cf.\ Lemma \ref{lem:smeared-mean-curv-smooth}), while as remarked above there always is 
a smooth temporal function $\T$ realizing $\Sigma$ as its $0$-level set, bounds as in \eqref{eq:ess-sup_b} will in general depend
on the concrete choice of such a $\T$.

Finally we apply an RT-regularisation and observe that curvature bounds transfer to the corresponding scenario.
\begin{Remark}[Mean curvature bounds for $\Sigma\in W^{2,p}_\text{loc}$ and $g\in W^{2,p}_\text{loc}$]\label{rem:cb}
By Theorem \ref{Thm_smoothing_preliminaries} and using the notation of \eqref{eq:phi3.1} 
we have that $\T$ is a Cauchy temporal function but now of regularity $W^{2,p}_\text{loc}(\M,\R)$ (w.r.t.\ $\A'$) and hence $N \in W^{1,p}_\text{loc}(\M)$ and $X_i \in W^{1,p}_\text{loc}(\M)$ for $i=1,...,n-1$. Indeed $N$ again is future-pointing timelike and $\T \circ \gamma : \R \to \R $ is onto for all $g$-causal curves.

Moreover, one can again introduce the smeared-out mean curvature \eqref{mean-curv_smear}, and since \eqref{mean-curv_smear} indeed transforms as a scalar under any diffeomorphism $\phi \in W^{2,p}_\text{loc}$,
we conclude that the ess-sup bound \eqref{eq:ess-sup_b} is preserved. 
\end{Remark}

\subsection{Mollification of mean curvature}   

In order to produce a situation to which the classical Hawking theorem applies, we next mollify the $W^{2,p}_\text{loc}$-metric obtained previously by regularisation via the RT-equations. Note that a direct mollification of the original $W^{1,p}_\text{loc}$-metrics would not be sufficient, cf.\ the precise account of the required Sobolev regularities in Remark \ref{rem:proof-strategy} (ii) below. 

To begin with, let $\check g_\eps$ be an approximation of $g \in W^{2,p}_\text{loc}$ as in Lemma \ref{Le:approximating metrics}. For $\T \in W^{2,p}_\text{loc}$ we now show the following:

\begin{Lemma} \label{Lemma_T-eps_timefct}
There exists a smooth map $(0,1] \times M \to \R$, $(\eps,p) \mapsto \T_\eps(p)$ with the following properties:
\begin{itemize}
\item[(i)]  For each compact set $K\Subset M$ there exists $\eps_K >0$ such that\footnote{The reason for the $\sqrt{\eps}$-shift in $\T_\eps$ will become apparent in the proof of Theorem \ref{Th:Hawking-I} below.} 
\begin{equation}\label{eq:T-eps}
\T_\eps(p) = \T\star_M \rho_\eps(p) +\sqrt{\eps}
\end{equation}
for all $\eps\in (0,\eps_K]$ and all $p\in K$.
\item[(ii)] $|\T_\eps - \T| <1$ on $M$.
\item[(iii)] $\T_\eps \to \T$ in $W^{2,p}_\text{loc}$ as $\eps\to 0$.
\item[(iv)] Each $\T_\eps$ is a Cauchy temporal function  for $\check g_\eps$, i.e., 
\beq\label{eq:N_eps-def}  
N_\eps := - \frac{1}{\|{\rm grad}_{\check g_\eps}(\T_\eps)\|_{\check g_\eps}} {\rm grad}_{\check g_\eps} (\T_\eps) \in C^\infty 
\eeq 
is $\check g_\eps$-future-pointing timelike and $\T_\eps \circ \gamma : \R \to \R$ is 
onto for any inextendible $\check{g}_\eps$-causal curve $\gamma$.       
\end{itemize}
\end{Lemma}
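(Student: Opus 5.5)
The plan is to construct $\T_\eps$ by gluing the locally defined mollifications with a global cut-off, in the same way as $\check g_\eps$ is built from $g_\eps$ in Lemma \ref{Le:approximating metrics}. We may assume $\T\in W^{2,p}_\text{loc}\subseteq C^{1,\alpha}_\text{loc}$ (w.r.t.\ $\A'$), and we view $g$ as a $C^1$-metric there.

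\textbf{Local step.} Fix an exhaustion $K_1\Subset K_2^\circ\subseteq K_2\Subset\dots$ of $\M$ by compact sets. On each $K_m$ the following hold as $\eps\to0$:
\begin{itemize}
\item $\T\star_\M\rho_\eps\to\T$ in $C^1(K_m)$ (since $\T\in C^1$) and in $W^{2,p}(K_m)$ (standard Friedrichs convergence of \eqref{eq:M-convolution}).
\item $\check g_\eps\to g$ uniformly by Lemma \ref{Le:approximating metrics}(ii).
\end{itemize}
Since $\mathrm{grad}_g\T$ is continuous, past-directed and $g$-timelike, compactness gives uniform margins on $K_m$. More precisely, there is $c_m>0$ with $g^{-1}(d\T,d\T)\le -c_m$ and $d\T(Y)\ge c_m|Y|_h$ for $g$-causal future-directed $Y$. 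Because $\check g_\eps\prec g$, the $\check g_\eps$-cones lie inside the $g$-cones. Consequently $d(\T\star_\M\rho_\eps)$ is positive on future-directed $\check g_\eps$-causal vectors, with $d(\T\star_\M\rho_\eps)(Y)\ge \tfrac{c_m}{2}|Y|_h$, for all $\eps\le\eps_m$. The same then holds for $\T\star_\M\rho_\eps+\sqrt\eps$. Shrinking $\eps_m$ further, we can also ensure $|\T\star_\M\rho_\eps+\sqrt\eps-\T|<1/2$ on $K_m$.

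\textbf{Gluing.} Define a smooth function $u(\eps,p)$ with $0<u\le\eps$, $u(\eps,p)=\eps$ on $K_m$ whenever $\eps\le\eps_{m+1}$, and $u(\cdot,p)$ small enough that all of the above estimates hold at $p$. Then set $\T_\eps(p):=(\T\star_\M\rho_{u(\eps,p)})(p)+\sqrt{u(\eps,p)}$. This is the same device as the map $u$ in the proof of Lemma \ref{lem:m+}. With this definition, (i) holds by construction, and (ii) and (iii) follow from the local estimates, since on compacts $\T_\eps$ eventually equals the mollification plus $\sqrt\eps$.

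\textbf{Main obstacle.} The difficult step is that the $p$-dependence of $u$ produces extra terms $\partial_\eps(\T\star\rho_\eps+\sqrt\eps)\,du$ in $d\T_\eps$. These terms could destroy timelikeness of the gradient. To control them, I would choose $u$ so that $du$ is supported in the shells $K_{m+1}\setminus K_m$ and is as small as needed there. A cleaner alternative, which I would try first, is to take $u(\eps,p)=\eps\,\lambda(p)$ with $\lambda$ a fixed positive function that is locally constant in a suitable sense, or to use a partition of unity $\sum\zeta_m\,(\T\star\rho_{\eps\lambda_m})$. In the latter version, $d\T_\eps=\sum\zeta_m\,d(\cdot)+\sum d\zeta_m\,(\text{differences})$, and the differences are uniformly $<\delta_m$ small. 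This keeps $d\T_\eps(Y)\ge \tfrac{c_m}{4}|Y|_h$, so $N_\eps$ is $\check g_\eps$-future-pointing timelike and smooth, which gives the first part of (iv).

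\textbf{Cauchy property.} For the surjectivity in (iv), let $\gamma$ be an inextendible $\check g_\eps$-causal curve. Then $\gamma$ is also $g$-causal, because $\check g_\eps\prec g$, and it is inextendible. Hence $\T\circ\gamma$ is onto $\R$, since $\T$ is Cauchy temporal for $g$. By (ii), $|\T_\eps\circ\gamma-\T\circ\gamma|<1$, so $\T_\eps\circ\gamma$ is unbounded above and below. Being continuous, it is onto.
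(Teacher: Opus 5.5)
Your proposal follows essentially the same route as the paper. You mollify and shift by $\sqrt{\eps}$, use compactness to get the timelike and approximation margins on each compact set for small $\eps$, and globalise by an $\eps$-reparametrisation $u(\eps,p)$; the paper simply invokes \cite[Lem.\ 2.4]{KSSV:14} for this gluing instead of controlling the extra $d_pu$-terms by hand. You then derive surjectivity of $\T_\eps\circ\gamma$ from $\check g_\eps\prec g$ and (ii), exactly as the paper does.

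One caution on the gluing. Your ``cleaner alternative'' with a fixed $\lambda$ (or fixed $\lambda_m$) cannot satisfy the exact normalisation in (i), because (i) on every compact set forces $\lambda\equiv 1$, which gives no global control. So keep the $\eps$-dependent cap $u$. There, the quantity to make small in the transition region is $|d_p\sqrt{u}|_h$ rather than $|d_pu|_h$, because of the term $\partial_\eps\sqrt{\eps}=1/(2\sqrt{\eps})$. This is achievable by choosing the $\eps_m$ small relative to the shell widths and the margins $c_m$.
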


Observe that by (iv) each level set $\Sigma^\eps_t:=\T_\eps^{-1}(t)$ ($t\in \R$) is a smooth spacelike Cauchy surface for $(M,\check{g}_\eps)$.

\Proof 
Set $\tilde{\T}_\eps := \T\star_M \rho_\eps(p) +\sqrt{\eps}$. Given $K\Subset M$, there exists $\eps_K>0$ such that $|\tilde \T_\eps - \T|<1$
on $K$ for $\eps\in (0,\eps_K]$. Moreover, $\tilde \T_\eps \to \T$ in $W^{2,p}_\text{loc}$, so defining $\tilde N_\eps$ as in \eqref{eq:N_eps-def} but with
$\tilde\T_\eps$ instead of $\T_\eps$, we have that $\tilde N_\eps \to N$ locally uniformly. Hence we may assume $\eps_K$ small enough to additionally guarantee that $\tilde N_\eps$ is $\check{g}_\eps$-future pointing timelike on $K$ for $\eps\in (0,\eps_K]$. We may now apply \cite[Lem.\ 2.4]{KSSV:14}
to obtain a smooth map $(0,1] \times M \to \R$, $(\eps,p) \mapsto \T_\eps(p)$ with properties (i)-(iii) and such that $N_\eps = \tilde N_\eps$
on any compact subset of $M$ for $\eps$ small.

Finally, if $\gamma:\R \to M$ is an inextendible $\check{g}_\eps$-causal curve, then since $\check{g}_\eps \prec g$, $\gamma$ is also $g$-timelike, so
$\T\circ \gamma:\R \to \R$ is surjective. Due to (ii), the same is then true for 
$\T_\eps \circ \gamma : \R \to \R$. 
\QED

Given any open set $U$ in $\M$, we then define $\mathcal{H}_{\check g_\eps,\T_\eps,U}$ as in \eqref{mean-curv_smear-smooth}.
As above, the restriction of $\mathcal{H}_{\check g_\eps,\T_\eps,U}$ to $\Sigma^\eps_t$ is the classical (smooth) mean curvature of $\Sigma^\eps_t\cap U$. To translate mean curvature bounds from $\mathcal{H}_{g,\T,U}$ to $\mathcal{H}_{\check g_\eps,\T_\eps,U}$, we need the uniform convergence as established in the next proposition. However, observe that $\mathcal{H}_{\check g_\eps,\T_\eps,U}\not\to\mathcal{H}_{g,\T,U}$ in $L^\infty_\text{loc}$.

\begin{Thm}\label{Prop:Mean-curvature-convergence} 
For any open set $U\subseteq \M$, 
\[
\mathcal{H}_{\check g_\eps,\T_\eps,U} -  
\mathcal{H}_{g,\T,U}\star_{\M} \rho_\eps \to 0 \qquad \text{in} \  L^\infty_\text{loc}(U). 
\]
\end{Thm}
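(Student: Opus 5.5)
Throughout, $g\in W^{2,p}_\loc$ and $\T\in W^{2,p}_\loc$ refer to the RT-regularised picture of Remark \ref{rem:cb}, so $g,\T\in C^{1,\alpha}_\loc$ with $\alpha=1-n/p$. The plan is to write both $\mathcal{H}_{\check g_\eps,\T_\eps,U}$ and $\mathcal{H}_{g,\T,U}$ in a frame-free local form as one algebraic expression, and then reduce the claim to Friedrichs-type commutator estimates for the manifold convolution, in the spirit of Section \ref{Sec_refined_conv_lemmas}.

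\textbf{Step 1: frame-free formula.} Let $\nu := -d\T/\|d\T\|$ and $P^{ab}:=g^{ab}+N^aN^b$ be the inverse induced metric on the level sets. Expressing \eqref{mean-curv_smear} via the projector gives locally
\begin{equation*}
\mathcal{H}_{g,\T,U} = P^{ab}\nabla_a \nu_b = F\big(g,g^{-1},\partial\T\big)^{abc}\,\partial_a\partial_b\T + G\big(g,g^{-1},\partial\T\big)^{abcd}_{e}\,\partial_a g_{cd}.
\end{equation*}
Here $F,G$ are smooth (rational) functions of their arguments, and the right-hand side is linear in the second derivatives $\partial^2\T$ and the first derivatives $\partial g$. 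The same formula holds with $(\check g_\eps,\T_\eps)$ in place of $(g,\T)$. Since the expression is frame-independent (Lemma \ref{Lemma_mean-curv_smeared}), the claim is local, and we may work on a compact $K\subset U$ inside a chart, with $\eps\le\eps_K$ so that $\T_\eps=\T\star_\M\rho_\eps+\sqrt\eps$ by Lemma \ref{Lemma_T-eps_timefct}(i).

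\textbf{Step 2: replace $\check g_\eps$ by $g_\eps$.} By Lemma \ref{lem:m+} with $k=0,1$ (this is where $g\in C^1$ is essential), $\check g_\eps-g_\eps$ and $\check g_\eps^{-1}-g_\eps^{-1}$ are $O(\eps)$ in $C^1$. The coefficients $F,G$ are locally Lipschitz in their arguments. The factor $\partial^2\T_\eps=(\partial^2\T)\star\rho_\eps$ is only bounded by $\eps^{-1}\|\partial\T\|_\infty$-type quantities: more precisely $\|\partial^2\T_\eps\|_\infty=O(\omega_{\partial\T}(\eps)\eps^{-1})$ by the argument in the proof of Lemma \ref{lem:m+}. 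Multiplied by the $O(\eps)$ coefficient difference, this is $o(1)$. Hence it suffices to show
\begin{equation*}
F(g_\eps,g_\eps^{-1},\partial\T_\eps)\,\partial^2\T_\eps + G(\dots)\,\partial g_\eps - \big(F(g,g^{-1},\partial\T)\partial^2\T + G(\dots)\partial g\big)\star\rho_\eps \to 0
\end{equation*}
uniformly on $K$. The chart-wise manifold convolution differs from Euclidean convolution only by cutoff and partition terms, which are handled as in Lemma \ref{lem:m+}. The $\sqrt\eps$ shift does not affect derivatives.

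\textbf{Step 3: Friedrichs-type $L^\infty$ commutator.} Both remaining pieces have the form $a_\eps\,(f\star\rho_\eps)-(af)\star\rho_\eps$. Here $a$ is a continuous coefficient and $a_\eps$ is that coefficient evaluated on mollified data, so $a_\eps-a\star\rho_\eps\to 0$ uniformly and, more precisely, $a_\eps - a = O(\omega(\eps))$. The function $f$ is either $\partial g$, which is continuous, in which case everything converges uniformly and trivially, or $\partial^2\T\in L^p$, which is the real issue. For the latter, write $f=\partial h$ with $h=\partial\T$ continuous and integrate by parts:
\begin{equation*}
\big[a(x)-a(y)\big]\,\partial h(y)\,\rho_\eps(x-y)
\end{equation*}
integrated in $y$ equals, after moving the derivative onto $[a(x)-a(y)]\rho_\eps(x-y)$ and subtracting the constant $h(x)$, a bound of the form
\begin{equation*}
\omega_h(\eps)\big(\|\partial a\|_{L^p}\,\eps^{-n/p}\|\rho_\eps\|_{L^{p'}}\eps^{n}\eps^{-n}+\omega_a(\eps)\,\eps^{-1}\big).
\end{equation*}
Since $a$ involves $\partial\T$ and $g$, it is $C^{0,\alpha}$ with $\partial a\in L^p$, so $\omega_a(\eps)\lesssim\eps^{\alpha}$, and $\omega_h(\eps)\lesssim\eps^\alpha$ as well. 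The product $\eps^{2\alpha-1}\to0$ precisely because $\alpha=1-n/p>1/2$, i.e.\ $p>2n$; this explains the standing assumption. The $L^p$ term is handled by Hölder's inequality with $\|\partial a\|_{L^p(B_\eps)}\le C$ and $|B_\eps|^{1/p'}\eps^{-n-1}$, yielding again $\eps^{\alpha}\cdot\eps^{\alpha-1}$.

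I expect this commutator estimate to be the main obstacle. It requires carefully organising the difference $a_\eps-a(x)$ versus $a(y)$ so that exactly two Hölder moduli appear against one factor $\eps^{-1}$. If this is arranged successfully, the theorem follows by combining Steps 1–3 over a finite cover of $K$.
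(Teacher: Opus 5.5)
Your argument is sound and matches the paper's skeleton: Step 1 is the splitting $\mathcal{H}_{g,\T,U}=\Phi_1(\mathbf{U})+\Phi_2(\mathbf{U})^{\mu\nu}w_{\mu\nu}$, and Step 2 is the paper's term (I). Step 3, however, is organised differently.

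\emph{The paper's route.} It separates $\Phi_2(\mathbf{U}_\eps)\cdot w_\eps-(\Phi_2(\mathbf{U})\cdot w)\star_\M\rho_\eps$ into two terms. Term (II) is $[\Phi_2(\mathbf{U}_\eps)-\Phi_2(\mathbf{U})\star_\M\rho_\eps]\cdot w_\eps$, bounded by $\eps^{2\alpha}\cdot\eps^{\alpha-1}$ via the nonlinear commutator Lemma~\ref{lem:nonlinear-commutator-estimate}. Term (III) is handled by the Friedrichs Lemma~\ref{lem:roland_tensor}.

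\emph{Your route.} You split $a_\eps(x)-a(y)=[a_\eps(x)-a(x)]+[a(x)-a(y)]$, with both pieces $O(\eps^\alpha)$, and pair this with $\int|\partial^2\T|\rho_\eps\lesssim\eps^{\alpha-1}$. This gives $\eps^{2\alpha-1}$ in one stroke. It shows that Lemma~\ref{lem:nonlinear-commutator-estimate} is not needed for the theorem, since the bottleneck $\eps^{2\alpha-1}$ (hence $p>2n$) comes from the Friedrichs part anyway. The paper's route buys only a sharper rate for term (II).

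\emph{Step 3 is easier than you expect.} Your integration by parts is superfluous. The bound $|a(x)-a(y)|\le C\eps^\alpha$ on the support, together with H\"older/Young, already gives $\eps^{\alpha}\cdot\eps^{-n/p}=\eps^{2\alpha-1}$. Your displayed bound also carries a spurious extra factor $\eps^{-n/p}$; as written it would give only $\eps^{1-3n/p}$.

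\emph{The under-justified step.} The passage from $\star_\M$ to Euclidean convolution, which you dispatch ``as in Lemma~\ref{lem:m+}'', needs more than you indicate. Lemma~\ref{lem:m+} only concerns $\check g_\eps-g_\eps$ and says nothing about this.

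A reduction to one Euclidean convolution in a single chart does not work. Since $\partial^2\T$ is only $L^p$, mollifiers coming from different charts of the partition have different shapes in a fixed chart. If, say, $\partial^2\T$ jumps across a hypersurface, then $\partial^2(\T\star_\M\rho_\eps)$ and a one-chart $\partial^2(\T*\rho_\eps)$ differ by $O(1)$ in $L^\infty$ near it. The same happens for $\mathcal{H}_{g,\T,U}\star_\M\rho_\eps$.

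What you need instead:
\begin{itemize}
\item Keep $\star_\M$ on both sides and prove the commutator estimate $\partial^2(\T\star_\M\rho_\eps)-(\partial^2\T)\star_\M\rho_\eps=O(\eps^\alpha)$ on $K$. This can be done componentwise in a fixed chart, or tensorially with $\nabla_h$ as the paper does. The tensorial version also makes your non-tensorial splitting $F\,\partial^2\T+G\,\partial g$ compatible with the chart-wise definition of $\star_\M$. The errors come from the transition Jacobians and derivatives of the $\xi_\alpha$, and are controlled via $\sum_\alpha\partial\xi_\alpha=0$ and a cancellation on the diagonal. This is exactly Lemma~\ref{lem:commutator_estimates}(ii).
\item Carry out your Step 3 summand by summand in the partition, with the same chart kernel on both sides, as in the proof of Lemma~\ref{lem:roland_tensor}.
\end{itemize}
With these two ingredients supplied, your proof goes through.
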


\begin{Remark}[Proof strategy]\label{rem:proof-strategy}
Since the proof of this statement is analytically involved we start with some preparatory remarks.
\begin{enumerate}[(i)]
\item Writing out the two expressions in the statement of Theorem \ref{Prop:Mean-curvature-convergence} explicitly we find
\begin{equation}\label{eq:H-hat}
    \mathcal{H}_{\check{g}_{\eps}, \mathcal{T}_{\eps}, U}
    =- \sum_{i,j=1}^{n-1} \check G^{ij}_\eps\ \check{g}_{\eps}(\nabla_{X_{i}^{\eps}}^{\check{g}_{\eps}} X_{j}^{\eps}, N_{\eps})
\end{equation} 
where $\check G^{ij}_\eps=\Big(\big(\check g_\eps (X_k^\eps,X_l^\eps)\big)^{-1}\Big)^{ij}$ and  
\begin{equation}\label{eq:H-sing}
    \mathcal{H}_{g, \mathcal{T}, U} \star_{\mathcal{M}} \rho_{\eps}
    = - \sum_{i,j=1}^{n-1} \Big(G^{ij}\ g(\nabla_{X_i}X_j , N)\Big)\star_{\mathcal{M}} \rho_{\eps}.
\end{equation} 
Here $\mathcal{H}_{g, \mathcal{T}, U}$ is built from the low regularity constituents $g$, $\T$, $N$ and an arbitrary frame $X_i$, according to \eqref{mean-curv_smear}, and the resulting $L^p$-function is smoothed by manifold convolution in the end. On the other hand, $\mathcal{H}_{\check{g}_{\eps}, \mathcal{T}_{\eps}, U}$ is made up from smooth constituents built using $\check g_\eps$ of Lemma \ref{Le:approximating metrics}, $\T_\eps$ of Lemma \ref{Lemma_T-eps_timefct}, $N_\eps$ of \eqref{eq:N_eps-def}, and a frame $X^\eps_i$ which we may choose as the $\check g_\eps$-tangential projection of the smoothing $X_i\star_{\M}\rho_\eps$ of the above frame $X_i$, i.e.
\begin{equation}\label{eq:Xieps-projection}
    X_i^\eps := X_i\star_{\M}\rho_\eps  + \check g_\eps \big(X_i\star_{\M}\rho_\eps, N_\eps \big) N_\eps.
\end{equation}
To estimate the difference of these two expressions we will use as intermediate steps an analogous expression with the $\check g_\eps$-based ingredients replaced by pure convolutions of the rough constituents and a term involving only convolutions of products split according to their regularity.   

\item[(ii)] Recalling the regularities of the building blocks we have $g, \T\in W^{2,p}_\text{loc}$ and $N, X_i, G^{ij}\in W^{1,p}_\text{loc}$, so that the most critical term is $\nabla_{X_i}X_j\in L^p_\text{loc}$. By metric compatibility, the weak product rule and orthogonality $g(X_j, N)=0$, we get
\begin{align}\label{eq:H-Hess}
g(\nabla_{X_i} X_j,\, N) &= - g(X_j,\, \nabla_{X_i} N)\\ \nonumber
&=  \frac{1}{\|\mathrm{grad}_g \T\|_g}\, g\big(X_j, \nabla_{X_i}(\mathrm{grad}_g \T)\big)
= \frac{\mathrm{Hess}^\T(X_i,X_j)}{\|\mathrm{grad}_g \T\|_g},
\end{align}
where we have used \eqref{eq:N-def}.
Next we write the Hessian in a way most convenient for our forthcoming analysis, cf.\ Lemma \ref{lem:commutator_estimates}(ii) below.
Using the fixed background Riemannian metric $h$ and its Levi-Civita connection $\nabla^{(h)}$ (also written as $\nabla_h$), we write the Hessian tensorially:
\begin{equation}\label{eq:Hess-coord}
\begin{split}
\mathrm{Hess}^\T(X_i,X_j) &= X_i^\mu X_j^\nu \nabla^{(g)}_\mu \nabla^{(g)}_\nu \mathcal{T}\\ 
&= X_i^\mu X_j^\nu\bigl(\nabla^{(h)}_\mu \nabla^{(h)}_\nu \T
  - C^\lambda_{\mu\nu}\, d\T_\lambda\bigr),
\end{split}
\end{equation}
where $d\T$ denotes the exterior derivative of $\T$, and   
\[
C_{\mu\nu}^\lambda =  (\Gamma_g)^\lambda_{\mu\nu} - (\Gamma_h)^\lambda_{\mu\nu}
= \frac{1}{2} g^{\lambda\rho}\big(\nabla^{(h)}_\mu g_{\nu\rho} + \nabla^{(h)}_\nu g_{\mu\rho} - \nabla^{(h)}_\rho g_{\mu\nu}\big).
\]

Setting the tensorial input data $\mathbf{U} := (g, \nabla_h g, d\T, (X_i))\in W^{1,p}_{\text{loc}}$ and the singular tensor $\cH_{\mu\nu} := \nabla^{(h)}_\mu \nabla^{(h)}_\nu \T \in L^p_\loc$, we arrive at
\begin{equation}\label{eq:H-fh-r}
\mathcal{H}_{g,\T,U} = \Phi_1(\mathbf{U}) + \Phi_2(\mathbf{U})^{\mu\nu} \cH_{\mu\nu},
\end{equation}
where 
\begin{equation}\label{eq:Phi2-explicit}
  \Phi_2(\mathbf{U})^{\mu\nu}
  = -\frac{1}{\|\mathrm{grad}_g \T\|_g}\,
    \sum_{i,j} G^{ij}\, X_i^\mu\, X_j^\nu
\end{equation}
and
\begin{align}\label{eq:Phi1-explicit}
  \Phi_1(\mathbf{U})
  = -\Phi_2(\mathbf{U})^{\mu\nu}\, C^{\lambda}_{\mu\nu} d\T_\lambda.
\end{align}
This isolates all regular (at least $W^{1,p}_{\text{loc}}$) contributions to either act as a summand ($\Phi_1$) or as a prefactor to the singular term ($\Phi_2$).

\item[(iii)] Following the strategy laid out in (i) and taking into account the singularity structure \eqref{eq:H-fh-r}, we must balance the blow-up of the smoothed singular term $\cH^\eps_{\mu\nu}=\cH_{\mu\nu}\star_\M\rho_\eps$ in the $L^\infty$-norm, which by Young's inequality is of order $\mathcal{O}(\eps^{\alpha -1})=\mathcal{O}(\eps^{-n/p})$, by a corresponding fall-off of the regular terms. We must additionally track the commutators generated when swapping covariant/exterior derivatives and $\star_{\mathcal{M}}$. We rely on four auxiliary results: commutator estimates on first derivatives of tensors and on the Hessian of scalar functions (Lemma \ref{lem:commutator_estimates}), a sharp convergence bound for $\check g_\eps-g_\eps$ (Lemma \ref{lem:m+}), a nonlinear commutator estimate on tensor bundles (Lemma \ref{lem:nonlinear-commutator-estimate}), and a tensorial Friedrichs-type lemma for $W^{1,p}\times L^p$ products (Lemma \ref{lem:roland_tensor}). 
\end{enumerate}
\end{Remark}

\begin{proof}[Proof of Theorem \ref{Prop:Mean-curvature-convergence}]
Let $\Omega \subseteq U$ be a relatively compact open set permitting a local $W^{1,p}_\text{loc}$-frame $X_i$ ($i=1,\dots,n-1$) adapted to $N$ as in \eqref{eq:W1p-frame}. Now set 
\beq \label{eq:Xieps}
X_i^\eps := X_i\star_{\M}\rho_\eps  + \check g_\eps \big(X_i\star_{\M}\rho_\eps, N_\eps \big) N_\eps.
\eeq 
In this definition, $X_i$ is extended by $0$ outside of $\Omega$ and we have that $X_i^\eps \to X_i$ in $W^{1,p}_\text{loc}(\Omega)$ as $\eps \to 0$. By construction, $X_i^\eps \perp_{\check g_\eps} N_\eps$. Consequently, $X_i^\eps$ ($i=1,\dots,n-1$) qualifies as a frame adapted to $N_\eps$, hence can be used for calculating $\mathcal{H}_{\check g_\eps,\T_\eps,U}$ on $\Omega$. 

To decompose $\mathcal{H}_{g, \mathcal{T}, U}$ into its regular $W^{1,p}_{\text{loc}}$ and its singular $L^p_{\text{loc}}$ parts as announced in Remark \ref{rem:proof-strategy}, we introduce smooth maps taking either rough input data $\mathbf{U}=(g,\, \nabla_h g,\, d\T,\, (X_i))$ or input data built from $\check g_\eps$-regularised constituents $\check\U_\eps=(\check g_\eps, \nabla_h \check g_\eps, d\T_\eps, (X_i\star_{\M}\rho_\eps))$. 
(It will become clear below that the fourth component really is the pure convolution term $X_i\star_{\M}\rho_\eps$. In fact, the corresponding normal projection will be taken care of by the maps yet to be defined.)
We also use an intermediate term built entirely from pure convolutions: $\U_\eps=(g_\eps, (\nabla_h g)\star_{\M}\rho_\eps, (d\T)\star_{\M}\rho_\eps, (X_i\star_{\M}\rho_\eps))$, where $g_\eps = g\star_{\M}\rho_\eps$.

Now, for `generic' input data $\U'=(g', \nabla_h g', d\T', (V'_i))$, define the smooth maps:
\begin{align}
  \Phi_\sigma(g', d\T')
  &:= \bigl(-g'^{-1}(d\T', d\T')\bigr)^{-1/2}\,,
  \label{eq:Phisigma}\\ 
  \Phi_N(g', d\T')^\mu
  &:= -  \Phi_\sigma(g', d\T')\, g'^{\mu\nu}(d\T')_\nu\,,
  \label{eq:PhiN}\\
  \Phi_X(g', V'_i, d\T')
  &:= V'_i + g'\bigl(V'_i,\, \Phi_N(g', d\T')\bigr)\,\Phi_N(g', d\T')\,,
  \label{eq:PhiX}\\
  \Phi_G(g', V'_1, \ldots, V'_{n-1})^{ij}
  &:= \Big(\bigl[\big(g'(V'_k, V'_l)\big)_{k,l=1}^{n-1}\bigr]^{-1}\Big)^{ij}\,,
  \label{eq:PhiG}\\
  \Phi_C(g', \nabla_h g')^\lambda_{\mu\nu}
  &:= \tfrac{1}{2}\, g'^{\lambda\rho}
     \bigl((\nabla_h)_\mu g'_{\nu\rho} + (\nabla_h)_\nu g'_{\mu\rho}
      - (\nabla_h)_\rho g'_{\mu\nu}\bigr)\,.
  \label{eq:PhiGamma}
\end{align}
These are smooth maps on their natural domains, which is 
$$
\{(g',\partial \T') : {(g')}^{-1}(\partial \T',\partial \T') < 0\}
$$ 
for $\Phi_\sigma$ and the maps derived from it.  To obtain `universal' formulas on all three input levels, define the adapted frames and inverse Gram matrices:
\begin{equation}
 \tilde{X}_i := \Phi_X(g', V_i', d\T'),
 \qquad
 \tilde{G}^{ij} := \Phi_G(g', \tilde{X}_1, \ldots, \tilde{X}_{n-1})^{ij}.  
\end{equation}
We now split the mean curvature formally as $H = \Phi_1 + \Phi_2^{\mu\nu} \cH'_{\mu\nu}$ with $\cH'_{\mu\nu} := \nabla^{(h)}_\mu \nabla^{(h)}_\nu \T'$ via:
\begin{align}\label{eq:Phi2-explicit-general} \nonumber
\Phi_2&(g', d\T', V_1',\dots,V_{n-1}')^{\mu\nu}\\  \nonumber
&:= -\Phi_\sigma(g', d\T')\,
  \sum_{i,j} \tilde{G}^{ij}\, \tilde{X}_i^\mu\, \tilde{X}_j^\nu \\\nonumber
&= -\Phi_\sigma(g', d\T')\,
  \sum_{i,j}  \Phi_G\Big(g', \Phi_X(g', V_1', d\T'), \ldots, \Phi_X(g', V_{n-1}', d\T')\Big)^{ij}\\
  &\hphantom{-\Phi_\sigma(g', d\T')\,\sum_{i,j}  \Phi_G^{ij}} \cdot
  \Phi_X(g', V_{i}', d\T')^\mu\,\Phi_X(g', V_{j}', d\T')^\nu ,
\\ \nonumber \\\nonumber
\label{eq:Phi1-explicit-general} 
\Phi_1&(g', \nabla_h g', d\T', V_1',\dots,V_{n-1}')\\
&:= -\Phi_2(g', d\T', V_1',\dots,V_{n-1}')^{\mu\nu}\,
  \Phi_C(g', \nabla_h g')^\lambda_{\mu\nu}\, (d\T')_\lambda.
\end{align}

On the rough level, for $\mathbf{U} = (g, \nabla_h g, d\T, (X_i))$ we clearly have
\begin{equation}\label{eq:H-fh}
\mathcal{H}_{g,\T,U} = \Phi_1(\mathbf{U}) + \Phi_2(\mathbf{U})^{\mu\nu}\, \cH_{\mu\nu}.
\end{equation}
Note that $\Phi_X(g, X_i, d\T) = X_i$, so $\tilde{X}_i = X_i$ and $\tilde{G}^{ij} = G^{ij}$. Regarding regularities, $\Phi_1(\U), \Phi_2(\U)\in \Wo\subseteq \Calph$ and $\cH_{\mu\nu} \in L^p_{\text{loc}}$.

On the level of the $\check g_\eps$ regularisation, for data $\check\U_\eps=(\check g_\eps, \nabla_h \check g_\eps, d\T_\eps, (X_i\star_{\M}\rho_\eps))$, the function $\Phi_X$ naturally applies the required tangent projection to $X_i \star_\M \rho_\eps$ to produce $X_i^\eps$ directly. Thus, 
\begin{equation}
    \mathcal{H}_{\check{g}_\eps, \T_\eps, U} = \Phi_1(\check{\mathbf{U}}_\eps) + \Phi_2(\check{\mathbf{U}}_\eps)^{\mu\nu} \nabla^{(h)}_\mu \nabla^{(h)}_\nu \T_\eps.
\end{equation} 

Now let $K\Subset\M$. By Lemma \ref{lem:commutator_estimates}(ii), $\|[\nabla_h^2, \star_\M\rho_\eps]\T\|_{L^\infty(K)} = \mathcal{O}(\eps^\alpha)$, so by Lemma \ref{Lemma_T-eps_timefct}(i) on $K$ we may write $\nabla_h^2 \T_\eps = \nabla_h^2 (\T\star_{\M}\rho_\eps + \sqrt{\eps}) = \cH_\eps + \mathcal{O}(\eps^\alpha)$, where $(\cH_\eps)_{\mu\nu} = (\nabla_h^2 \T)_{\mu\nu}\star_{\M}\rho_\eps$. Thus we have
\begin{equation}\label{eq:split-Lp}
    \mathcal{H}_{\check{g}_\eps, \T_\eps, U}
  = \Phi_1(\check{\mathbf{U}}_\eps)
  + \Phi_2(\check{\mathbf{U}}_\eps)\cdot \cH_\eps + \mathcal{O}(\eps^\alpha).
\end{equation} 

Now to start our estimates we use the following splitting 
\begin{align}\label{eq:decomp-new}\nonumber
\mathcal{H}_{\check{g}_\eps, \T_\eps, U} &- \mathcal{H}_{g,\T,U} \star_\M \rho_\eps \\ 
&= \underbrace{
 \Big(\Phi_1(\check{\mathbf{U}}_\eps) + \Phi_2(\check{\mathbf{U}}_\eps)\cdot \cH_\eps\Big)
 - \Big(\Phi_1({\mathbf{U}}_\eps) + \Phi_2({\mathbf{U}}_\eps)\cdot \cH_\eps\Big)
 }_{\text{term~(I)}} \ \ + \ \ \mathcal{O}(\eps^\alpha)\\ \nonumber
&\quad+\underbrace{
 \Big(\Phi_1({\mathbf{U}}_\eps) + \Phi_2({\mathbf{U}}_\eps)\cdot \cH_\eps\Big)
   - \Big(\Phi_1({\mathbf{U}})\star_{\M}\rho_\eps + \Phi_2({\mathbf{U}})\star_{\M}\rho_\eps\cdot \cH_\eps\Big)
 }_{\text{term~(II)}}\\ \nonumber
&\quad+\underbrace{
 \Phi_2({\mathbf{U}})\star_{\M}\rho_\eps\cdot \cH_\eps
 - \big(\Phi_2(\mathbf{U})\cdot \cH\big)\star_{\M}\rho_\eps.
  }_{\text{term~(III)}}
\end{align} 
In the following we treat terms (I) -- (III) separately.

\medskip
\noindent\emph{Term (I): $\mathcal{O}(\eps^\alpha)$ estimates for exchanging $\check{\mathbf{U}}_\eps$ by $\mathbf{U}_\eps$.}\\
All entries of both $\check{\mathbf{U}}_\eps = (\check g_\eps, \nabla_h \check g_\eps, d\T_\eps, (X_i\star_{\M}\rho_\eps))$ and $\mathbf{U}_\eps = (g_\eps, (\nabla_h g)\star_{\M}\rho_\eps, (d\T)\star_{\M}\rho_\eps, (X_i\star_{\M}\rho_\eps))$ are uniformly bounded on compact sets. By Lemma \ref{Lemma_T-eps_timefct}(i), $\T_\eps = \T\star_\M \rho_\eps + \sqrt{\eps}$. By Lemma \ref{lem:commutator_estimates} (i), $d(\T \star_\M \rho_\eps) = (d\T) \star_\M \rho_\eps + \mathcal{O}(\eps)$. Thus, $d\T_\eps = (d\T) \star_\M \rho_\eps + \mathcal{O}(\eps)$. By Lemma \ref{lem:m+}, we have $\nabla_h \check{g}_\eps = \nabla_h g_\eps + \mathcal{O}(\eps)$ and by Lemma \ref{lem:commutator_estimates} (i), (for $p=\infty$ and $\alpha=1$), we find further that $\nabla_h g_\eps = (\nabla_h g) \star_\M \rho_\eps + \mathcal{O}(\eps)$.  Since, moreover, $\check{g}_\eps - g_\eps = \mathcal{O}(\eps)$ and the frame components $(X_i \star_\M \rho_\eps)$ match identically in both tuples, we obtain $\|\check{\mathbf{U}}_\eps - \mathbf{U}_\eps\|_{L^\infty(K)} = \mathcal{O}(\eps)$.        Traversing through the smooth compositions $\Phi_k$ then yields      
\[
\|\Phi_k(\check{\mathbf{U}}_\eps) - \Phi_k(\mathbf{U}_\eps)\|_{L^\infty(K)} = \mathcal{O}(\eps), \qquad k = 1, 2.
\]
Multiplied by the singular factor $\cH_\eps$, where $\|\cH_\eps\|_{L^\infty(K)}=\mathcal{O}(\eps^{\alpha-1})$ (see Remark \ref{rem:proof-strategy} (iii)), we have
\begin{equation}
    \|\text{term (I)}\|_{L^\infty(K)} = \mathcal{O}(\eps) + \mathcal{O}(\eps) \cdot \mathcal{O}(\eps^{\alpha-1}) = \mathcal{O}(\eps^{\alpha})\to 0.
\end{equation}

\medskip
\noindent\emph{Term~(II): $\mathcal{O}(\eps^{\min(3\alpha-1,\,\alpha)})$ estimates by nonlinear commutator relations.}\\ 
Each $\Phi_k$ ($k=1,2$) is a smooth ($\eps$-independent) map defined on a tensor bundle and the inputs $\mathbf{U}$ are sections of tensor bundles of regularity $C^{0,\alpha}_\loc$. 
So by Lemma \ref{lem:nonlinear-commutator-estimate} we obtain
\begin{equation}
\|\Phi_k(\mathbf{U}_\eps) - (\Phi_k(\mathbf{U}))\star_\M\rho_\eps\|_{L^\infty(K)}
= \mathcal{O}(\eps^{2\alpha}),
\qquad k = 1, 2.
\end{equation}
The $\Phi_1$-part is $\mathcal{O}(\eps^{2\alpha})$, and the $\Phi_2$-part is multiplied by $\cH_\eps$, where again $\|\cH_\eps\|_{L^\infty(K)} = \mathcal{O}(\eps^{\alpha - 1})$, to give   
\[
\mathcal{O}(\eps^{2\alpha}) \cdot \mathcal{O}(\eps^{\alpha - 1}) = \mathcal{O}(\eps^{3\alpha - 1}) \to 0
\]
provided $\alpha > 1/3$, i.e., $p > \tfrac{3n}{2}$. So for such a choice of $p$, and since we assume $p>2n$, we have in total 
\begin{equation}
    \|\text{term (II)}\|_{L^\infty(K)}\to 0.
\end{equation}

\medskip
\noindent\emph{Term~(III): $\mathcal{O}(\eps^{1-2n/p})$ estimates by a Friedrichs-type lemma.}\\
Setting $f := \Phi_2(\mathbf{U}) \in W^{1,p}_\loc$ and $f_\eps := f \star_\M \rho_\eps$, term~(III) reads $f_\eps \cdot \cH_\eps - (f\cH)\star_\M\rho_\eps$.
Due to Lemma \ref{lem:roland_tensor}, we have
\begin{equation}\label{eq:lemma57}
 \|\text{term (III)}\|_{L^\infty(K)}=\|f_\eps \, \cH_\eps - (f\cH)_\eps\|_{L^\infty(K)} = \mathcal{O}(\eps^{2\alpha-1}) \to 0
\end{equation}
provided $\alpha>1/2$, i.e., $p > 2n$, as assumed.
\medskip

In view of the splitting \eqref{eq:decomp-new} and the $\mathcal{O}(\eps^\alpha)$ remainder, we are done.
\end{proof}

\section{Refined results on manifold convolution} \label{Sec_refined_conv_lemmas}

In this section we formulate and prove the three missing auxiliary results used in the proof of Theorem \ref{Prop:Mean-curvature-convergence}. These results, establishing basic properties of the manifold convolution \eqref{eq:M-convolution}, are of interest in their own right. We begin with the nonlinear commutator estimates that allowed us to permute the manifold convolution with the $\Phi_k$ in term (II).

\begin{Lemma}[Nonlinear commutator estimate]\label{lem:nonlinear-commutator-estimate} 
Let $E$, $F$ be tensor bundles over $\mathcal{M}$,
and let $\Phi: E \to F$ be a smooth (indeed, $C^2$ is sufficient), fiber-preserving nonlinear map. 
Then for any section $T \in C^{0,\alpha}_\text{loc}(M, E)$ with
$\alpha\in (0,1)$ we have on any compact set $K\Subset \mathcal{M}$
\begin{equation}\label{eq:nonlinear-commutator-estimate}
\left\| \Phi(T\star_\mathcal{M} \rho_\eps) - \Phi(T)\star_\mathcal{M}\rho_\eps \right\|_{L^\infty(K)} = \mathcal{O}\left(\eps^{2\alpha}\right).
\end{equation}
\end{Lemma}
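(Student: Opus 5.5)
The plan is to reduce to the classical Euclidean estimate through the structure of \eqref{eq:M-convolution}, combined with a second-order Taylor expansion of $\Phi$ around $T(p)$. First, on a fixed compact $K$ and for $\eps$ small, only finitely many charts $U_\alpha$ meet $K$. Moreover, $\chi_\alpha\equiv 1$ on a neighbourhood of $\mathrm{supp}(\xi_\alpha)$, so for $\eps$ smaller than the distance from $\mathrm{supp}\,\xi_\alpha$ to the set where $\chi_\alpha\neq 1$, the manifold convolution at $p$ is, in each chart, a convex combination of values of $T$ near $p$. Concretely, I will write
\[
T\star_\M\rho_\eps(p)=\sum_\alpha \int \rho_\eps(y)\,(\xi_\alpha T)_\alpha(x_\alpha(p)-y)\,dy ,
\]
with all quantities expressed in a single reference chart around $p$, since the fibre is a vector space. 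The price of changing to a common trivialisation is a smooth transition matrix, and on $B_\eps$ this matrix equals its value at $p$ up to $\mathcal O(\eps)$.

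The core step is to show that $T\star_\M\rho_\eps(p)=\int T(q)\,d\mu^\eps_p(q)+R_\eps(p)$. Here $\mu^\eps_p$ is the nonnegative measure of total mass $\sum_\alpha \xi_\alpha(\cdot)$ pushed appropriately, and $\mu^\eps_p$ is supported in a $C\eps$-ball around $p$. I would not expect the remainder $R_\eps$ to vanish. Because $\sum_\alpha\xi_\alpha=1$ holds only pointwise, I instead expect to write everything as a single averaging operator $A_\eps$ acting on sections, of the form $A_\eps S(p)=\sum_\alpha\int\rho_\eps(y)\,\xi_\alpha(\psi_\alpha^{-1}(x_\alpha-y))\,\Lambda_\alpha(p,y)\,S(\psi_\alpha^{-1}(x_\alpha-y))\,dy$. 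In this expression $\Lambda_\alpha$ is the smooth change of trivialisation, and $\Lambda_\alpha(p,0)=\mathrm{id}$.

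Then I Taylor expand: $\Phi(T(q))=\Phi(T(p))+D\Phi_{T(p)}(T(q)-T(p))+\mathcal O(|T(q)-T(p)|^2)$, and $|T(q)-T(p)|^2\le C\eps^{2\alpha}$ on the support. Applying $A_\eps$ and using linearity of $A_\eps$, I obtain
\[
\Phi(T)\star_\M\rho_\eps(p)-\Phi(T(p))\,A_\eps 1(p) = D\Phi_{T(p)}\big(T\star_\M\rho_\eps(p)-T(p)A_\eps1(p)\big)+\mathcal O(\eps^{2\alpha}),
\]
where the fibre-wise linear map $D\Phi$ is moved through the trivialisation factors at a cost of $\mathcal O(\eps)\cdot\mathcal O(\eps^\alpha)$. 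Similarly, Taylor expanding $\Phi(T\star_\M\rho_\eps(p))$ around $T(p)$ gives the first-order term $D\Phi_{T(p)}(T_\eps(p)-T(p))$ plus $\mathcal O(|T_\eps(p)-T(p)|^2)=\mathcal O(\eps^{2\alpha})$. Subtracting the two expansions cancels the first-order terms.

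The main obstacle is the zeroth-order mismatch $\Phi(T(p))(A_\eps1(p)-1)$, together with the analogous $T(p)(A_\eps 1-1)$ inside the linear term. These come from the facts that $\sum_\alpha \xi_\alpha(\psi_\alpha^{-1}(x_\alpha-y))$ differs from $1$ by $\mathcal O(\eps)$ and that $\Lambda_\alpha\neq\mathrm{id}$ off $y=0$, and nonlinear $\Phi$ does not commute with such affine corrections. I would therefore show $A_\eps 1 - 1=\mathcal O(\eps)$ in the appropriate sense, and more precisely that the affine error terms contribute $\mathcal O(\eps)$, using smoothness of $\xi_\alpha$ and $\psi_\alpha$. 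This bound is acceptable whenever $2\alpha\le1$. For $\alpha>1/2$ I expect to need the symmetry of $\rho$: the first-order Taylor term of $\xi_\alpha\circ\psi_\alpha^{-1}$ and of $\Lambda_\alpha$ in $y$ integrates to zero against the spherically symmetric $\rho_\eps$, which leaves $\mathcal O(\eps^2)\le\mathcal O(\eps^{2\alpha})$. Tracking this cancellation carefully, including mixed terms of the form $\mathcal O(\eps)\times\mathcal O(\eps^\alpha)$ with $1+\alpha\ge2\alpha$, is where I expect the real work. Uniformity in $p\in K$ follows from the finiteness of the charts and from uniform Hölder constants on a compact neighbourhood of $K$.
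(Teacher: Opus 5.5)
Your proposal is correct and is essentially the paper's argument: write $\star_\M\rho_\eps$ in a reference trivialisation with transport matrices $P^E_\alpha(p,q)=I_E+\mathcal{O}(\eps)$, Taylor expand $\Phi$ to second order with an $\mathcal{O}(\eps^{2\alpha})$ remainder from the H\"older bound, pay $\mathcal{O}(\eps^{1+\alpha})$ to commute $D\Phi$ past the transport matrices, and use the spherical symmetry of $\rho$ to get $M_E(p),M_F(p)=I+\mathcal{O}(\eps^2)$ (your ``$A_\eps 1$'' is really this matrix acting on the constant, not a scalar). The only difference is the base point: the paper expands around $\overline{T}=(T\star_\M\rho_\eps)(p)$ rather than $T(p)$, so $\Phi(\overline{T})$ appears directly as the constant term and the linear term reduces to $\mathcal{O}(\eps^2)$, which spares the separate expansion of $\Phi(T\star_\M\rho_\eps(p))$ around $T(p)$ that your route needs.
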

\begin{proof} To simplify notations, in this proof we are going to denote application of $\star_\mathcal{M}\rho_\eps$ simply by a 
subscript $\eps$, i.e.\ $T_\eps = T \star_\mathcal{M}\rho_\eps$, etc. We first derive explicit local representations.

Denote by $T_{(\alpha)}: U_\alpha \to \mathbb{R}^N$ the local representation in the trivialization associated with the chart $(U_\alpha,\psi_\alpha)$\footnote{We use the common indices $\alpha$ and $\beta$ for charts. There should be no danger of confusing this with the Hölder exponent.}, i.e.,
\begin{equation}
    T_{(\alpha)} = \big( (\psi_\alpha)_* T \big) \circ \psi_\alpha.
\end{equation}
Also, let $\Phi_{(\alpha)}$ be the local coordinate representation of $\Phi$, defined by $\Phi_{(\alpha)} := (\psi_\alpha)_* \circ \Phi \circ (\psi_\alpha)^*$. We denote the transition matrix mapping tensor components from chart $\alpha$ to chart $\beta$ at a point $q \in U_\alpha \cap U_\beta$ by $J_{\alpha \to \beta}^E(q) \in \text{GL}(N, \mathbb{R})$. 
Then for $x\in \R^n$ we have 
\begin{align}
\left((\psi_\alpha)_* (\xi_\alpha T)\right)*\rho_\eps(x)
&= \int_{\mathbb{R}^n} \xi_\alpha(\psi_\alpha^{-1}(y))\, T_{(\alpha)}(\psi_\alpha^{-1}(y))\, \rho_\eps(x - y) \, dy.
\end{align}
Setting $S_\alpha:=(\psi_\alpha)^*  \left[ \left( (\psi_\alpha)_* (\xi_\alpha T) \right)  * \rho_\eps \right]$, at $p=\psi_\alpha^{-1}(x)$ we obtain
\begin{align}
(S_\alpha)_{(\alpha)} (p) = \int_{\mathbb{R}^n} \xi_\alpha(\psi_\alpha^{-1}(y))\, T_{(\alpha)}(\psi_\alpha^{-1}(y))\, \rho_\eps(\psi_\alpha(p) - y) \, dy.
\end{align}
Next we express $T_\eps(p) = \sum_\alpha \chi_\alpha(p) S_\alpha(p)$ in some fixed chart $\psi_\beta$. We apply the transition matrix $J_{\alpha \to \beta}^E(p)$ to map the components from frame $\alpha$ to frame $\beta$ at $p$:
\begin{equation}
(S_\alpha)_{(\beta)}(p) = J_{\alpha \to \beta}^E(p) \,(S_\alpha)_{(\alpha)}(p),
\end{equation}
so that 
\begin{equation}\label{eq:mid_step}
(S_\alpha)_{(\beta)}(p) = \int_{\mathbb{R}^n} \xi_\alpha(\psi_\alpha^{-1}(y))\, J_{\alpha \to \beta}^E(p)\, T_{(\alpha)}(\psi_\alpha^{-1}(y))\, \rho_\eps(\psi_\alpha(p) - y) \, dy.
\end{equation}
Letting $q = \psi_\alpha^{-1}(y)$ and recalling that $T_{(\alpha)}(q) = J_{\beta \to \alpha}^E(q) T_{(\beta)}(q)$, we arrive at the 
local expression for $T_\eps$ in the frame associated with the chart $\psi_\beta$:
\begin{equation}\label{eq:moll_def}
\begin{split}
&(T_\eps)_{(\beta)}(p)\\ 
&= \sum_\alpha \chi_\alpha(p) \int_{\mathbb{R}^n} \xi_\alpha(q) \underbrace{\left[ J_{\alpha \to \beta}^E(p) J_{\beta \to \alpha}^E(q) \right]}_{=: P_\alpha^E(p,q)} T_{(\beta)}(q)\, \rho_\eps(\psi_\alpha(p) - y) \, dy.
\end{split}
\end{equation}
Here, the matrix $P_\alpha^E(p,q) \in \text{GL}(N, \mathbb{R})$ acts as a coordinate parallel transport operator. It maps a tensor from frame $\beta$ to frame $\alpha$ at the source point $q$, and then maps it from frame $\alpha$ back to frame $\beta$ at the evaluation point $p$. If $p=q$, the transition matrices are exact inverses, yielding $P_\alpha^E(p,p) = I_E$. By the smoothness of the transition maps, 
\begin{equation}\label{eq:transport-O-epsilon}
P_\alpha^E(p,q) = I_E + \mathcal{O}(d^h(p,q)) = I_E + \mathcal{O}(\eps)    
\end{equation}
uniformly on the compact support of the mollifier $\rho$.
\medskip

Fix $p \in K$ and a reference chart $\psi_\beta$, and let $\overline{T} := (T_\eps)_{(\beta)}(p)$. Applying \eqref{eq:moll_def} to the tensor field $\Phi(T)$, its mollification is given by:
\begin{equation}\label{eq:phi-T-moll}
\begin{split}
(\Phi(T)_\eps&)_{(\beta)}(p)\\ 
&= \sum_\alpha \chi_\alpha(p) \int_{\mathbb{R}^n} \xi_\alpha(q)\, P_\alpha^F(p,q)\, \Phi_{(\beta)}(T_{(\beta)}(q))\, \rho_\eps(\psi_\alpha(p) - y) \, dy,
\end{split}
\end{equation}
where $P_\alpha^F(p,q) = J_{\alpha \to \beta}^F(p) J_{\beta \to \alpha}^F(q)$ is the transport operator for the target bundle $F$. 
We Taylor expand $\Phi_{(\beta)}$ around $\overline{T}$:
\begin{equation}\label{eq:taylor-T}
\Phi_{(\beta)}(T_{(\beta)}(q)) = \Phi_{(\beta)}(\overline{T}) + D\Phi_{(\beta)}(\overline{T})\big[T_{(\beta)}(q) - \overline{T}\big] + R(q).
\end{equation}
Let $q = \psi_\alpha^{-1}(y)$ be such that $y$ is in the support of $\rho_\eps(\psi_\alpha(p) - .)$. Then we claim that
\begin{equation}\label{eq:remainder-bound}
|R(q)| \le C\eps^{2\alpha},
\end{equation}
for some constant $C > 0$ independent of $\eps$.

To see this, first note that by Taylor's theorem and the local boundedness of $D^2\Phi_{(\beta)}$ we have
$|R(q)| \le C \left| T_{(\beta)}(q) - \overline{T} \right|^2.$ Here,
\begin{equation} \label{eq:triangle_T}
\left| T_{(\beta)}(q) - \overline{T} \right| \le \left| T_{(\beta)}(q) - T_{(\beta)}(p) \right| + \left| T_{(\beta)}(p) - \overline{T} \right|.
\end{equation}
In the first term, $y=\psi_\alpha(q)$ ranges over the support of $\rho_\eps(\psi_\alpha(p) - .)$, so $|y - \psi_\alpha(p)| \le \eps$. Since chart transition maps are bi-Lipschitz on compact sets, we have $d^h(p, q) \le C_1 \eps$. The assumption that $T \in C^{0,\alpha}_\text{loc}(M, E)$ therefore gives
\begin{equation}\label{eq:first-term-est}
\left| T_{(\beta)}(q) - T_{(\beta)}(p) \right| \le C \eps^\alpha.
\end{equation}
To estimate the second term in \eqref{eq:triangle_T}, using \eqref{eq:moll_def} and setting
\begin{equation}
M_E(p) := \sum_\alpha \chi_\alpha(p) \int_{\mathbb{R}^n} \xi_\alpha(w) P_\alpha^E(p,w) \rho_\eps(\psi_\alpha(p) - z) \, dz,
\end{equation}
we can write:
\begin{equation}
\begin{split}
\overline{T} - &T_{(\beta)}(p)\\ 
&= \sum_\alpha \chi_\alpha(p) \int_{\mathbb{R}^n} \xi_\alpha(w) P_\alpha^E(p,w) \big( T_{(\beta)}(w) - T_{(\beta)}(p) \big) \rho_\eps(\psi_\alpha(p) - z) \, dz\\
&+ \big( M_E(p) - I_E \big) T_{(\beta)}(p),
\end{split}
\end{equation}
where $w = \psi_\alpha^{-1}(z)$ and $I_E$ is the $N\times N$ identity matrix. On the support of $\rho$ we again have $d^h(p, w) \le C \eps$, so $\left| T_{(\beta)}(w) - T_{(\beta)}(p) \right| \le C \eps^\alpha$. The parallel transport operator by \eqref{eq:transport-O-epsilon} is uniformly bounded with $P_\alpha^E(p, w) = I_E + \mathcal{O}(\eps)$. Thus the first integral term is bounded by $\mathcal{O}(\eps^\alpha)$.

Furthermore,  
\begin{equation}\label{eq:M_E-estimate}
    M_E(p) = I_E + \mathcal{O}(\eps^2).
\end{equation}
To see this, change variables to $u = \psi_\alpha(p) - z$. Then the integral becomes $\int_{\mathbb{R}^n} \xi_\alpha(w) P_\alpha^E(p,w) \rho_\eps(u) \, du$ where $w = \psi_\alpha^{-1}(\psi_\alpha(p) - u)$. Expanding the integrand $\xi_\alpha(w) P_\alpha^E(p,w)$ around $u=0$ (i.e., $w=p$), the constant term takes the form $\xi_\alpha(p)$$P_\alpha^E(p,p) = \xi_\alpha(p) I_E$. Because $\rho_\eps(u)$ is spherically symmetric, the linear term in $u$ integrates to $0$. Summing over $\alpha$ and using $\chi_\alpha \equiv 1$ on $\text{supp}(\xi_\alpha)$ along with $\sum_\alpha \xi_\alpha = 1$, we obtain $M_E(p) = \sum_\alpha \chi_\alpha(p) \xi_\alpha(p) I_E + \mathcal{O}(\eps^2) = I_E + \mathcal{O}(\eps^2)$. Consequently,
\begin{equation}
\left| \overline{T} - T_{(\beta)}(p) \right| \le \mathcal{O}(\eps^\alpha) + \mathcal{O}(\eps^2) \le C \eps^\alpha,
\end{equation}
since $\alpha \in (0,1)$. Together with \eqref{eq:first-term-est} this establishes \eqref{eq:remainder-bound}.
\medskip

To show \eqref{eq:nonlinear-commutator-estimate} we need to estimate $|(\Phi(T)_\eps)_{(\beta)}(p) - \Phi_{(\beta)}(\overline{T})|$ uniformly for $p\in K$. To this end, we insert the expansion \eqref{eq:taylor-T} into \eqref{eq:phi-T-moll}. This induces a splitting of 
$(\Phi(T)_\eps)_{(\beta)}$ into three terms, corresponding to the constant, linear, and remainder term in \eqref{eq:taylor-T}
\begin{equation}\label{eq:Phi-T-eps-splitting}
    (\Phi(T)_\eps)_{(\beta)}(p) = I_C + I_L + I_R,
\end{equation}
and we are going to estimate each of these terms in turn. 

\noindent \underline{$I_C$ and $I_R$:} 
Let $M_F(p) = \sum_\alpha \chi_\alpha(p) \int_{\mathbb{R}^n} \xi_\alpha(q) P_\alpha^F(p,q) \rho_\eps(\psi_\alpha(p) - y) \, dy$. 
As in the case of $M_E$ above, since $\rho$ is symmetric and $P_\alpha^F(p,p) = I_F$, the linear terms in the Taylor expansion of the integrand vanish, so $M_F(p) = I_F + \mathcal{O}(\eps^2)$. Thus 
\begin{equation}
I_C = M_F(p) \Phi_{(\beta)}(\overline{T}) = \Phi_{(\beta)}(\overline{T}) + \mathcal{O}(\eps^2).
\end{equation}
Concerning $I_R$, since $P_\alpha^F$ is bounded, we obtain $I_R = \mathcal{O}(\eps^{2\alpha})$ from \eqref{eq:remainder-bound}.

\noindent \underline{The linear term $I_L$:} We have
\begin{equation}
I_L = \sum_\alpha \chi_\alpha(p) \int_{\mathbb{R}^n} \xi_\alpha(q) P_\alpha^F(p,q) D\Phi_{(\beta)}(\overline{T}) \big[T_{(\beta)}(q) - \overline{T}\big] \rho_\eps(\psi_\alpha(p) - y) \, dy.
\end{equation}
Since $P_\alpha^F(p,q) D\Phi_{(\beta)}(\overline{T}) \neq D\Phi_{(\beta)}(\overline{T}) P_\alpha^E(p,q)$ in general, we quantify the resulting
mismatch by the following commutator
\begin{equation}
\mathcal{E}_\alpha(p,q) := P_\alpha^F(p,q) D\Phi_{(\beta)}(\overline{T}) - D\Phi_{(\beta)}(\overline{T}) P_\alpha^E(p,q).
\end{equation}
At $q=p$, both $P_\alpha^F$ and $P_\alpha^E$ evaluate to the identity matrices on their respective bundles, so $\mathcal{E}_\alpha(p,p) = 0$. Smoothness of the transition maps therefore implies $|\mathcal{E}_\alpha(p,q)| = \mathcal{O}(d^h(p,q)) = \mathcal{O}(\eps)$. Next, we split $I_L = I_{L1} + I_{L2}$ using this commutator
\begin{align}\nonumber
I_{L1} &= D\Phi_{(\beta)}(\overline{T}) \left[ \sum_\alpha \chi_\alpha(p) \int_{\mathbb{R}^n} \xi_\alpha(q) P_\alpha^E(p,q) \big[T_{(\beta)}(q) - \overline{T}\big] \rho_\eps(\psi_\alpha(p) - y) \, dy \right], \\
I_{L2} &= \sum_\alpha \chi_\alpha(p) \int_{\mathbb{R}^n} \xi_\alpha(q) \mathcal{E}_\alpha(p,q) \big[T_{(\beta)}(q) - \overline{T}\big] \rho_\eps(\psi_\alpha(p) - y) \, dy.
\end{align}

To estimate $I_{L1}$, note that by \eqref{eq:moll_def}, integrating the transport operator against $T_{(\beta)}(q)$ yields exactly $\overline{T}$. Integrating against the constant $\overline{T}$ yields $M_E(p)\overline{T}$, where $M_E(p) = I_E + \mathcal{O}(\eps^2)$ by \eqref{eq:M_E-estimate}. Thus, the bracket evaluates to $\overline{T} - (I_E + \mathcal{O}(\eps^2))\overline{T} = \mathcal{O}(\eps^2)$, meaning $I_{L1} = \mathcal{O}(\eps^2)$.

In the term $I_{L2}$ the integrand is the product of the commutator $\mathcal{E}_\alpha(p,q) = \mathcal{O}(\eps)$ and the H\"older deviation $T_{(\beta)}(q) - \overline{T} = \mathcal{O}(\eps^\alpha)$. Therefore, the integral is pointwise bounded by their product
\begin{equation}
I_{L2} = \mathcal{O}(\eps) \cdot \mathcal{O}(\eps^\alpha) = \mathcal{O}(\eps^{1+\alpha}).
\end{equation}
Altogether, we arrive at
\begin{align}\nonumber
| (\Phi(T)_\eps)_{(\beta)}(p) - \Phi_{(\beta)}(\overline{T}) | &\le |I_C - \Phi_{(\beta)}(\overline{T})| + |I_{L1}| + |I_{L2}| + |I_R| \nonumber \\
&= \mathcal{O}(\eps^2) + \mathcal{O}(\eps^2) + \mathcal{O}(\eps^{1+\alpha}) + \mathcal{O}(\eps^{2\alpha}),
\end{align}
uniformly for $p\in K$. As $\alpha \in (0,1)$, this concludes the proof of \eqref{eq:nonlinear-commutator-estimate}.
\end{proof}

The next result is the Friedrichs-type lemma which allowed us to estimate term (III) in the proof of Theorem \ref{Prop:Mean-curvature-convergence}. Observe that it contains the most critical exponents, which enforces the choice of $p>2n$.

\begin{Lemma}[Tensorial Friedrichs-type estimate]\label{lem:roland_tensor}
Let $E$ and $F$ be tensor bundles over $\mathcal{M}$, and let 
$\mathcal{C} \colon E \times F \to \R$ be a smooth, pointwise bilinear pairing (e.g.\  
the tensor contraction). Suppose that $f$, $\cH$ are sections of regularity $f \in W^{1,p}_\text{loc}(\mathcal{M}, E)$ and $\cH \in L^p_\text{loc}(\mathcal{M}, F)$ for some $p > 2n$. Setting $f_\eps := f \star_\M \rho_\eps$, $\cH_\eps := \cH \star_\M \rho_\eps$, and $(\mathcal{C}(f, \cH))_\eps := \mathcal{C}(f, \cH) \star_\M \rho_\eps$, we have for any compact set $K \Subset \mathcal{M}$ that
\begin{equation}
    \left\| \mathcal{C}(f_\eps, \cH_\eps) - (\mathcal{C}(f, \cH))_\eps \right\|_{L^\infty(K)} = \mathcal{O}(\eps^{1 - 2n/p}) \to 0 \quad \text{as } \eps \to 0.
\end{equation}
\end{Lemma}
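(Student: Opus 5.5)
I would prove the estimate pointwise, working for fixed $p\in K$ in a reference chart $\psi_\beta$ as in the proof of Lemma \ref{lem:nonlinear-commutator-estimate}. Using the local representation \eqref{eq:moll_def} for $f_\eps$, $\cH_\eps$ and $(\mathcal{C}(f,\cH))_\eps$, with transport operators $P^E_\alpha, P^F_\alpha$ and the scalar (trivial-bundle) version for $\mathcal{C}(f,\cH)$, I write $\mathcal{C}_{(\beta)}(q)$ for the bilinear pairing in components at $q$. The goal is the splitting
\begin{equation*}
\mathcal{C}(f_\eps,\cH_\eps)(p) - (\mathcal{C}(f,\cH))_\eps(p) = \Big[\mathcal{C}(f_\eps,\cH_\eps)(p) - \mathcal{C}(f(p),\cH_\eps)(p)\Big] + \Big[\mathcal{C}(f(p),\cH_\eps)(p) - (\mathcal{C}(f,\cH))_\eps(p)\Big].
\end{equation*}

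For the first bracket I would use Morrey: $f\in C^{0,\alpha}_\loc$ with $\alpha=1-n/p$. As in the estimate of $|\overline T - T_{(\beta)}(p)|$ in the proof of Lemma \ref{lem:nonlinear-commutator-estimate}, this gives $|f_\eps(p)-f(p)|=\mathcal{O}(\eps^\alpha)$. Hölder's inequality gives $\|\cH_\eps\|_{L^\infty(K)}\le C\|\rho_\eps\|_{L^{p'}}\|\cH\|_{L^p}=\mathcal{O}(\eps^{-n/p})$. The first bracket is therefore $\mathcal{O}(\eps^{\alpha-n/p})=\mathcal{O}(\eps^{1-2n/p})$.

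For the second bracket I expand both terms as integrals over $q$ with $d^h(p,q)\le C\eps$. The term $(\mathcal{C}(f,\cH))_\eps(p)$ is $\sum_\alpha\chi_\alpha(p)\int\xi_\alpha(q)\,\mathcal{C}_{(\beta)}(q)(f(q),\cH(q))\rho_\eps\,dy$. The term $\mathcal{C}(f(p),\cH_\eps)(p)$ is the same integral with integrand $\mathcal{C}_{(\beta)}(p)(f(p),P^F_\alpha(p,q)\cH(q))$. Their difference splits into two pieces.

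First, $\mathcal{C}_{(\beta)}(q)(f(q)-f(p),\cH(q))$ is bounded by $C\eps^\alpha|\cH(q)|$. Integrated against $\rho_\eps$ it gives $\eps^\alpha\cdot\mathcal{O}(\eps^{-n/p})$ by the same Hölder bound.

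Second, $\mathcal{C}_{(\beta)}(q)(f(p),\cH(q)) - \mathcal{C}_{(\beta)}(p)(f(p),P^F_\alpha(p,q)\cH(q))$ is controlled by smoothness of $\mathcal{C}$ and of the transition maps. Since $P^F_\alpha(p,q)=I+\mathcal{O}(\eps)$ by \eqref{eq:transport-O-epsilon}, the kernel is $\mathcal{O}(\eps)|f(p)||\cH(q)|$, and this contributes $\mathcal{O}(\eps^{1-n/p})$.

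The factor $\sum_\alpha\chi_\alpha\xi_\alpha=1$ is handled exactly as for $M_E$. All bounds are uniform in $p\in K$, since the $C^{0,\alpha}$ and $L^p$ norms are taken over a compact neighbourhood of $K$.

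The main obstacle is the bookkeeping of the chart transports in the manifold convolution. Each commutator between $\mathcal{C}$ and the transport operators has to produce only $\mathcal{O}(\eps)$ errors, multiplied by quantities whose $\rho_\eps$-averages are $\mathcal{O}(\eps^{-n/p})$. I would handle this exactly through the commutator $\mathcal{E}_\alpha$ trick of Lemma \ref{lem:nonlinear-commutator-estimate}. The essential analytic input is just Hölder continuity of $f$ against the $L^\infty$ blow-up of $\cH_\eps$. The combined exponent $1-2n/p>0$ is exactly where $p>2n$ enters.
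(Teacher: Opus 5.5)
Your argument is correct and is essentially the paper's proof. The paper writes the whole difference as a single $\rho_\eps$-average of $\mathcal{C}\big((f_\eps)_{(\beta)}(p)-P^E_\alpha(p,q)f_{(\beta)}(q),\,P^F_\alpha(p,q)\cH_{(\beta)}(q)\big)$, using exact invariance of the contraction under the transport matrices, and then splits the first argument through $f_{(\beta)}(p)$ just as your two brackets do, bounding everything by $C\eps^\alpha(|\cH|)_\eps(p)=\mathcal{O}(\eps^{\alpha-n/p})$. The one difference is that you absorb the transport mismatch as an $\mathcal{O}(\eps)|f(p)||\cH(q)|$ error, via smoothness of $\mathcal{C}$ and $P^F_\alpha=I+\mathcal{O}(\eps)$, instead of invoking exact invariance; this also covers pairings whose component coefficients vary from point to point.
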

\begin{proof}
By Morrey's inequality, $p > 2n$ implies that $\alpha := 1 - n/p > 1/2$ and $f \in C^{0,\alpha}_\text{loc}(\mathcal{M}, E)$. We measure the pointwise norms of tensors, denoted by $|\cdot|$, with respect to the  fixed background Riemannian metric $h$. Let $K \Subset \mathcal{M}$ be compact. We evaluate the convolutions at an arbitrary point $p \in K$.\footnote{We keep to the convention of denoting points in $\M$ by $p,q$, as there should be no confusion with the Sobolev index $p$.}

As in the proof of Lemma \ref{lem:nonlinear-commutator-estimate}, we fix a reference chart $(U_\beta, \psi_\beta)$ around $p$ and denote the local components of tensor fields in this chart by a subscript $(\beta)$. 

By \eqref{eq:moll_def}, for any tensor field $T$ in, e.g., $E$ we have
\begin{equation}\label{eq:tensor_moll_def}
(T_\eps)_{(\beta)}(p) = \sum_\alpha \chi_\alpha(p) \int_{\R^n} \xi_\alpha(q)\, P_\alpha^E(p,q)\ T_{(\beta)}(q)\ \rho_\eps(\psi_\alpha(p) - y) \, dy,
\end{equation}
where $y \in \R^n$, $q = \psi_\alpha^{-1}(y)$, and $P_\alpha^E(p,q) = J_{\alpha \to \beta}^E(p) J_{\beta \to \alpha}^E(q)$ is the coordinate parallel transport matrix from $q$ to $p$ for the bundle of $E$. For a scalar function $u$, the transport is trivial: $P_\alpha^\R(p,q) \equiv 1$.

Because the pairing $\mathcal{C}$ is a coordinate-invariant tensor contraction, the local components satisfy
\begin{equation}\label{eq:pairing_invariance}
\mathcal{C}\big(P_\alpha^E(p,q) v, P_\alpha^F(p,q) w\big) = \mathcal{C}(v, w)
\end{equation}
for any $v \in E_q$ and $w \in F_q$ (the transition matrices $J_{\beta \to \alpha}^E(q)$ and $J_{\beta \to \alpha}^F(q)$ cancel against $J_{\alpha \to \beta}^E(p)$ and $J_{\alpha \to \beta}^F(p)$ within the contraction).

Applying \eqref{eq:tensor_moll_def} to the scalar function $\mathcal{C}(f, \cH)$, we have 
\begin{equation}\label{eq:fh_eps_def}
(\mathcal{C}(f, \cH))_\eps(p) = \sum_\alpha \chi_\alpha(p) \int_{\R^n} \xi_\alpha(q) \mathcal{C}\big(f_{(\beta)}(q), \cH_{(\beta)}(q)\big) \rho_\eps(\psi_\alpha(p) - y) \, dy.
\end{equation}
On the other hand, \eqref{eq:tensor_moll_def} gives
\begin{equation}\label{eq:f_eps_h_eps}
\begin{split}
&\mathcal{C}(f_\eps(p), \cH_\eps(p)) = \\
&\sum_\alpha \chi_\alpha(p) \int_{\R^n} \xi_\alpha(q)\ \mathcal{C}\Big((f_\eps)_{(\beta)}(p), P_\alpha^F(p,q)\, \cH_{(\beta)}(q)\Big)\, \rho_\eps(\psi_\alpha(p) - y) \, dy.
\end{split}
\end{equation}
By \eqref{eq:pairing_invariance}, $\mathcal{C}\big(f_{(\beta)}(q), \cH_{(\beta)}(q)\big) = \mathcal{C}\big(P_\alpha^E(p,q) f_{(\beta)}(q), P_\alpha^F(p,q) \cH_{(\beta)}(q)\big)$. Thus we can write the difference $\Delta_\eps(p) := \mathcal{C}(f_\eps(p), \cH_\eps(p)) - (\mathcal{C}(f, \cH))_\eps(p)$ as
\begin{equation}\label{eq:difference}
\begin{split}
& \Delta_\eps(p) = 
\sum_\alpha \chi_\alpha(p) \\ 
& \cdot\int_{\R^n} \xi_\alpha(q)\ \mathcal{C}\Big((f_\eps)_{(\beta)}(p) - P_\alpha^E(p,q)\, f_{(\beta)}(q), \, P_\alpha^F(p,q)\, \cH_{(\beta)}(q)\Big) \rho_\eps \, dy.
\end{split}
\end{equation}
To estimate the first argument of $\mathcal{C}$ in the integrand, we rewrite it as
\begin{equation}\label{eq:f_diff_split}
\begin{split}
(f_\eps)_{(\beta)}(p) - P_\alpha^E(p,q) f_{(\beta)}(q) &= \big( (f_\eps)_{(\beta)}(p) - f_{(\beta)}(p) \big)  \\
&\quad + \big( f_{(\beta)}(p) - P_\alpha^E(p,q) f_{(\beta)}(q) \big).
\end{split}
\end{equation}
We analyze this on the support of the mollifier $\rho_\eps(\psi_\alpha(p) - y)$, where the distance satisfies $d(p,q) \le C \eps$.

Concerning the first term here, note 
that for $\eps$ sufficiently small, $\chi_\alpha(p) \equiv 1$ on the support of the integral. Using $\sum_\alpha \xi_\alpha \equiv 1$, we can write 
\begin{equation}\label{eq:f_eps_error}
\begin{split}
&(f_\eps)_{(\beta)}(p) - f_{(\beta)}(p) \\
&= \sum_\alpha \chi_\alpha(p) \int_{\R^n} \xi_\alpha(q) \big( P_\alpha^E(p,q) f_{(\beta)}(q) - f_{(\beta)}(p) \big)\, \rho_\eps(\psi_\alpha(p) - y) \, dy \\
&\quad + \sum_\alpha \chi_\alpha(p) \int_{\R^n} \big( \xi_\alpha(q) - \xi_\alpha(p) \big)\, f_{(\beta)}(p)\, \rho_\eps(\psi_\alpha(p) - y) \, dy.
\end{split}
\end{equation}
Since $f \in C^{0,\alpha}_\text{loc}$ and $P_\alpha^E(p,q) = I_E + \mathcal{O}(\eps)$ (see \eqref{eq:transport-O-epsilon}), the integrand of the first term is bounded by 
\begin{equation}\label{eq:PEalpha}
\begin{split}
\left| P_\alpha^E(p,q) f_{(\beta)}(q) - f_{(\beta)}(p) \right|
&\le \left| f_{(\beta)}(q) - f_{(\beta)}(p) \right|\\
&\hspace*{-5em} + \left| P_\alpha^E(p,q) - I_E \right|\, \left| f_{(\beta)}(q) \right|
\le C \varepsilon^\alpha + C \varepsilon .
\end{split}
\end{equation}
As $\xi_\alpha$ is smooth, the second term in \eqref{eq:f_eps_error} is bounded by $O(\eps)$. Due to $\alpha \in (1/2, 1)$ we thus arrive at
\begin{equation}\label{eq:f_eps_bound}
\| (f_\eps)_{(\beta)} - f_{(\beta)} \|_{L^\infty(K)} \le C \eps^\alpha.
\end{equation}
The remaining term in \eqref{eq:f_diff_split} is precisely of the form \eqref{eq:PEalpha},
so we conclude that
\begin{equation}\label{eq:f_diff_total_bound}
\left| (f_\eps)_{(\beta)}(p) - P_\alpha^E(p,q) f_{(\beta)}(q) \right| \le C \eps^\alpha.
\end{equation}
Since the pairing $\mathcal{C}$ is bilinear, it satisfies the continuous bound $|\mathcal{C}(v, w)| \le C |v| |w|$ relative to the local norms. Applying this to \eqref{eq:difference} and using \eqref{eq:f_diff_total_bound}, we obtain
\begin{align}\label{eq:Delta_eps_bound}
|\Delta_\eps(p)|\le C \eps^\alpha \sum_\alpha \chi_\alpha(p) \int_{\R^n} \xi_\alpha(q) \left| \cH_{(\beta)}(q) \right| \rho_\eps(\psi_\alpha(p) - y) \, dy,
\end{align}
where we absorbed the uniform bound for the smooth transport matrix $P_\alpha^F(p,q)$ into the constant $C$. 

The remaining sum of integrals is precisely the local representation of the manifold convolution applied to the scalar function $|\cH|$ (the pointwise norm of $\cH$), so
\begin{equation}\label{eq:Delta_eps_bound2}
|\Delta_\eps(p)| \le C \eps^\alpha (|\cH|)_\eps(p).
\end{equation}

Since $\cH \in L^p_\text{loc}(\mathcal{M}, F)$, its pointwise norm $|\cH|$ is in $L^p_\text{loc}(\mathcal{M}, \R)$. Taking the supremum over $p \in K$ and using Young's inequality, we have
\begin{equation*}
\left\| (|\cH|)_\eps \right\|_{L^\infty(K)} \le C \| \cH \|_{L^p(L)} \| \rho_\eps \|_{L^{p'}(\R^n)},
\end{equation*}
where $1/p + 1/{p'} = 1$ and $L$ is some compact neighbourhood of $K$. Here, $\| \rho_\eps \|_{L^{p'}(\R^n)} \le C \eps^{-n/p}$, so
\begin{equation*}
\| \Delta_\eps \|_{L^\infty(K)} \le  C \eps^{\alpha - n/p}.
\end{equation*}
Since $\alpha - n/p = 1 - 2n/p>0$, this finishes the proof.
\end{proof}

Finally we turn to first and second order commutator estimates for $L^p$ and $W^{2,p}$-sections. The Hessian commutator of item (ii) of the following statement was essential in isolating the critical $L^p$-term in the splitting \eqref{eq:split-Lp} in the proof of Theorem \ref{Prop:Mean-curvature-convergence}. Item (i) on the other hand was used in combination with Lemma \ref{lem:m+} to establish the order of $\eps$ bound on the difference $\|\check{\mathbf{U}}_\eps-\mathbf{U}_\eps\|_{L^\infty(K)}$ used to estimate term (I).

\begin{Lemma}[Commutator estimates for manifold convolution]\label{lem:commutator_estimates}
Let $p > n$, so that $\alpha = 1 - n/p \in (0,1)$, and let $K \Subset \mathcal{M}$ be a compact set and denote by $\nabla_h$ the Levi-Civita connection of the background metric. Then for $\eps \in (0,1]$ sufficiently small, the following commutator estimates hold:
\begin{itemize}
 \item[(i)] First derivative commutator for $L^p_\loc$-tensors:\label{lem:first_deriv} \\
 For any tensor bundle $E$ over $\M$ and any section $T \in L^p_\loc(\M, E)$,
\begin{equation}\label{eq:first_deriv_Lp}
 \| \nabla_h (T \star_\M \rho_\eps) - (\nabla_h T) \star_\M \rho_\eps \|_{L^\infty(K)} = \mathcal{O}(\eps^\alpha),
\end{equation}
where $\nabla_h T$ is the distributional covariant derivative. Furthermore, for $p=\infty$,
i.e., $T \in L^\infty_\loc(\M, E)$, \eqref{eq:first_deriv_Lp} holds with $\alpha=1$.
\item[(ii)] Hessian commutator for $W^{2,p}_\loc$-scalars:\label{lem:hess-comm} \\
For any scalar function $\T \in W^{2,p}_\loc(\M)$,
\begin{equation}\label{eq:hessian_comm}
\| \nabla_h^2 (\T \star_\M \rho_\eps) - (\nabla_h^2 \T) \star_\M \rho_\eps \|_{L^\infty(K)} = \mathcal{O}(\eps^\alpha).
\end{equation}
\end{itemize}
\end{Lemma}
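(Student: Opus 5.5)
I will work locally, exactly as in the proofs of Lemmas \ref{lem:nonlinear-commutator-estimate} and \ref{lem:roland_tensor}. Fix $p\in K$ and a reference chart $\psi_\beta$, and use the representation \eqref{eq:moll_def}:
\[
(T_\eps)_{(\beta)}(p)=\sum_\alpha\chi_\alpha(p)\int\xi_\alpha(q)\,P^E_\alpha(p,q)\,T_{(\beta)}(q)\,\rho_\eps(\psi_\alpha(p)-y)\,dy .
\]
For $\eps$ small, $\chi_\alpha\equiv1$ near $p$ whenever $\xi_\alpha$ meets $B_{C\eps}(p)$, so the cut-offs $\chi_\alpha$ contribute no derivative terms. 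In chart $\beta$ write $\nabla_h=\partial+\Gamma_h$. The Christoffel part commutes with the convolution up to the factor $\Gamma_h(p)-\Gamma_h(q)=\mathcal{O}(\eps)$. This error is integrated against $|T|\rho_\eps$, and Young's inequality gives $\|(|T|)_\eps\|_{L^\infty(K)}=\mathcal{O}(\eps^{-n/p})$. The resulting error is therefore $\mathcal{O}(\eps^{1-n/p})=\mathcal{O}(\eps^\alpha)$, and it is $\mathcal{O}(\eps)$ if $p=\infty$.

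For (i), it remains to compare $\partial_p$ of the integral with the integral of $\partial_q T$. Differentiating in $p$ produces three kinds of terms. The first is $\partial_pP^E_\alpha(p,q)$, which is bounded. The second is $\partial_p$ acting on $\rho_\eps(\psi_\alpha(p)-y)$, which equals $D\psi_\alpha(p)\cdot(\nabla\rho_\eps)(\psi_\alpha(p)-y)$. I will move this derivative onto $y$ and integrate by parts. That transfers it to $\xi_\alpha P^E_\alpha T$ as $\partial_y$, which via $D\psi_\alpha^{-1}(y)$ becomes $\partial_q$. The Jacobian mismatch $D\psi_\alpha(p)D\psi_\alpha^{-1}(y)-I=\mathcal{O}(\eps)$ is multiplied by $\nabla\rho_\eps$, whose $L^{p'}$-norm is $\mathcal{O}(\eps^{-1-n/p})$. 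Since this is paired with $T\in L^p$, it again gives $\mathcal{O}(\eps^{-n/p})$, and I must check this carefully. A cleaner route is to write the mismatch as $\eps\cdot$(bounded)$\cdot\eps^{-1}(\nabla\rho)_\eps$, which yields $\mathcal{O}(\eps^{-n/p})$ rather than a small quantity. To repair this, I will use $q=\psi_\alpha^{-1}(y)$ and expand $D\psi_\alpha(p)-D\psi_\alpha(q)=\mathcal{O}(|x-y|)$. This turns $(x-y)\nabla\rho_\eps(x-y)$ into a kernel $\tilde\rho_\eps$ of unit scale with $\|\tilde\rho_\eps\|_{L^{p'}}=\mathcal{O}(\eps^{-n/p})$, so the term is $\mathcal{O}(\eps^{-n/p})$ times $\|T\|_{L^p}$.

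This is the \emph{main obstacle}. The bound above is not small unless there is a cancellation, and here it comes from the zeroth-order structure. The terms where derivatives hit $P^E_\alpha$, $\xi_\alpha$, and the Jacobian factors $\partial_q$ (from the transferred derivatives) all sum, by the identities $\sum\xi_\alpha=1$ and $P^E_\alpha(p,p)=I$, to zeroth-order operators applied to $T$. These operators vanish at $q=p$ up to $\mathcal{O}(\eps)$. This is exactly the mechanism behind $M_E=I+\mathcal{O}(\eps^2)$ in \eqref{eq:M_E-estimate}, now applied to the first $p$-derivative. Concretely, I expect the total error to take the form $\int\mathcal{O}(\eps)\,|T(q)|\,\eps^{-1}\,|\tilde\rho|_\eps$, where the $\eps^{-1}$ is absorbed by the $\mathcal{O}(|x-y|)$ vanishing. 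The error is then $\mathcal{O}(1)\cdot\|T\|_{L^p}\,\eps^{-n/p}\cdot\eps$, i.e.\ $\mathcal{O}(\eps^{\alpha})$, and for $T\in L^\infty$ it is $\mathcal{O}(\eps)$. I will organise the computation so that every non-cancelling term carries an explicit factor $\psi_\alpha(p)-y$ or $p-q$.

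For (ii), I apply (i) twice. Set $T=d\T\in W^{1,p}_\loc$. First, since $d$ commutes with pullbacks and with Euclidean convolution, $d(\T\star_\M\rho_\eps)-(d\T)\star_\M\rho_\eps$ arises only from the cut-offs $\chi_\alpha$ and from $\xi_\alpha$. Its first derivative can be controlled by the argument of (i), using that $\T\in W^{2,p}\subset C^{1,\alpha}$ makes these terms $\mathcal{O}(\eps^{\alpha})$ in $C^1$. Then $\nabla_h^2(\T_\eps)=\nabla_h(d\T_\eps)=\nabla_h((d\T)_\eps)+\mathcal{O}(\eps^\alpha)$. Applying (i) to $d\T\in L^p$ gives $(\nabla_h d\T)_\eps+\mathcal{O}(\eps^\alpha)$, which is the claimed estimate \eqref{eq:hessian_comm}.
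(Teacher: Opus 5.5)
The gap is in your treatment of the principal term in (i). Write $\partial_{\bar x^\mu}[\rho_\eps(\psi_\alpha(p)-y)]=-\frac{\partial x^\sigma}{\partial\bar x^\mu}(p)\,\partial_{y^\sigma}[\rho_\eps(\psi_\alpha(p)-y)]$ and integrate by parts in $y$. This step is exact, because the coefficient depends only on $p$, so it produces no Jacobian error. The derivative of $T$ then appears as $\frac{\partial x^\sigma}{\partial\bar x^\mu}(p)\frac{\partial\bar y^\nu}{\partial y^\sigma}(q)\,\partial_{\bar y^\nu}T_{(\beta)}(q)$, and you treat the deviation of this factor from $\delta^\nu_\mu$ as an error term. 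Since $\partial T$ is only a distribution, you correctly find that this term can only be bounded by $\mathcal{O}(\eps^{-n/p})$. The cancellation that is supposed to absorb the $\eps^{-1}$ is then only announced (``I expect''), and your final $\mathcal{O}(\eps^{\alpha})$ comes from a factor $\eps$ that nothing in the computation produces. No cancellation can rescue it, because you are comparing with a convolution of $\partial T$ in which only the $E$-indices are transported. The missing observation is that $\nabla_hT$ is a section of $T^*\M\otimes E$. By \eqref{eq:moll_def}, its manifold convolution therefore transports the derivative index with $P^{T^*\M}_\alpha(p,q)^\nu{}_\mu=\frac{\partial x^\sigma}{\partial\bar x^\mu}(p)\frac{\partial\bar y^\nu}{\partial y^\sigma}(q)$, which is exactly your ``mismatch'' factor. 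Hence the principal terms cancel identically; this is the key step in the paper's proof. Only $\sum_\alpha\int\rho_\eps(\psi_\alpha(p)-y)\,M_{\alpha,\mu}(p,q)\,T_{(\beta)}(q)\,dy$ remains, with a smooth zeroth-order kernel $M_{\alpha,\mu}$. Equivalently, in the spirit of your own remark in (ii), compute each summand in its own chart $\psi_\alpha$, where $\partial$ commutes with Euclidean convolution and $\psi_\alpha^*$ is tensorial. On $K$ the commutator is then exactly $\sum_\alpha\psi_\alpha^*\big[\big((\psi_\alpha)_*(d\xi_\alpha\otimes T)\big)*\rho_\eps\big]$, plus a Christoffel term bounded by $C\eps$ times a mollification of $|T|$.

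Even after this repair, the final step needs continuity of $T$. This applies to yours (via $\sum_\alpha\xi_\alpha=1$ and $P^E_\alpha(p,p)=I$) and to the paper's (via $\sum_\alpha M_{\alpha,\mu}(p,p)=0$). The values $M_{\alpha,\mu}(p,p)$ multiply the chart-dependent averages $\int\rho_\eps(\psi_\alpha(p)-y)\,T_{(\beta)}(\psi_\alpha^{-1}(y))\,dy$, so subtracting them costs the differences of these averages between charts. Those differences are $\mathcal{O}(\eps^{\gamma})$ for $T\in C^{0,\gamma}_\loc$, $0<\gamma\le 1$, but are not small for general $L^p$ or even $L^\infty$ sections. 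For instance, let two charts differ by $(x^1,\dots,x^n)\mapsto(2x^1,x^2,\dots,x^n)$ on a region where $\xi_1+\xi_2\equiv 1$ and $d\xi_1\neq 0$, and take $T=\mathbf{1}_{\{x^1>0\}}$. The commutator there equals $d\xi_1\,\big(F(x^1/\eps)-F(2x^1/\eps)\big)+\mathcal{O}(\eps)$, where $F(s)=\int_{\{w^1<s\}}\rho(w)\,dw$. Since $F(s)\neq F(2s)$ for some $s\in(0,1)$, this does not tend to $0$ at the points $x^1=s\eps$. So (i) must be restricted to continuous $T$, with rate $\mathcal{O}(\eps^{\gamma})$ for $T\in C^{0,\gamma}_\loc$. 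This covers every use of (i) in the paper ($g,\T\in C^{1,\alpha}$ and $\nabla_h g, d\T\in C^{0,\alpha}$), including your application to $d\T$ in (ii). Apart from this, your (ii) follows the paper's decomposition $\nabla_h^2(\T\star_\M\rho_\eps)-(\nabla_h^2\T)\star_\M\rho_\eps=\nabla_h C_\eps(\T)+C_\eps(\nabla_h\T)$, where $C_\eps(T):=\nabla_h(T\star_\M\rho_\eps)-(\nabla_hT)\star_\M\rho_\eps$. Your observation that $d$ commutes with pull-backs and chart-wise convolution gives $C_\eps(\T)=\sum_\alpha\psi_\alpha^*\big[\big((\psi_\alpha)_*(\T\,d\xi_\alpha)\big)*\rho_\eps\big]$ on $K$. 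You should differentiate this explicitly rather than cite ``the argument of (i)''. By the chain rule the leading terms combine to $\partial\big(\T\,d(\sum_\alpha\xi_\alpha)\big)=0$, and the remainder is $\mathcal{O}(\eps^\alpha)$ by H\"older continuity of $d\T$.
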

\begin{proof} 
Again we will employ the notations from the proof of Lemma \ref{lem:nonlinear-commutator-estimate}.
Fix $p \in K$ within a fixed reference chart $(U_\beta, \psi_\beta)$.
We denote the coordinates of the evaluation point~$p$ and the
integration point~$q$ in the integration ($\alpha$-) and reference
($\beta$-) charts by
\beq\label{eq:coords}
  x^\mu := \psi^\mu_\alpha(p),\quad
  \bar x^\mu := \psi^\mu_\beta(p),\qquad
  y^\mu := \psi^\mu_\alpha(q),\quad
  \bar y^\mu := \psi^\mu_\beta(q),
\eeq
so that unbarred coordinates refer to the $\alpha$-chart and barred
ones to the $\beta$-chart throughout.
Local components of tensor fields in the $\beta$-chart are denoted
by a subscript~$(\beta)$.
Let $P^E_\alpha(p,q) :=
J^E_{\alpha\to\beta}(p)\,J^E_{\beta\to\alpha}(q)$
be the coordinate parallel transport matrix for the bundle~$E$.
We drop the outer cut-off $\chi_\alpha(p)$ since
$\chi_\alpha \equiv 1$ on the support of the mollifier for
small~$\eps$.
 
It will be convenient to introduce the pulled-back partial
derivative operator $\tilde\partial^{(p,q)}_\mu$.
Geometrically, this operator represents the coordinate parallel
transport of the differential operator
$\partial/\partial \bar y^\nu$
(acting on the $\beta$-chart coordinates at~$q$) from the
integration point $q \in U_\alpha$ to the evaluation point
$p \in U_\beta$.
Recall that the coordinate parallel transport matrix for the
cotangent bundle $T^*\!M$ from $q$ to $p$ is defined as the product
of the chart transition matrices:
$P^{T^*\!M}_\alpha(p,q) :=
J^{T^*\!M}_{\alpha\to\beta}(p)\,J^{T^*\!M}_{\beta\to\alpha}(q)$.

Because covectors ($1$-forms) transform via the inverse transposed Jacobian, the transition matrix mapping $1$-form components from chart $\beta$ to chart $\alpha$ at the point $q$ is given by $J^{T^*\!M}_{\beta\to\alpha}(q)^\nu_{\ \sigma}
= \frac{\partial \bar y^\nu}{\partial y^\sigma}(q)$.
Conversely, the transition matrix mapping from chart~$\alpha$ back
to chart~$\beta$ at the evaluation point~$p$ is
$J^{T^*\!M}_{\alpha\to\beta}(p)^\sigma_{\ \mu}
= \frac{\partial x^\sigma}{\partial \bar x^\mu}(p)$.
Thus,
\beq\label{eq:Pcot}
  P^{T^*\!M}_\alpha(p,q)^\nu_{\ \mu}
  = J^{T^*\!M}_{\alpha\to\beta}(p)^\sigma_{\ \mu}\,
    J^{T^*\!M}_{\beta\to\alpha}(q)^\nu_{\ \sigma}
  = \frac{\partial x^\sigma}{\partial \bar x^\mu}(p)\,
    \frac{\partial \bar y^\nu}{\partial y^\sigma}(q).
\eeq
Applying this transport matrix to the standard partial derivative
$\partial/\partial \bar y^\nu$, we define
\beq\label{eq:transported-deriv}
  \tilde\partial^{(p,q)}_\mu
  := P^{T^*\!M}_\alpha(p,q)^\nu_{\ \mu}\,
     \frac{\partial}{\partial \bar y^\nu}
  = \frac{\partial x^\sigma}{\partial \bar x^\mu}(p)\,
    \biggl(
      \frac{\partial \bar y^\nu}{\partial y^\sigma}(q)\,
      \frac{\partial}{\partial \bar y^\nu}
    \biggr)
  = \frac{\partial x^\sigma}{\partial \bar x^\mu}(p)\,
    \frac{\partial}{\partial y^\sigma}\,.
\eeq
To unify notations, we shall also write distributional actions as
integrals, with the understanding that distributional derivatives
(of $L^p$-tensors) are defined via integration by parts. With these
preparations in place we can now proceed with the proof.

(i) By \eqref{eq:moll_def},
\begin{equation}\label{eq:Teps_def}
(T_\eps)_{(\beta)}(p) = \sum_\alpha \int_{\R^n} \xi_\alpha(q) P_\alpha^E(p,q) T_{(\beta)}(q) \rho_\eps(\psi_\alpha(p) - y) \, dy.
\end{equation}

Then $\nabla^h_\mu (T_\eps)_{(\beta)}(p) = \partial_{\bar x^\mu} (T_\eps)_{(\beta)}(p) + \Gamma_\mu^E(p) (T_\eps)_{(\beta)}(p)$. Applying $\partial_{\bar x^\mu}$ to $(T_\eps)_{(\beta)}(p)$, note that $\partial_{\bar x^\mu} [\rho_\eps(\psi_\alpha(p) - y)] = \frac{\partial \psi_\alpha^\sigma}{\partial \bar x^\mu}(p) (\partial_{z^\sigma} \rho_\eps)(\psi_\alpha(p) - y)$, where $z = \psi_\alpha(p) - y=x-y$. Since $\partial_{z^\sigma}$ is equivalent to $-\partial_{y^\sigma}$ when acting on $\rho_\eps(x - y)$, this gives
\begin{align}
   \partial_{\bar x^\mu} [\rho_\eps(\psi_\alpha(p) - y)]&
   = - \frac{\partial \psi_\alpha^\sigma}{\partial \bar x^\mu}(p) \partial_{y^\sigma} [\rho_\eps(\psi_\alpha(p) - y)]\nonumber \\ \label{eq:x-z}
   & = - \tilde{\partial}_\mu^{(p,q)} [\rho_\eps(\psi_\alpha(p) - y)]
\end{align}

We then (distributionally) integrate by parts with respect to the $y$ variables, shifting $-\tilde{\partial}_\mu^{(p,q)}$ from the mollifier to the smooth terms $\xi_\alpha(q) P_\alpha^E(p,q)$ and the distributional tensor $T_{(\beta)}(q)$. This gives
\begin{align*}
\nabla^h_\mu &(T_\eps)_{(\beta)}(p) = \sum_\alpha \int_{\R^n} \rho_\eps(\psi_\alpha(p) - y) \Big\{ \big(\tilde{\partial}_\mu^{(p,q)} \xi_\alpha(q)\big) P_\alpha^E(p,q) T_{(\beta)}(q) \\
&\qquad + \xi_\alpha(q) \big(\tilde{\partial}_\mu^{(p,q)} P_\alpha^E(p,q)\big) T_{(\beta)}(q) + \xi_\alpha(q) P_\alpha^E(p,q) \big(\tilde{\partial}_\mu^{(p,q)} T_{(\beta)}(q)\big) \\
&\qquad + \xi_\alpha(q) \big(\partial_{\bar x^\mu} P_\alpha^E(p,q)\big) T_{(\beta)}(q) + \Gamma_\mu^E(p) \xi_\alpha(q) P_\alpha^E(p,q) T_{(\beta)}(q) \Big\} \, dy.
\end{align*}
On the other hand, when we expand $((\nabla_h T)_\eps)_{(\beta)\mu}(p)$, using the convolution formula \eqref{eq:moll_def} for the (distributional) section $\nabla^h T \in \D'(M, T^*\!M \otimes E)$ we need its local components at the integration point $q$, which are $\partial_{\bar y^\nu} T_{(\beta)}(q)+ \Gamma^E_\nu(q)\, T_{(\beta)}(q)$. To transport this $T^*M \otimes E$-valued tensor back to $p$, we apply $P_\alpha^{T^*M \otimes E}(p,q) = P_\alpha^{T^*M}(p,q)_\mu^\nu \otimes P_\alpha^E(p,q)$, to obtain
\begin{align*}
&((\nabla_h T)_\eps)_{(\beta)\mu}(p) =\\ 
&\sum_\alpha \int_{\R^n} \xi_\alpha(q)\ \rho_\eps(\psi_\alpha(p) - y) \Big[ P_\alpha^{T^*M}(p,q)_\mu^\nu\ P_\alpha^E(p,q) \Big( \partial_{\bar y^\nu} T_{(\beta)}(q)+ \Gamma^E_\nu(q)\, T_{(\beta)}(q) \Big) \Big] \, dy \\
&= \sum_\alpha \int_{\R^n} \xi_\alpha(q)\ \rho_\eps(\psi_\alpha(p) - y)\ P_\alpha^E(p,q)     \big(\underbrace{P^{T^*\!M}_\alpha(p,q)^\nu_{\ \mu}\,
    \frac{\partial}{\partial \bar y^\nu}}_{
      =\,\tilde\partial^{(p,q)}_\mu}\,\big)
    T_{(\beta)}(q)\, dy \\
&\quad + \sum_\alpha \int_{\R^n} \xi_\alpha(q)\ \rho_\eps(\psi_\alpha(p) - y)\ P_\alpha^E(p,q)\, P_\alpha^{T^*M}(p,q)_\mu^\nu\, \Gamma_\nu^E(q)\, T_{(\beta)}(q) \, dy.
\end{align*}
Now calculating the commutator $C_\eps(T) := \nabla_h (T \star_\M \rho_\eps) - (\nabla_h T) \star_\M \rho_\eps$, we see that the principal derivative term $\sum_\alpha \int_{\R^n} \rho_\eps(\psi_\alpha(p) - y) \xi_\alpha(q) P_\alpha^E(p,q) \big(\tilde{\partial}_\mu^{(p,q)} T_{(\beta)}(q)\big) \, dy$ cancels exactly. Factoring $T_{(\beta)}(q)$ out of the remaining terms, we are left with:
\begin{equation}\label{eq:Ceps_integral}
C_\eps(T)_{(\beta)\mu}(p) 
= \sum_\alpha \int_{\R^n} \rho_\eps(\psi_\alpha(p) - y)\, M_{\alpha, \mu}(p, q)\, T_{(\beta)}(q) \, dy,
\end{equation}
where $M_{\alpha, \mu}(p,q)$ is given by:
\begin{equation*}
M_{\alpha, \mu}(p,q) = \big(\tilde{\partial}_\mu^{(p,q)} \xi_\alpha(q)\big) P_\alpha^E(p,q) + \xi_\alpha(q) \mathcal{E}_{\alpha, \mu}(p,q),
\end{equation*}
with 
\begin{align*}
\mathcal{E}_{\alpha, \mu}(p,q) :=& 
\partial_{\bar x^\mu} P_\alpha^E(p,q) + \tilde{\partial}_\mu^{(p,q)} P_\alpha^E(p,q)\\& + \Gamma_\mu^E(p) P_\alpha^E(p,q) - P_\alpha^E(p,q) P_\alpha^{T^*M}(p,q)_\mu^\nu \Gamma_\nu^E(q).
\end{align*}
Next we claim that $\mathcal{E}_{\alpha, \mu}(p,p)=0$. To see this, note first that $q=p$
implies $y = \psi_\alpha(p)$. Now $P_\alpha^E(p,p) = J_{\alpha \to \beta}^E(p) J_{\beta \to \alpha}^E(p) = I_E$. Furthermore, since $P_\alpha^{T^*M}(p,p)_\mu^\nu = \delta_\mu^\nu$, the operator $\tilde{\partial}_\mu^{(p,p)}$ reduces to $\partial_{\bar y^\mu}$ at $p$, i.e., to $\partial_{\bar x^\mu}$.
Since $z \mapsto P^E_\alpha(z,z) \equiv I_E$, we have
$\partial_{\bar x^\mu} P^E_\alpha(p,p)
+ \partial_{\bar y^\mu} P^E_\alpha(p,p) = 0$
so the first two terms in $E_{\alpha,\mu}(p,p)$ cancel,
as do the remaining terms due to
$\Gamma^E_\mu(p)\, I_E
- I_E\, \delta^\nu_\mu\, \Gamma^E_\nu(p) = 0$. So indeed $\mathcal{E}_{\alpha, \mu}(p,p)=0$, leaving $M_{\alpha, \mu}(p,p) = \big( \partial_{\bar x^\mu} \xi_\alpha(p) \big) I_E$. Since $\sum_\alpha \xi_\alpha \equiv 1$,
\begin{equation}\label{eq:kernel_cancel}
\sum_\alpha M_{\alpha, \mu}(p,p) = \Big( \partial_{\bar x^\mu} \sum_\alpha \xi_\alpha(p) \Big) I_E = 0.
\end{equation}
We can therefore rewrite the commutator in the form
\begin{equation*}
C_\eps(T)_{(\beta)\mu}(p) 
 = \sum_\alpha \int_{\R^n} \rho_\eps(\psi_\alpha(p) - y) \big[ M_{\alpha, \mu}(p, q) - M_{\alpha, \mu}(p,p) \big] T_{(\beta)}(q) \, dy.
\end{equation*}
Since $M_{\alpha, \mu}$ is smooth, Taylor expansion bounds the bracket by $\mathcal{O}(d^h(p,q)) \le \mathcal{O}(\eps)$ on the support of $\rho_\eps$, so that
\begin{equation*}
|C_\eps(T)_{(\beta)\mu}(p)| 
\le C \eps \int_{\R^n} \rho_\eps(\psi_\alpha(p) - y) |T_{(\beta)}(q)| \, dy.
\end{equation*}
If $p < \infty$, applying H\"older's inequality with conjugate exponents $p$ and $p'$ 
yields:
\begin{equation*}
|C_\eps(T)_{(\beta)\mu}(p)| 
\le C \eps \, \| \rho_\eps \|_{L^{p'}(\R^n)} \, \| T_{(\beta)} \|_{L^p(L)}, 
\end{equation*}
where $L$ is a compact neighbourhood of $K$. 
Since $\| \rho_\eps \|_{L^{p'}} = \eps^{-n/p} \|\rho\|_{L^{p'}}$, we arrive at
\begin{equation}\label{eq:holder_bound}
\|C_\eps(T)\|_{L^\infty(K)} \le C \eps \cdot \eps^{-n/p} \| T \|_{L^p(L)} = \mathcal{O}(\eps^{1-n/p}) = \mathcal{O}(\eps^\alpha).
\end{equation}

If $p = \infty$, we can bound $|T_{(\beta)}(q)|$ directly by its $L^\infty$-norm on a compact neighbourhood $L$ of $K$, which gives
\begin{equation*}
|C_\eps(T)_{(\beta)\mu}(p)| 
\le C \eps \, \| T_{(\beta)} \|_{L^\infty(L)} \int_{\R^n} \rho_\eps(\psi_\alpha(p) - y) \, dy = C \eps \, \| T_{(\beta)} \|_{L^\infty(L)}.
\end{equation*}
Thus $\|C_\eps(T)\|_{L^\infty(K)} = \mathcal{O}(\eps)$.

This completes the proof of (i).

(ii) We begin by rewriting the Hessian commutator $H_\eps(\T) = [\nabla_h^2, \star_\M\rho_\eps]\T$ in terms of the first-derivative commutator $C_\eps$ from (i) corresponding to $\T$ and $\nabla_h \T$, respectively:
\begin{equation}\label{eq:Hess_decomp}
\begin{split}
H_\eps(\T) &= \nabla_h \big[ \nabla_h (\T \star_\M \rho_\eps) - (\nabla_h \T) \star_\M \rho_\eps \big]\\ 
&\hphantom{-\ \ \,}+ \big[ \nabla_h \big( (\nabla_h \T) \star_\M \rho_\eps \big) - \big(\nabla_h (\nabla_h \T)\big) \star_\M \rho_\eps \big]\\ 
&= \nabla_h C_\eps(\T) + C_\eps(\nabla_h \T).
\end{split}
\end{equation}
Since $\nabla_h \T \in W^{1,p}_\loc \subseteq L^p_\loc$, part (i) immediately yields $\| C_\eps(\nabla_h \T) \|_{L^\infty(K)} = \mathcal{O}(\eps^\alpha)$. The remaining term takes the form 
\[
\nabla^h_\nu C_\eps(\T)_\mu(p) = \partial_{\bar x^\nu} C_\eps(\T)_\mu(p) - \Gamma^\lambda_{\nu\mu}(p) C_\eps(\T)_\lambda(p)
\]
Applying (i) to $\T$ gives $\|C_\eps(\T)\|_{L^\infty} = \mathcal{O}(\eps^\alpha)$, so the second term is bounded by $\mathcal{O}(\eps^\alpha)$, and we are left with bounding $\partial_{\bar x^\nu} C_\eps(\T)_\mu(p)$, where $C_\eps$ is given by \eqref{eq:Ceps_integral}. In our present case, $T=\T$ is scalar, so the corresponding bundle is trivial ($P_\alpha^\R \equiv 1$, $\Gamma^\R \equiv 0$), and the kernel simplifies to $M_{\alpha, \mu}(p,q) = \tilde{\partial}_\mu^{(p,q)} \xi_\alpha(q)$. Applying $\partial_{\bar x^\nu}$ yields
\begin{equation*}
\partial_{\bar x^\nu} C_\eps(\T)_\mu(p) = \sum_\alpha \int_{\R^n} (\partial_{\bar x^\nu} \rho_\eps) M_{\alpha, \mu} \T \, dy + \sum_\alpha \int_{\R^n} \rho_\eps (\partial_{\bar x^\nu} M_{\alpha, \mu}) \T \, dy.
\end{equation*}
Next we rewrite $\partial_{\bar x^\nu} [\rho_\eps(\psi_\alpha(p) - y)]$ using \eqref{eq:x-z}
and integrate the first term by parts with respect to $y$ 
to obtain
\begin{align*}
\int (-\tilde{\partial}_\nu^{(p,q)} \rho_\eps) M_{\alpha, \mu} \T \, dy &= \int \rho_\eps \tilde{\partial}_\nu^{(p,q)}(M_{\alpha, \mu} \T) \, dy \\
&= \int \rho_\eps (\tilde{\partial}_\nu^{(p,q)} M_{\alpha, \mu}) \T \, dy + \int \rho_\eps M_{\alpha, \mu} (\tilde{\partial}_\nu^{(p,q)} \T) \, dy.
\end{align*}
Substituting this back, we obtain
\begin{equation}\label{eq:Hess_bracket}
\begin{split}
\partial_{\bar x^\nu} C_\eps(\T)_\mu(p) &= \sum_\alpha \int_{\R^n} \rho_\eps \underbrace{\big[ \partial_{\bar x^\nu} M_{\alpha, \mu} + \tilde{\partial}_\nu^{(p,q)} M_{\alpha, \mu} \big]}_{=: B_{\alpha, \mu, \nu}(p,q)} \T(q) \, dy\\ 
&\quad + \sum_\alpha \int_{\R^n} \rho_\eps M_{\alpha, \mu} \big( \tilde{\partial}_\nu^{(p,q)} \T(q) \big) \, dy.
\end{split}
\end{equation}
Now recalling that at $q=p$ the operator $\tilde{\partial}_\nu^{(p,p)}$ reduces to $\partial_{\bar y^\nu}\big|_p = \partial_{\bar x^\nu}$. Consequently, $B_{\alpha, \mu, \nu}(p,p)$ is the derivative of the kernel along the diagonal
\begin{equation*}
B_{\alpha, \mu, \nu}(p,p) = \partial_{\bar x^\nu} M_{\alpha, \mu}(p,p) + \partial_{\bar y ^\nu} M_{\alpha, \mu}(p,p) = \frac{\partial}{\partial \bar x^\nu}\bigg|_{p} \big( z \mapsto M_{\alpha, \mu}(z,z) \big).
\end{equation*}
Since by \eqref{eq:kernel_cancel} we have $\sum_\alpha M_{\alpha, \mu}(z,z) = 0$ for any $z \in \mathcal{M}$, the derivative along the diagonal vanishes, i.e., $\sum_\alpha B_{\alpha, \mu, \nu}(p,p) = 0$ and similar to the above we may rewrite the first integral in \eqref{eq:Hess_bracket} as
\begin{equation*}
\sum_\alpha \int_{\R^n} \rho_\eps \big[ B_{\alpha, \mu, \nu}(p,q) - B_{\alpha, \mu, \nu}(p,p) \big] \T(q) \, dy.
\end{equation*}
Since $B_{\alpha, \mu, \nu}$ is smooth, the bracket is bounded on the support of $\rho_\eps$ by $\mathcal{O}(d^h(p,q)) \le \mathcal{O}(\eps)$. Because $\T \in W^{2,p}_\loc$ is locally bounded, the first integral is bounded by $\mathcal{O}(\eps)$. 

Similarly using \eqref{eq:kernel_cancel}, we can rewrite the second integral in \eqref{eq:Hess_bracket} in the form
\begin{equation*}
\sum_\alpha \int_{\R^n} \rho_\eps(\psi_\alpha(p) - y) \Big[ M_{\alpha, \mu}(p,q) \big( \tilde{\partial}_\nu^{(p,q)} \T(q) \big) - M_{\alpha, \mu}(p,p) \partial_{\bar x^\nu} \T(p) \Big] \, dy.
\end{equation*}
We split this difference into the two terms
\begin{align*}
I_1 &= \sum_\alpha \int_{\R^n} \rho_\eps(\psi_\alpha(p) - y) \big[ M_{\alpha, \mu}(p,q) - M_{\alpha, \mu}(p,p) \big] \big( \tilde{\partial}_\nu^{(p,q)} \T(q) \big) \, dy, \\
I_2 &= \sum_\alpha \int_{\R^n} \rho_\eps(\psi_\alpha(p) - y) M_{\alpha, \mu}(p,p) \big[ \tilde{\partial}_\nu^{(p,q)} \T(q) - \partial_{\bar x^\nu} \T(p) \big] \, dy
\end{align*}
and bound each term separately.

For $I_1$, by a Taylor expansion we obtain a bound of $\mathcal{O}(d^h(p,q)) \le \mathcal{O}(\eps)$ for the bracket on the support of $\rho_\eps$. Also, $\tilde{\partial}_\nu^{(p,q)} \T(q)$ is bounded pointwise by $C|\nabla_h \T(q)|$ which is locally bounded and so we obtain
\begin{equation*}
|I_1| \le C \eps \int_{\R^n} \rho_\eps |\nabla_h \T(q)| \, dy \le C\, \eps\, \|\nabla_h \T\|_{L^\infty(L)} = \mathcal{O}(\eps).
\end{equation*}
For $I_2$, the matrix $M_{\alpha, \mu}(p,p) = \partial_{\bar x^\mu} \xi_\alpha(p) I_E$ is $\mathcal{O}(1)$. The term in the brackets is the difference between the transported 1-form $P_\alpha^{T^*M}(p,q) d\T(q)$ and the local 1-form $d\T(p)$. Adding and subtracting $P_\alpha^{T^*M}(p,q) \nabla_h \T(p)$, we bound this by
\begin{equation*}
\big| P_\alpha^{T^*M}(p,q) \big( \nabla_h \T(q) - \nabla_h \T(p) \big) \big| + \big| \big( P_\alpha^{T^*M}(p,q) - I \big) \nabla_h \T(p) \big|.
\end{equation*}
Here, $\nabla_h \T \in C^{0,\alpha}$ with $\alpha = 1 - n/p$. The first term is thus bounded by $C d^h(p,q)^\alpha \le C \eps^\alpha$. For the second term, since $P_\alpha^{T^*M}(p,p)=I$, the bracket is bounded by $\mathcal{O}(d^h(p,q)) \le \mathcal{O}(\eps)$. As $\nabla_h \T(p)$ is locally bounded, the entire bracket
in $I_2$ is $\mathcal{O}(\eps^\alpha) + \mathcal{O}(\eps) = \mathcal{O}(\eps^\alpha)$ on the support of $\rho_\eps$. Consequently, $|I_2| = \mathcal{O}(\eps^\alpha)$.

Combining these bounds yields $\partial_{\bar x^\nu} C_\eps(\T)_\mu(p) = \mathcal{O}(\eps) + \mathcal{O}(\eps) + \mathcal{O}(\eps^\alpha) = \mathcal{O}(\eps^\alpha)$. In view of \eqref{eq:Hess_decomp}, this completes the proof.

\end{proof}

\section{The Hawking Singularity Theorem} \label{sec:hawking}

In this section we will formulate and prove the Hawking theorem for $W^{1,p}_\text{loc}$-spacetimes with $L^p_\text{loc}$-bounded curvature ($p>2n$). 
We will present the statement in its two standard forms: in the globally hyperbolic case, we obtain an explicit bound for the time-separation function, whereas without global hyperbolicity we merely conclude timelike incompleteness. Both versions will be given in a general, quantified form under corresponding Ricci and mean curvature bounds. As preparation, we first briefly discuss how Ricci bounds are to be understood in our low-regularity setting, and then recall the classical smooth formulation.

\subsection{Ricci bounds} \label{sec:ricbounds}

As announced, we will not only deal with the `classical' case of the Hawking theorem where one assumes the \emph{strong energy condition} (SEC), $\Ric(X,X) \geq 0$ for all timelike vectors, but with general lower Ricci bounds. We begin by formulating these for $g \in W^{1,p}_\text{loc}$. By our underlying assumption $\Riem[g] \in L^p_\text{loc}$, this is quite straightforward. In fact, it suffices to assume $g\in C^0$ and $\Ric\in L^p_\text{loc}$, and we do so in the following. 

\begin{Def}[Ricci bounds]
Let $\kappa\in \R$. We say that a continuous spacetime $(\M,g)$ with $\Ric\in L^p_\text{loc}$ has Ricci curvature bounded below by $\kappa$ if  
\begin{equation}\label{eq:lrb}
  \Ric(X,X)\geq (n-1) \kappa \quad\mbox{point-wise almost everywhere}
\end{equation} 
for all smooth timelike unit vector fields $X$.
\end{Def}

Obviously condition \eqref{eq:lrb} is equivalent to 
\begin{enumerate}
    \item for almost all points $p\in M$ we have $\Ric(v,v)\geq (n-1)\kappa$ for all timelike unit tangent vectors $v$ in $T_pM$, and to
    \item condition \eqref{eq:lrb}  as above but for all continuous (or even merely measurable) timelike vector fields.
\end{enumerate}
Note that RT-regularisation preserves Ricci bounds. Even more generally, let $\phi$ be a $C^1$-isometry between spacetimes as in Section \ref{Sec_causality}. Then it maps null sets to null sets and so preserves the tangent space condition (1) above by the fact that $T_p\phi$ is an isometry of the tangent spaces. 

\begin{Lemma}[Preservation of Ricci bounds]\label{lem:prb}
  Let $\kappa\in\R$ and let $\phi$ be a $C^1$-diffeo\-mor\-phism. Then $\Ric[g]$ has curvature bounded below by $\kappa$ if and only if  $\Ric[\phi_*g]$ has curvature bounded below by $\kappa$.
\end{Lemma}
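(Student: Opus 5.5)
The plan is to reduce the claim to the pointwise tangent space formulation (1) of the Ricci bound and then to use that $\phi$ is an isometry between $(\M,g)$ and $(\M',\phi_*g)$, exactly as sketched in the paragraph preceding the statement. By symmetry (applying the argument to $\phi^{-1}$, which is again a $C^1$-diffeomorphism with $(\phi^{-1})_*\phi_*g=g$) it suffices to prove one implication: if $\Ric[g]$ is bounded below by $\kappa$, then so is $\Ric[\phi_*g]$.

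The first step is to establish the tensorial transformation law $\Ric[\phi_*g]=\phi_*\Ric[g]$ almost everywhere, at the level of $L^p_\loc$-tensors. In our application $\phi$ is the RT-regularisation of \eqref{eq:phi3.1}, and the curvature is computed in charts of $\A$ and $\A'$ respectively. I would argue as follows. The Riemann tensor of $g$ is a well-defined distributional tensor field (Geroch--Traschen class), which by assumption lies in $L^p_\loc$. Its coordinate expression transforms tensorially under the $W^{2,p}_\loc$-transition maps between $\A$ and $\A'$. This can be justified by approximating the transition maps and $g$ by smooth ones, for which the law is classical, and passing to the limit in $\D'$; alternatively one can cite that \cite{ReintjesTemple_essreg} shows $\Riem$ transforms as a tensor under such coordinate changes. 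In fact, the problem only involves the single manifold $\M$ viewed in two atlases, so $\Ric$ is one tensor field expressed in different coordinates, and the law amounts to the compatibility of the distributional curvature with $W^{2,p}$ changes of coordinates. For a general $C^1$ diffeomorphism I would simply take this as part of the meaning of $\Ric[\phi_*g]$.

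Second, the null sets. A $C^1$ diffeomorphism is locally bi-Lipschitz, so $\phi$ and $\phi^{-1}$ map Lebesgue null sets to null sets. Hence there is a null set $Z\subseteq\M$ such that for all $p\notin Z$ the bound (1) holds at $p$ for $g$, and the identity $\Ric[\phi_*g]_{\phi(p)}(T_p\phi\, v,T_p\phi\, v)=\Ric[g]_p(v,v)$ holds for all $v\in T_p\M$. The set $\phi(Z)$ is then null.

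Third, the transfer of the bound. Fix $p\notin Z$ and a $\phi_*g$-timelike unit vector $w\in T_{\phi(p)}\M'$. Set $v:=(T_p\phi)^{-1}w$. Since $T_p\phi$ is a linear isometry from $(T_p\M,g_p)$ to $(T_{\phi(p)}\M',(\phi_*g)_{\phi(p)})$, the vector $v$ is $g$-timelike unit. Therefore $\Ric[\phi_*g](w,w)=\Ric[g](v,v)\ge(n-1)\kappa$, which is (1) for $\phi_*g$. By the equivalence of (1) with \eqref{eq:lrb} we are done.

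The main obstacle is the first step, i.e.\ making rigorous the a.e.\ tensorial transformation of the merely $L^p$ Ricci tensor under a $C^1$ (here $W^{2,p}$) change of coordinates. The pointwise isometry argument itself is routine.
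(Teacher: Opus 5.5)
Your proposal is correct and follows the paper's argument: $\phi$, being $C^1$ and hence locally bi-Lipschitz, maps null sets to null sets, and $T_p\phi$ is a linear isometry of tangent spaces, so the pointwise formulation (1) transfers. Your first step, the a.e.\ tensorial transformation law $\Ric[\phi_*g]=\phi_*\Ric[g]$ for the $L^p$ Ricci tensor under $W^{2,p}$ coordinate changes, is something the paper uses tacitly without comment, so it adds rigour rather than departing from the paper's route.
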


We will also be concerned with Ricci bounds for $g\in C^1$ with no additional regularity assumptions on the curvature, as discussed in \cite[Sec.\ 3.2]{Graf}. In this setting, the  Ricci tensor is a distribution and bounds have to be formulated in the weak sense, i.e., 
\begin{equation}\label{eq:drb}
    \langle\Ric(X,X)-(n-1)\kappa,\varphi\rangle \geq 0
\end{equation}
for all smooth timelike unit vector fields $X$ and all non-negative test densities $\varphi$, where $\langle\ ,\ \rangle$ denotes the distributional action. It is then clear that if $\Ric\in L^p_\text{loc}$, conditions \eqref{eq:drb} and \eqref{eq:lrb} are equivalent. 

We need the following result that guarantees that Ricci bounds are almost preserved under smoothing.

\begin{Prop}\label{Prop:Graf3.13-modified}
Let $(\M,g)$ be a $C^1$-spacetime. Suppose that $\mathrm{Ric}(X, X) \ge (n-1)\kappa$ for some $\kappa\in\R$ and all smooth unit timelike vector fields $X$ on $M$. Then for any compact set $K\Subset TM$, and any $\delta>0$ there exists $\eps_0=\eps_0(K,\delta)>0$ such that for all $\eps\in (0,\eps_0)$ and all $X\in K$ with $\check{g}_\eps(X,X) = -1$ we have
\begin{equation*}
 \Ric[\check g_\eps](X,X) \ge (n-1)(\kappa-\delta).
\end{equation*}
\end{Prop}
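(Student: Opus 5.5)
Since Lemma \ref{Le:approximating metrics}(iv) gives $\Ric[\check g_\eps]-\Ric[g_\eps]\to 0$ in $C^\infty(\M)$, hence uniformly on compact subsets of $TM$ (quadratic forms evaluated on vectors from a compact set), the plan is to prove the bound for $g_\eps=g\star_\M\rho_\eps$ instead of $\check g_\eps$. The normalisation $\check g_\eps(X,X)=-1$ versus $g_\eps(X,X)=-1$ is handled by rescaling: since $\check g_\eps-g_\eps\to0$ uniformly, $X/\sqrt{-g_\eps(X,X)}$ differs from $X$ by a factor $1+o(1)$ uniformly on $K$. The resulting errors are absorbed into $\delta/2$. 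So it suffices to show $\Ric[g_\eps](X,X)\ge(n-1)(\kappa-\delta/2)$ for $g_\eps$-unit timelike $X\in K$ and $\eps$ small.

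For this I would follow Graf's Friedrichs-lemma argument (\cite[Lem.\ 4.5, Prop.\ 3.13]{Graf}). In local coordinates write $\Ric[g]=\partial\Gamma-\partial\Gamma+\Gamma\Gamma$. Since $g\in C^1$, $\Gamma[g]\in C^0$, and $\Ric[g]$ is a first-order distribution in $\Gamma$. For $\Ric[g_\eps]$, the quadratic terms $\Gamma[g_\eps]\Gamma[g_\eps]$ converge uniformly to $\Gamma\Gamma$ because $g_\eps\to g$ in $C^1$. The key step is the standard Friedrichs-type commutator estimate, which shows
\[
\Ric[g_\eps]-\Ric[g]\star_\M\rho_\eps\to 0 \quad\text{locally uniformly}.
\]
The principal part $\partial(g_\eps^{-1}\partial g_\eps)-(g^{-1}\partial g)_{,}\star\rho_\eps$ reduces, after integrating by parts onto the mollifier, to terms of the form $\int(f(y)-f(x))\,\partial\rho_\eps(x-y)\,h(y)\,dy$ with $f=g^{-1}$ continuous and $h=\partial g$ bounded. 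These are $\mathcal{O}(\omega_g(\eps)\eps\cdot\eps^{-1})\to0$. The chart transition and cut-off contributions of $\star_\M$ are $\mathcal{O}(\eps)$, exactly as in Lemma \ref{lem:commutator_estimates}(i) with $p=\infty$.

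It then remains to show that $\Ric[g]\star_\M\rho_\eps(X,X)\ge(n-1)(\kappa-\delta/4)$ for $g_\eps$-unit timelike $X\in K$. I would extend $X\in T_x\M$ to a smooth vector field $\tilde X$, constant in a chart near $x$ (the $\star_\M$ transport matrices are $I+\mathcal{O}(\eps)$ by \eqref{eq:transport-O-epsilon}). Normalising $\tilde X$ by $g$ near $x$ gives a smooth $g$-unit timelike field to which the hypothesis applies, so $\Ric(\tilde X,\tilde X)\ge(n-1)\kappa\,(-g(\tilde X,\tilde X))$ distributionally. Pairing with the nonnegative test function $\rho_\eps(x-\cdot)$ and using $-g(\tilde X,\tilde X)\to1$ uniformly on the $\eps$-ball yields the bound up to an $o(1)$ error.

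The main obstacle is making the errors in the transport terms and in $\Ric[g](\tilde X,\tilde X)$ versus $(\Ric[g]\star\rho_\eps)(X,X)$ uniform. $\Ric[g]$ is only a first-order distribution, so the commutator must again be handled by integration by parts against $C^0$ Christoffel symbols, which gives $\mathcal{O}(\eps\cdot\eps^{-1}\omega(\eps))$ or $\mathcal{O}(\eps)$. This needs uniformity over $X$ in the compact $K$, which holds since all constants depend continuously on $X$. Compactness of $K$ then gives a single $\eps_0(K,\delta)$.
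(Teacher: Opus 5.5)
Your strategy is the paper's. Its proof invokes the argument of \cite[Lem.\ 4.6]{Graf}: for $g\in C^1$, the Ricci-bound expression for $\check g_\eps$ minus the $\star_\M$-mollification of the one for $g$ tends to $0$ locally uniformly, and mollifying a nonnegative distribution with $\rho\ge 0$ keeps it nonnegative. Your three steps (Lemma~\ref{Le:approximating metrics}(iv), a Friedrichs commutator for $g_\eps$, positivity) unpack exactly this. Two steps of your sketch are, however, wrong as written.

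\textbf{The commutator estimate.} In a chart (flat mollification) you have
\begin{equation*}
\partial_k\bigl(g_\eps^{-1}(\partial g)_\eps\bigr)-\bigl(\partial_k(g^{-1}\partial g)\bigr)_\eps=(\partial_k g_\eps^{-1})(\partial g)_\eps+\int\bigl(g_\eps^{-1}(x)-g^{-1}(y)\bigr)\,\partial g(y)\,(\partial_k\rho_\eps)(x-y)\,dy .
\end{equation*}
The integral is not $\mathcal{O}(\omega(\eps)\eps\cdot\eps^{-1})$.
\begin{itemize}
\item With $f=g^{-1}$ merely continuous and $h=\partial g$ merely bounded you only get $\mathcal{O}(\omega_f(\eps)\eps^{-1})$, which need not tend to $0$.
\item Even using $g^{-1}\in C^1$ and $\partial g\in C^0$, the linear Taylor part of $g^{-1}(x)-g^{-1}(y)$ gives an $\mathcal{O}(1)$ contribution. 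The integral equals $-(\partial_k g^{-1})_\eps\,\partial g+\mathcal{O}(\omega_{\partial g}(\eps))$.
\end{itemize}
The estimate closes only because this cancels the product-rule term $(\partial_k g_\eps^{-1})(\partial g)_\eps\to\partial_k g^{-1}\,\partial g$, which your sketch leaves out. The Friedrichs mechanism here needs a Lipschitz multiplier ($g^{-1}\in C^1$) and a continuous differentiated factor ($\partial g\in C^0$), not the regularities you name. The conclusion $\Ric[g_\eps]-\Ric[g]\star_\M\rho_\eps\to 0$ locally uniformly is true; it is Graf's lemma, which the paper cites. So this repairs your justification, not your strategy.

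\textbf{The normalisation.} $\tilde X/\sqrt{-g(\tilde X,\tilde X)}$ is only $C^1$ when $g\in C^1$. The hypothesis, stated for smooth unit fields, therefore does not apply to it as you claim. What you need, and what the paper's proof uses, is the homogeneous form $\Ric(Y,Y)\ge-(n-1)\kappa\,g(Y,Y)$ in $\D'(\M)$ for smooth local timelike $Y$. Either read the hypothesis in this form, or allow $C^1$ unit fields. The latter is legitimate because $\Ric[g]$ has order one, and multiplying the nonnegative distribution $\Ric(\hat X,\hat X)-(n-1)\kappa$ by the positive $C^1$ function $-g(\tilde X,\tilde X)$ keeps it nonnegative.

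\textbf{A simplification.} Given the homogeneous form, the positivity step needs no transport estimates. Extend $X\in T_p\M$ separately in each chart $\psi_\alpha$ of the $\star_\M$-construction as the constant field $\tilde X_\alpha$. Then $(\Ric[g]\star_\M\rho_\eps)(X,X)$ is a sum of pairings of $\Ric(\tilde X_\alpha,\tilde X_\alpha)$ with the nonnegative test functions $\xi_\alpha\,\rho_\eps(\psi_\alpha(p)-\cdot)$. For small $\eps$ each $\tilde X_\alpha$ is timelike on these supports, so you get directly $(\Ric[g]\star_\M\rho_\eps)(X,X)\ge-(n-1)\kappa\,g_\eps(X,X)$.
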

\begin{proof} For $g\in C^{1,1}$, this is precisely \cite[Lem.\ 3.13]{Graf_Volume}. That the result holds even for $g\in C^1$ then follows from the proof of \cite[Lem.\ 4.6]{Graf}: Since 
${g}\star_M \rho_\eps \to g$ even in $C^1$, it follows that also 
\[
(\Ric[\check{g}_\eps] - (n-1)\kappa \check{g}_\eps) - (\Ric[g]-(n-1)\kappa g)*_M \rho_\eps \to 0
\]
locally uniformly on $TM$.    
\end{proof}

\subsection{Hawking Theorem - version 1}
Our aim in this section is to derive a low regularity version of Hawking's singularity theorem in the globally hyperbolic setting for general (quantified) assumptions on the energy condition and the mean curvature of the Cauchy surface. In the smooth setting, such versions have appeared in \cite{Borde} and \cite{Andersson-Galloway}, cf.\ also \cite{Graf_Volume}. 
The precise formulation is as follows:

\begin{Thm}[Hawking, $C^\infty$-case]\label{Th:Hawking-smooth} Let $(\M, g)$ be a smooth globally hyperbolic spacetime with a smooth spacelike Cauchy hypersurface $\Sigma$. 
Suppose there exist $\beta, \kappa\in \R$ such that
\begin{enumerate}
    \item[(i)] $\mathrm{Ric}(X, X) \ge (n-1)\kappa$ for all unit timelike vectors $X$, and
    \item[(ii)] the mean curvature $H$ of $\Sigma$ satisfies $H \le \beta$.
\end{enumerate}
Then $\tau_\Sigma \le b_{\kappa,\beta}$, 
where $b_{\kappa,\beta}$ is defined as follows:
\begin{enumerate}
\item If $\kappa>0$, then
\[
b_{\kappa,\beta}
=
\frac{1}{\sqrt{\kappa}}\cot^{-1}\!\left(-\frac{\beta}{(n-1)\sqrt{\kappa}}\right),
\qquad \beta\in\mathbb{R}.\]
where $\cot^{-1}:\R\to(0,\pi)$ denotes the principal branch.\item If $\kappa=0$, then
\[
b_{\kappa,\beta}
=
\begin{cases}
\infty, & \beta\ge 0,\\[0.3em]
-\dfrac{n-1}{\beta}, & \beta<0.
\end{cases}
\]
\item If $\kappa<0$, then
\[
b_{\kappa,\beta}
=
\begin{cases}
\infty,
& \displaystyle \frac{\beta}{(n-1)\sqrt{-\kappa}}\ge -1,\\[0.8em]
\dfrac{1}{\sqrt{-\kappa}}\,
\coth^{-1}\!\left(-\dfrac{\beta}{(n-1)\sqrt{-\kappa}}\right),
& \displaystyle \frac{\beta}{(n-1)\sqrt{-\kappa}}<-1,
\end{cases}
\]
where $\coth^{-1}:(1,\infty)\to(0,\infty)$ is the standard inverse hyperbolic cotangent.
\end{enumerate}
\end{Thm}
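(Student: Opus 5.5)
The proof is the classical comparison argument: Raychaudhuri focusing along maximising normal geodesics from $\Sigma$, compared with the Riccati ODE whose blow-up time is exactly $b_{\kappa,\beta}$. Since global hyperbolicity is assumed, I argue by contradiction: suppose $q\in I^+(\Sigma)$ with $\tau_\Sigma(q)>b$, where $b:=b_{\kappa,\beta}<\infty$ (otherwise there is nothing to show). By the smooth Avez–Seifert theorem for $\Sigma$ there is a timelike geodesic $\gamma:[0,\tau_\Sigma(q)]\to\M$, unit speed, with $\gamma(0)\in\Sigma$ and $L(\gamma)=\tau_\Sigma(q)$. The first variation formula shows that $\gamma$ is orthogonal to $\Sigma$, so $\dot\gamma(0)=N$. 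Moreover, $\gamma$ contains no focal point of $\Sigma$ before $q$.

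\textbf{Step 1 (Raychaudhuri comparison).} Let $\theta(t)$ be the expansion of the $\Sigma$-normal geodesic congruence along $\gamma$. Concretely, take the Jacobi tensor $A$ solving $A''+R A=0$ on $\dot\gamma^\perp$, with $A(0)=\mathrm{id}$ and $A'(0)=S$ the shape operator, and set $\theta=\mathrm{tr}(A'A^{-1})$. It is smooth on any interval where $\det A\neq0$, and $\theta(0)=H(\gamma(0))\le\beta$. The Raychaudhuri equation together with Cauchy–Schwarz ($\mathrm{tr}(B^2)\ge(\mathrm{tr}B)^2/(n-1)$ for the symmetric part, vorticity vanishing for a normal congruence) gives
\[
\theta'\le -\mathrm{Ric}(\dot\gamma,\dot\gamma)-\frac{\theta^2}{n-1}\le -(n-1)\kappa-\frac{\theta^2}{n-1}.
\]
Write $x:=\theta/(n-1)$. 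Then $x'\le -\kappa-x^2$ with $x(0)\le \beta/(n-1)$.

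\textbf{Step 2 (ODE blow-up).} Compare with the solution $y$ of $y'=-\kappa-y^2$, $y(0)=\beta/(n-1)$. The explicit solutions are
\begin{itemize}
\item $y=\sqrt\kappa\cot(\sqrt\kappa(t+c))$ for $\kappa>0$,
\item $y=1/(t+c)$ for $\kappa=0$,
\item $y=\sqrt{-\kappa}\coth(\cdot)$ for $\kappa<0$ when $y(0)<-\sqrt{-\kappa}$; otherwise $\tanh$ or constant solutions exist globally.
\end{itemize}
These reach $-\infty$ exactly at $t=b_{\kappa,\beta}$. For instance, if $\kappa>0$ then $c=\cot^{-1}(\beta/((n-1)\sqrt\kappa))/\sqrt\kappa$, and blow-up occurs at $\pi/\sqrt\kappa-c$, which equals the stated $\cot^{-1}(-\cdot)$ formula. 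A standard Riccati comparison gives $x\le y$ on the common interval, so $\theta\to-\infty$ at some $t_0\le b<L(\gamma)$. This forces $\det A(t_0)=0$: since $\theta=(\det A)'/\det A$, and $\theta$ stays finite while $\det A\neq0$, a blow-up to $-\infty$ can only happen at a zero of $\det A$. Hence $\gamma(t_0)$ is a focal point of $\Sigma$ along $\gamma$.

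\textbf{Step 3 (contradiction).} A focal point before $q$ contradicts maximality by the standard index form argument (e.g.\ \cite[Prop.\ 10.43]{ON83}). There exists a variation of $\gamma$ through curves from $\Sigma$ that are longer than $\gamma$, so $\tau_\Sigma(q)>L(\gamma)$, contradicting $L(\gamma)=\tau_\Sigma(q)$. Hence $\tau_\Sigma\le b$.

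The main obstacle is bookkeeping rather than conceptual. One point is correctly matching the sign convention $H=\mathrm{tr}\,S$ with $S(v)=\nabla_vN$, which gives $\theta(0)=H$ in this convention. The other is handling the boundary cases of the Riccati comparison: $\beta$ exactly at thresholds, where $b=\infty$, and ensuring that the strict inequality $L(\gamma)>b$ is used so that the focal point lies strictly before $q$.
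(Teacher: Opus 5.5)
Your argument is correct, but it takes a different route from the paper's proof in Appendix~\ref{Sec_Hawking_proof}. The paper never introduces Jacobi tensors, the expansion $\theta$, or focal points. It takes the maximising normal geodesic $\gamma:[0,b]\to\M$ and inserts the explicit test fields $V_i=hE_i$ into the index form, where $E_i$ is a parallel orthonormal frame of $\dot\gamma^\perp$ and $h(0)=1$, $h(b)=0$. Using $I(V_i,V_i)\le 0$ and summing over $i$ gives $-\beta\le (n-1)\int_0^b\big((h')^2-\kappa h^2\big)\,dt$. The bound on $b$ is then read off by minimising this one-dimensional functional via its Euler--Lagrange equation $h''+\kappa h=0$.

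The two arguments are dual. If $h''+\kappa h=0$ with $h(0)=1$ and $h'(0)=\beta/(n-1)$, then $y=h'/h$ is exactly your comparison solution of $y'=-\kappa-y^2$, and the first zero of $h$ is the blow-up time $b_{\kappa,\beta}$ of $y$.

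The paper's route is more elementary: it needs only the second variation inequality for a maximiser plus one-variable calculus. Yours uses heavier machinery: self-adjointness of $A'A^{-1}$ (vanishing vorticity), the Raychaudhuri equation, and the fact that a focal point before $q$ destroys maximality, which is itself proved via the index form.

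In exchange, your Riccati comparison handles all regimes uniformly, including $\sqrt{\kappa}\,b\ge\pi$ for $\kappa>0$ and the threshold cases. The paper passes from ``$J[h]\ge-\beta/(n-1)$ for all admissible $h$'' to the sharp bound through the ``extremal case'' $h'(0)=\beta/(n-1)$. That step implicitly uses that $\min J$ decreases in $b$ and that $J$ is unbounded below once $\sqrt{\kappa}\,b\ge\pi$, both of which your comparison argument gives for free.

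Both proofs use only $\mathrm{Ric}(\dot\gamma,\dot\gamma)$ along the single maximiser and $H$ at its foot point. That is exactly what is needed when the estimate is invoked for $\check g_\eps$ in the proofs of Theorems~\ref{Th:Hawking-I} and~\ref{Th:Hawking-II}.
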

For completeness, we include a proof of this version\footnote{Independent of the one in \cite{Graf_Volume}.} of the result in Appendix \ref{Sec_Hawking_proof}. We shall require the following observation on the regularity of $b_{\kappa,\beta}$:
\begin{Lemma}\label{Lem:b-kappa-beta-continuous} Fix $\beta\in\R$. Then the map $\kappa\mapsto b_{\kappa,\beta}$ defined in  Theorem \ref{Th:Hawking-smooth} is continuous as a function $\R\to[0,\infty]$.
\end{Lemma}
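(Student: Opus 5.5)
Fix $\beta\in\R$ and set $c:=\beta/(n-1)$. The plan is to verify continuity of $\kappa\mapsto b_{\kappa,\beta}$ separately on the open sets $\{\kappa>0\}$ and $\{\kappa<0\}$, and then at the transition point $\kappa=0$. Continuity into $[0,\infty]$ is understood with respect to the usual topology of the extended half-line, so a value $\infty$ is approached by values tending to $+\infty$.

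On $\kappa>0$ the formula $b_{\kappa,\beta}=\kappa^{-1/2}\cot^{-1}(-c/\sqrt\kappa)$ is a composition of continuous maps, since $\cot^{-1}:\R\to(0,\pi)$ is continuous. On $\kappa<0$ we write $s:=\sqrt{-\kappa}$. The region $c/s<-1$ is equivalent to $c<0$ and $-\kappa<c^2$, that is $\kappa\in(-c^2,0)$. There the value $s^{-1}\coth^{-1}(-c/s)$ is continuous. On the complementary region $\kappa\le -c^2$, or all $\kappa<0$ if $c\ge0$, the value is constantly $\infty$.

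The interface $\kappa=-c^2$ (when $c<0$) needs a separate check. As $\kappa\downarrow -c^2$ we have $-c/s\downarrow 1$, so $\coth^{-1}(-c/s)\to\infty$ and the value tends to $\infty$, which matches.

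At $\kappa=0$ I would use the expansions $\cot^{-1}(x)\to\pi$ as $x\to-\infty$ and $\cot^{-1}(x)=1/x+O(x^{-3})$ as $x\to+\infty$, together with $\coth^{-1}(x)=1/x+O(x^{-3})$ as $x\to\infty$. There are three cases.

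\emph{Case $c<0$.} For $\kappa\downarrow0$ the argument $-c/\sqrt\kappa\to+\infty$, so $b\approx\kappa^{-1/2}\cdot\sqrt\kappa/(-c)=-1/c=-(n-1)/\beta$. For $\kappa\uparrow0$ we have $-c/s\to\infty$, and the same expansion of $\coth^{-1}$ gives the limit $-1/c$. This equals $b_{0,\beta}$.

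\emph{Case $c>0$.} For $\kappa\downarrow0$ the argument tends to $-\infty$, so $b\approx\pi/\sqrt\kappa\to\infty$. For $\kappa<0$ the value is $\infty$. Both match $b_{0,\beta}=\infty$.

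\emph{Case $c=0$.} Here $b_{\kappa,\beta}=\kappa^{-1/2}\cdot\pi/2\to\infty$ as $\kappa\downarrow0$, and $b=\infty$ for $\kappa\le0$.

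The only slightly delicate step is the $\kappa\to0$ limit with $\beta<0$. There one has to use the first-order asymptotics of $\cot^{-1}$ and $\coth^{-1}$ rather than mere continuity, because the prefactor $\kappa^{-1/2}$ blows up. That step is a short Taylor argument.
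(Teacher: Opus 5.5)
Your proof is correct and is exactly the ``straightforward calculation'' the paper cites without giving details. You check continuity piecewise on $\kappa>0$, on $(-c^2,0)$ and on $\kappa<-c^2$. You handle the interface $\kappa=-c^2$ via $\coth^{-1}(x)\to\infty$ as $x\downarrow 1$. At $\kappa=0$ you match both one-sided limits with $b_{0,\beta}$ using $\cot^{-1}(x),\coth^{-1}(x)=1/x+O(x^{-3})$ as $x\to+\infty$ and $\cot^{-1}(x)\to\pi$ as $x\to-\infty$.
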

\begin{proof}
Straightforward calculation.    
\end{proof}
The following result is a variant of \cite[Cor.\ 2.15]{Graf} (dropping the smoothness assumption on the Cauchy hypersurface). For completeness, we include a streamlined proof.
\begin{Prop} \label{prop_maximizing_geo}
    Let $(\M,g)$ be a globally hyperbolic $C^1$-spacetime and let $\Sigma\subseteq\M$ be a Cauchy hypersurface. Then for any $q\in I^+(\Sigma)$ there exists at least one timelike geodesic from $\Sigma$ to q maximizing the distance to $\Sigma$. Further, such a geodesic can be obtained as a $C^1$-limit of a sequence of $\check{g}_{\eps_k}$-geodesics $\gamma_{\eps_k}$, maximizing the $\check{g}_{\eps_k}$-distance from $\Sigma$ to $q$.
\end{Prop}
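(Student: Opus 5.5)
We approximate $g$ from inside by the smooth metrics $\check g_\eps\prec g$ of Lemma~\ref{Le:approximating metrics}, pass to a limit of smooth maximisers, and identify the limit as a $g$-maximiser. Fix $q\in I^+(\Sigma)$ and choose a sequence $\eps_k\searrow0$ as in Lemma~\ref{Le:approximating metrics}(v), so that the cones $\check g_{\eps_k}$ widen monotonically towards those of $g$.

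\emph{Step 1: $\Sigma$ is Cauchy for $\check g_{\eps_k}$, and $q\in I^+_{\check g_{\eps_k}}(\Sigma)$ for large $k$.} Every inextendible $\check g_{\eps_k}$-causal curve is $g$-timelike, since $\check g_{\eps_k}\prec g$, and therefore meets $\Sigma$ exactly once. Hence $(\M,\check g_{\eps_k})$ is globally hyperbolic with Cauchy hypersurface $\Sigma$; note that $\Sigma$ is only a topological, acausal hypersurface. Because $g\in C^1$ is causally plain and $q\in I^+_g(\Sigma)$, there is a $g$-timelike curve from $\Sigma$ to $q$. We may take it piecewise smooth with $\dot\gamma$ strictly inside the $g$-cones on a compact set. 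Locally uniform convergence $\check g_\eps\to g$ then makes this curve $\check g_{\eps_k}$-timelike for large $k$.

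\emph{Step 2: existence of approximate maximisers.} Since $\Sigma$ is a closed Cauchy set in the smooth globally hyperbolic spacetime $(\M,\check g_{\eps_k})$, the set $J^-_{\check g_{\eps_k}}(q)\cap\Sigma$ is compact. Upper semicontinuity of $\tau_{\check g_{\eps_k}}$ therefore yields $p_k\in\Sigma$ with $\tau_{\check g_{\eps_k}}(p_k,q)=\tau^{\check g_{\eps_k}}_\Sigma(q)>0$. Smooth Avez--Seifert then gives a maximising $\check g_{\eps_k}$-geodesic $\gamma_k$ from $p_k$ to $q$. This geodesic maximises the distance to $\Sigma$. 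No smoothness of $\Sigma$ is needed here, because the maximiser is found via the point $p_k$.

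\emph{Step 3: compactness and the limit curve.} All $\gamma_k$ lie in $J^-_g(q)\cap J^+_g(\Sigma)$, which is compact by global hyperbolicity of $g$. We parametrise the $\gamma_k$ by $\check g_{\eps_k}$-proper time, normalised by a background $h$ so that the initial velocities $\dot\gamma_k(0)$ are $h$-bounded and bounded away from $0$. The velocities do not degenerate because $\tau^{\check g_{\eps_k}}_\Sigma(q)$ stays bounded below, since $\check g_{\eps_k}$-lengths increase in $k$. We pass to a subsequence with $p_k\to p\in\Sigma$ and $\dot\gamma_k(0)\to v$. Since $\check g_{\eps_k}\to g$ in $C^1$ on compacts, the Christoffel symbols converge locally uniformly. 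Hence the geodesic ODEs have uniformly bounded right-hand sides, and Arzel\`a--Ascoli applied to $(\gamma_k,\dot\gamma_k)$ yields a $C^1$-convergent subsequence. Passing to the limit in the integral form of the geodesic equation shows that the limit $\gamma$ is a $C^2$ solution of the $g$-geodesic equation from $p$ to $q$; this is the approach of \cite[Cor.~2.15]{Graf}. The endpoint $q$ is handled by the convergent affine parameter lengths.

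\emph{Step 4: maximality, which I expect to be the main obstacle.} We need $L_g(\gamma)=\tau_\Sigma(q)$ and $\gamma$ timelike. Under $C^1$ convergence of curves and $C^0$ convergence of metrics we get $L_g(\gamma)=\lim L_{\check g_{\eps_k}}(\gamma_k)=\lim\tau^{\check g_{\eps_k}}_\Sigma(q)$. It remains to show $\lim_k\tau^{\check g_{\eps_k}}_\Sigma(q)\ge\tau_\Sigma(q)$; the reverse inequality is automatic because $\check g_{\eps_k}\prec g$ gives $L_{\check g}\le L_g$ on $\check g$-causal curves. For this we take a $g$-maximiser from $\Sigma$ to $q$, which exists by \cite[Prop.~6.4]{Sae:16}. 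We approximate it by $g$-timelike curves with nearly equal length, using causal plainness. These curves are eventually $\check g_{\eps_k}$-timelike, and their $\check g_{\eps_k}$-lengths converge to their $g$-lengths. This gives $\tau_\Sigma(q)\le\liminf\tau^{\check g_{\eps_k}}_\Sigma(q)$. Hence $L_g(\gamma)=\tau_\Sigma(q)>0$, so $\gamma$ is a maximiser. Since $g(\dot\gamma,\dot\gamma)$ is constant along a geodesic and $L_g(\gamma)>0$, the curve $\gamma$ is timelike.
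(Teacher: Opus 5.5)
Your proof follows the paper's: nested-cone approximations $\check g_{\eps_k}$, $\check g_{\eps_k}$-maximising geodesics from $\Sigma$ to $q$, a $C^1$-limit via the argument of Graf's Prop.~2.13, and the convergence $\tau^{\check g_{\eps_k}}_\Sigma(q)\to\tau_\Sigma(q)$. The paper simply quotes this convergence from \cite[Lem.~5.5]{CGHKS25}, while you sketch it yourself.

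Two justifications in your sketch are wrong, though. First, $\check g_{\eps_k}\prec g$ does not imply $L_{\check g_{\eps_k}}\le L_g$. Second, $\check g_{\eps_k}\prec\check g_{\eps_{k+1}}$ does not make lengths increase in $k$. Nested cones do not order the quadratic forms; for instance, rescaling a metric leaves its cones unchanged.

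Both steps are easily repaired from your own construction:
\begin{itemize}
\item The positive lower bound on $\tau^{\check g_{\eps_k}}_\Sigma(q)$ in Step 3 comes from the fixed, uniformly timelike curve of Step 1.
\item The upper bound in Step 4 holds because the limit $\gamma$ is a $g$-causal curve from $\Sigma$ to $q$, so $\lim_k\tau^{\check g_{\eps_k}}_\Sigma(q)=L_g(\gamma)\le\tau_\Sigma(q)$.
\item For the matching lower bound in Step 4, the comparison curves must again have velocities uniformly inside the $g$-cones. The simplest choice is the $g$-maximiser itself, which is a $C^2$ timelike geodesic by \cite[Thm.~3.3]{benedikt}.
\end{itemize}

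Similarly, in Step 3 a bound on $\dot\gamma_k(0)$ alone does not bound the quadratic right-hand side of the geodesic equation along the whole curve. That uniform velocity bound is exactly what Graf's argument supplies.
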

\begin{proof}
    Let $q\in I^+(\Sigma)$. Since $J^-(q)\cap J^+(\Sigma)$ is compact, we may use approximations $\check g_{\eps_k}$ as in Lemma \ref{Le:approximating metrics}(iv), i.e.\ the width of the $\check{g}_\eps$-lightcones is increasing as $k$ increases.
    Consequently $q\in I^+(\Sigma)$ implies that $q\in I^+_{\check{g}_{k}}(\Sigma)$ for all large $k$.

    Using that $\check{g}_{\eps_k}$ is globally hyperbolic, there exists a $\check{g}_{\eps_k}$-maximizing $\check{g}_{\eps_k}$-geodesic $\gamma_{\eps_k}$ from $\Sigma$ to $q$, i.e.
    \begin{equation}\label{eq:conttau}
        L_{\check{g}_{\eps_k}}(\gamma_{\eps_k})=\tau^{\check{g}_{\eps_k}}_{\Sigma}(q) \to \tau_\Sigma(q),
    \end{equation}
    using \cite[Lem.\ 5.5]{CGHKS25}. As in the proof of \cite[Prop.~2.13]{Graf}, passing to a subsequence if necessary, we may assume that the $\gamma_{\eps_k}$ converge to a $g$-geodesic $\gamma :[0,1]\to M$ in $C^1([0,1])$. Then
    \begin{equation}\label{eq:Lconv}
        L_{\check{g}_{\eps_k}}(\gamma_{\eps_k}) \to L_g(\gamma).
    \end{equation}

    Combining \eqref{eq:conttau} and \eqref{eq:Lconv}, we find $L_g(\gamma)=\tau_\Sigma(q)$, which implies that $\gamma$ is a maximizer.
\end{proof}

\begin{Thm}[$W^{1,p}$-Hawking singularity theorem, I]\label{Th:Hawking-I}
Let $(\M,g)$ be an $n$-dimen\-sional globally hyperbolic $W^{1,p}_\text{loc}$-spacetime with $\Riem[g]\in L^p_\text{loc}$ for some $p>2n$. Suppose there is a smooth spacelike Cauchy hypersurface $\Sigma$ and there are $\beta, \kappa\in \R$ such that
\begin{enumerate}
    \item[(i)] $\mathrm{Ric}(X, X) \ge (n-1)\kappa$ a.e.\ for all smooth unit timelike vector fields $X$.
    \item[(ii)] 
    $\Sigma$ has mean curvature bounded above by $\beta$ in the sense of Definition \ref{Def:mean-curvature-bound}.
\end{enumerate}
Then $\tau_\Sigma \le b_{\kappa,\beta}$, with $b_{\kappa,\beta}$ as defined in Theorem \ref{Th:Hawking-smooth}. 
\end{Thm}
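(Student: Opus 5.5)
Our plan is to reduce Theorem \ref{Th:Hawking-I} to the smooth Theorem \ref{Th:Hawking-smooth}, applied to the approximating metrics $\check g_\eps$ with the smooth Cauchy surfaces $\Sigma^\eps_0=\T_\eps^{-1}(0)$, and then pass to the limit. The first step is to apply the RT-regularisation of Theorem \ref{Thm_smoothing_preliminaries} to move to the atlas $\A'$, in which $g\in W^{2,p}_\loc\subseteq C^1$. By \eqref{eq:tau=tau} and Lemma \ref{prop:I^+}, $\tau_\Sigma$, global hyperbolicity and the Cauchy property of $\Sigma$ are unchanged. Lemma \ref{lem:prb} transfers the Ricci bound. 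Remark \ref{rem:cb} transfers the mean curvature bound, giving $\esssup_U\mathcal{H}_{g,\T,U}<\beta$ with $\T\in W^{2,p}_\loc$. From now on we work in $\A'$, where $\Sigma$ is only a $C^{1,\alpha}$ Cauchy surface.

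Next we fix $q\in I^+(\Sigma)$ and aim to show $\tau_\Sigma(q)\le b_{\kappa,\beta}$; we may assume this bound is finite. By Proposition \ref{prop_maximizing_geo} and \cite[Lem.\ 5.5]{CGHKS25}, $\tau^{\check g_\eps}_\Sigma(q)\to\tau_\Sigma(q)$. We choose a compact $K\supseteq J^-(q)\cap J^+(\Sigma)$ together with a neighbourhood of it, and we shrink $U$ to a relatively compact neighbourhood of $\Sigma\cap K$.

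The mean curvature estimate comes from Theorem \ref{Prop:Mean-curvature-convergence}. Since convolution with $\rho_\eps\ge0$ preserves upper bounds, $\mathcal{H}_{g,\T,U}\star_\M\rho_\eps<\beta$ near $\Sigma\cap K$ for small $\eps$, up to cut-off errors from the partition of unity, which we expect to be $\mathcal{O}(\eps)$. Theorem \ref{Prop:Mean-curvature-convergence} then gives $\mathcal{H}_{\check g_\eps,\T_\eps,U}<\beta+\delta$ on $U\cap K$ for $\eps$ small. By Lemma \ref{Lemma_T-eps_timefct}(ii),(iii) the surface $\Sigma^\eps_0$ lies, near $K$, inside $U$, so its classical mean curvature is bounded by $\beta+\delta$. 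The Ricci estimate comes from Proposition \ref{Prop:Graf3.13-modified}, which applies since $g\in C^1$ and gives $\Ric[\check g_\eps]\ge(n-1)(\kappa-\delta)$ on unit timelike vectors over a compact set.

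The main obstacle is that Theorem \ref{Th:Hawking-smooth} needs global hypotheses, whereas we only have bounds near $K$. We would handle this by localising the proof in Appendix \ref{Sec_Hawking_proof}: the focusing argument only uses the $\check g_\eps$-maximising geodesic from $\Sigma^\eps_0$ to $q$ and the mean curvature at its footpoint, and both lie in $K$ once $\eps$ is small. This requires compactness of $J^-_{\check g_\eps}(q)\cap J^+_{\check g_\eps}(\Sigma^\eps_0)\subseteq K$, which follows from $\check g_\eps\prec g$. We also have to compare $\Sigma^\eps_0$ with $\Sigma$, and this is where the $\sqrt\eps$-shift in \eqref{eq:T-eps} enters. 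Since $\T\star_\M\rho_\eps-\T=\mathcal{O}(\eps)$ ($\T\in C^1$), we get $\T_\eps>\T$ on $K$, so $\Sigma^\eps_0$ lies to the past of $\Sigma$ there. Hence $\tau^{\check g_\eps}_\Sigma(q)\le\tau^{\check g_\eps}_{\Sigma^\eps_0}(q)\le b_{\kappa-\delta,\beta+\delta}$.

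Letting $\eps\to0$ gives $\tau_\Sigma(q)\le b_{\kappa-\delta,\beta+\delta}$. Then letting $\delta\to0$ yields $\tau_\Sigma(q)\le b_{\kappa,\beta}$, using Lemma \ref{Lem:b-kappa-beta-continuous} for the continuity in $\kappa$ and monotonicity or continuity of $b$ in $\beta$, which we would check directly from the formulas. The ess-sup strictness in the definition gives the margin needed for $\beta$, but any $\delta$ margin suffices in any case.
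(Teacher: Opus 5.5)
Your overall route is the paper's: RT-regularise, approximate by $\check g_\eps$ and $\T_\eps$, and use the $\sqrt{\eps}$-shift to put $\Sigma^\eps_0$ to the past of $\Sigma$ on the relevant compact set. You then bound the mean curvature of $\Sigma^\eps_0$ via Theorem \ref{Prop:Mean-curvature-convergence} and run the second-variation argument of Appendix \ref{Sec_Hawking_proof} along a $\check g_\eps$-maximiser $\tilde\gamma_\eps$ from $\Sigma^\eps_0$ to $q$. Arguing directly with a $\beta+\delta$ margin and joint continuity of $b_{\kappa,\beta}$, instead of by contradiction using the strict ess-sup, is harmless.

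The gap is in the Ricci step. Proposition \ref{Prop:Graf3.13-modified} gives $\Ric[\check g_\eps](X,X)\ge(n-1)(\kappa-\delta)$ only for $\check g_\eps$-unit vectors $X$ in a \emph{fixed compact subset of $T\M$}. It says nothing about all unit timelike vectors over a compact subset of $\M$, which form a union of non-compact hyperboloids. The error $(\Ric[\check g_\eps]-(n-1)\kappa\check g_\eps)-(\Ric[g]-(n-1)\kappa g)\star_\M\rho_\eps$ tends to zero only locally uniformly on $T\M$. So there is no control on $\check g_\eps$-unit vectors with large $|X|_h$, i.e.\ those close to the null cone. Your localisation paragraph only tracks where $\tilde\gamma_\eps$ and its footpoint lie in $\M$. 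What the focusing estimate actually needs is that the unit-speed velocities $\dot{\tilde\gamma}_\eps$ stay in one compact subset of $T\M$, uniformly in $\eps$, and nothing in your argument provides this.

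The paper closes exactly this point, in these steps:
\begin{enumerate}
\item The footpoints $\tilde p_k$ lie in the compact set $C=J^+(\Sigma_{-1})\cap J^-(q)$, because $|\T_{\eps_k}-\T|<1$. So along a subsequence $\tilde p_k\to p\in\Sigma$, and the initial velocities $N_{\eps_k}(\tilde p_k)$ converge.
\item The argument of \cite[Prop.\ 2.13]{Graf} then gives $C^1$-convergence of the $\tilde\gamma_k$ to a $g$-geodesic. This limit is timelike because the lengths are bounded below by a positive constant.
\item Hence $\bigcup_k\dot{\tilde\gamma}_k([0,1])$ is relatively compact in $T\M$, and this survives unit-speed reparametrisation. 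Only then does Proposition \ref{Prop:Graf3.13-modified} give the Ricci bound along $\tilde\gamma_k$.
\end{enumerate}
Inserting this step repairs your proof. Working along such a subsequence suffices, since $\tau^{\check g_\eps}_\Sigma(q)\to\tau_\Sigma(q)$.
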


\begin{proof}
Applying the RT-regularisation procedure of Theorem \ref{Thm_smoothing_preliminaries} to $g$, but keeping the same letters for the objects involved, we obtain a spacetime $(\M,g)$ with $W^{2,p}_\text{loc}$-metric $g$ satisfying (i) and (ii) by Lemma \ref{lem:prb} and Remark \ref{rem:cb}, respectively. 
More precisely $\Sigma$ is a $W^{2,p}_\text{loc}$-hypersurface, realized as the $0$-level set of $\T\in W^{2,p}_\text{loc}$ and there exists an open neighbourhood $U$ of $\Sigma$ such that 
\begin{equation}\label{eq:mean-curvature-improved}
\esssup \mathcal{H}_{g,\T,U}<\beta. 
\end{equation}
Since $\tau_\Sigma$ remains unchanged during the RT-regularisation by \eqref{eq:tau=tau}, our goal remains to show boundedness of $\tau_\Sigma$ in this setup.

Assume, to the contrary, that there exists some $q\in I^+(\Sigma)$ with $\tau_\Sigma(q) > b_{\kappa,\beta}+\eta$ for some $\eta>0$, see Figure \ref{fig:hawking-proof-schematic} for the setup. Continuity of $\kappa \mapsto b_{\kappa,\beta}$ (Lemma \ref{Lem:b-kappa-beta-continuous}) then allows us to pick $\delta >0$ so small that 
\begin{equation}\label{eq:indirect-assumption-estimate}
\tau_\Sigma(q) > b_{\kappa,\beta} + \eta > b_{\kappa-\delta,\beta}.
\end{equation}

Since $\Sigma_{-1}=\T^{-1}(-1)$ is a Cauchy surface, the set $C:=J^+(\Sigma_{-1}) \cap J^-(q)$ is compact and by Lemma \ref{Le:approximating metrics}(v) we may use an approximating sequence $\check g_{\eps_{k}}$ with $\check g_{\eps_{k}}\prec \check g_{\eps_{k+1}} \prec g$ on $C$.
To simplify notation, we set $g_k:=\check{g}_{\eps_k}$.  

Now Proposition \ref{prop_maximizing_geo} implies the existence of $g_k$-maximizing $g_k$-geodesics $\gamma_k$ from $\Sigma$ to $q$ that converge in $C^1$ to a $g$-maximizing geodesic $\gamma$ from $\Sigma$ to $q$ and such that 
\begin{equation}\label{eq:gamma_k-lengths}
L_{g_k}(\gamma_k) = \tau^{g_k}_\Sigma(q) \to L_g(\gamma) = \tau_\Sigma(q).
\end{equation}
Since $\T\in W^{2,p}_\text{loc}(M)\subseteq C^1(M)$ and $C$ is compact, we have $\| \T -\T\star_M \rho_\eps \|_{L^\infty(C)} = O(\eps)$. Together with Lemma \ref{Lemma_T-eps_timefct}(i) and the fact that $\Sigma= \T^{-1}(0)$ this implies that
\[
\T_{\eps_k}(p) = \T \star_{\M} \rho_{\eps_k}(p) + \sqrt{\eps_k} > 0 
\]
for $k$ large, uniformly for $p\in C\cap \Sigma$. Moreover, for $k$ large we also have $\T_{\eps_k}(p) < 0$ for all $p\in \Sigma_{-1}\cap C$. Altogether, there exists some $k_0$ such that (with $\Sigma^{\eps_k}_0=\T_{\eps_k}^{-1}(0)$)
\begin{equation}\label{eq:Cauchy-surface-positioning}
\Sigma_{-1}\cap C \subseteq J_{g_k}^-(\Sigma^{\eps_k}_0) \cap C 
\subseteq J_{g_k}^-(\Sigma)\cap C \subseteq J^-(\Sigma)\cap C,
\end{equation}
for each $k\ge k_0$. 

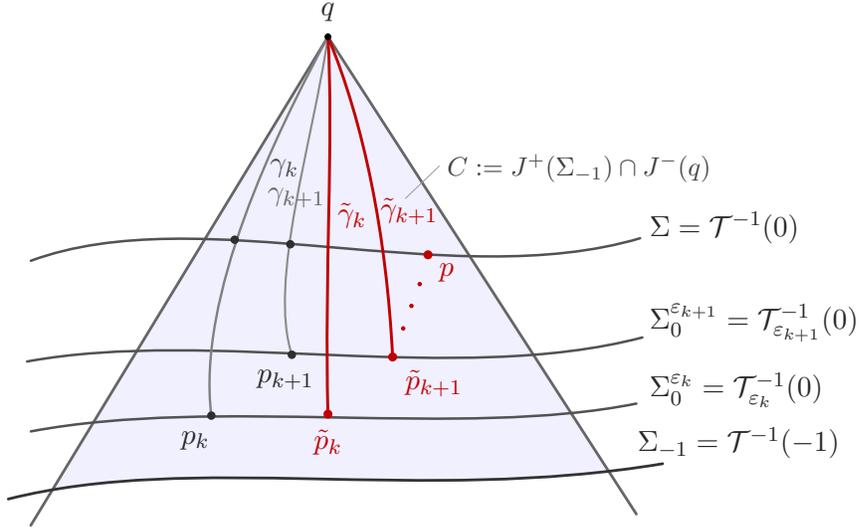
\begin{figure}[htbp]
    \centering
    \makebox[\textwidth][l]{\hspace*{1.10cm}%
    \begin{tikzpicture}[scale=1.0,line cap=round,line join=round]

        \coordinate (q)  at (0,6.7);
        \coordinate (L)  at (-3.95,0.20);
        \coordinate (R)  at ( 4.10,0.35);

        \path[fill=blue!6, draw=blue!20!black, line width=0.25pt]
            (q) -- (-3.6623,0.6734)
            .. controls (-1.4825,1.0466) and (1.0129,0.6051) .. (3.7324,0.9193)
            -- cycle;

        \draw[black!60, line width=1.05pt] (q) -- (L);
        \draw[black!60, line width=1.05pt] (q) -- (R);

        \draw[black!80, line width=1.05pt]
            (-4.25,0.55) .. controls (-1.75,1.18) and (1.20,0.48) .. (4.45,1.02);
        \draw[black!68, line width=0.95pt]
            (-3.95,1.45) .. controls (-1.45,1.92) and (1.75,1.35) .. (4.10,1.82);
        \draw[black!68, line width=0.95pt]
            (-4.00,2.38) .. controls (-1.55,2.83) and (1.70,2.05) .. (4.18,2.70);
        \draw[black!68, line width=0.95pt]
            (-3.95,3.72) .. controls (-1.70,4.55) and (1.65,3.28) .. (4.15,4.02);

        \draw[black!50, line width=0.9pt]
            (q) .. controls (-0.70,5.40) and (-1.75,3.25) .. (-1.55,1.66);
        \draw[black!45, line width=0.8pt]
            (q) .. controls (-0.20,5.30) and (-0.80,3.45) .. (-0.48,2.47);

        \draw[red!75!black, line width=1.1pt]
            (q) .. controls (0.08,5.30) and (-0.05,3.55) .. (0.00,1.68);
        \draw[red!75!black, line width=1.1pt]
            (q) .. controls (0.48,5.55) and (0.78,4.05) .. (0.86,2.44);

        \fill[black!82] (-1.24,4.00) circle (1.6pt);
        \fill[black!82] (-0.50,3.94) circle (1.6pt);
        \fill[red!75!black] (1.33,3.80) circle (1.7pt);

        \fill[black!82] (-0.48,2.47) circle (1.6pt);
        \fill[red!75!black] (0.86,2.44) circle (1.7pt);

        \fill[black!82] (-1.55,1.66) circle (1.6pt);
        \fill[red!75!black] (0.00,1.68) circle (1.7pt);

        \fill[red!75!black] (1.00,2.82) circle (0.85pt);
        \fill[red!75!black] (1.11,3.12) circle (0.85pt);
        \fill[red!75!black] (1.22,3.41) circle (0.85pt);

        \node[black!85, below left=2pt, xshift=0.18cm] at (-1.55,1.66) {$p_k$};
        \node[black!85, below left=2pt, xshift=0.50cm] at (-0.48,2.47) {$p_{k+1}$};

        \node[red!75!black, below=2pt] at (0.00,1.68) {$\tilde p_k$};
        \node[red!75!black, below right=1pt] at (0.86,2.44) {$\tilde p_{k+1}$};
        \node[red!75!black, right=3pt] at (1.24,3.56) {$p$};

        \node[black!72] at (-0.58,4.95) {$\gamma_k$};
        \node[black!60] at (-0.42,4.56) {$\gamma_{k+1}$};

        \node[red!75!black] at (0.32,4.32) {$\tilde\gamma_k$};
        \node[red!75!black] at (1.08,4.38) {$\tilde\gamma_{k+1}$};

        \fill[black] (q) circle (1.3pt);
        \node[black!85, above=2pt] at (q) {$q$};

        \node[black!75, anchor=west, font=\small] at (1.45,4.95)
            {$C := J^+(\Sigma_{-1}) \cap J^-(q)$};
        \draw[black!40, line width=0.25pt] (1.48,4.98) -- (1.03,4.55);

        \node[anchor=west, black!85] at (4.14,4.16)
            {$\Sigma = \mathcal{T}^{-1}(0)$};

        \node[anchor=west, black!85] at (4.14,2.90)
            {$\Sigma_0^{\varepsilon_{k+1}} = \mathcal{T}_{\varepsilon_{k+1}}^{-1}(0)$};

        \node[anchor=west, black!85] at (4.14,2.00)
            {$\Sigma_0^{\varepsilon_k} = \mathcal{T}_{\varepsilon_k}^{-1}(0)$};

        \node[anchor=west, black!85] at (3.98,1.30)
            {$\Sigma_{-1} = \mathcal{T}^{-1}(-1)$};

    \end{tikzpicture}}
    \caption{Configuration used in the proof of Hawking's theorem.}
    \label{fig:hawking-proof-schematic}
\end{figure}

We now extend, for $k$ large, the $\gamma_k$ as $g_k$-geodesics into the past of $\Sigma$. By \eqref{eq:Cauchy-surface-positioning} and since $\Sigma^{\eps_k}_0$ is a Cauchy surface for $g_k$, $\gamma_k$ will intersect it in some $p_k\in \Sigma^{\eps_k}_0$. 
In addition, each $\gamma_k$ is contained in $C$ until it reaches $p_k$,
and the $g_k$-length of $\gamma_k$ up to $p_k$ is at least that of $\gamma_k$ until $\Sigma$, hence by \eqref{eq:gamma_k-lengths} and our indirect assumption \eqref{eq:indirect-assumption-estimate} exceeds $b_{\kappa-\delta,\beta}$ for $k$ large, i.e., $L_{g_k}(\gamma_k)>b_{\kappa-\delta,\beta}$.

Since $\Sigma^{\eps_k}_0$ is a Cauchy surface for the smooth, globally hyperbolic metric $g_k$, there exist $g_k$-maximizing $g_k$-geodesics $\tilde\gamma_k:[0,1]\to M$ from $\Sigma^{\eps_k}_0$ to $q$, 
emanating $g_k$-orthogonally from $\Sigma^{\eps_k}_0$ at some point $\tilde p_k\in \Sigma^{\eps_k}_0$. In particular,
\begin{equation}\label{eq:gamma_k-tilde-length}
\tau_{\Sigma_0^{\eps_k}}^{g_k}(q) = L_{g_k}(\tilde \gamma_k)\ge L_{g_k}(\gamma_k) > b_{\kappa-\delta,\beta}.
\end{equation}
Since $\T_{\eps_k}(\tilde p_k)=0$ and $|\T - \T_{\eps_k}|<1$ on $M$ by Lemma \ref{Lemma_T-eps_timefct} (ii), $\T(\tilde p_k)>-1$. Thus since $\T$ is a temporal function for $g$ and $\tilde \gamma_k$ is $g$-causal, it follows that $\tilde \gamma_k$ is entirely contained in $C$. In particular we may, without loss of generality, suppose that $\tilde p_k\to p$ for some $p\in C$ (indeed, $p\in \Sigma$).

It follows from this and the proof of \cite[Prop.\ 2.13]{Graf} that a (non-relabeled) subsequence of $\tilde \gamma_k$ converges in $C^1$. In particular, 
\begin{equation}\label{eq:rel-comp-set}
\bigcup_{k\in \NN}\dot{\tilde \gamma}_k([0,1])
\end{equation}
is relatively compact in $TM$. Next, we reparametrize the $\tilde \gamma_k$ to $g_k$-unit speed. Since
$\tilde \gamma_k$ converges in $C^1$, relative compactness of the set in \eqref{eq:rel-comp-set} is not affected by this operation. From this, assumption (i) together with Proposition \ref{Prop:Graf3.13-modified} imply that we may assume $k_0$ so large that, for all $k\ge k_0$ we have the Ricci-bound for $g_k$
\begin{equation}\label{eq:Ric-gk-bound}
\Ric[g_k](\dot{\tilde \gamma}_k(t),\dot{\tilde \gamma}_k(t)) > (n-1)(\kappa -\delta)  \qquad \forall t\in [0,L_{g_k}(\tilde  \gamma_k)].
\end{equation}

Finally we derive a matching mean curvature bound: 
From Lemma \ref{lem:smeared-mean-curv-smooth}, Theorem \ref{Prop:Mean-curvature-convergence} and \eqref{eq:mean-curvature-improved} we obtain that the $g_k$-mean curvature of $\Sigma^{\eps_k}_0$ is strictly less than $\beta$ on $\Sigma^{\eps_k}_0\cap C$, again for $k$ large. 

This, together with \eqref{eq:Ric-gk-bound} and \eqref{eq:gamma_k-tilde-length}
provides a contradiction to (the estimates on the length of maximizing curves derived in the proof of) Theorem \ref{Th:Hawking-smooth}, applied to the smooth spacetime $(\M,g_k)$ with smooth Cauchy surface $\Sigma_0^{\eps_k}$.
\end{proof}

\subsection{Hawking Theorem - version 2}

We now turn to the non-globally hyperbolic case of Hawking's singularity theorem. This will require some preparation.
\begin{Prop}\label{prop:compact-level-sets-regularisation} Let $\M$ be a smooth 
manifold and suppose that $f:\M \to \R$ is $C^0$  
with the property that the $0$-level set $Z:=f^{-1}(0)$ is compact. 
Let $U$ be any neighbourhood of $Z$. Then there exist data $(U_\alpha,\psi_\alpha)$, $(\xi_\alpha)_\alpha$, $(\chi_\alpha)_{\alpha}$ such that, setting $f_\eps := f\star_\M \rho_\eps$ as in \eqref{eq:M-convolution} we have: For each $\eps\in (0,1]$, $f_\eps^{-1}(0) \subseteq U$.  
In particular, if $U$ is relatively compact, then each $f_\eps^{-1}(0)$ is compact.
\end{Prop}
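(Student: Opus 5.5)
The strategy is to choose the convolution data adapted to $Z$, so that far from $Z$ the function $f$ is bounded away from $0$ with a fixed sign on each chart patch. Convolution then cannot produce new zeros there.

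\emph{Step 1.} We start from the compact set $Z$ and a relatively compact open neighbourhood $V$ with $Z\subseteq V\subseteq\overline V\subseteq U$ (shrinking $U$ if necessary). The complement $\M\setminus V$ is closed and $f\neq0$ on it. Hence every point $p\in\M\setminus V$ has a chart neighbourhood $U_p$ on which $f$ has a constant sign and $|f|>c_p>0$. We may take $U_p$ with $U_p\cap Z=\emptyset$; by continuity of $f$ we may even take $U_p$ so small that $f$ does not vanish on $\overline{U_p}$. Points of $\overline V$ are covered by finitely many charts contained in $U$, which is possible since $\overline V$ is compact. From these two families we extract a countable locally finite cover $(U_\alpha,\psi_\alpha)$ with the following dichotomy: either $U_\alpha\subseteq U$, or $f$ has a fixed strict sign on $U_\alpha$. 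We then pick the partition of unity $\xi_\alpha$ and the cut-offs $\chi_\alpha\in\D(U_\alpha)$ subordinate to this cover, as in \eqref{eq:M-convolution}.

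\emph{Step 2.} Let $p\notin U$. Every $\alpha$ with $\chi_\alpha(p)\neq0$ satisfies $p\in U_\alpha$, so $U_\alpha\not\subseteq U$ and $f$ has a strict sign on $U_\alpha$. Since $\xi_\alpha f\ge0$ (resp.\ $\le0$) with $\rho\ge0$, we get $\big((\psi_\alpha)_*(\xi_\alpha f)\big)*\rho_\eps\ge0$ (resp.\ $\le0$).

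\emph{Step 3 (main obstacle).} The signs coming from different charts must agree, and at least one term must be strictly nonzero. Sign agreement is the delicate point: $f$ may be positive on one component of $\M\setminus Z$ and negative on another, and charts from both components could overlap at $p$. To rule this out, we refine Step 1 so that each $U_\alpha\not\subseteq U$ lies inside a single component $W$ of $\M\setminus\overline V$ intersected with $\{f\neq0\}$. More simply, we choose $U_\alpha\subseteq\{f>0\}$ or $U_\alpha\subseteq\{f<0\}$. Then any two charts containing $p$ both contain $p$, where $f(p)$ has a definite sign, and each of them lies entirely in the set where $f$ has that same sign. So all contributing terms share the sign of $f(p)$.

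For strictness, note that $\big((\psi_\alpha)_*(\xi_\alpha f)\big)*\rho_\eps(\psi_\alpha(p))$ is an average of $\xi_\alpha f$ over the points $q$ with $|\psi_\alpha(q)-\psi_\alpha(p)|<\eps$, and $f$ has a strict sign there. This average is nonzero as soon as $\xi_\alpha>0$ somewhere in the support. The difficulty is that for large $\eps$ this is not guaranteed, and it may fail near $\partial\,\mathrm{supp}\,\xi_\alpha$. I would handle this by using $\chi_\alpha\equiv1$ on a neighbourhood of $\mathrm{supp}\,\xi_\alpha$, together with the fact that $\sum_\alpha\xi_\alpha\equiv1$ near $p$. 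Pick $\alpha_0$ with $\xi_{\alpha_0}(p)>0$; then $\chi_{\alpha_0}(p)=1$. The ball of radius $\eps$ around $\psi_{\alpha_0}(p)$ contains a neighbourhood of $\psi_{\alpha_0}(p)$ on which $\xi_{\alpha_0}>0$. Because $\rho$ is radially symmetric and positive near $0$ (a harmless assumption), the $\alpha_0$-term is strictly signed. One point still needs care: $\psi_\alpha$ must be defined on the whole $\eps$-ball for every $\eps\le1$. I would ensure this by taking $\psi_\alpha(U_\alpha)$ large relative to $\mathrm{supp}\,\xi_\alpha$ and extending $\xi_\alpha f$ by zero. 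The nonnegative sum of same-signed terms with one strict term is then nonzero, so $f_\eps(p)\neq0$. Hence $f_\eps^{-1}(0)\subseteq U$ for every $\eps\in(0,1]$. Finally, $f_\eps^{-1}(0)$ is closed, so if $U$ is relatively compact it is compact.
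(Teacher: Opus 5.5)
Your argument is correct and is essentially the paper's proof: charts contained in $U$ have $\chi_\alpha(p)=0$ for $p\notin U$, and every other chart is chosen where $f$ has a fixed strict sign. The paper does this with charts in $\M\setminus U_{\eta/2}$ plus connectedness of $\mathrm{supp}\,\xi_\alpha$, whereas your condition $U_\alpha\subseteq\{f>0\}$ or $U_\alpha\subseteq\{f<0\}$ is a cleaner way to get sign agreement. As in the paper, $\rho\ge0$ (with $\chi_\alpha\ge0$) then makes all contributions same-signed, with a strict one from some $\alpha_0$ with $\xi_{\alpha_0}(p)>0$.

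Your strictness step is more careful than the paper's, but $\rho$ is fixed in the setup, so assuming $\rho>0$ near $0$ adds a hypothesis. You can avoid it by dilating the charts so that each $\{\xi_\alpha>0\}$ is a coordinate ball of radius $>1$. Your worry about $\psi_\alpha$ being defined on the whole $\eps$-ball is unnecessary, because $(\psi_\alpha)_*(\xi_\alpha f)$ is extended by zero to $\R^n$.
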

\begin{proof} 
Without loss of generality, we may suppose that $U$ is of the form $U=U_\eta:=\{p\in \M \mid d_h(p,Z) \le \eta \}$ for some $\eta>0$. Choosing first a finite cover of the compact set $U_{\eta/2}$ by relatively compact chart neighbourhoods and then covering $\M\setminus U_{\eta/2}$ by countably many compact chart neighbourhoods and choosing a partition of unity subordinate to this cover, we can arrange the data in \eqref{eq:M-convolution} to have the following properties:
\begin{itemize}
\item[(i)] There is a finite index set $A_0\subseteq A$ with $U_\alpha\subseteq U$ for all $\alpha\in A_0$ and $\bigcup_{\alpha\in A_0} U_\alpha$ covering $U_{\eta/2}$. 
\item[(ii)] For every $\alpha\in A\setminus A_0$, we have $U_\alpha \subseteq \M\setminus U_{\eta/2}(Z)$.
\end{itemize}
Moreover, we can assume each set $\mathrm{supp}\,\xi_\alpha$ to be connected. 

Fix $\eps\in(0,1]$ and let $p\in \M\setminus U$. Then for every $\alpha\in A_0$ we have $\chi_\alpha(p)=0$, so 
\begin{equation}\label{eq:f-eps}
    f_\eps(p)=\sum_{\alpha\in A\setminus A_0} \chi_\alpha(p)\, (\psi_\alpha)^*\!\left(\big((\psi_\alpha)_*(\xi_\alpha f)\big)*\rho_\eps\right)\!(p).
\end{equation}
For any $\alpha\in A\setminus A_0$, the function $(\psi_\alpha)_*(\xi_\alpha f)$ has constant sign on its support because $\mathrm{supp}\,\xi_\alpha$ lies in a connected component of $\M\setminus Z = \{f\ne 0\}$, and this sign equals that of $f$ on any connected neighbourhood of $p$. Now because $\rho_\varepsilon\ge 0$, $\int\rho_\varepsilon=1$, and $\mathrm{supp}\,\rho_\varepsilon\subseteq \overline{B_\varepsilon(0)}$, we can write
\[
\big((\psi_\alpha)_*(\xi_\alpha f)\big)*\rho_\eps(x)
=\int_{B_\varepsilon(0)} (\psi_\alpha)_*(\xi_\alpha f)(x-y)\,\rho_\eps(y)\,dy.
\]
Thus $(\psi_\alpha)^*\!\left(\big((\psi_\alpha)_*(\xi_\alpha f)\big)*\rho_\eps\right)$ has the same sign as $f$ near $p$ (or is
identically zero, hence does not contribute to $f_\eps(p)$). It follows that every non-vanishing summand in  \eqref{eq:f-eps} has the same sign
and since at least one $\xi_\alpha f$ has to be non-zero near $p$, $f_\eps(p)$ is non-zero. Altogether, we conclude that
$f_\eps^{-1}(0) \subseteq U$.
\end{proof}

\begin{Remark}[Mean curvature bounds for achronal hypersurfaces]\label{rem:achronal-acausal-mean-curvature}\ \\
(i) Let $\Sigma$ be a smooth, connected and closed spacelike hypersurface in a $W^{1,p}_\text{loc}$-spacetime. Then the proof of \cite[Prop.\ 14.48]{ON83} carries over
to this low-regularity setup, showing that there exists a $W^{1,p}_\text{loc}$-Lorentzian covering $\mathbf{k}: \widetilde \M \to \M$ and a closed spacelike hypersurface $\widetilde \Sigma$ in $\widetilde \M$ that is achronal and isometric under $\mathbf{k}$ to $\Sigma$. Therefore, and also since one can always restrict attention to one connected component of $\Sigma$, in the assumption of the non-globally hyperbolic version of the Hawking theorem below, one may without loss of generality assume the compact spacelike hypersurface to be achronal. 

(ii) Supposing that $g\in W^{1,p}_\text{loc}$ and $\Riem[g]\in L^p$, we now show that achronality of  $\Sigma$ already implies acausality. As this is a question of causality, we may equivalently consider the problem after applying RT-regularisation. In this atlas, both $\Sigma$ and $g$ are of regularity $W^{2,p}_\text{loc} \subseteq C^{1,\alpha}_\text{loc}$. Suppose that there exists a future directed causal curve $\gamma: [0,1] \to \M$ with $p:=\gamma(0), q:=\gamma(1)\in \Sigma$. Since $g$ is causally plain, by \cite[Thm.\ 2.12]{GrantKuSaSt}, it has the push-up property 
(see also Section \ref{Sec_causality}.)
Therefore, $\gamma$ has to be a null curve (i.e., $\dot{\gamma}(t)$ is null a.e.), since otherwise there would exist a timelike curve with the same endpoints, contradicting achronality of $\Sigma$. 

Moreover, $\gamma$ has to be maximizing from $p$ to $q$. Otherwise, there would exist some causal curve $\hat{\gamma}$ from $p$ to $q$ of positive length, which again by push-up would lead to a contradiction with achronality of $\Sigma$. By \cite[Thm.\ 1.1]{LLS:21}, $\gamma$ can be parametrized as a $C^{1,1}_\text{loc}$-curve. Then \cite[Lem.\ 5.11]{CGHKS25} shows that $\gamma$ must emanate from $\Sigma$ orthogonally, which implies that $\dot{\gamma}(0)$ is timelike, a contradiction. More precisely, we note that the conclusion of \cite[Lem.\ 5.11]{CGHKS25} remains valid even though our $\Sigma$ here is
merely $C^{1,\alpha}_\text{loc}$: the only change required in the proof of \cite[Lem.\ 5.11]{CGHKS25} is to note that $\alpha(s) -sy = O(s^{1+\alpha})$ (instead of $O(s^{2}$)), which still allows one to draw the same conclusion since $\alpha>0$.

(iii) Based on the above observations, we are now going to establish a natural way of imposing mean curvature bounds on an achronal spacelike hypersurface $\Sigma$ in a $W^{1,p}_\text{loc}$-spacetime with $\Riem[g]\in L^p_\text{loc}$.
Namely, since $\Sigma$ is acausal by (ii), \cite[Thm.\ 2.43]{Minguzzi-cone-structures} implies that the Cauchy development $D(\Sigma)$ is open in $\M$ and globally hyperbolic. Thus, replacing $\M$ by $D(\Sigma)$ we are
precisely in the setting of Definition \ref{Def:mean-curvature-bound}: $\Sigma$ is realized as the $0$-level set of a smooth temporal function $\T: D(\Sigma)\to \R$ and we say that the mean curvature of $\Sigma$ is bounded from above/below if the condition from Definition \ref{Def:mean-curvature-bound} is satisfied for such a $\T$.
\end{Remark}

In the theorem below we shall require the following notation, due to \cite{Graf}: Given a $C^1$-spacetime $(\M,g)$, for a compact $K\Subset T\M$, $N>0$ we denote the union of the images of all $g$-geodesics $\dot{\gamma}$ (restricted to $[0,N]$ if they exist longer) with $\dot{\gamma}(0)\in K$ by $F_{K,N}$, i.e.,
\begin{equation}\label{eq:F}
 F_{K,N}:=\bigcup_{\{ \dot{\gamma}:\; \mathrm{g-geodesic \; with}\; \dot{\gamma}(0)\in K \}}\mathrm{im}(\dot{\gamma}|_{[0,N]})\subseteq T\M.
\end{equation}
We define $F_{\eps,K,N}$ the same way, only using $\check g_\eps$-geodesics instead of $g$-geodesics.

\begin{Thm}[$W^{1,p}$-Hawking singularity theorem, II]\label{Th:Hawking-II} Let $(\M,g)$ be an $n$-dimen\-sional spacetime with $g\in W^{1,p}_\text{loc}$ and $\Riem[g]\in L^p_\text{loc}$ for some $p>2n$. Suppose that there exist $\beta, \kappa\in \R$ such that
\begin{itemize}
\item[(i)] $\mathrm{Ric}(X, X) \ge (n-1)\kappa$ for all smooth unit timelike vector fields $X$ on $\M$.
\item[(ii)]  There exists a compact achronal smooth spacelike hypersurface $\Sigma$ in $\M$ with mean curvature bounded above by $\beta$ in the sense of Remark \ref{rem:achronal-acausal-mean-curvature} (iii). 
\item[(iii)] If $\kappa=0$, additionally suppose that $\beta<0$. If $\kappa< 0$, additionally suppose that
$\frac{\beta}{(n-1)\sqrt{-\kappa}}<-1$.
\end{itemize}
Then $\M$ is timelike RT-geodesically incomplete.
\end{Thm}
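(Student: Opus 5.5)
We argue by contradiction: assuming $\M$ is timelike RT-geodesically complete, we will derive from Theorem \ref{Th:Hawking-I}, applied on the Cauchy development of $\Sigma$, a contradiction with compactness of $\Sigma$. First apply the RT-regularisation of Theorem \ref{Thm_smoothing_preliminaries}, so that $g\in W^{2,p}_\loc\subseteq C^{1,\alpha}_\loc$. By Lemma \ref{lem:prb} the Ricci bound (i) survives, and by Remark \ref{rem:cb} so does the smeared mean curvature bound. By Lemma \ref{Lem:reg-RT-geos} and Definition \ref{def:RT-completeness}, timelike RT-completeness amounts to completeness of all timelike $C^2$-geodesics of this $C^1$-metric $g$.

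By Remark \ref{rem:achronal-acausal-mean-curvature}(ii), $\Sigma$ is acausal. Hence $D(\Sigma)$ is open and globally hyperbolic with Cauchy surface $\Sigma$. Condition (iii) ensures $b:=b_{\kappa,\beta}<\infty$. Applying Theorem \ref{Th:Hawking-I} to $(D(\Sigma),g)$ gives $\tau_\Sigma\le b$ on $D^+(\Sigma)$; its hypotheses hold there by Remark \ref{rem:achronal-acausal-mean-curvature}(iii).

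The key step is to show that $H^+(\Sigma)\neq\emptyset$, and this is the main obstacle. Suppose $H^+(\Sigma)=\emptyset$. Then every future inextendible timelike curve starting on $\Sigma$ stays in $D^+(\Sigma)$. Take the future directed $g$-normal geodesic from some point of $\Sigma$. It is a $C^2$ solution of the geodesic equation for the $C^1$-metric, hence an RT-geodesic, so by assumed completeness it has infinite length inside $D(\Sigma)$. This contradicts $\tau_\Sigma\le b$. Hence $H^+(\Sigma)\neq\emptyset$.

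Next, pick $q\in H^+(\Sigma)$ and points $q_j\in I^-(q)\cap D^+(\Sigma)$ with $q_j\to q$. By Proposition \ref{prop_maximizing_geo}, applied in $D(\Sigma)$, there are maximising timelike geodesics $\gamma_j$ from $\Sigma$ to $q_j$ of length $\le b$. Compactness of $\Sigma$ and the uniform length bound should let us extract a $C^1$-limit of the initial velocities, normalised to unit speed, using the $C^1$ geodesic limit arguments of \cite{Graf} (the sets $F_{K,N}$ of \eqref{eq:F} enter here). By RT-completeness, the limit initial data generate a geodesic $\gamma$ defined on $[0,b+1]$. I expect this step to require the most care, since without unique solvability only limits of subsequences are available.

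We then need to show that $\gamma$ reaches $q$ and stays in $D(\Sigma)\cup H^+(\Sigma)$. Being a limit of maximisers, it would give $\tau_\Sigma(q)$-type length up to $q$. Standard arguments then show that generators of $H^+(\Sigma)$ are past-inextendible null or that $H^+(\Sigma)$ is noncompact. Alternatively, I would use the classical argument of \cite[Thm.~14.55]{ON83}: the limiting timelike geodesic passes through $H^+(\Sigma)$ while remaining maximising, which contradicts the fact that points of $I^+(H^+(\Sigma))$ are not in $D(\Sigma)$.

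The cleanest route is the one in \cite{Graf}. Use the compact set $K$ of unit normals to $\Sigma$ and $N=b+1$, and approximate with $\check g_\eps$. For small $\eps$, $F_{\eps,K,N}$ stays in a fixed compact set. The smooth Hawking estimate of Theorem \ref{Th:Hawking-smooth}, combined with Theorem \ref{Prop:Mean-curvature-convergence} and Proposition \ref{Prop:Graf3.13-modified} as in the proof of Theorem \ref{Th:Hawking-I}, then shows that $\check g_\eps$-normal geodesics from the compact smoothed surfaces $\T_\eps^{-1}(0)$ stop maximising before length $b_{\kappa-\delta,\beta}$. Proposition \ref{prop:compact-level-sets-regularisation} ensures these surfaces are compact. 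Consequently $D^+_{\check g_\eps}$ of the smoothed surface has compact closure, and the smooth argument yields a contradiction for $\check g_\eps$.
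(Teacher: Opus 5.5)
Your final paragraph is, in outline, the paper's proof. The paper fixes a compact set $K$ of initial data. It then uses the indirect completeness assumption, via \cite[Prop.\ 2.11]{Graf}, to keep all $\check g_\eps$-geodesics from $K$ up to a length beyond $b_0=b_{\kappa-\delta,\beta}$ inside one fixed compact subset of $T\M$. From this it gets the Ricci bound from Proposition \ref{Prop:Graf3.13-modified} and the mean curvature bound for the compact surfaces $\Sigma^\eps_0=\T_\eps^{-1}(0)$ from Theorem \ref{Prop:Mean-curvature-convergence}. It concludes that $H^+_{\check g_\eps}(\Sigma^\eps_0)$ is compact and non-empty, and finishes with the smooth argument.

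As written, however, your choice of $K$ is wrong. The geodesics you must control are the $\check g_\eps$-normal geodesics of $\Sigma^\eps_0$, whose initial velocities are $N_\eps(p)$ with $p\in\Sigma^\eps_0$. These do not lie in the set of $g$-unit normals to $\Sigma$. So neither the Ricci bound along them nor the inclusion of $D^+_{\check g_\eps}(\Sigma^\eps_0)$ in the projection of $F_{\eps,K,b_0}$ follows.

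The paper instead takes $K$ to be the closure of $\bigcup_{\eps}\{N_\eps(p): p\in\Sigma^\eps_0\}\cup\{N(p):p\in\Sigma\}$. This set is compact because Proposition \ref{prop:compact-level-sets-regularisation} places every $\Sigma^\eps_0$ in a fixed $U$ with $\bar U\Subset D(\Sigma)$, and $N_\eps\to N$ uniformly on $\bar U$. With this $K$, completeness yields relative compactness of $F_{\le\eps_0,K,b_0}$, and geodesics existing past $b_0$ give $H^+_{\check g_\eps}(\Sigma^\eps_0)\neq\emptyset$ via \cite[Prop.\ 2.9]{Graf}.

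You also omit the choice of mollification data making $\star_\M$ and $\star_{D(\Sigma)}$ agree on $U$. Without it, Theorem \ref{Prop:Mean-curvature-convergence} cannot be applied to the globally defined $\check g_\eps$ together with $\T_\eps$, which only lives on $D(\Sigma)$.

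The route in your third to fifth paragraphs has a genuine gap and should be dropped. Applying Theorem \ref{Th:Hawking-I} on $D(\Sigma)$ is legitimate, and so is deducing $H^+(\Sigma)\neq\emptyset$ from completeness. The concluding ``contradiction'' is not one, though. A timelike geodesic from $\Sigma$ can cross $H^+(\Sigma)$ into $I^+(H^+(\Sigma))\setminus D(\Sigma)$, and can even maximise up to and beyond the crossing point, because $\tau_\Sigma\le b$ is only known on $D^+(\Sigma)$.

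To make that route work you would need two further things, neither of which you supply. First, compactness of $\overline{D^+(\Sigma)}$, and hence of $H^+(\Sigma)$. This requires orthogonality of maximisers to the merely $C^{1,\alpha}$ surface $\Sigma$, and compactness of the non-unique $C^1$-geodesic solution set starting from the unit normals. Second, the compact-Cauchy-horizon argument of \cite[Thm.\ 14.55]{ON83} carried out directly for a $C^1$ metric, including the structure of the horizon generators. The paper avoids exactly this by running the horizon argument only for the smooth metrics $\check g_\eps$. Consequently Theorem \ref{Th:Hawking-I} is not used there at all; the focusing estimate of Theorem \ref{Th:Hawking-smooth} is applied directly to $(\M,\check g_\eps)$ and $\Sigma^\eps_0$.
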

For the notion of RT-completeness see Definition \ref{def:RT-completeness}. Also observe that the Theorem does not hold for arbitrary $\kappa$ and $\beta$ but only for such values that guarantee a finite $b_{\kappa,\beta}$.
\begin{proof} Our assumption on the mean curvature of $\Sigma$ means that for some 
representation of $\Sigma$ as the zero-level set of a Cauchy temporal function $\T$ on $D(\Sigma)$ and some open neighbourhood $U$ of $\Sigma$ in $D(\Sigma)$,
$\esssup \mathcal{H}_{g,\T,U}<\beta$. As in the proof of Theorem \ref{Th:Hawking-I}, we apply the RT-regularisation procedure to this setup, arriving at 
a spacetime $(\M,g)$ with $g\in W^{2,p}_\text{loc}$, $\Sigma\in W^{2,p}_\text{loc}$, and $\T\in W^{2,p}_\text{loc}$, while $D(\Sigma)$ remains unchanged. We may assume that $\bar{U} \Subset D(\Sigma)$. 

Employing Proposition \ref{prop:compact-level-sets-regularisation} to $\Sigma$ as a submanifold of $D(\Sigma)$, and applying \cite[Lem.\ 2.4]{KSSV:14} to $\T\star_{D(\Sigma)} \rho_\eps:D(\Sigma) \to \R$ as in the proof of Lemma \ref{Lemma_T-eps_timefct},
we obtain regularisations $\T_\eps$
of $\T$ with properties (ii), (iii) and (iv) of Lemma \ref{Lemma_T-eps_timefct} on $D(\Sigma)$. Additionally, $\T_\eps^{-1}(0)\subseteq U$ for each $\eps$. In particular, each level set $\Sigma^\eps_0 = \T_\eps^{-1}(0)$ is compact. 

To construct global regularisations of $g$ on $\mathcal{M}$ we follow the same procedure as in Theorem \ref{Th:Hawking-I}, only choosing the data $(U_\alpha,\psi_\alpha)$, $(\xi_\alpha)_\alpha$, $(\chi_\alpha)_{\alpha}$ for the definition of $\star_\M$
in such a way that $\star_\M$ and $\star_{D(\Sigma)}$ coincide on $U$. 
This can be achieved as in the proof of Proposition \ref{prop:compact-level-sets-regularisation}, by simply adding further open sets $U_\alpha$ to the covering of $D(\Sigma)$ so as to cover all of $\M$ while assuring that all the additional $U_\alpha$ are disjoint from $\bar{U}$.

Then there exists $\eps_0>0$ such that $\Sigma^\eps_0$ is a $\check g_\eps$-spacelike hypersurface in $D(\Sigma)$ for each $\eps\in (0,\eps_0]$.
By Theorem \ref{Prop:Mean-curvature-convergence}, we can additionally assume $\eps_0$ so small that 
$\esssup\mathcal{H}_{\check g_\eps,\T_\eps,U} < \beta$ on $\bar U$ for such $\eps$. Defining the corresponding mean curvature vector as
\[
\vec{H}_\eps := \frac{1}{n-1} \sum_{k=1}^{n-1} \mathrm{II}_\eps(e_i,e_i)
\]
(cf.\ \eqref{eq:mean-curvature-vector-def}), this means that 
the $\check{g}_\eps$-mean curvature of $\Sigma_0^\eps$ satisfies
\begin{equation*}\label{eq:mean-curv-in-hawking2}
  H_{\Sigma_0^\eps} =  -(n-1)\max\check{g}_\eps(\vec{H}_\eps,N_\eps)<\beta
\end{equation*}
for $\eps\in (0,\eps_0)$, where $N_\eps$ was defined in \eqref{eq:N_eps-def}.

Let us now indirectly assume that the chosen RT-regularisation of $(\M,g)$ is timelike geodesically complete. 
With $N_\eps$ as above and $N$ on $D(\Sigma)$ defined by \eqref{eq:N-def},  set 
\begin{align*}
C:= \bigcup_{0<\eps\le 1} \{N_\eps(p) \mid p\in \Sigma^\eps_0\}    
\cup \{N(p) \mid p\in \Sigma\}.
\end{align*}
Since $N_\eps \to N$ uniformly on $\bar{U}$, $C$ is relatively compact in $T\M$, and we set $K:=\bar{C}$. 

By our assumptions on $\beta$ and $\kappa$, Proposition \ref{Lem:b-kappa-beta-continuous} allows us to choose $\delta>0$ so small
that $b_0:=b_{\kappa-\delta,\beta}<\infty$: This is immediate in case $\kappa<0$. On the other hand, if $\kappa=0$ and $\beta<0$ one simply chooses
$\delta>0$ such that $\frac{\beta}{(n-1)\sqrt{\delta}}<-1$.
By \cite[Prop.\ 2.11]{Graf}, $F_{\le\eps_{0},K,b_0}\subseteq T\M$ (cf.\ \eqref{eq:F}) is relatively compact in $T\M$, and we set 
$\tilde{K}:= \overline{F_{\le\eps_{0},K,b_0}}\subseteq T\M $.

According to Proposition \ref{Prop:Graf3.13-modified} there exists $\eps_{1}\in (0,\eps_0)$ (depending only on $\tilde{K}$ and $\delta$ 
such that
\[ 
\forall\eps<\eps_{1}\ \forall X\in\tilde{K} \text{ with } \check{g}_{\eps}(X, X) = -1:\  \text{Ric}_{\eps}(X, X) \ge (n-1)(\kappa-\delta) \text{.} 
\]
Fix $\eps<\eps_1$. Then $\dot{\gamma}\in F_{\le\eps_{0},K,b_0} \subseteq\tilde{K}$ for all $\check{g}_{\eps}$-unit speed $\check{g}_{\eps}$-geodesics 
$\gamma: [0,b]\rightarrow \M$ starting $\check{g}_{\eps}$-orthogonally to $\Sigma_\eps$, so for any such geodesic $\text{Ric}_{\eps}(\dot{\gamma},\dot{\gamma})\ge (n-1)(\kappa-\delta)$ and 
$-(n-1)\check{g}_{\eps}(\vec{H}_{\eps},\dot{\gamma}(0))=-(n-1)\check{g}_{\eps}(\vec{H}_{\eps},N_{\eps})<\beta$. 

It therefore follows from (the proof of) Theorem \ref{Th:Hawking-smooth} that
any future directed $\check{g}_{\eps}$-unit speed $\check{g}_{\eps}$-geodesic starting $\check{g}_{\eps}$-orthogonally to $\Sigma^\eps_0$ will stop maximizing the $\check{g}_{\eps}$-distance to $\Sigma^\eps_0$ after $b_{0}=b_{\kappa-\delta,\beta}<\infty$.

This shows that $D_{\eps}^{+}(\Sigma^\eps_0)$ is contained in the compact set $F_{\eps,K,b_{0}}$, which implies that the future Cauchy horizon $H_{\eps}^{+}(\Sigma^\eps_0)$ is compact and non-empty:  Indeed, by \cite[Prop.\ 2.9]{Graf}
we have $\emptyset\ne F_{\eps,K,b_0+1}\backslash F_{\eps,K,b_{0}}\subseteq I_{\eps}^{+}(\Sigma^\eps_0) \backslash D_{\eps}^{+}(\Sigma^\eps_0))$. We then arrive at a contradiction by the standard argument (cf., e.g, [35, Theorem 14.55B]). 
\end{proof}

\section{A low-regularity Myers' theorem}\label{sec:myers}

To close the paper we provide a $W^{1,p}$-version of Myers' theorem, the Riemannian analogue of Hawking's theorem, see e.g.\ \cite{AS:15}. Here the application of the RT-regularisation is straight forward since there is no mean curvature assumption involved. We also discuss an alternative synthetic version of the theorem.

For a continuous Riemannian metric $g$, we denote by $L_g(\gamma)$ the $g$-length of an absolutely continuous curve $\gamma$, and by $d_g$ the associated intrinsic distance. Since in our setting $g$ lies in $W^{1,p}_\text{loc}$ with $p>n$, hence in $C^{0,\alpha}_\text{loc}$ with $\alpha=1-n/p>0$,  the metric space $(\mathcal{M},d_g)$ is well defined. We shall say that $g$ is \emph{complete} if the metric space $(\mathcal{M},d_g)$ is complete.  In this Riemannian setting we use RT-regularization to reduce to the $C^{1,\alpha}$-case and then appeal to the low-regularity Myers' theorem from \cite[Appendix~A]{Graf}.

\begin{Thm}[$W^{1,p}$-Myers theorem]\label{Thm:Myers-low-regularity}
Let $(\mathcal{M},g)$ be a Riemannian manifold with $g\in W^{1,p}_\text{loc}$ and
$\Riem[g]\in L^p_\text{loc}$ for some $p>\mathrm{max}(4,n)$, as in Theorem~\ref{Thm_smoothing_preliminaries}.
Assume that $g$ is (metrically) complete and that there exists a constant
$\lambda>0$ such that
\begin{equation}\label{eq:myers1}
    \Ric(X,X)\ge (n-1)\lambda\, g(X,X)
\end{equation}
pointwise almost everywhere on $\mathcal{M}$ for every smooth local vector field $X$.
Then
\[
\diam(\mathcal{M},d_g)\le \frac{\pi}{\sqrt{\lambda}}.
\]
\end{Thm}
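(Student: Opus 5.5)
The plan is to reduce, via RT-regularisation, to the $C^{1,\alpha}$ (indeed $W^{2,p}_\text{loc}$) setting and then invoke the low-regularity Myers theorem of \cite[Appendix~A]{Graf}. Theorem \ref{Thm_smoothing_preliminaries} is formulated for Lorentzian metrics, but the RT-equations in \cite{ReintjesTemple_essreg} regularise arbitrary $L^p$ affine connections with $L^p$ curvature, and so apply to the Levi-Civita connection of a Riemannian $g$. Hence there is a smooth atlas $\A'$, $W^{2,p}_\text{loc}$-compatible with $\A$, in which $g\in W^{2,p}_{\A',\loc}\subseteq C^{1,\alpha}_\loc$. As in \eqref{eq:phi3.1}, we view $\phi=\mathrm{id}:(\M,\A,g)\to(\M,\A',g)$ as a $C^{1,\alpha}_\loc$-isometry.

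We then check that every hypothesis and the conclusion transfer through $\phi$.

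\textbf{Distance and completeness.} $\phi$ is a $C^1$-diffeomorphism and an isometry. It therefore maps absolutely continuous curves to absolutely continuous curves and preserves $L_g$, exactly as in the Lorentzian discussion preceding \eqref{eq:tau=tau}. Hence $d_g$ is unchanged, so metric completeness and $\diam(\M,d_g)$ are the same in both atlases.

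\textbf{Ricci bound.} The Riemannian analogue of Lemma \ref{lem:prb} holds with the same proof. Since $\phi$ is $C^1$, it maps null sets to null sets, and $T_p\phi$ is a linear isometry. Condition \eqref{eq:myers1}, read pointwise a.e.\ on tangent vectors, is therefore preserved. We also need the curvature to transform tensorially under $\phi$. This is where $\Riem[g]\in L^p$ in both atlases and the $W^{2,p}$-compatibility enter, as in \cite{ReintjesTemple_essreg}.

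In the new atlas $g\in C^1$, and $\Ric[g]\in L^p_\loc$ satisfies \eqref{eq:myers1} a.e. Hence it also satisfies the distributional bound $\langle \Ric(X,X)-(n-1)\lambda g(X,X),\varphi\rangle\ge0$ for all smooth $X$ and all nonnegative test densities $\varphi$, by the same equivalence noted after \eqref{eq:drb}. This is exactly the hypothesis of the $C^1$-Myers theorem in \cite[Appendix~A]{Graf}, which yields $\diam\le\pi/\sqrt\lambda$ for $(\M,\A',g)$, and therefore for $(\M,\A,g)$.

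The main obstacle I expect is the first step. I need to make sure the RT-regularisation is legitimately available for Riemannian signature, since Theorem \ref{Thm_smoothing_preliminaries} is stated for Lorentzian metrics. My first attempt is to argue at the level of connections, as in the footnote to Theorem \ref{Thm_smoothing_preliminaries}: the RT result is signature-independent, and "metric one derivative more regular than its connection" holds verbatim. A secondary point is to confirm that Graf's appendix requires only $g\in C^1$ together with the distributional Ricci bound and metric completeness. If it instead needs a mollification step, I would rerun it using Proposition \ref{Prop:Graf3.13-modified} in its Riemannian form (convolutions $g_\eps$ with $\Ric[g_\eps]\ge(n-1)(\lambda-\delta)$ on compacts) combined with the classical Myers argument along minimising geodesics.
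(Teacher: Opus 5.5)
Your proposal is correct and follows the paper's proof essentially step for step. The paper likewise RT-regularises to $g\in W^{2,p}_{\mathrm{loc}}\subseteq C^{1,\alpha}_{\mathrm{loc}}$, notes that $d_g$ (hence completeness and diameter) and the a.e.\ Ricci bound are unchanged, converts that bound to its distributional form, and applies \cite[Theorem~A.1]{Graf}; your remark that the RT result is used at the level of the signature-independent connection (cf.\ the footnote to Theorem~\ref{Thm_smoothing_preliminaries}) spells out a point the paper leaves implicit.
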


\begin{proof}
By Theorem~\ref{Thm_smoothing_preliminaries}, after passing to an RT-regularised atlas, $g \in W^{2,p}_\text{loc}\subseteq C^{1,\alpha}_\text{loc}$.
This does not alter the induced distance $d_g$, in particular, completeness of $(\mathcal{M},d_g)$ is unchanged.

By the discussion in Section~\ref{sec:ricbounds}, the almost-everywhere Ricci lower bound
\eqref{eq:myers1} is equivalent to the corresponding distributional Ricci lower bound.
Therefore the assumptions of \cite[Theorem~A.1]{Graf} are satisfied for the complete
$C^{1,\alpha}_\text{loc}$-metric $g$. Applying that theorem, we conclude that
\[
\diam(\mathcal{M},d_g)\le \frac{\pi}{\sqrt{\lambda}}.
\]
\end{proof}

\begin{Remark}[Alternative proof via synthetic geometry]
We point out that Theorem \ref{Thm:Myers-low-regularity} can alternatively be proved using the synthetic approach to lower Ricci curvature bounds. In fact, this approach reveals that the assumption $\Riem[g]\in L^p_\text{loc}$ can be dropped in the case of a Riemannian metric: the $W^{1,p}_\text{loc}$-regularity of the metric and the lower Ricci bound alone are sufficient.

To see this, recall that for $p>n$, $g\in W^{1,p}_\text{loc} \subseteq C^0$. Thus the triple $(\mathcal{M}, d_g, \vol_g)$ forms a complete and separable metric measure space. As discussed in Section \ref{sec:ricbounds}, the pointwise almost everywhere lower bound \eqref{eq:myers1} implies the corresponding distributional lower bound $\Ric \ge (n-1)\lambda g$.

By a recent result of Mondino and Ryborz \cite[Prop.\ 7.5]{MR:25}, for $g\in  W^{1,p}_\text{loc}$ with $p>n$, this distributional lower bound implies the required exponential volume growth condition (cf.\ \cite[Eq.\ (4.1)]{MR:25}). Consequently, by \cite[Cor.\ 7.7]{MR:25} 
this distributional bound is equivalent to the space $(\mathcal{M}, d_g, \vol_g)$ satisfying the reduced curvature-dimension condition $\mathrm{CD}^*((n-1)\lambda, n)$. 

Since metric measure spaces induced by continuous Riemannian metrics are infinitesimally Hilbertian \cite[Cor.\ 4.27]{MR:25}, this condition is equivalent to the Riemannian curvature-dimension condition $\mathrm{RCD}^*((n-1)\lambda, n)$ \cite[Rem.\ 7.8]{MR:25}. Furthermore, as noted in \cite[Remark 4.9]{MR:25}, for $\sigma$-finite measures the $\mathrm{RCD}^*((n-1)\lambda, n)$ condition coincides with $\mathrm{RCD}((n-1)\lambda, n)$, which in turn implies the standard Sturm--Lott--Villani condition $\mathrm{CD}((n-1)\lambda, n)$.

One can then directly apply the generalized Bonnet-Myers theorem for metric measure spaces due to Sturm \cite[Cor.\ 2.6]{Sturm2006II}. For $K=(n-1)\lambda>0$ and $N=n$, it immediately yields that the diameter of $\mathcal{M}$ is bounded by 
\[
\diam(\mathcal{M}, d_g) \le \pi\sqrt{\frac{N-1}{K}} =  \frac{\pi}{\sqrt{\lambda}}.
\]
\end{Remark}

\appendix

\section{The RT-equations} \label{Sec_RT_appendix}

Following the summary laid out in \cite{Reintjes_SCC, ReintjesTemple_essreg}, we now review the theory of the {\it Regularity Transformation (RT-) equations}, developed by B. Temple and M. Reintjes in a series of papers \cite{ReintjesTemple_ell1, ReintjesTemple_ell2, ReintjesTemple_ell3, ReintjesTemple_ell4, ReintjesTemple_ell5, ReintjesTemple_ell6, ReintjesTemple_essreg}. Solutions of the RT-equations furnish local coordinate transformations which regularise connections by one derivative, to one derivative of regularity above their Riemann curvature in the starting coordinate system \cite{ReintjesTemple_ell2, ReintjesTemple_ell4}.  In \cite{ReintjesTemple_essreg}, based on the RT-equations, a necessary and sufficient condition is given for when a singularity in an affine connection is removable by a coordinate transformation, together with a computable procedure for removing the singularity by regularising the connection, potentially by multiple derivatives, all the way up to its {\it essential} (highest possible) {\it regularity}, a geometric notion of regularity introduced in \cite{ReintjesTemple_essreg}. This is accomplished both locally and globally.    

To state the theorems in \cite{ReintjesTemple_ell2,ReintjesTemple_ell4} on local regularisations of connections which we require here, consider a fixed chart $(x,\Omega)$ on an $n$-dimensional manifold $\mathcal{M}$, where $\Omega_x \equiv x(\Omega) \subseteq \R^n$ (the image of $\Omega$ under the coordinate map)  is open and bounded with smooth boundary. Let $\Gamma_x$ denote the collection of components $\Gamma^k_{ij}(x)$ of an affine connection $\Gamma$ in $x$-coordinates.  Now, view $\Gamma_x$ as a matrix valued $1$-form in $x$-coordinates, $(\Gamma_x)^\mu_{\nu} \equiv (\Gamma_x)^\mu_{\nu j} dx^j$, using the Einstein summation convention of summing over repeated indices from $1$ to $n$. Let $d\Gamma_x$ denote its exterior derivative, $d(\Gamma_x)^\mu_\nu \equiv \partial_i (\Gamma_x)^\mu_{\nu j} dx^i \wedge dx^j$, where $\mu, \nu = 1,...,n$ denote indices of the matrix. Writing the Riemann curvature tensor as a matrix valued $2$-form, ${\rm Riem}(\Gamma_x) = d\Gamma_x +\Gamma_x \wedge \Gamma_x$, it follows that the assumption $\Gamma_x \in L^{p}_\text{loc}(\Omega_x)$ and ${\rm Riem}(\Gamma_x) \in L^{p/2}_\text{loc}(\Omega_x)$ is equivalent to the assumption $\Gamma_x \in L^{p}_\text{loc}(\Omega_x)$ and $d\Gamma_x \in L^{p/2}_\text{loc}(\Omega_x)$; the latter assumption is used here and in \cite{ReintjesTemple_ell2,ReintjesTemple_ell4}. The main idea for establishing optimal regularity in \cite{ReintjesTemple_ell4} is to derive from the connection transformation law a non-invariant system of {\it elliptic} PDE's for the regularising Jacobian $J$, an idea motivated by the Riemann-flat condition in \cite{ReintjesTemple_geo}. This idea led to the formulation of the {\it RT-equations} in \cite{ReintjesTemple_ell1}.   We now review the main steps in the derivation of the RT-equations, and explain how they furnish optimal regularity.

\subsection{Derivation of the RT-equations}
To review the derivation in \cite{ReintjesTemple_ell1}, assume there exists a coordinate transformation with Jacobian $J$ mapping $\Gamma_x$ to $\Gamma_y$ (the connection of optimal regularity), and write the connection transformation law as
\beq \label{opt_eqn1}
\Gammati = \Gamma - J^{-1} dJ,
\eeq
where $\Gammati^k_{ij} = (J^{-1})^k_\gamma J^\alpha_i J^\beta_j (\Gamma_y)^\gamma_{\alpha\beta}$ is the connection $\Gamma_y$ transformed as a tensor to $x$-coordinates and for ease in notation $\Gamma \equiv \Gamma_x$. Differentiating \eqref{opt_eqn1} by the exterior derivative $d$ and by the co-derivative $\delta$, implies the following two equations
\begin{eqnarray} 
\Delta \Gammati &=& \delta d\Gamma - \delta\big(dJ^{-1} \wedge dJ\big) + d\delta \Gammati ,     \label{opt_eqn2} \\
\Delta J &=& \delta(J\Gamma) - \langle dJ ; \Gammati \rangle - J \delta\Gammati ,  \label{opt_eqn3}
\end{eqnarray}
where $\Delta \equiv \delta d + d \delta = \partial_{x^0}^2 + ... + \partial_{x^n}^2$ is the Euclidean Laplacian, $\langle \cdot\; ; \cdot \rangle$ is a matrix-valued inner product and $\wedge$ the wedge product on matrix valued differential forms, (see \cite[Ch.3]{ReintjesTemple_ell1} or \cite[Ch.5]{ReintjesTemple_ell4} for detailed definitions). At this stage, equations \eqref{opt_eqn2} - \eqref{opt_eqn3} neither appear solvable, nor need a solution $J$ be a true Jacobian that is integrable to coordinates, i.e., satisfying ${\rm Curl}(J) =0$. To complete the equations, view $A\equiv \delta\Gammati$ as a free matrix valued function---this choice was motivated by the Riemann-flat condition in \cite{ReintjesTemple_geo}, which only involves $d\Gammati$, but not $\delta\Gammati$.     Now, substituting $A\equiv \delta \Gammati$ in \eqref{opt_eqn2} - \eqref{opt_eqn3}, and viewing $A$ as a new unknown matrix valued function, one next imposes on equation \eqref{opt_eqn3} the condition ${\rm Curl}(J) =0$ for integrability.  For this, the vectorization $\vec{J}^\mu = J^\mu_\nu dx^\nu$ of $J$ is introduced in \cite{ReintjesTemple_ell1}, so that ${\rm Curl}(J)\equiv d\vec{J}$, and the integrability condition is imposed in the equivalently form $d\vec{J} =0$.   Finally, by a fortuitous cancellation the regularities in different terms of the equation become consistent, and the computations in \cite{ReintjesTemple_ell1} eventually lead to the {\it RT-equations}:
\begin{align} 
\Delta \Gammati &= \delta d\Gamma - \delta \big(d J^{-1} \wedge dJ \big) + d(J^{-1} A ), \label{PDE1} \\
\Delta J &= \delta ( J \Gamma ) - \langle d J ; \tilde{\Gamma}\rangle - A , \label{PDE2} \\
d \vec{A} &= \overrightarrow{\text{div}} \big(dJ \wedge \Gamma\big) + \overrightarrow{\text{div}} \big( J\, d\Gamma\big) - d\big(\overrightarrow{\langle d J ; \tilde{\Gamma}\rangle }\big),   \label{PDE3}\\
\delta \vec{A} &= v.  \label{PDE4}
\end{align}

Equation \eqref{PDE3} on the auxiliary field $A$ results from imposing $d\vec{J}=0$ on \eqref{PDE2}, and one can prove that integrability of $J$ follows from the coupled equations \eqref{PDE2} and \eqref{PDE3}. The unknowns $(\Gammati,J,A)$ in the RT-equations, together with the given non-optimal connection components $\Gamma$, are viewed as matrix valued differential forms. Arrows denote ``vectorization'', mapping matrix valued $0$-forms to vector valued $1$-forms, (e.g. $\vec{A}^\mu = A^\mu_i dx^i$) and $\overrightarrow{\text{div}}$ is a divergence operation which maps matrix valued $k$-forms to vector valued $k$-forms. The vector $v$ in \eqref{PDE4} is free to impose, representing a ``gauge''-type freedom in the equations, reflecting the fact that smooth transformations preserve optimal connection regularity. The operations on the right hand side are formulated in terms of the Cartan Algebra of matrix valued differential forms based on the Euclidean metric in $x$-coordinates, and these objects depend on the starting coordinate system and are not invariant under change of coordinates, (see \cite{ReintjesTemple_ell1} for detailed definitions and proofs).  As shown in \cite{ReintjesTemple_essreg}, a regularity below essential regularity is not a geometric property of a connection, so it makes sense that non-invariant equations are required to regularise them; the RT-equations \eqref{PDE1} - \eqref{PDE4} are this non-invariant system of PDE's, and they are elliptic regardless of the metric signature.

\subsection{How the RT-equations yield optimal regularity}

 We now review how solutions of the RT-equations locally furnish coordinate transformations to optimal regularity. Theorem 2.5 in \cite{ReintjesTemple_ell2} establishes optimal regularity $\Gamma_y \in W^{m+1,p}_\text{loc}$ for curvature $\Riem(\Gamma_x)$ in $W^{m,p}_\text{loc}$ at the higher level $m\geq 1$, $p>n$. The case of lower regularities $m=0$, regularising a non-optimal connection $\Gamma_x \in L^{p}_\text{loc}$ with $\Riem(\Gamma_x)$ in $L^{p/2}_\text{loc}$, $\max\{4,n\} < p < \infty$, is accomplished in Theorem 3.1 in \cite{ReintjesTemple_ell4}. This lower regularity case is significantly more challenging because the earlier iteration scheme for solving the RT-equations at higher regularity in \cite{ReintjesTemple_ell2} does not close when $\Gamma$ is only in $L^p$, due to the non-linear term $d J^{-1} \wedge dJ$ in \eqref{PDE1}. This problem was eventually resolved  in \cite{ReintjesTemple_ell4} by the discovery of an internal ``gauge''-type transformation for solutions of the RT-equations, which allows one to separate off the troublesome equation \eqref{PDE1} from the remaining equations. 

The resulting separated-off system is linear in the unknowns $(J,B)$. This system is referred to as the {\it reduced} RT-equations in \cite{ReintjesTemple_ell4}. It takes the form: 
\begin{eqnarray} 
\Delta J &=& \delta ( J \mm \Gamma ) - B , \label{RT_withB_2} \\
d \vec{B} &=& \overrightarrow{\text{div}} \big(dJ \wedge \Gamma\big) + \overrightarrow{\text{div}} \big( J\, d\Gamma\big) ,   \label{RT_withB_3} \\
\delta \vec{B} &=& v'.  \label{RT_withB_4}
\end{eqnarray}
The iteration scheme in \cite{ReintjesTemple_ell4}, which is based on solving the linear Poisson equation at each stage, applies to the reduced RT-equations \eqref{RT_withB_2} - \eqref{RT_withB_4} at the low regularity of $L^p$-connections with $d\Gamma \in L^{p/2}_\text{loc}$, and locally establishes existence of solutions $(J,B)$, (i.e, in neighbourhoods $\Omega'$ of points), such that $J$ is point-wise an invertible matrix, cf. \cite[Thm 6.4]{ReintjesTemple_ell4}. 

That any solution $J$ is a Jacobian integrable to coordinates is a built-in property of \eqref{RT_withB_2} - \eqref{RT_withB_4}, provided that the integrability condition $d\vec{J}\equiv {\rm Curl}(J)=0$ holds on the boundary $\partial\Omega'$, c.f. \cite[Thm 6.4]{ReintjesTemple_ell4}. That is, combining \eqref{RT_withB_2} with \eqref{RT_withB_3}, a computation shows that $\omega\equiv d\vec{J}$ is a solution of Laplace's equation $\Delta \omega=0$, which together with the boundary data $d\vec{J}|_{\partial\Omega'}=0$ implies that $\omega=0$ throughout $\Omega'$. This implies that $J$ is integrable to coordinates. 

Given now a solution $(J,B)$ of the reduced RT-equation \eqref{RT_withB_2} - \eqref{RT_withB_4} with $J$ an integrable and invertable Jacobian, one recovers a solution $(J,\Gammati,A)$ of the full RT-equations \eqref{PDE1} - \eqref{PDE4} by introducing\footnote{The second and third equation in \eqref{Gammati'} define the ``gauge'' transformations of the RT-equations, while the first equation defines a projection onto the space of solution of the Riemann-flat condition.}
\beq  \label{Gammati'} 
\Gammati \equiv \Gamma - J^{-1} dJ, 
\hspace{.5cm} 
A \equiv B - \langle d J ; \Gammati \rangle ,  
\hspace{.5cm} \text{and} \hspace{.5cm}
v \equiv v' - \delta \overrightarrow{\langle d J ; \Gammati \rangle},
\eeq
as can be verified by direct computation using \eqref{RT_withB_2} to eliminate uncontrolled terms involving $\delta\Gamma$. Interior elliptic estimates, applied to the first RT-equations \eqref{PDE1}, then imply that $\Gammati$ is in $W^{1,p}$ on any open set $\Omega'_c$ compactly contained in $\Omega'$, a gain of one derivative over $\Gamma\equiv \Gamma_x$. Finally, defining 
\beq \label{Gamma_y_reverse}
(\Gamma_y)^\gamma_{\alpha\beta} \equiv J_k^\gamma (J^{-1})^i_\alpha  (J^{-1})^j_\beta   \; \Gammati^k_{ij},
\eeq
substitution into the first equation in \eqref{Gammati'} yields the connection transformation law \eqref{opt_eqn1}, which implies that $\Gamma_y$ is the transformed connection of optimal regularity, $\Gamma_y \in W^{1,p/2}_\text{loc}(\Omega'_c)$.

These are the key ideas underlying the proof of the optimal regularity result Theorem 3.1. in \cite{ReintjesTemple_ell4} for $L^p$ connections with Riemann curvature in $L^{p/2}$. The precise statement of Theorem 3.1. in \cite{ReintjesTemple_ell4} is recorded as Theorem \ref{Thm_Smoothing_low} below. We state a simplified version of Theorem 3.1 in \cite{ReintjesTemple_ell4} without uniform bounds on the regularised connections; these bounds form the basis of the extension of Uhlenbeck compactness to Lorentzian and non-Riemannian geometry in \cite{ReintjesTemple_ell4, ReintjesTemple_ell5, ReintjesTemple_ell6}, but are not required for our purposes here.

\begin{Thm} \label{Thm_Smoothing_low}  
Assume $\Gamma_x \in L^{p}_\text{loc}(\Omega_x)$ and $\Riem(\Gamma_x) \in L^{p/2}_\text{loc}(\Omega_x)$ in $x$-coor\-di\-nates, for some $p>{\rm max}\{4 ,n\}$, $p<\infty$, $n\geq2$. Then for any point $P\in \Omega$ there exists a neighbourhood $\Omega' \subseteq \Omega$ of $P$ (depending on $\Omega_x, n, p$ and $\Gamma$) and a coordinate transformation $x \to y$ with Jacobian $J=\frac{\partial y}{\partial x}\, \in W^{1,p}(\Omega'_x)$, such that $\Gamma_y \in W^{1,p/2}(\Omega'_y)$. 
\end{Thm}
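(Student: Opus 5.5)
The plan follows the route sketched in the appendix: solve the \emph{reduced} RT-equations \eqref{RT_withB_2}--\eqref{RT_withB_4}, which are linear in $(J,B)$, rather than the full system \eqref{PDE1}--\eqref{PDE4}. Then show that $J$ is an invertible, integrable Jacobian, and only afterwards recover $\Gammati$ and read off its regularity from \eqref{PDE1} by interior elliptic estimates.

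\emph{Step 1: reduce the hypothesis and set up smallness.} Since $\Riem(\Gamma_x)=d\Gamma_x+\Gamma_x\wedge\Gamma_x$ and $\Gamma_x\wedge\Gamma_x\in L^{p/2}_\text{loc}$ by H\"older's inequality, the hypothesis is equivalent to $\Gamma\in L^p_\text{loc}$ and $d\Gamma\in L^{p/2}_\text{loc}$. Fix $P$ and a small ball $\Omega'_x=B_r(x(P))$. By absolute continuity of the integral, $\|\Gamma\|_{L^p(B_r)}$ and $\|d\Gamma\|_{L^{p/2}(B_r)}$ become as small as desired as $r\to 0$. I would rescale $B_r$ to a unit ball and check that the relevant (scale-weighted) norms stay small; this smallness is what will drive a contraction.

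\emph{Step 2: solve the reduced system by iteration.} Start from $J_0=I$. Given $J_k$, solve the first-order system \eqref{RT_withB_3}--\eqref{RT_withB_4} for $\vec B_{k+1}$, with $v'$ chosen freely (e.g.\ $v'=0$). This is a $d$/$\delta$ system, solvable via the Poisson equation, and its right-hand side is $\overrightarrow{\mathrm{div}}(dJ_k\wedge\Gamma)+\overrightarrow{\mathrm{div}}(J_k\,d\Gamma)$, a divergence of an $L^{p/2}$-term. Next solve the Poisson equation \eqref{RT_withB_2} for $J_{k+1}$, imposing Dirichlet data chosen so that $d\vec J_{k+1}=0$ on $\partial\Omega'$; I would take $J=I$ plus a gradient on the boundary. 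The key estimates to aim for are:
\begin{itemize}
\item Calder\'on--Zygmund bounds, giving $J_{k+1}-I\in W^{1,p}$;
\item $B_{k+1}\in L^{p/2}$, using Morrey's inequality to bound $J_k$ in $C^0$;
\item a Lipschitz dependence on the previous iterate whose constant is proportional to the small norms of Step 1.
\end{itemize}
These should make the iteration a contraction in $W^{1,p}\times L^{p/2}$. The limit $J$ is then $C^0$-close to $I$, hence pointwise invertible with $J^{-1}\in W^{1,p}$.

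\emph{Step 3: integrability and the regularity gain.} Combining \eqref{RT_withB_2} with \eqref{RT_withB_3} should show that $\omega:=d\vec J$ solves $\Delta\omega=0$. With $\omega=0$ on $\partial\Omega'$ this forces $\omega\equiv0$, so $J=\partial y/\partial x$ for coordinates $y$ on $\Omega'$. Next define $\Gammati$, $A$ and $v$ by \eqref{Gammati'}, and verify directly, using \eqref{RT_withB_2} to eliminate the terms in $\delta\Gamma$, that \eqref{PDE1}--\eqref{PDE4} hold. In \eqref{PDE1} the right-hand side consists of:
\begin{itemize}
\item $\delta d\Gamma\in W^{-1,p/2}$;
\item $\delta(dJ^{-1}\wedge dJ)\in W^{-1,p/2}$, since $dJ^{-1}\wedge dJ\in L^{p/2}$;
\item $d(J^{-1}A)\in W^{-1,p/2}$, since $J^{-1}A\in L^{p/2}$.
\end{itemize}
Interior elliptic estimates then give $\Gammati\in W^{1,p/2}$ on a smaller set $\Omega'_c$. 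Finally, $\Gamma_y$ is given by \eqref{Gamma_y_reverse}, i.e.\ $\Gammati$ multiplied by factors $J,J^{-1}\in W^{1,p}\cap C^0$. The products stay in $W^{1,p/2}$: Sobolev embedding for $p>n$ gives $\Gammati\in L^{q}$ with $q\ge p$, so $dJ\cdot\Gammati\in L^{p/2}$. Composing with the $W^{2,p}$ change of variables $x\to y$ preserves $W^{1,p/2}$, so $\Gamma_y\in W^{1,p/2}(\Omega'_y)$.

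The main obstacle is closing the estimates in Step 2 at this low regularity. The full system does not close because of the quadratic term $dJ^{-1}\wedge dJ$ in \eqref{PDE1}, which is why the plan separates it off via the reduced system and only revisits \eqref{PDE1} at the end as a linear equation for $\Gammati$. Even in the reduced system, the product $dJ\wedge\Gamma$ ($L^p\times L^p\to L^{p/2}$) and the term $J\,d\Gamma$ need exactly the restrictions $p>\max\{4,n\}$: $p>n$ gives $J\in C^0$ and invertibility, and $p>4$ keeps $p/2>2$ for the $W^{-1,p/2}$ elliptic theory. I also expect the boundary-data bookkeeping that guarantees $d\vec J|_{\partial\Omega'}=0$ to be delicate.
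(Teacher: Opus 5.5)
The step that fails as written is integrability. The rest of your architecture matches the route sketched in Appendix~\ref{Sec_RT_appendix}: a small-data iteration for the reduced system \eqref{RT_withB_2}--\eqref{RT_withB_4}, integrability from harmonicity of $d\vec J$, recovery of $\Gammati$ via \eqref{Gammati'}, interior estimates for \eqref{PDE1}, then \eqref{Gamma_y_reverse}. Your scaling, contraction and final regularity bookkeeping are fine.

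\textbf{Why the boundary argument fails.} You want $\omega:=d\vec J$ to vanish on $\partial\Omega'$ and plan to enforce this through the Dirichlet data of $J$. Dirichlet data fix only the tangential derivatives of $J$ along $\partial\Omega'$, hence only the purely tangential components $\omega(\tau_1,\tau_2)$. The mixed components $\omega(\nu,\tau)$ contain the normal derivative $\partial_\nu J$, which is an output of the Poisson solve, not something you can prescribe. Harmonicity plus a vanishing tangential trace does not force $\omega=0$.

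Concretely, take $n=2$, $\Gamma=0$, $v'=0$, $\vec B=0$. Let $J$ equal $I$ except $J^1_2=\tfrac14\bigl(1+(x^1)^2-(x^2)^2\bigr)$.
\begin{itemize}
\item $J$ solves \eqref{RT_withB_2}--\eqref{RT_withB_4} on the unit disc, and $\det J=1$.
\item Its boundary values are $I$ plus the trace of a gradient: on the circle $J^1_2=\tfrac12\cos^2\theta$, and $\oint\tfrac12\cos^2\theta\,d(\sin\theta)=0$, so the tangential trace of $\vec J^1-dx^1$ is exact.
\item Yet $d\vec J^1=\tfrac12 x^1\,dx^1\wedge dx^2\neq0$.
\end{itemize}
Without integrability $J$ is not a Jacobian, and \eqref{Gamma_y_reverse} does not produce the connection in any coordinate system. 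This is why the existence result quoted in the appendix needs the full condition $d\vec J|_{\partial\Omega'}=0$, which fixed Dirichlet data cannot deliver.

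\textbf{How to close the gap.} Build integrability into every iterate instead of imposing it at the boundary.
\begin{itemize}
\item Since $\Delta\omega=d(\Delta\vec J)$, the computation giving $\Delta\omega=0$ says exactly that $F_k:=\overrightarrow{\delta(J_k\Gamma)}-\vec B_{k+1}$ is a closed vector-valued $1$-form whenever $\vec B_{k+1}$ solves \eqref{RT_withB_3} with $J_k$. This is an identity in $J_k$: it uses only $d(J_k\Gamma)=dJ_k\wedge\Gamma+J_k\,d\Gamma$ and the symmetry of $\Gamma$, not the equation for $J$.
\item On the ball write $F_k=d\phi_{k+1}$ with $\|\phi_{k+1}\|_{L^p}\le C\|F_k\|_{W^{-1,p}}$, by a regularised Poincar\'e lemma with $\phi_{k+1}$ normalised to mean zero.
\item Solve the scalar Dirichlet problems $\Delta y^\mu_{k+1}=\phi^\mu_{k+1}$ with $y_{k+1}=x$ on $\partial\Omega'$, and set $J_{k+1}:=\partial y_{k+1}/\partial x$.
\end{itemize}
Then $\Delta J_{k+1}=\delta(J_k\Gamma)-B_{k+1}$ still holds. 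The map $J_k\mapsto J_{k+1}$ has Lipschitz constant in $W^{1,p}$ bounded by $C\bigl(\|\Gamma\|_{L^p}+\|d\Gamma\|_{L^{p/2}}\bigr)$, exactly as in your estimates. Every iterate is a gradient, and $y_k\to y$ in $W^{2,p}$ gives $J=\partial y/\partial x$ in the limit. Finally, record that $x\mapsto y$ is injective on the convex set $\Omega'$ because $\sup|J-I|<1$; you need this for $x\to y$ to be a genuine coordinate change.
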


Theorem \ref{Thm_Smoothing_low} implies that the connection regularity can always be lifted from $\Gamma_x\in L^p$ to $\Gamma_y\in W^{1,p/2}$---one derivative more regular than the starting curvature $\Riem(\Gamma_x)\in L^{p/2}$---in neighbourhoods (which can be taken to be sufficiently small balls) of points. For the purposes of this paper, it is more convenient to use the following corollary of Theorem \ref{Thm_Smoothing_low}, proven in \cite{ReintjesTemple_essreg} by consecutive use of Theorem \ref{Thm_Smoothing_low}, which assumes the same value of $p>n$ for the curvature and the connection. That is, $\Gamma_x\in L^p_\text{loc}$ is regularised to $\Gamma_x\in W^{1,p}_\text{loc}$ under the stronger assumption that $\Riem(\Gamma_x)\in L^p_\text{loc}$, in place of $\Riem(\Gamma_x)\in L^{p/2}_\text{loc}$, assuming again that $p>\max\{4,n\}$.

\begin{Corollary} \label{Thm_Smoothing_low-cor}  
Assume $\Gamma_x \in L^{p}_\text{loc}(\Omega_x)$ and $\Riem(\Gamma_x) \in L^{p}_\text{loc}(\Omega_x)$ in $x$-coor\-di\-nates, for some $p>{\rm max}\{4,n\}$, $p<\infty$, $n\geq2$. Then for any point $P\in \Omega$ there exists a neighbourhood $\Omega' \subseteq \Omega$ of $P$ (depending on $\Omega_x, n, p$ and $\Gamma$) and a coordinate transformation $x \to y$ with Jacobian $J, J^{-1} \in W^{1,p}(\Omega'_x)$, such that $\Gamma_y \in W^{1,p}(\Omega'_y)$.   
\end{Corollary}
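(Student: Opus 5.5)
The plan is to obtain Corollary \ref{Thm_Smoothing_low-cor} by applying Theorem \ref{Thm_Smoothing_low} finitely many times, each time with an adapted integrability exponent, and composing the resulting coordinate transformations.

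\emph{First step.} Since $\Riem(\Gamma_x)\in L^p_\text{loc}\subseteq L^{p/2}_\text{loc}$ on bounded sets, Theorem \ref{Thm_Smoothing_low} applies with exponent $q_0:=p$. It gives a neighbourhood of $P$ and coordinates $y_1$ with Jacobian in $W^{1,p}$ and $\Gamma_{y_1}\in W^{1,p/2}$. The gain here is in integrability, not yet in differentiability.

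\emph{Curvature after a change of coordinates.} Two facts make $L^p$-curvature persist:
\begin{itemize}
\item Since $p>n$, the Jacobian lies in $W^{1,p}\subseteq C^{0,\alpha}$. Hence $x\mapsto y_1$ is a $C^1$ bi-Lipschitz diffeomorphism, and $J^{-1}\in W^{1,p}$ because $W^{1,p}$ is an algebra and $\det J$ is bounded away from $0$.
\item $\Riem$ transforms tensorially with these continuous factors. Composition with a bi-Lipschitz map preserves $L^p$, so $\Riem(\Gamma_{y_1})\in L^p_\text{loc}$.
\end{itemize}

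\emph{Iteration.} Suppose $\Gamma_{y_k}\in W^{1,q_k/2}$ with $q_k/2<n$. By Sobolev embedding, $\Gamma_{y_k}\in L^{q_{k+1}}$ with
\[
\frac{1}{q_{k+1}}=\frac{2}{q_k}-\frac1n .
\]
Because $1/q_k<1/n$, we have $1/q_k-1/q_{k+1}=1/n-1/q_k\ge 1/n-1/q_0>0$. So $1/q_k$ decreases by a fixed positive amount at each step. After finitely many steps, either $q_k/2\ge n$ (so $\Gamma_{y_k}\in L^r$ for every finite $r$), or $q_{k+1}\ge 2p$.

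As long as $q_k<2p$, the curvature satisfies $\Riem\in L^p\subseteq L^{q_k/2}$. Theorem \ref{Thm_Smoothing_low} then applies with exponent $q_k$, which is larger than $\max\{4,n\}$, and produces $y_{k+1}$. By the previous step, $L^p$-curvature is preserved.

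\emph{Last step.} Once $\Gamma_{y_k}\in L^{2p}$ and $\Riem\in L^p=L^{(2p)/2}$, apply Theorem \ref{Thm_Smoothing_low} with exponent $2p$. This gives $\Gamma_y\in W^{1,p}$ and a Jacobian in $W^{1,2p}$.

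\emph{Composition.} The total Jacobian is the product of finitely many step Jacobians, each composed with the preceding bi-Lipschitz coordinate maps. Each factor lies in $W^{1,r}$ for some $r\ge p$, and composition with $C^1$ bi-Lipschitz maps preserves this. Since $W^{1,p}$ is a Banach algebra for $p>n$, the product $J$ lies in $W^{1,p}$, and so does $J^{-1}$, since $\det J$ stays bounded away from zero. The neighbourhood shrinks only finitely many times, so $\Omega'$ is still a neighbourhood of $P$.

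\emph{Main obstacle.} The delicate point is the exponent bookkeeping in the iteration: checking that each application of Theorem \ref{Thm_Smoothing_low} has its hypotheses met (exponent above $\max\{4,n\}$, curvature in $L^{q/2}$) and that the iteration terminates. The uniform decrement of $1/q_k$ shown above settles termination. When $p\ge 3n/2$, the Sobolev embedding $W^{1,p/2}\subseteq L^{2p}$ already holds, so a single intermediate step suffices.
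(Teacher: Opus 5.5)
Your proposal is correct and takes the route the paper indicates: the paper defers to \cite{ReintjesTemple_essreg}, which obtains the corollary by consecutive applications of Theorem \ref{Thm_Smoothing_low}, bootstrapping the integrability of the connection by Sobolev embedding while the $L^p$-curvature is carried along tensorially, and then applying the theorem once more with exponent $2p$ and composing the Jacobians. One index slip in your iteration step: since $\Gamma_{y_k}\in L^{q_{k+1}}$, you should apply Theorem \ref{Thm_Smoothing_low} with exponent $\min\{q_{k+1},2p\}$ rather than $q_k$, because that choice is what makes each step gain integrability.
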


Finally, for this paper, we need the following extension of Corollary \ref{Thm_Smoothing_low-cor} from a local to a global setting, taken from Theorem 2.3 in \cite{ReintjesTemple_essreg}, and stated as Theorem \ref{Thm_smoothing_global} below. To introduce the global setting in \cite{ReintjesTemple_essreg} which we require in this paper, consider a manifold $\mathcal{M}$ with atlas $\mathcal{A}$, and assume $\Gamma$ is an affine connection $\Gamma$ defined on $(\M,\A)$. A connection $\Gamma$ is said to have regularity $W^{s,p}_\text{loc}$, if the components of $\Gamma$ have regularity $W^{s,p}_\text{loc}$ in every coordinate patch $(x,\Omega)$ of $\mathcal{A}$, (i.e., if $\Gamma_x \in W^{s,p}_\text{loc}(\Omega_x)$ where $\Omega_x \equiv x(\Omega) \subseteq \R^n$), in which case we write $\Gamma \in W^{s,p}_{\mathcal{A},\text{loc}}(\mathcal{M})$ following the notation in \cite{ReintjesTemple_essreg}. Moreover, we write $\Riem(\Gamma)\in W^{s,p}_{\mathcal{A},\text{loc}}(\mathcal{M})$ if the components of the Riemann curvature tensor have regularity $W^{s,p}_\text{loc}$ in each coordinate system $(x,\Omega)$ of the atlas $\mathcal{A}$, (i.e., $\Riem(\Gamma_x) \in W^{s,p}_\text{loc}(\Omega_x)$), and we write $L^p_{\mathcal{A},\text{loc}}(\mathcal{M})$ for $W^{0,p}_{\mathcal{A},\text{loc}}(\mathcal{M})$.     
Note finally that an atlas that preserves $W^{s,p}_\text{loc}$-connection regularity always has transition maps of regularity $W^{s+2,p}_\text{loc}$, as a consequence of the connection transformation law; thus, $\Gamma \in W^{s,p}_{\A,\text{loc}}(\mathcal{M})$ implies that $\A$ has regularity  $W^{s+2,p}_\text{loc}$, cf. \cite[Lemma 2.1]{ReintjesTemple_essreg}.

\begin{Thm}\label{Thm_smoothing_global}  
Let $\Gamma$ be an affine connection defined on an $n$-dimensional manifold $\mathcal{M}$ with atlas $\mathcal{A}$. Assume $\Gamma \in L^p_{\mathcal{A},\text{loc}}(\mathcal{M})$, where $p>{\rm max}\{4,n\}$, $p<\infty$, $n\geq2$. Then there exists a subatlas $\A'$, contained in the maximal $W^{2,p}_\text{loc}$-extension of the atlas $\A$, such that $\Gamma \in W^{1,p}_{\A',\text{loc}}(\mathcal{M})$,  if and only if $\Riem(\Gamma)\in L^p_{\mathcal{A},\text{loc}}(\mathcal{M})$.
\end{Thm}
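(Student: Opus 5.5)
The statement is the global RT-regularisation result, Theorem \ref{Thm_smoothing_global}. I would prove it by globalising the local Corollary \ref{Thm_Smoothing_low-cor}, treating the two implications separately.

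\emph{Necessity.} Suppose a subatlas $\A'$ of the maximal $W^{2,p}_\text{loc}$-extension of $\A$ gives $\Gamma\in W^{1,p}_{\A',\text{loc}}$. Then in $\A'$-charts $\Riem(\Gamma)=d\Gamma+\Gamma\wedge\Gamma$ lies in $L^p_\text{loc}$. Here $d\Gamma\in L^p$ directly, and $\Gamma\wedge\Gamma\in L^p$ because $W^{1,p}\subseteq L^\infty$ for $p>n$. The curvature is a tensor, and the transition Jacobians to $\A$-charts are $W^{1,p}\subseteq C^{0,\alpha}$, hence locally bounded. Transforming tensorially therefore keeps $\Riem(\Gamma)$ in $L^p_{\A,\text{loc}}$.

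\emph{Sufficiency.} For each point $P$ I would apply Corollary \ref{Thm_Smoothing_low-cor} in an $\A$-chart around $P$. This gives a neighbourhood $\Omega_P$ and coordinates $y_P$ whose Jacobian $J,J^{-1}\in W^{1,p}$, with $\Gamma_{y_P}\in W^{1,p}(\Omega_P)$. Then:
\begin{itemize}
\item[(a)] Since $J\in W^{1,p}$, the map $x\mapsto y_P$ is $W^{2,p}$. Because $J^{-1}\in W^{1,p}$, the inverse function theorem in $C^1$ applies (Morrey), and after shrinking $\Omega_P$ the map $y_P$ is a $W^{2,p}$-diffeomorphism onto its image. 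Hence $(y_P,\Omega_P)$ is $W^{2,p}$-compatible with $\A$.
\item[(b)] The collection $\A'=\{(y_P,\Omega_P)\}$ covers $\M$. It remains to check that it is an atlas, i.e.\ that its transitions are compatible.
\item[(c)] For two charts $y_P,y_Q$ with overlapping domains, I would use the connection transformation law
\[
\Gamma_{y_Q}=\text{(tensorial transform of }\Gamma_{y_P})+J^{-1}dJ,
\]
where $J$ is the transition Jacobian, in $W^{1,p}$ a priori. Both connections are in $W^{1,p}$, and the tensorial term is a product of $W^{1,p}$ functions, hence in $W^{1,p}$ since $p>n$ makes $W^{1,p}$ an algebra. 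Therefore $dJ=J(\Gamma_{y_Q}-\dots)\in W^{1,p}$, so $J\in W^{2,p}$ and the transition is $W^{3,p}$. This bootstrap is the content of \cite[Lemma 2.1, Lemma 4.1]{ReintjesTemple_essreg}.
\end{itemize}
So $\A'$ is a $W^{3,p}$ atlas inside the maximal $W^{2,p}$-extension of $\A$, with $\Gamma\in W^{1,p}_{\A',\text{loc}}$.

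If a \emph{smooth} atlas is wanted, as in Theorem \ref{Thm_smoothing_preliminaries}, I would take the $C^1$ manifold structure of $\A'$ and pick a smooth subatlas of it. I would then check that the transitions from these smooth charts to $\A'$ preserve $W^{1,p}$ connection regularity. I expect this step, reconciling the smooth structure with the $W^{3,p}$ structure, to be the main obstacle. One option is to mollify the $W^{3,p}$ charts. The cleaner option is to note that a $C^1$ manifold with a $W^{3,p}$ atlas carries a compatible smooth structure whose charts are $W^{3,p}$-related to $\A'$; $W^{3,p}$ transitions preserve $W^{1,p}$ connections by the same transformation-law computation as in (c).

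Finally, the only place local compactness of the domains matters is in applying Corollary \ref{Thm_Smoothing_low-cor}. Its neighbourhoods can be shrunk to relatively compact balls, so the local bounds pass to $W^{1,p}_\text{loc}$ globally.
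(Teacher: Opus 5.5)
Your proposal is correct and follows essentially the route the paper indicates. The paper quotes the statement from \cite[Thm.~2.3]{ReintjesTemple_essreg} as the local-to-global extension of Corollary~\ref{Thm_Smoothing_low-cor}: the $W^{3,p}_\text{loc}$-regularity of the transitions of $\A'$ comes from the connection transformation law and the smooth subatlas from \cite{Hirsch}, as in your steps (a)--(c) and your final paragraph. One minor remark: step (c) is an upgrade rather than a necessity, since charts in the maximal $W^{2,p}_\text{loc}$-extension of $\A$ are automatically mutually compatible, and the smooth-subatlas step is not needed for the statement as formulated.
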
 

Since, by Christoffel's formula, a metric is always exactly one derivative more regular than its connection, Theorem \ref{Thm_smoothing_global} directly implies Theorem \ref{Thm_smoothing_preliminaries}, on which the methods in this paper are based.

Theorem 2.3 in \cite{ReintjesTemple_essreg} asserts further that, in successive regularisation of a $W^{s,p}$ connection by the RT-equations, an implicit regularisation of the curvature takes place in each step $s\geq 0$ until the connection reaches its {\it essential} (highest possible) {\it regularity}, $\Gamma \in W^{m,p}_{\A_m,\text{loc}}(\mathcal{M})$, $m\geq 1$, in some suitable subatlas $\A_m$, $W^{2,p}$-compatible with $\A$. Note, since the regularisation requires singular transformations, this regularising subatlas $\A_m$ is contained within the maximal $W^{2,p}_\text{loc}$-extension of the starting atlas $\A$, the lowest atlas regularity that preserves $L^p_\text{loc}$-connection regularity and allows for regularisations by the RT-equations, (cf. Theorem \ref{Thm_Smoothing_low}). Since the atlas $\A'$ preserves the $W^{1,p}_\text{loc}$-regularity of the regularised connection $\Gamma$, it follows that all transition maps of $\A$ have regularity $W^{3,p}_\text{loc}$. One can always extend $\A'$ to its maximal $W^{3,p}_\text{loc}$-atlas and, by the results in \cite{Hirsch}, subsequently restricting this atlas to a maximal $C^\infty$-subatlas, while still preserving $W^{1,p}_\text{loc}$-connection regularity, cf. \cite{ReintjesTemple_essreg}. In this paper we always work with these maximal $C^\infty$-subatlases.

\section{Proof of Theorem \ref{Th:Hawking-smooth}} \label{Sec_Hawking_proof}

In this section we provide an independent proof of the `quantified' smooth Hawking Theorem \ref{Th:Hawking-smooth}.

\begin{proof}
Let $q\in I^+(\Sigma)$. Then there also exists a future-directed maximizing unit timelike geodesic 
$\gamma: [0, b] \to M$ starting orthogonally from $\Sigma$ with $\gamma(b)=q$ and $L(\gamma)=b=\tau_\Sigma(q)$.  

Let $h: [0, b] \to \mathbb{R}^+_0$ be a smooth function with $h(0)=1$ and $h(b)=0$. We choose a parallel orthonormal basis $E_1, \dots, E_{n-1}$ of $T\gamma^{\perp}$ along $\gamma$, by parallel-transporting along $\gamma$ initial vectors $e_i=E_i(0)\in T_{\gamma(0)}\Sigma$ ($i=1,\dots,n-1$) that are orthogonal. 
Define variation fields $V_i(t) = h(t)E_i(t)$ with $V_i(b) = 0$ and let $\gamma_s^i$ be the corresponding geodesic variations of $\gamma=\gamma_0$, with a fixed endpoint q. Since $\gamma$ is maximizing, the second variation of arc length must be non-positive for all variation fields $V_i$, i.e., for $i=1,\dots,n-1$
\begin{align}
\frac{d^2}{ds^2}L(\gamma^i_s)\bigg\vert_{s=0} \le 0.
\end{align}
Denoting by $I$ the index form along $\gamma$ and using \cite[Cor.\ 10.27]{ON83}, 
this implies 
\begin{align*}
0\ge I(V_i, V_i) = \langle \gamma'(0),\mathrm{II}(V_i(0),V_i(0)) \rangle - \int_0^b \left( \langle R(V_i, \gamma')\gamma', V_i \rangle + \langle V_i',  V_i' \rangle \right) dt.
\end{align*}

Note that $\gamma'(0)=N$ (the unit normal to $\Sigma$), $V_i(0)=e_i$ and $V_i'=h' E_i$. 
Summing over $i$ and using \eqref{eq:Ricci}, \eqref{eq:mean-curvature-def}  we arrive at
\begin{equation}\label{eq:second-var-II-Ric}
\begin{split}
0 &\ge \sum_{i=1}^{n-1}\langle \mathrm{II}(e_i,e_i),N \rangle -
\int_0^b \left( h^2 \sum_{i=1}^{n-1} \langle R(E_i, \gamma')\gamma', E_i \rangle + (h')^2 \sum_{i=1}^{n-1} \langle E_i, E_i \rangle \right) dt \\
&= -H + \int_0^b (h^2 \mathrm{Ric}(\gamma',\gamma') - (n-1)(h')^2)\,dt
\end{split}
\end{equation}

Applying assumptions (i) and (ii), we obtain
\begin{equation}\label{6}
-\beta \le (n-1)\int_0^b\bigl((h'(t))^2-\kappa\,h(t)^2\bigr)\,dt .
\end{equation}
Let
\begin{equation}
 J[h]:=\int_0^b\bigl((h'(t))^2-\kappa\,h(t)^2\bigr)\,dt ,
\end{equation}
where $h\in C^\infty([0,b])$ satisfies the boundary conditions $h(0)=1$ and $h(b)=0$.
Then \eqref{6} says
\begin{equation}\label{eq:b3}
J[h]\ge -\frac{\beta}{n-1}.
\end{equation}
To obtain the sharpest possible bound on $b$, we minimize $J[h]$ subject to these boundary conditions.
This is a standard variational problem. The Euler-Lagrange equation for $J$ is
\begin{equation}
 h''+\kappa h=0.    
\end{equation}

Let $h$ be the unique solution of this ODE with $h(0)=1$ and $h(b)=0$. Then
\[
J[h]=\int_0^b\bigl((h')^2-\kappa h^2\bigr)\,dt
=\int_0^b\bigl((h')^2+h\,h''\bigr)\,dt
=\bigl[h(t)h'(t)\bigr]_{t=0}^{t=b}
= -h'(0),
\]
hence the minimal value is $J_{\min}=-h'(0)$ and \eqref{eq:b3} yields
$h'(0)\le \frac{\beta}{n-1}$. 
The maximal admissible length $b$ (which we will denote by $b_{\kappa,\beta}$ for specific choices of the respective constants) is obtained in the extremal case
\begin{equation}\label{eq:extr}
 h'(0)=\frac{\beta}{n-1}.
\end{equation}

\medskip\noindent
\textbf{Case 1: $\kappa>0$.}
Set $k:=\sqrt{\kappa}$. The minimizer of $J$ with $h(0)=1$ and $h(b)=0$ is
\begin{equation}
    h(t)=\frac{\sin(k(b-t))}{\sin(kb)}, \qquad kb\in(0,\pi).
\end{equation}
Imposing the extremality condition \eqref{eq:extr} 
therefore yields
\begin{equation}
    \cot(kb)=-\frac{\beta}{(n-1)k}.
\end{equation}
Since $\cot:(0,\pi)\to\mathbb{R}$ is strictly decreasing and surjective, there exists a unique
$kb\in(0,\pi)$ solving this equation, namely
\begin{equation}
kb=\cot^{-1}\!\left(-\frac{\beta}{(n-1)k}\right).
\end{equation}
This yields the claimed $b=b_{\kappa,\beta}>0$.

\medskip\noindent
\textbf{Case 2: $\kappa=0$}. Then $h''=0$, and the minimizer is $h(t)=1-t/b$, so imposing extremality condition \eqref{eq:extr} yields
\begin{equation}
    b=b_{\kappa,\beta}=-\frac{n-1}{\beta}
\end{equation}
if $\beta<0$. If $\beta\ge 0$, then it
has no solution for finite $b$. Equivalently, for every finite $b>0$ one has
\begin{equation}
J[h]=-\;h'(0)=\frac{1}{b}>0,
\end{equation}
so \eqref{eq:b3} is satisfied for all finite $b$, hence
$b_{\kappa,\beta}=\infty$.

\medskip\noindent
\textbf{Case 3: $\kappa<0$.}
Set $k:=\sqrt{-\kappa}>0$. The equation becomes $h''-k^2h=0$, and the solution with boundary conditions
$h(0)=1$ and $h(b)=0$ is
\begin{equation}
h(t)=\frac{\sinh(k(b-t))}{\sinh(kb)}.
\end{equation}
Then $h'(0)=-k\,\coth(kb)$, and imposing the extremality condition \eqref{eq:extr} 
gives
$\coth(kb)=-\frac{\beta}{(n-1)k}$, i.e.,
\begin{equation}\label{eq:final-case}
b=b_{\kappa,\beta}=\frac{1}{k}\coth^{-1}\!\left(-\frac{\beta}{(n-1)k}\right),
\end{equation}
under the usual condition that the argument of $\coth^{-1}$ is $>1$. 

Now assume otherwise, i.e., $\frac{\beta}{(n-1)k}\ge -1$. 
Then again \eqref{eq:final-case} has no positive solution $b$ and equivalently we have
\begin{equation}
    k\,\coth(kb) > k \ge -\frac{\beta}{n-1},
\end{equation}
so \eqref{eq:b3} is satisfied for every finite $b>0$, implying $b_{\kappa,\beta}=\infty$.
\end{proof}

Observe that there is an alternative way of deriving equation \eqref{eq:second-var-II-Ric} using  Synge's formula for the second variation of arc length \cite[Thm.\ 10.4]{ON83} avoiding the use of the index form.

\section*{Funding}
M. Kunzinger, R. Steinbauer, and I. Vega-Gonz\'alez were funded in part by the Austrian Science Fund (FWF) [Grants DOI \href{https://doi.org/10.55776/PAT1996423}{10.55776/PAT1996423}, and \href{https://doi.org/10.55776/EFP6}{10.55776/EFP6}]. 
M. Reintjes was supported by the Hong Kong Research Grants Council (ECS-21306524 and GRF-11303326).
For open access purposes, the authors have applied a CC BY public copyright license to any author accepted manuscript version arising from this submission.

\section*{Acknowledgements} 
We thank Argam Ohanyan and Melanie Graf for helpful discussions. MK and RS are also grateful to the organizers of the workshop \emph{Optimal Transport and Metric Geometry Across Structures} at EPFL, Mathias Braun, Nicola Gigli, and Robert McCann, for creating a stimulating environment in which some of the ideas for this work emerged. The authors also benefited from occasional discussions with large language models during the preparation of this work. The final formulation of all results and proofs is due solely to the authors.

\printbibliography

@article {BS06,
    AUTHOR = {Bernal, Antonio N. and S\'anchez, Miguel},
     TITLE = {Further results on the smoothability of {C}auchy hypersurfaces
              and {C}auchy time functions},
   JOURNAL = {Lett. Math. Phys.},
  FJOURNAL = {Letters in Mathematical Physics},
    VOLUME = {77},
      YEAR = {2006},
    NUMBER = {2},
     PAGES = {183--197},
      ISSN = {0377-9017,1573-0530},
   MRCLASS = {53C50 (53C80 81T20)},
  MRNUMBER = {2254187},
MRREVIEWER = {Paul\ E.\ Ehrlich},
       DOI = {10.1007/s11005-006-0091-5},
       URL = {https://doi.org/10.1007/s11005-006-0091-5},
}

@article {CGHKS25,
    AUTHOR = {Calisti, Matteo and Graf, Melanie and Hafemann, Eduardo and
              Kunzinger, Michael and Steinbauer, Roland},
     TITLE = {Hawking's singularity theorem for {L}ipschitz {L}orentzian
              metrics},
   JOURNAL = {Comm. Math. Phys.},
  FJOURNAL = {Communications in Mathematical Physics},
    VOLUME = {406},
      YEAR = {2025},
    NUMBER = {9},
     PAGES = {Paper No. 207, 31},
      ISSN = {0010-3616,1432-0916},
   MRCLASS = {83C75 (46N50 52C20 53C50)},
  MRNUMBER = {4940197},
       DOI = {10.1007/s00220-025-05380-9},
       URL = {https://doi.org/10.1007/s00220-025-05380-9},
}

@article {Graf,
    AUTHOR = {Graf, Melanie},
     TITLE = {Singularity theorems for {$C^1$}-{L}orentzian metrics},
   JOURNAL = {Comm. Math. Phys.},
  FJOURNAL = {Communications in Mathematical Physics},
    VOLUME = {378},
      YEAR = {2020},
    NUMBER = {2},
     PAGES = {1417--1450},
      ISSN = {0010-3616,1432-0916},
   MRCLASS = {53C50 (53C20 83C75)},
  MRNUMBER = {4134950},
MRREVIEWER = {Clemens\ Saemann},
       DOI = {10.1007/s00220-020-03808-y},
       URL = {https://doi.org/10.1007/s00220-020-03808-y},
}

@book {gkos2001geometricgeneralized,
    AUTHOR = {Grosser, Michael and Kunzinger, Michael and Oberguggenberger,
              Michael and Steinbauer, Roland},
     TITLE = {Geometric theory of generalized functions with applications to
              general relativity},
    SERIES = {Mathematics and its Applications},
    VOLUME = {537},
 PUBLISHER = {Kluwer Academic Publishers, Dordrecht},
      YEAR = {2001},
     PAGES = {xvi+505},
      ISBN = {1-4020-0145-2},
   MRCLASS = {46T30 (35A30 35Q75 46-02 46F30 58-02 83C99)},
  MRNUMBER = {1883263},
MRREVIEWER = {Stevan\ Pilipovi\'c},
       DOI = {10.1007/978-94-015-9845-3},
       URL = {https://doi.org/10.1007/978-94-015-9845-3},
}

@book{HawkingEllis,
	place={Cambridge},
	series={Cambridge Monographs on Mathematical Physics},
	title={The Large Scale Structure of Space-Time},
	publisher={Cambridge University Press},
	author={Hawking, S. W. and Ellis, G. F. R.},
	year={1973},
	collection={Cambridge Monographs on Mathematical Physics},
}

@book {Hirsch,
    AUTHOR = {Hirsch, Morris W.},
     TITLE = {Differential topology},
    SERIES = {Graduate Texts in Mathematics},
    VOLUME = {No. 33},
 PUBLISHER = {Springer-Verlag, New York-Heidelberg},
      YEAR = {1976},
     PAGES = {x+221},
   MRCLASS = {57DXX (58-01)},
  MRNUMBER = {448362},
MRREVIEWER = {Ulrich\ Koschorke},
}

@article {KSSV:14,
    AUTHOR = {Kunzinger, Michael and Steinbauer, Roland and Stojkovi\'c,
              Milena and Vickers, James A.},
     TITLE = {A regularisation approach to causality theory for
              {$C^{1,1}$}-{L}orentzian metrics},
   JOURNAL = {Gen. Relativity Gravitation},
  FJOURNAL = {General Relativity and Gravitation},
    VOLUME = {46},
      YEAR = {2014},
    NUMBER = {8},
     PAGES = {Art. 1738, 18},
      ISSN = {0001-7701,1572-9532},
   MRCLASS = {83C75},
  MRNUMBER = {3245957},
       DOI = {10.1007/s10714-014-1738-7},
       URL = {https://doi.org/10.1007/s10714-014-1738-7},
}

@article {LLS:21,
    AUTHOR = {Lange, Christian and Lytchak, Alexander and S\"amann, Clemens},
     TITLE = {Lorentz meets {L}ipschitz},
   JOURNAL = {Adv. Theor. Math. Phys.},
  FJOURNAL = {Advances in Theoretical and Mathematical Physics},
    VOLUME = {25},
      YEAR = {2021},
    NUMBER = {8},
     PAGES = {2141--2170},
      ISSN = {1095-0761,1095-0753},
   MRCLASS = {53B30 (83C99)},
  MRNUMBER = {4489478},
MRREVIEWER = {Stefan\ Suhr},
       DOI = {10.4310/atmp.2021.v25.n8.a4},
       URL = {https://doi.org/10.4310/atmp.2021.v25.n8.a4},
}

@article {Minguzzi-cone-structures,
    AUTHOR = {Minguzzi, Ettore},
     TITLE = {Causality theory for closed cone structures with applications},
   JOURNAL = {Rev. Math. Phys.},
  FJOURNAL = {Reviews in Mathematical Physics. A Journal for Both Review and
              Original Research Papers in the Field of Mathematical Physics},
    VOLUME = {31},
      YEAR = {2019},
    NUMBER = {5},
     PAGES = {1930001, 139},
      ISSN = {0129-055X,1793-6659},
   MRCLASS = {53C50 (34A60 53C22 83C75 93D30)},
  MRNUMBER = {3955368},
MRREVIEWER = {Clemens\ Saemann},
       DOI = {10.1142/S0129055X19300012},
       URL = {https://doi.org/10.1142/S0129055X19300012},
}

@article {Minguzzi-Cauchy-surface,
    AUTHOR = {Minguzzi, Ettore},
     TITLE = {On the regularity of {C}auchy hypersurfaces and temporal
              functions in closed cone structures},
   JOURNAL = {Rev. Math. Phys.},
  FJOURNAL = {Reviews in Mathematical Physics. A Journal for Both Review and
              Original Research Papers in the Field of Mathematical Physics},
    VOLUME = {32},
      YEAR = {2020},
    NUMBER = {10},
     PAGES = {2050033, 17},
      ISSN = {0129-055X,1793-6659},
   MRCLASS = {53C40 (34A60 53C50 83C75 93D30)},
  MRNUMBER = {4179231},
MRREVIEWER = {J\'onatan\ Herrera},
       DOI = {10.1142/S0129055X20500336},
       URL = {https://doi.org/10.1142/S0129055X20500336},
}

@book {ON83,
    AUTHOR = {O'Neill, Barrett},
     TITLE = {Semi-{R}iemannian geometry},
    SERIES = {Pure and Applied Mathematics},
    VOLUME = {103},
      NOTE = {With applications to relativity},
 PUBLISHER = {Academic Press, Inc. [Harcourt Brace Jovanovich, Publishers],
              New York},
      YEAR = {1983},
     PAGES = {xiii+468},
      ISBN = {0-12-526740-1},
   MRCLASS = {53-01 (53B30 53C50 83-02)},
  MRNUMBER = {719023},
MRREVIEWER = {N.\ V.\ Mitskevich},
}

@article {Reintjes_SCC,
    AUTHOR = {Reintjes, Moritz},
     TITLE = {Strong cosmic censorship with bounded curvature},
   JOURNAL = {Classical Quantum Gravity},
  FJOURNAL = {Classical and Quantum Gravity},
    VOLUME = {41},
      YEAR = {2024},
    NUMBER = {17},
     PAGES = {Paper No. 175002, 13},
      ISSN = {0264-9381,1361-6382},
   MRCLASS = {83C75 (83C05)},
  MRNUMBER = {4797036},
MRREVIEWER = {K.\ S.\ Virbhadra},
}

@article {ReintjesTemple_geo,
    AUTHOR = {Reintjes, Moritz and Temple, Blake},
     TITLE = {Shock wave interactions and the {R}iemann-flat condition: the geometry behind metric smoothing and the existence of locally inertial frames in general relativity},
   JOURNAL = {Arch. Ration. Mech. Anal.},
  FJOURNAL = {Archive for Rational Mechanics and Analysis},
    VOLUME = {235},
      YEAR = {2020},
    NUMBER = {3},
     PAGES = {1873--1904},
      ISSN = {0003-9527,1432-0673},
   MRCLASS = {83C05 (35Q75)},
  MRNUMBER = {4065653},
MRREVIEWER = {Giovanni\ Preti},
       DOI = {10.1007/s00205-019-01456-8},
       URL = {https://doi.org/10.1007/s00205-019-01456-8},
}

@article {ReintjesTemple_ell1,
    AUTHOR = {Reintjes, Moritz and Temple, Blake},
     TITLE = {The regularity transformation equations: an elliptic mechanism for smoothing gravitational metrics in general relativity},
   JOURNAL = {Adv. Theor. Math. Phys.},
  FJOURNAL = {Advances in Theoretical and Mathematical Physics},
    VOLUME = {24},
      YEAR = {2020},
    NUMBER = {5},
     PAGES = {1203--1245},
      ISSN = {1095-0761,1095-0753},
   MRCLASS = {83C75 (35Q75 58J90 83C05)},
  MRNUMBER = {4285602},
MRREVIEWER = {A.\ Burtscher},
       DOI = {10.4310/ATMP.2020.v24.n5.a5},
       URL = {https://doi.org/10.4310/ATMP.2020.v24.n5.a5},
}

@article {ReintjesTemple_ell2,
    AUTHOR = {Reintjes, Moritz and Temple, Blake},
     TITLE = {Optimal metric regularity in general relativity follows from the {RT}-equations by elliptic regularity theory in
              {$L^p$}-spaces},
   JOURNAL = {Methods Appl. Anal.},
  FJOURNAL = {Methods and Applications of Analysis},
    VOLUME = {27},
      YEAR = {2020},
    NUMBER = {3},
     PAGES = {199--241},
      ISSN = {1073-2772,1945-0001},
   MRCLASS = {83C05 (76L05 83C75)},
  MRNUMBER = {4307029},
       DOI = {10.4310/MAA.2020.v27.n3.a1},
       URL = {https://doi.org/10.4310/MAA.2020.v27.n3.a1},
}

@article {ReintjesTemple_ell3,
    AUTHOR = {Reintjes, Moritz and Temple, Blake},
     TITLE = {How to smooth a crinkled map of space-time: {U}hlenbeck
              compactness for {$L^\infty$} connections and optimal
              regularity for general relativistic shock waves by the
              {R}eintjes-{T}emple equations},
   JOURNAL = {Proc. A.},
  FJOURNAL = {Proceedings A},
    VOLUME = {476},
      YEAR = {2020},
    NUMBER = {2241},
     PAGES = {20200177, 22},
      ISSN = {1364-5021,1471-2946},
   MRCLASS = {35Q75 (83C05)},
  MRNUMBER = {4172534},
MRREVIEWER = {L.\ Ravera},
       DOI = {10.1098/rspa.2020.0177},
       URL = {https://doi.org/10.1098/rspa.2020.0177},
}

@article {ReintjesTemple_ell4,
    AUTHOR = {Reintjes, Moritz and Temple, Blake},
     TITLE = {On the optimal regularity implied by the assumptions of
              geometry {I}: {C}onnections on tangent bundles},
   JOURNAL = {Methods Appl. Anal.},
  FJOURNAL = {Methods and Applications of Analysis},
    VOLUME = {29},
      YEAR = {2022},
    NUMBER = {4},
     PAGES = {303--396},
      ISSN = {1073-2772,1945-0001},
   MRCLASS = {83C75 (53B05 58J05)},
  MRNUMBER = {4584011},
MRREVIEWER = {Yuguang\ Shi},
       DOI = {10.4310/maa.2022.v29.n4.a1},
       URL = {https://doi.org/10.4310/maa.2022.v29.n4.a1},
}

@article {ReintjesTemple_ell5,
    AUTHOR = {Reintjes, Moritz and Temple, Blake},
     TITLE = {On the optimal regularity implied by the assumptions of
              geometry {II}: {C}onnections on vector bundles},
   JOURNAL = {Adv. Theor. Math. Phys.},
  FJOURNAL = {Advances in Theoretical and Mathematical Physics},
    VOLUME = {28},
      YEAR = {2024},
    NUMBER = {5},
     PAGES = {1425--1486},
      ISSN = {1095-0761,1095-0753},
   MRCLASS = {83C75 (53B05 53B15 53C07 58J05)},
  MRNUMBER = {4819557},
MRREVIEWER = {A.\ Burtscher},
       DOI = {10.4310/atmp.241030211154},
       URL = {https://doi.org/10.4310/atmp.241030211154},
}

@article {ReintjesTemple_ell6,
    AUTHOR = {Reintjes, Moritz and Temple, Blake},
     TITLE = {Optimal regularity and {U}hlenbeck compactness for general
              relativity and {Y}ang-{M}ills theory},
   JOURNAL = {Proc. A.},
  FJOURNAL = {Proceedings A},
    VOLUME = {479},
      YEAR = {2023},
    NUMBER = {2271},
     PAGES = {Paper No. 20220444, 14},
      ISSN = {1364-5021,1471-2946},
   MRCLASS = {83C05 (53B05 53C50 70S15 83C75)},
  MRNUMBER = {4574825},
MRREVIEWER = {Michael\ Kunzinger},
}

@article {ReintjesTemple_essreg,
  title={The essential regularity of singular connections in Geometry},
  author={Reintjes, Moritz and Temple, Blake},
  journal={arXiv preprint arXiv:2412.08928},
  year={2024}
}

@article {GrantKuSaSt,
    AUTHOR = {Grant, James D. E. and Kunzinger, Michael and S\"amann,
              Clemens and Steinbauer, Roland},
     TITLE = {The future is not always open},
   JOURNAL = {Lett. Math. Phys.},
  FJOURNAL = {Letters in Mathematical Physics},
    VOLUME = {110},
      YEAR = {2020},
    NUMBER = {1},
     PAGES = {83--103},
      ISSN = {0377-9017,1573-0530},
   MRCLASS = {53C50 (83C75)},
  MRNUMBER = {4047145},
MRREVIEWER = {Peter\ R.\ Law},
       DOI = {10.1007/s11005-019-01213-8},
       URL = {https://doi.org/10.1007/s11005-019-01213-8},
}

@article {benedikt,
    AUTHOR = {Schinnerl, Benedict and Steinbauer, Roland},
     TITLE = {A note on the {G}annon-{L}ee theorem},
   JOURNAL = {Lett. Math. Phys.},
  FJOURNAL = {Letters in Mathematical Physics},
    VOLUME = {111},
      YEAR = {2021},
    NUMBER = {6},
     PAGES = {Paper No. 142, 17},
      ISSN = {0377-9017},
   MRCLASS = {83C75 (46F99 53C50)},
  MRNUMBER = {4342989},
MRREVIEWER = {Clemens Saemann},
       DOI = {10.1007/s11005-021-01481-3},
       URL = {https://doi.org/10.1007/s11005-021-01481-3},
}

@article {bernard_suhr,
    AUTHOR = {Bernard, Patrick and Suhr, Stefan},
     TITLE = {Cauchy and uniform temporal functions of globally hyperbolic
              cone fields},
   JOURNAL = {Proc. Amer. Math. Soc.},
  FJOURNAL = {Proceedings of the American Mathematical Society},
    VOLUME = {148},
      YEAR = {2020},
    NUMBER = {11},
     PAGES = {4951--4966},
      ISSN = {0002-9939},
   MRCLASS = {53C50 (37B25 37D05)},
  MRNUMBER = {4143406},
MRREVIEWER = {Clemens Saemann},
       DOI = {10.1090/proc/15106},
       URL = {https://doi.org/10.1090/proc/15106},
}

@article {ReintjesTemple_geod,
    AUTHOR = {Reintjes, Moritz and Temple, Blake},
     TITLE = {On weak solutions to the geodesic equation in the presence of
              curvature bounds},
   JOURNAL = {J. Differential Equations},
  FJOURNAL = {Journal of Differential Equations},
    VOLUME = {392},
      YEAR = {2024},
     PAGES = {306--324},
      ISSN = {0022-0396,1090-2732},
   MRCLASS = {53C22 (34A26 83C75)},
  MRNUMBER = {4711905},
MRREVIEWER = {Stefan\ Suhr},
       DOI = {10.1016/j.jde.2024.02.014},
       URL = {https://doi.org/10.1016/j.jde.2024.02.014},
}

@article {Andersson-Galloway,
    AUTHOR = {Andersson, Lars and Galloway, Gregory J.},
     TITLE = {d{S}/{CFT} and spacetime topology},
   JOURNAL = {Adv. Theor. Math. Phys.},
  FJOURNAL = {Advances in Theoretical and Mathematical Physics},
    VOLUME = {6},
      YEAR = {2002},
    NUMBER = {2},
     PAGES = {307--327},
      ISSN = {1095-0761,1095-0753},
   MRCLASS = {53C50 (83C05)},
  MRNUMBER = {1937858},
MRREVIEWER = {Alan\ D.\ Rendall},
       DOI = {10.4310/ATMP.2002.v6.n2.a4},
       URL = {https://doi.org/10.4310/ATMP.2002.v6.n2.a4},
}

@article {Borde,
    AUTHOR = {Borde, Arvind},
     TITLE = {Open and closed universes, initial singularities, and
              inflation},
   JOURNAL = {Phys. Rev. D (3)},
  FJOURNAL = {Physical Review. D. Third Series},
    VOLUME = {50},
      YEAR = {1994},
    NUMBER = {6},
     PAGES = {3692--3702},
      ISSN = {0556-2821},
   MRCLASS = {83C75},
  MRNUMBER = {1295003},
MRREVIEWER = {Robert\ J.\ Low},
       DOI = {10.1103/PhysRevD.50.3692},
       URL = {https://doi.org/10.1103/PhysRevD.50.3692},
}

@article {Graf_Volume,
    AUTHOR = {Graf, Melanie},
     TITLE = {Volume comparison for {$\mathcal{C}^{1,1}$}-metrics},
   JOURNAL = {Ann. Global Anal. Geom.},
  FJOURNAL = {Annals of Global Analysis and Geometry},
    VOLUME = {50},
      YEAR = {2016},
    NUMBER = {3},
     PAGES = {209--235},
      ISSN = {0232-704X,1572-9060},
   MRCLASS = {53C20 (53C50 83C75)},
  MRNUMBER = {3554372},
MRREVIEWER = {Peter\ R.\ Law},
       DOI = {10.1007/s10455-016-9508-2},
       URL = {https://doi.org/10.1007/s10455-016-9508-2},
}

@article{braun_mccann_causal,
  author  = {Braun, Mathias and McCann, Robert},
  title   = {Causal convergence conditions through variable timelike Ricci curvature bounds},
  journal = {Memoirs of the European Mathematical Society},
  note    = {In press, 109 pp.}
}

@article {minguzzi-suhr24,
    AUTHOR = {Minguzzi, E. and Suhr, S.},
     TITLE = {Lorentzian metric spaces and their {G}romov-{H}ausdorff
              convergence},
   JOURNAL = {Lett. Math. Phys.},
  FJOURNAL = {Letters in Mathematical Physics},
    VOLUME = {114},
      YEAR = {2024},
    NUMBER = {3},
     PAGES = {Paper No. 73, 63},
      ISSN = {0377-9017,1573-0530},
   MRCLASS = {53C50 (51F99 83C45)},
  MRNUMBER = {4752400},
MRREVIEWER = {S.\ M. B. Kashani},
       DOI = {10.1007/s11005-024-01813-z},
       URL = {https://doi.org/10.1007/s11005-024-01813-z},
}

@article {bykov-minguzzi-suhr,
    AUTHOR = {Bykov, A. and Minguzzi, E. and Suhr, S.},
     TITLE = {Lorentzian metric spaces and {GH}-convergence: the unbounded
              case},
   JOURNAL = {Lett. Math. Phys.},
  FJOURNAL = {Letters in Mathematical Physics},
    VOLUME = {115},
      YEAR = {2025},
    NUMBER = {3},
     PAGES = {Paper No. 63, 66},
      ISSN = {0377-9017,1573-0530},
   MRCLASS = {53C50 (51F99 83C45)},
  MRNUMBER = {4914034},
       DOI = {10.1007/s11005-025-01941-0},
       URL = {https://doi.org/10.1007/s11005-025-01941-0},
}

@article{AS:15,
  author    = {Morales {\'A}lvarez, Pablo and S{\'a}nchez, Miguel},
  title     = {{Myers and Hawking Theorems: Geometry for the Limits of the Universe}},
  journal   = {Milan Journal of Mathematics},
  volume    = {83},
  pages     = {295--311},
  year      = {2015},
  doi       = {10.1007/s00032-015-0241-2}
}

@unpublishe{BR:26,
    AUTHOR = {Braun, Mathias and Rotolo, Carlo},
     TITLE = {A synthetic {G}annon--{L}ee incompleteness theorem},
      YEAR = {2026},
      NOTE = {Preprint, arXiv:2602.14246 [gr-qc]},
}

@article{CM:22,
    AUTHOR = {Cavalletti, Fabio and Mondino, Andrea},
     TITLE = {A review of {L}orentzian synthetic theory of timelike {R}icci
              curvature bounds},
   JOURNAL = {Gen. Relativ. Gravit.},
  FJOURNAL = {General Relativity and Gravitation},
    VOLUME = {54},
      YEAR = {2022},
    NUMBER = {11},
     PAGES = {Paper No. 137, 39},
      ISSN = {0001-7701},
   MRCLASS = {53C23 (49Q22 53C50 83C99)},
  MRNUMBER = {4554410},
       DOI = {10.1007/s10714-022-03004-4},
}

@article{CM:24,
    AUTHOR = {Cavalletti, Fabio and Mondino, Andrea},
     TITLE = {Optimal transport in {L}orentzian synthetic spaces, synthetic
              timelike {R}icci curvature lower bounds and applications},
   JOURNAL = {Camb. J. Math.},
  FJOURNAL = {Cambridge Journal of Mathematics},
    VOLUME = {12},
      YEAR = {2024},
    NUMBER = {2},
     PAGES = {417--534},
      ISSN = {2168-0930},
       DOI = {10.4310/CJM.2024.v12.n2.a3},
}

@article{CMM:25a,
    AUTHOR = {Cavalletti, Fabio and Manini, Davide and Mondino, Andrea},
     TITLE = {Optimal transport on null hypersurfaces and the null energy
              condition},
   JOURNAL = {Comm. Math. Phys.},
  FJOURNAL = {Communications in Mathematical Physics},
    VOLUME = {406},
      YEAR = {2025},
    NUMBER = {9},
     PAGES = {Paper No. 212, 62},
      ISSN = {0010-3616},
       DOI = {10.1007/s00220-025-05345-y},
}

@unpublished{CMM:25b,
    AUTHOR = {Cavalletti, Fabio and Manini, Davide and Mondino, Andrea},
     TITLE = {On the geometry of synthetic null hypersurfaces},
      YEAR = {2025},
      NOTE = {Preprint, arXiv:2506.04934 [math.DG]},
}

@article {CG:12,
    AUTHOR = {Chru{\'s}ciel, Piotr T. and Grant, James D. E.},
     TITLE = {On {L}orentzian causality with continuous metrics},
   JOURNAL = {Classical Quantum Gravity},
  FJOURNAL = {Classical and Quantum Gravity},
    VOLUME = {29},
      YEAR = {2012},
    NUMBER = {14},
     PAGES = {145001, 32},
      ISSN = {0264-9381},
     CODEN = {CQGRDG},
   MRCLASS = {53C50 (83A05)},
  MRNUMBER = {2949547},
MRREVIEWER = {Miguel S{\'a}nchez},
       DOI = {10.1088/0264-9381/29/14/145001},
       URL = {http://dx.doi.org/10.1088/0264-9381/29/14/145001},
}

@Book{Fil:88,
  author	= {Filippov, Aleksei F.},
  title		= {Differential equations with discontinuous righthand
		  sides},
  series	= {Mathematics and its Applications (Soviet Series)},
  volume	= {18},
  publisher	= {Kluwer Academic Publishers Group},
  ADDRESS       = {Dordrecht},
  year		= {1988},
  pages		= {x+304},
  isbn		= {90-277-2699-X},
  mrclass	= {34-02 (58F99)},
  mrnumber	= {1028776},
  doi		= {10.1007/978-94-015-7793-9},
  url		= {https://doi.org/10.1007/978-94-015-7793-9}
}

@article {FS:12,
    AUTHOR = {Fathi, Albert and Siconolfi, Antonio},
     TITLE = {On smooth time functions},
   JOURNAL = {Math. Proc. Cambridge Philos. Soc.},
  FJOURNAL = {Mathematical Proceedings of the Cambridge Philosophical
              Society},
    VOLUME = {152},
      YEAR = {2012},
    NUMBER = {2},
     PAGES = {303--339},
      ISSN = {0305-0041},
     CODEN = {MPCPCO},
   MRCLASS = {53C50 (37J50 53C80)},
  MRNUMBER = {2887877},
MRREVIEWER = {Mikhail B. Sevryuk},
       DOI = {10.1017/S0305004111000661},
       URL = {http://dx.doi.org/10.1017/S0305004111000661},
}

@Article{GGKS:18,
  AUTHOR = {Graf, Melanie and Grant, James D. E. and Kunzinger, Michael
              and Steinbauer, Roland},
     TITLE = {The {H}awking--{P}enrose {S}ingularity {T}heorem for
              {$C^{1,1}$}-{L}orentzian {M}etrics},
   JOURNAL = {Comm. Math. Phys.},
  FJOURNAL = {Communications in Mathematical Physics},
    VOLUME = {360},
      YEAR = {2018},
    NUMBER = {3},
     PAGES = {1009--1042},
      ISSN = {0010-3616},
   MRCLASS = {53 (83)},
  MRNUMBER = {3803816},
       DOI = {10.1007/s00220-017-3047-y},
       URL = {https://doi.org/10.1007/s00220-017-3047-y},
}

@article {GKOS:22,
    AUTHOR = {Graf, Melanie and Kontou, Eleni-Alexandra and Ohanyan, Argam
              and Schinnerl, Benedict},
     TITLE = {Hawking-type singularity theorems for worldvolume energy
              inequalities},
   JOURNAL = {Ann. Henri Poincar\'e},
  FJOURNAL = {Annales Henri Poincar\'e. A Journal of Theoretical and
              Mathematical Physics},
    VOLUME = {26},
      YEAR = {2025},
    NUMBER = {11},
     PAGES = {3871--3906},
      ISSN = {1424-0637,1424-0661},
   MRCLASS = {83C75 (53B30 53C50 70S20)},
  MRNUMBER = {4970217},
       DOI = {10.1007/s00023-024-01502-6},
       URL = {https://doi.org/10.1007/s00023-024-01502-6},
}

@Article{GT:87,
  author    = {Geroch, Robert and Traschen, Jennie},
  title		= {Strings and other distributional sources in general
		  relativity},
  journal	= {Phys.~Rev.~D},
  year		= {1987},
  optkey	= {},
  volume	= {{36}},
  number	= {4},
  pages		= {1017--1031},
  optmonth	= {},
  optnote	= {},
  optannote	= {}
}

@article {GL:18,
	AUTHOR = {Graf, Melanie and Ling, Eric},
	TITLE = {Maximizers in {L}ipschitz spacetimes are either timelike or
	null},
	JOURNAL = {Classical Quantum Gravity},
	FJOURNAL = {Classical and Quantum Gravity},
	VOLUME = {35},
	YEAR = {2018},
	NUMBER = {8},
	PAGES = {087001, 6},
	ISSN = {0264-9381},
	MRCLASS = {83C05},
	MRNUMBER = {3789523},
	DOI = {10.1088/1361-6382/aab259},
	URL = {https://doi.org/10.1088/1361-6382/aab259},
}

@article {HW:51,
    AUTHOR = {Hartman, Philip and Wintner, Aurel},
     TITLE = {On the problems of geodesics in the small},
   JOURNAL = {Amer. J. Math.},
  FJOURNAL = {American Journal of Mathematics},
    VOLUME = {73},
      YEAR = {1951},
     PAGES = {132--148},
      ISSN = {0002-9327},
   MRCLASS = {53.0X},
  MRNUMBER = {0040740 (12,742e)},
MRREVIEWER = {A. Fialkow},
}

@article{Haw:67,
 ISSN = {00804630},
 URL = {http://www.jstor.org/stable/2415769},
 author = {Hawking, Stephen W.},
 journal = {Proceedings of the Royal Society of London. Series A, Mathematical and Physical Sciences},
 number = {1461},
 pages = {187--201},
 publisher = {The Royal Society},
 title = {The Occurrence of Singularities in Cosmology. III. Causality and Singularities},
 volume = {300},
 year = {1967},
 DOI = {https://doi.org/10.1098/rspa.1967.0164}
}

@article{K:24,
    AUTHOR = {Ketterer, Christian},
     TITLE = {Characterization of the null energy condition via displacement
              convexity of entropy},
   JOURNAL = {J. Lond. Math. Soc. (2)},
  FJOURNAL = {Journal of the London Mathematical Society. Second Series},
    VOLUME = {109},
      YEAR = {2024},
    NUMBER = {1},
     PAGES = {Paper No. e12846, 24},
      ISSN = {0024-6107},
       DOI = {10.1112/jlms.12846},
}

@article {KS:18,
    AUTHOR = {Kunzinger, Michael and S\"{a}mann, Clemens},
     TITLE = {Lorentzian length spaces},
   JOURNAL = {Ann. Global Anal. Geom.},
  FJOURNAL = {Annals of Global Analysis and Geometry},
    VOLUME = {54},
      YEAR = {2018},
    NUMBER = {3},
     PAGES = {399--447},
      ISSN = {0232-704X},
   MRCLASS = {53C23 (53B30 53C50 53C80)},
  MRNUMBER = {3867652},
MRREVIEWER = {Benjam\'{\i}n Olea},
       DOI = {10.1007/s10455-018-9633-1},
       URL = {https://doi.org/10.1007/s10455-018-9633-1},
}

@article {KSSV:15,
    AUTHOR = {Kunzinger, Michael and Steinbauer, Roland and Stojkovi{\'c},
              Milena and Vickers, James A.},
     TITLE = {Hawking's singularity theorem for {$C^{1,1}$}-metrics},
   JOURNAL = {Classical Quantum Gravity},
  FJOURNAL = {Classical and Quantum Gravity},
    VOLUME = {32},
      YEAR = {2015},
    NUMBER = {7},
     PAGES = {075012, 19},
      ISSN = {0264-9381},
   MRCLASS = {83C75 (83C05)},
  MRNUMBER = {3322151},
MRREVIEWER = {Peter R. Law},
       DOI = {10.1088/0264-9381/32/7/075012},
       URL = {http://dx.doi.org/10.1088/0264-9381/32/7/075012},
}

@article {KSV:15,
    AUTHOR = {Kunzinger, Michael and Steinbauer, Roland and Vickers, James
              A.},
     TITLE = {The {P}enrose singularity theorem in regularity {$C^{1,1}$}},
   JOURNAL = {Classical Quantum Gravity},
  FJOURNAL = {Classical and Quantum Gravity},
    VOLUME = {32},
      YEAR = {2015},
    NUMBER = {15},
     PAGES = {155010, 12},
      ISSN = {0264-9381},
   MRCLASS = {83C75 (53Z05)},
  MRNUMBER = {3368951},
MRREVIEWER = {Jos\'e Nat\'ario},
       DOI = {10.1088/0264-9381/32/15/155010},
       URL = {http://dx.doi.org/10.1088/0264-9381/32/15/155010},
}

@Article{KOSS:22,
  title={The Hawking--Penrose singularity theorem for C 1-Lorentzian metrics},
  author={Kunzinger, Michael and Ohanyan, Argam and Schinnerl, Benedict and Steinbauer, Roland},
  journal={Communications in Mathematical Physics},
  volume={391},
  number={3},
  pages={1143--1179},
  year={2022},
  publisher={Springer}
}

@article {KOV:22,
    AUTHOR = {Kunzinger, Michael and Oberguggenberger, Michael and Vickers,
              James A.},
     TITLE = {Synthetic versus distributional lower {R}icci curvature
              bounds},
   JOURNAL = {Proc. Roy. Soc. Edinburgh Sect. A},
  FJOURNAL = {Proceedings of the Royal Society of Edinburgh. Section A.
              Mathematics},
    VOLUME = {154},
      YEAR = {2024},
    NUMBER = {5},
     PAGES = {1406--1430},
      ISSN = {0308-2105,1473-7124},
   MRCLASS = {49Q22 (46T30 53C21)},
  MRNUMBER = {4806282},
MRREVIEWER = {Rotem\ Assouline},
       DOI = {10.1017/prm.2023.70},
       URL = {https://doi.org/10.1017/prm.2023.70},
}

@Article{LM:07,
  author	= {LeFloch, Philippe G. and Mardare, Cristinel},
  title		= {Definition and stability of {L}orentzian manifolds with
		  distributional curvature},
  journal	= {Port. Math. (N.S.)},
  fjournal	= {Portugaliae Mathematica. Nova S\'{e}rie},
  volume	= {64},
  year		= {2007},
  number	= {4},
  pages		= {535--573},
  issn		= {0032-5155},
  mrclass	= {53C50 (46F05 53C05 53C80 83C35)},
  mrnumber	= {2374400},
  mrreviewer	= {Francisco J. Palomo},
  doi		= {10.4171/PM/1794},
  url		= {https://doi.org/10.4171/PM/1794}
}

@article {LV:09,
    AUTHOR = {Lott, John and Villani, C\'{e}dric},
     TITLE = {Ricci curvature for metric-measure spaces via optimal transport},
   JOURNAL = {Ann. of Math. (2)},
  FJOURNAL = {Annals of Mathematics. Second Series},
    VOLUME = {169},
      YEAR = {2009},
    NUMBER = {3},
     PAGES = {903--991},
      ISSN = {0003-486X},
   MRCLASS = {53C23 (28A33 53C21)},
  MRNUMBER = {2480619},
       DOI = {10.4007/annals.2009.169.903},
}

@article{McC:24,
    AUTHOR = {McCann, Robert J.},
     TITLE = {A synthetic null energy condition},
   JOURNAL = {Comm. Math. Phys.},
  FJOURNAL = {Communications in Mathematical Physics},
    VOLUME = {405},
      YEAR = {2024},
    NUMBER = {2},
     PAGES = {Paper No. 38, 24},
      ISSN = {0010-3616},
       DOI = {10.1007/s00220-023-04908-1},
}

@article {MCC:20,
    AUTHOR = {McCann, Robert J.},
     TITLE = {Displacement convexity of {B}oltzmann's entropy characterizes
              the strong energy condition from general relativity},
   JOURNAL = {Camb. J. Math.},
  FJOURNAL = {Cambridge Journal of Mathematics},
    VOLUME = {8},
      YEAR = {2020},
    NUMBER = {3},
     PAGES = {609--681},
      ISSN = {2168-0930},
   MRCLASS = {53C50 (49Q22 53C21 58Z05 82C35 83C99)},
  MRNUMBER = {4192570},
       DOI = {10.4310/CJM.2020.v8.n3.a4},
       URL = {https://doi.org/10.4310/CJM.2020.v8.n3.a4},
}

@article {Min:15,
    AUTHOR = {Minguzzi, Ettore},
     TITLE = {Convex neighborhoods for {L}ipschitz connections and sprays},
   JOURNAL = {Monatsh. Math.},
  FJOURNAL = {Monatshefte f\"ur Mathematik},
    VOLUME = {177},
      YEAR = {2015},
    NUMBER = {4},
     PAGES = {569--625},
      ISSN = {0026-9255},
   MRCLASS = {53B15 (26A16 53B40)},
  MRNUMBER = {3371365},
MRREVIEWER = {Camelia M. Frigioiu},
       DOI = {10.1007/s00605-014-0699-y},
       URL = {http://dx.doi.org/10.1007/s00605-014-0699-y},
}

@article {Min:19a,
    AUTHOR = {Minguzzi, Ettore},
     TITLE = {Lorentzian causality theory},
   JOURNAL = {Living Rev.\ Relativ.},
      VOLUME = {22},
      YEAR = {2019},
    NUMBER = {3},
     PAGES = {220 pp},
       DOI = {10.1007/s41114-019-0019-x},
       URL = {https://doi.org/},
}

@article{MS:23,
    AUTHOR = {Mondino, Andrea and Suhr, Stefan},
     TITLE = {An optimal transport formulation of the {E}instein equations of
              general relativity},
   JOURNAL = {J. Eur. Math. Soc. (JEMS)},
  FJOURNAL = {Journal of the European Mathematical Society (JEMS)},
    VOLUME = {25},
      YEAR = {2023},
    NUMBER = {3},
     PAGES = {933--994},
      ISSN = {1435-9855},
   MRCLASS = {83C05 (49Q22 53C50)},
  MRNUMBER = {4560128},
       DOI = {10.4171/JEMS/1188},
}

@article {Pen:65,
    AUTHOR = {Penrose, Roger},
     TITLE = {Gravitational collapse and space-time singularities},
   JOURNAL = {Phys. Rev. Lett.},
  FJOURNAL = {Physical Review Letters},
    VOLUME = {14},
      YEAR = {1965},
     PAGES = {57--59},
      ISSN = {0031-9007},
   MRCLASS = {83.53},
  MRNUMBER = {0172678},
MRREVIEWER = {P. G. Bergmann},
       DOI = {10.1103/PhysRevLett.14.57},
       URL = {https://doi.org/10.1103/PhysRevLett.14.57},
}

@article {PSSS:14,
    AUTHOR = {Podolsk\'y, Ji\v{r}\'{\i} and S\"amann, Clemens and Steinbauer, Roland and {\v S}varc, Robert},
     TITLE = {The global existence, uniqueness and {$C^1$}-regularity
              of geodesics in nonexpanding impulsive gravitational waves},
   JOURNAL = {Classical Quantum Gravity},
  FJOURNAL = {Classical and Quantum Gravity},
    VOLUME = {32},
      YEAR = {2015},
    NUMBER = {2},
     PAGES = {025003, 23},
      ISSN = {0264-9381},
   MRCLASS = {83C35 (34A36 34A37 83C15)},
  MRNUMBER = {3291775},
MRREVIEWER = {Sergey I. Tertychniy},
       URL = {https://doi.org/10.1088/0264-9381/32/2/025003},
}

@article {PSSS:16,
    AUTHOR = {Podolsk\'y, Ji\v{r}\'{\i} and S\"amann, Clemens and Steinbauer, Roland and {\v S}varc, Robert},
     TITLE = {The global uniqueness and {$C^1$}-regularity of geodesics in
              expanding impulsive gravitational waves},
   JOURNAL = {Classical Quantum Gravity},
  FJOURNAL = {Classical and Quantum Gravity},
    VOLUME = {33},
      YEAR = {2016},
    NUMBER = {19},
     PAGES = {195010, 23},
      ISSN = {0264-9381},
   MRCLASS = {83C35},
  MRNUMBER = {3557142},
MRREVIEWER = {Ettore Minguzzi},
       URL = {https://doi.org/10.1088/0264-9381/33/19/195010},
}

@article {Sae:16,
    AUTHOR = {S{\"a}mann, Clemens},
     TITLE = {Global hyperbolicity for spacetimes with continuous metrics},
   JOURNAL = {Ann. Henri Poincar\'e},
  FJOURNAL = {Annales Henri Poincar\'e. A Journal of Theoretical and
              Mathematical Physics},
    VOLUME = {17},
      YEAR = {2016},
    NUMBER = {6},
     PAGES = {1429--1455},
      ISSN = {1424-0637},
   MRCLASS = {83C05 (53C50)},
  MRNUMBER = {3500220},
MRREVIEWER = {A. Burtscher},
       DOI = {10.1007/s00023-015-0425-x},
       URL = {http://dx.doi.org/10.1007/s00023-015-0425-x},
}

@article{SS:18,
  author={S\"{a}mann, Clemens and Steinbauer, Roland},
  title={On geodesics in low regularity},
  journal={Journal of Physics: Conference Series},
  volume={968},
  number={1},
  pages={012010},
  url={http://stacks.iop.org/1742-6596/968/i=1/a=012010},
  year={2018}
}

@article {Sen:98,
    AUTHOR = {Senovilla, Jos\'e M. M.},
     TITLE = {Singularity theorems and their consequences},
   JOURNAL = {Gen. Relativity Gravitation},
  FJOURNAL = {General Relativity and Gravitation},
    VOLUME = {30},
      YEAR = {1998},
    NUMBER = {5},
     PAGES = {701--848},
      ISSN = {0001-7701},
   MRCLASS = {83C75 (83-02)},
  MRNUMBER = {1623229},
MRREVIEWER = {Robert J. Low},
       DOI = {10.1023/A:1018801101244},
       URL = {https://doi.org/10.1023/A:1018801101244},
}

@article {SV:09,
    AUTHOR = {Steinbauer, Roland and Vickers, James A.},
     TITLE = {On the {G}eroch-{T}raschen class of metrics},
   JOURNAL = {Classical Quantum Gravity},
  FJOURNAL = {Classical and Quantum Gravity},
    VOLUME = {26},
      YEAR = {2009},
    NUMBER = {6},
     PAGES = {065001, 19},
      ISSN = {0264-9381},
   MRCLASS = {83C15},
  MRNUMBER = {2486329},
MRREVIEWER = {Stevan Pilipovi\'{c}},
       DOI = {10.1088/0264-9381/26/6/065001},
       URL = {https://doi.org/10.1088/0264-9381/26/6/065001},
}

@Article{Ste:14,
  author	= {Steinbauer, Roland},
  title		= {Every {L}ipschitz metric has {$C^1$}-geodesics},
  journal	= {Classical Quantum Gravity},
  fjournal	= {Classical and Quantum Gravity},
  volume	= {31},
  year		= {2014},
  number	= {5},
  pages		= {057001, 3},
  issn		= {0264-9381},
  mrclass	= {58E10 (34A36 83C20)},
  mrnumber	= {3168257},
  mrreviewer	= {Addolorata Salvatore},
  doi		= {10.1088/0264-9381/31/5/057001},
  url		= {https://doi.org/10.1088/0264-9381/31/5/057001}
}

@article{Ste:23,
    AUTHOR = {Steinbauer, Roland},
     TITLE = {The singularity theorems of general relativity and their low
              regularity extensions},
   JOURNAL = {Jahresber. Dtsch. Math.-Ver.},
  FJOURNAL = {Jahresbericht der Deutschen Mathematiker-Vereinigung},
    VOLUME = {125},
      YEAR = {2023},
    NUMBER = {2},
     PAGES = {73--119},
      ISSN = {0012-0456},
   MRCLASS = {83C75 (53B30 83-02)},
       DOI = {10.1365/s13291-022-00263-7},
}

@article {Sturm2006II,
    AUTHOR = {Sturm, Karl-Theodor},
     TITLE = {On the geometry of metric measure spaces. {II}},
   JOURNAL = {Acta Math.},
  FJOURNAL = {Acta Mathematica},
    VOLUME = {196},
      YEAR = {2006},
    NUMBER = {1},
     PAGES = {133--177},
      ISSN = {0001-5962},
   MRCLASS = {53C23 (28A33 51F99 53C70)},
  MRNUMBER = {2237207},
       DOI = {10.1007/s11511-006-0003-7},
}

@article {S:06a,
    AUTHOR = {Sturm, Karl-Theodor},
     TITLE = {On the geometry of metric measure spaces. {I}},
   JOURNAL = {Acta Math.},
  FJOURNAL = {Acta Mathematica},
    VOLUME = {196},
      YEAR = {2006},
    NUMBER = {1},
     PAGES = {65--131},
      ISSN = {0001-5962},
   MRCLASS = {53C23 (28A33 51F99 53C70)},
  MRNUMBER = {2237206},
       DOI = {10.1007/s11511-006-0002-8},
}

@article{BC:24,
    AUTHOR = {Braun, Mathias and Calisti, Matteo},
     TITLE = {Timelike {R}icci bounds for low regularity spacetimes by
              optimal transport},
   JOURNAL = {Commun. Contemp. Math.},
  FJOURNAL = {Communications in Contemporary Mathematics},
    VOLUME = {26},
      YEAR = {2024},
    NUMBER = {9},
     PAGES = {Paper No. 2350049},
      ISSN = {0219-1997},
       DOI = {10.1142/S0219199723500499},
}

@article{MR:25,
  title={On the equivalence of distributional and synthetic {R}icci curvature lower bounds},
  author={Mondino, Andrea and Ryborz, Vanessa},
  journal={Journal of Functional Analysis},
  volume={289},
  pages={111035},
  year={2025},
  publisher={Elsevier}
}

\end{document}